\documentclass[11pt]{article}

\usepackage[a4paper,margin=1in]{geometry}
\usepackage{amsmath,amssymb,amsfonts,amsthm}
\usepackage{newtxtext}
\usepackage[subscriptcorrection]{newtxmath}
\usepackage{graphicx}
\usepackage{placeins}
\usepackage{booktabs}
\usepackage{url}
\usepackage[authoryear,round]{natbib}

\allowdisplaybreaks[4]
\newcommand{\bm}[1]{\boldsymbol{#1}}
\newcommand{\bX}{\boldsymbol X}
\newcommand{\bGam}{\boldsymbol{\Gamma}}
\newcommand{\bOme}{\boldsymbol{\Omega}}
\newcommand{\bmu}{\boldsymbol{\mu}}
\newcommand{\bdelta}{\boldsymbol{\delta}}
\newcommand{\bepsilon}{\boldsymbol{\epsilon}}
\newcommand{\T}{{\!\top\!}}
\newcommand{\ind}[1]{\mathbb{I}\{#1\}}
\newcommand{\E}{\mathbb{E}}
\DeclareMathOperator{\Var}{Var}
\DeclareMathOperator{\Cov}{Cov}
\DeclareMathOperator{\Cum}{cum}
\DeclareMathOperator{\Diag}{diag}
\DeclareMathOperator{\Tr}{tr}
\DeclareMathOperator{\Sgn}{sgn}

\DeclareMathOperator*{\argmax}{arg\,max}

\theoremstyle{plain}
\newtheorem{theorem}{Theorem}
\newtheorem{proposition}{Proposition}
\newtheorem{corollary}{Corollary}
\newtheorem{lemma}{Lemma}
\theoremstyle{definition}
\newtheorem{assumption}{Assumption}
\theoremstyle{remark}
\newtheorem{remark}{Remark}
\newenvironment{prop}{\begin{proposition}}{\end{proposition}}
\newenvironment{coro}{\begin{corollary}}{\end{corollary}}

\title{High-Dimensional Change Point Analysis for Temporally Dependent Data}
\author{%
Xiaoyi Wang$^{1}$, Le Zhou$^{2}$, Jixuan Liu$^{3}$, and Long Feng$^{3,*}$\\[0.6em]
\small $^{1}$Faculty of Arts and Sciences, Beijing Normal University, Zhuhai, Guangdong 519087, China\\
\small $^{2}$Department of Mathematics, Hong Kong Baptist University, Hong Kong\\
\small $^{3}$School of Statistics and Data Science, Nankai University, Tianjin 300071, China\\[0.3em]
\small $^{*}$Corresponding author: \texttt{flnankai@nankai.edu.cn}}
\date{}

\begin{document}
\maketitle
\begin{abstract}
This paper develops adaptive procedures for detecting and locating mean changes in high-dimensional time series. Quadratic CUSUM statistics target dense changes, whereas coordinatewise maximum statistics target sparse changes. Two weighting schemes are considered to accommodate both interior and boundary changes. Under general non-Gaussian vector dependence, we establish the limiting distributions, validate the required centering and scaling estimators, and prove asymptotic independence between matched quadratic and maximum statistics. These results justify Cauchy combination tests. We further establish single-change localization and consistent multiple-change recovery using wild binary segmentation. Numerical results illustrate the effectiveness of the proposed methods.
\end{abstract}

\begin{center}\small\noindent\textbf{Keywords:} 
Cauchy combination test; Change point inference; High-dimensional data; Multiple change points; Time series; Wild binary segmentation.
\end{center}

\section{Introduction}

Change-point analysis concerns statistical inference for abrupt structural changes in an ordered sequence.  In high-dimensional applications, a mean change may be dense, with many coordinates changing by small amounts, or sparse, with only a few coordinates changing by relatively large amounts.  These two signal regimes favor different ways of aggregating evidence across coordinates.  Serial dependence poses an additional challenge because it changes the covariance structure of CUSUM processes and requires estimation of long-run covariance quantities.  A satisfactory procedure should therefore adapt to unknown signal sparsity, remain sensitive to changes at different locations, and provide valid testing and localization for temporally dependent observations \citep{10.1214/18-ejs1442,10.1016/j.jmva.2021.104833}.

Existing tests for a change in a high-dimensional mean vector can be broadly classified by how the coordinatewise evidence is aggregated.  Quadratic and sum-type statistics combine information over all coordinates and are typically powerful against dense alternatives; representative contributions include \citet{10.1016/j.jeconom.2009.10.020}, \citet{10.1111/j.1467-9892.2012.00796.x}, \citet{10.1007/s11425-016-0058-5}, and \citet{10.1214/17-aos1610}.  Coordinatewise maximum statistics, in contrast, are designed for sparse alternatives; see, among others, \citet{10.1214/15-aos1347}, \citet{EnikeevaHarchaoui2019}, and \citet{10.1111/rssb.12406}.  The minimax detection boundaries for sparse high-dimensional mean changes are studied by \citet{LiuGaoSamworth2021}.  Since the relative power of quadratic and maximum statistics depends on both the number of affected coordinates and the magnitudes of their changes, neither class is uniformly preferable when the sparsity level is unknown.

Several adaptive tests combine component statistics designed for different sparsity regimes.  \citet{10.1111/rssb.12375} aggregate a family of norm-based statistics, \citet{doi:10.1080/01621459.2021.1884562} develop adaptive self-normalized procedures, and \citet{wang2023JRSSB} combine quadratic and coordinatewise maximum statistics by exploiting their asymptotic independence.  Most of the corresponding asymptotic theory is developed for independent observations.  For temporally dependent data, \citet{li2019change} propose a bias-corrected quadratic test, while \citet{wang2022inference} use self-normalization to avoid direct estimation of the long-run covariance matrix.  Coordinatewise maximum tests may also be modified by estimating marginal long-run variances.  Nevertheless, joint null calibration of quadratic and maximum CUSUM scans under general non-Gaussian vector dependence has not been fully developed, particularly when unweighted and variance-standardized scans are both used to retain power for interior and near-boundary changes.

Testing concerns the existence of a change, whereas in many applications its location is the main inferential target.  For a single high-dimensional mean change, \citet{WangSamworth2018} estimate a sparse projection direction and locate the change in the projected series, \citet{10.1111/rssb.12406} develop finite-sample inference and identification results, and \citet{WangShao2023} study a U-statistic-based estimator and its limiting distribution.  The self-normalized approach of \citet{wang2022inference} also provides change-point estimators for independent and weakly dependent observations.  Compared with global testing, location estimation requires uniform control of the stochastic CUSUM process in a neighborhood of the population maximizer.

The multiple-change problem further requires estimation of both the number and the locations of the changes while preventing nearby changes from masking one another.  \citet{WangSamworth2018} incorporate their sparse-projection estimator into wild binary segmentation, and \citet{Cho2016} develop double-CUSUM binary segmentation for panel mean changes.  \citet{10.1142/s201032631950014x} combine dense and sparse information and estimate multiple locations by screening, model selection, and dynamic programming.  \citet{wang2022inference} combine self-normalized tests with wild binary segmentation, whereas \citet{ChenWangWu2022} use moving-sum screening followed by CUSUM refinement for temporally dependent high-dimensional series.  \citet{PilliatCarpentierVerzelen2023} aggregate local homogeneity tests and establish near-optimal or optimal detection guarantees over a broad class of high-dimensional models.  Wild binary segmentation and related multiscale procedures provide general tools for isolating individual changes \citep{Fryzlewicz2014WBS,BaranowskiChenFryzlewicz2019}.  However, a unified framework based on the same sparsity-adaptive statistics for global testing, single-change localization, and multiple-change estimation has not yet been developed for general dependent vector time series.

To address this problem, we construct matched quadratic and coordinatewise maximum procedures for high-dimensional temporally dependent data.  For each class of statistics, we use an unweighted scan over the full sample and a variance-standardized scan over a trimmed interval.  For the quadratic statistics, we derive a dependence-adjusted centering and a long-run scale, together with an edge-corrected centering estimator and a scale estimator that accounts for the two trace products arising from nonsymmetric lag covariance matrices.  For the coordinatewise maximum statistics, we use difference-based estimators of the marginal long-run variances.  Under general non-Gaussian vector dependence, we establish the null limiting distributions of all component statistics, uniform consistency of the required nuisance-parameter estimators, and asymptotic independence of the matched quadratic and maximum statistics.  These results lead to Cauchy combination tests that adapt to unknown signal sparsity.  We also derive consistency and localization results under dense and sparse alternatives and extend the procedures to multiple changes by wild binary segmentation.  The simulation results show accurate size control and favorable power and localization performance over a range of signal and dependence settings, and the empirical application further demonstrates the effectiveness of the proposed procedures.

The main contributions are threefold.
\begin{enumerate}
\item[(i)] We propose two quadratic scans and two coordinatewise maximum scans based on unweighted and variance-standardized CUSUM processes.  The quadratic procedures use a dependence-adjusted centering, an exact finite-sample edge correction for the feasible centering estimator, and a long-run scale estimator that remains valid for nonsymmetric lag covariance matrices.  The coordinatewise procedures use difference-based marginal long-run variance estimation.  Pairing statistics with the same temporal weighting yields two sparsity-adaptive Cauchy combination tests.
\item[(ii)] We establish a unified asymptotic theory under strong mixing, $m$-dependence, and physical dependence.  The results include the Gaussian-process and extreme-value null limits, uniform consistency of the centering and scale estimators, asymptotic independence of the matched quadratic and maximum statistics, consistency under dense and sparse alternatives, and localization rates for the corresponding single-change estimators.
\item[(iii)] We extend the dense, coordinatewise, and adaptive procedures to multiple-change estimation by wild binary segmentation.  At each recursion, a shortest-interval rule among the significant intervals and a separation rule are used to isolate individual changes and prevent duplicate detections.  Under explicit spacing, dependence, and signal conditions, the proposed procedures consistently estimate the number of changes and attain uniform localization rates.
\end{enumerate}

The remainder of the paper is organized as follows.  Section~\ref{sec:2-L2} formulates the single-change problem and develops the quadratic CUSUM tests with feasible centering and scaling.  Section~\ref{sec:3-adaptive} introduces the coordinatewise maximum statistics and the adaptive combination tests.  Section~\ref{sec:location} studies single-change localization, and Section~\ref{sec:wbs} develops the multiple-change procedures.  The simulation study and empirical application follow, and all technical proofs are provided in the Supplementary Material.

Throughout, superscript $\top$ denotes transpose, $a\wedge b=\min(a,b)$, and $\ind{\cdot}$ is the indicator function.  For vectors and matrices, $\|\cdot\|_2$, $\|\cdot\|_\infty$, $\|\cdot\|_{\mathrm{op}}$, and $\|\cdot\|_{\mathrm F}$ denote the Euclidean, maximum, operator, and Frobenius norms, respectively; $\lambda_{\min}(\cdot)$, $\lambda_{\max}(\cdot)$, and $\Tr(\cdot)$ denote the extreme eigenvalues and trace.  For positive sequences, $a_n\lesssim b_n$ means $a_n\le Cb_n$ for a constant $C$ independent of $n$ and $p$, and $a_n\asymp b_n$ means that both $a_n\lesssim b_n$ and $b_n\lesssim a_n$ hold.  All problem-specific notation is defined when it first appears.

\section{Problem formulation and max-$L_2$-type tests}\label{sec:2-L2}
Let $\bX_i=(X_{i1},\ldots,X_{ip})^\top$, $i=1,\ldots,n$, be $p$-dimensional observations generated by the single-change mean model
\begin{equation}\label{eq:single-change-model}
\bX_i=\bmu_0+\bdelta\ind{i>\tau}+\bepsilon_i,
\qquad i=1,\ldots,n,
\end{equation}
where $\bmu_0\in\mathbb R^p$ is the baseline mean, $\bdelta\in\mathbb R^p$ is the jump vector, $\tau\in\{1,\ldots,n-1\}$ is the change location, and $\bepsilon_i=(\epsilon_{i1},\ldots,\epsilon_{ip})^\top$.  The error process $\{\bepsilon_i\}$ is centered and strictly stationary; its tail and temporal-dependence conditions are stated below.  Under the null hypothesis, we set $\bdelta=\mathbf0$ and use $\tau=n$ for notational convenience.  We test
\begin{equation}\label{eq:single-change-hypothesis}
H_0:\ \bdelta=\mathbf0
\qquad\text{against}\qquad
H_1:\ \tau\in\{1,\ldots,n-1\}\ \text{and}\ \bdelta\ne\mathbf0.
\end{equation}
For $k=1,\ldots,n-1$, write
\[
t_k=\frac{k}{n},\qquad v_k=t_k(1-t_k),
\]
and, for $\gamma\in\{0,1/2\}$, define
\[
\mathcal K_0=\{1,\ldots,n-1\},\qquad
\mathcal K_{1/2}=\{\lambda_n,\ldots,n-\lambda_n\},
\]
where $\lambda_n\in\{1,\ldots,\lfloor n/2\rfloor\}$ is an integer trimming parameter.

For dense alternatives, we use the quadratic CUSUM statistic
\[
W(k)=\frac{1}{n\sqrt p}
\left\|\sum_{i=1}^k\bX_i-\frac{k}{n}\sum_{i=1}^n\bX_i\right\|_2^2,
\qquad k=1,\ldots,n-1.
\]
For independent observations, \citet{10.1007/s11425-016-0058-5} proposed a max-$L_2$-type test based on the maximum, over candidate change locations, of a centered and standardized quadratic CUSUM statistic.  Under $H_0$, they derived a Gaussian-process limit for the normalized CUSUM path and hence the limiting distribution of its supremum; under $H_1$, they established consistency of the corresponding change-point estimator.  Because the observations are independent over time, their centering and variance normalization involve only the contemporaneous covariance matrix.

The null calibration of \citet{10.1007/s11425-016-0058-5} cannot be applied directly to serially dependent observations.  Let $\bGam(h)=\Cov(\bepsilon_0,\bepsilon_h)$ denote the lag-$h$ covariance matrix.  When $\bGam(h)\ne\mathbf0$ for some $h\ne0$, the null mean of $W(k)$ contains additional contributions from the lagged covariance matrices, while the covariance of the centered quadratic CUSUM process is governed by the long-run covariance matrix rather than by the contemporaneous covariance matrix alone.  Consequently, independence-based centering and scaling generally yield an invalid null approximation and distorted rejection probabilities.  We therefore derive a dependence-adjusted centering and a long-run scale, together with feasible estimators that account for finite-sample edge effects and the two trace orientations induced by nonsymmetric lag covariance matrices.

To accommodate changes at different locations, define
\[
W_\gamma(k)=v_k^{-2\gamma}W(k),\qquad k\in\mathcal K_\gamma,
\qquad \gamma\in\{0,1/2\}.
\]
Thus $W_0(k)=W(k)$ is the unweighted quadratic CUSUM over the full set of candidate locations, whereas $W_{1/2}(k)=W(k)/v_k$ is the variance-standardized version evaluated on the trimmed set.  The two temporal weightings will later be paired with coordinatewise maximum statistics having the same value of $\gamma$.

Stationarity gives $\bGam(-h)=\bGam(h)^\top$.  Under $H_0$, a direct covariance calculation gives
\[
\mu_k=\E\{W(k)\}
=\frac{k^2(n-k)^2}{n^3\sqrt p}
\sum_{h=0}^{n-1}\sum_{i=1}^{n-h}
\{2-\ind{h=0}\}a_{i,k}a_{i+h,k}\Tr\{\bGam(h)\},
\]
where
\[
a_{i,k}=k^{-1}\ind{i\le k}-(n-k)^{-1}\ind{i>k}.
\]
Let
\[
\mu_{M,k}=\frac{k^2(n-k)^2}{n^3\sqrt p}
\sum_{h=0}^{M}\sum_{i=1}^{n-h}
\{2-\ind{h=0}\}a_{i,k}a_{i+h,k}\Tr\{\bGam(h)\},
\]
where $M=\lceil(n\wedge p)^{1/8}\rceil$, and set
\[
\mu_{\gamma,k}=v_k^{-2\gamma}\mu_k,
\qquad
\mu_{\gamma,M,k}=v_k^{-2\gamma}\mu_{M,k}.
\]
At the endpoints, $W_0(0)=W_0(n)=\mu_{0,M,0}=\mu_{0,M,n}=0$.  The $\gamma=1/2$ statistic is never evaluated at an endpoint.  The Supplementary Material proves uniform truncation bounds for both values of $\gamma$.

In order to show the asymptotic distributions of the two quadratic CUSUM paths, we impose the following assumptions.
\begin{assumption}[Vector tails and temporal dependence]\label{ass:C1}
For every $p$, $\{\bepsilon_i:i\in\mathbb Z\}$ is a strictly stationary, centered, $p$-dimensional process. There is $K_{\mathrm{sg}}<\infty$, independent of $p$, such that
\[
\sup_{\|\mathbf u\|_2=1}\sup_{q\ge2}q^{-1/2}
\|\mathbf u^\top\bepsilon_i\|_q\le K_{\mathrm{sg}}.
\]
In addition, one of the following conditions holds with constants independent of $p$.
\begin{enumerate}
\item[(AM)] The strong-mixing coefficient of the vector process,
\[
\alpha_{\epsilon,p}(r)=\sup_{t\in\mathbb Z}
\sup_{A\in\sigma(\bepsilon_s:s\le t),\,
B\in\sigma(\bepsilon_s:s\ge t+r)}
|\Pr(A\cap B)-\Pr(A)\Pr(B)|,
\]
satisfies $\alpha_{\epsilon,p}(r)\le K_1\exp(-K_2r^{\gamma_2})$ for $r\ge1$.
\item[(MD)] The vector process is $m_0$-dependent for a fixed integer $m_0\ge0$.
\item[(PD)] There are i.i.d. innovations $\{\boldsymbol\xi_i\}_{i\in\mathbb Z}$ and a measurable map $\mathcal G_p$ such that $\bepsilon_i=\mathcal G_p(\boldsymbol\xi_i,\boldsymbol\xi_{i-1},\ldots)$. If $\bepsilon_i^{\{r\}}$ is obtained by replacing $\boldsymbol\xi_{i-r}$ by an independent copy, then
\[
\sup_{\|\mathbf u\|_2=1}\sup_{q\ge2}q^{-1/2}
\|\mathbf u^\top(\bepsilon_i-\bepsilon_i^{\{r\}})\|_q
\le K_1\exp(-K_2r^{\gamma_2}),\qquad r\ge0.
\]
\end{enumerate}
\end{assumption}

\begin{assumption}[Long-run spectrum and quadratic-centering nondegeneracy]\label{ass:C2}
Let
\[
\bGam(h)=\Cov(\bepsilon_0,\bepsilon_h),\qquad
\bOme=\sum_{h\in\mathbb Z}\bGam(h),
\]
and define the spectral density matrix
\[
\mathbf f_p(\lambda)=\frac1{2\pi}\sum_{h\in\mathbb Z}
\bGam(h)e^{-\mathrm i h\lambda},\qquad -\pi\le\lambda\le\pi.
\]
There are constants $0<C_0<C_1<\infty$ and $c_f>0$ such that
\[
C_0\le\lambda_{\min}(\bOme)\le\lambda_{\max}(\bOme)\le C_1,
\qquad
p^{-1}\Tr(\bOme^2)\longrightarrow\vartheta_\Omega\in(0,\infty),
\]
and
\[
\inf_{-\pi\le\lambda\le\pi}
\frac1p\Tr\{\mathbf f_p(\lambda)\}\ge c_f.
\]
\end{assumption}

The last condition in Assumption~\ref{ass:C2} is used only to keep the quadratic centering uniformly away from zero after its natural factor $\sqrt p\,v_k$ is removed.  It is a condition on the general vector process, not a separability or finite-order linear-process assumption.  Compatibility of the finite-order filters used in the simulations is noted in Section~\ref{sec:simulation-compatibility}.

For $q\in\{2,3,4\}$, let
$\mathfrak P_q=\{\{1,2\},\ldots,\{2q-1,2q\}\}$.
A partition $\pi$ of $\{1,\ldots,2q\}$ is called connected relative to
$\mathfrak P_q$ if no nonempty proper collection of pairs in $\mathfrak P_q$
has a union that is also a union of blocks of $\pi$.  Put
$j(a)=\lceil a/2\rceil$.  For a block $B\in\pi$, let $a_B=\min B$ and define
the blockwise long-run cumulant envelope
\[
\begin{aligned}
\mathfrak K_B^{\mathrm{cum}}(j_1,\ldots,j_q)
={}&\sum_{(r_a:a\in B\setminus\{a_B\})
      \in\mathbb Z^{|B|-1}}
\left\{1+\max_{a\in B\setminus\{a_B\}}|r_a|\right\}^{8}\\
&\quad\times
\left|
\Cum\left(
\epsilon_{0,j(a_B)},
\{\epsilon_{r_a,j(a)}:a\in B\setminus\{a_B\}\}
\right)
\right|,
\end{aligned}
\]
where the maximum over an empty set is zero.  Define
\[
\mathfrak C_{\pi,p}
=
\sum_{j_1,\ldots,j_q=1}^p
\prod_{B\in\pi}\mathfrak K_B^{\mathrm{cum}}(j_1,\ldots,j_q).
\]

\begin{assumption}[Connected cumulant contractions]\label{ass:C3}
For each $q\in\{2,3,4\}$, there is $C_q<\infty$ such that
\[
\max_{\pi:\,\pi\text{ connected relative to }\mathfrak P_q}
\mathfrak C_{\pi,p}\le C_qp.
\]
\end{assumption}

Assumption~\ref{ass:C1} is imposed directly on the observed vector residual process and does not require a latent-coordinate representation or a separable covariance model. It covers the vector $\alpha$-mixing, fixed-order dependence, and vector physical-dependence frameworks used in high-dimensional Gaussian approximation. The three branches require different reductions: large-block--small-block interpolation for (AM), exact separation for (MD), and a coupled finite-memory approximation for (PD). In particular, an $\alpha$-mixing process is not replaced by an exact $m$-dependent truncation.

Projection sub-Gaussianity and Assumption~\ref{ass:C1} imply, uniformly in $p$,
\[
\|\bGam(h)\|_{\mathrm{op}}\le\varpi_{\mathrm{dep}}(|h|),
\qquad
\varpi_{\mathrm{dep}}(r)=C\exp(-cr^{\varkappa}),
\]
for some $\varkappa>0$; in branch (MD), take $\varpi_{\mathrm{dep}}(r)=C\ind{r\le m_0}$. Define
\[
\eta_m=\sum_{r>m}(1+r)^8\varpi_{\mathrm{dep}}(r).
\]
Then, for every fixed $A>0$,
\[
\eta_m\le C_Am^{-A},\qquad
\eta_m\le C\exp(-cm^{\varkappa}/2).
\]
Assumption~\ref{ass:C3} controls precisely the connected fourth-, sixth-, and eighth-order contractions produced by the product-cumulant formula for the quadratic CUSUM process.  Each cumulant block is anchored separately, so the condition is invariant to the absolute time location of that block and remains meaningful even for temporally independent observations.  In particular, the non-Gaussian fourth-order correction to the covariance of the normalized quadratic CUSUM process is $O(n^{-1})$. Assumption~\ref{ass:C2} gives $\Tr(\bOme^2)\asymp p$ and
\[
\frac{\Tr(\bOme^4)}{\{\Tr(\bOme^2)\}^2}=O(p^{-1}),
\]
so no fixed collection of long-run principal directions dominates the dense statistic.

\begin{theorem}\label{Th1}
Suppose Assumptions~\ref{ass:C1}--\ref{ass:C3} and $H_0$ hold. Let $M=\lceil(n\wedge p)^{1/8}\rceil$, assume $p=o(n^{3/2})$, and define
\[
\omega_p=\left\{\frac{2\Tr(\bOme^2)}{p}\right\}^{1/2}.
\]
Then $\omega_p\to\omega=(2\vartheta_\Omega)^{1/2}\in(0,\infty)$ and
\[
\mathbb W_{n,p}(t):=\frac{W(\lfloor nt\rfloor)-\mu_{M,\lfloor nt\rfloor}}{\omega_p}
\ \Rightarrow\ V(t)
\quad\text{in }D[0,1],
\]
as $(n,p)\to\infty$. The limiting process $V$ is a centered Gaussian process with continuous sample paths and covariance
\[
\E\{V(s)V(t)\}=s^2(1-t)^2,\qquad 0\le s\le t\le1.
\]
Equivalently, as stochastic processes on $(0,1)$,
\[
\{V(t):0<t<1\}\ \overset{d}=
\left\{t(1-t)Z_{\mathrm{OU}}\!\left(\log\frac{t}{1-t}\right):0<t<1\right\},
\]
where $Z_{\mathrm{OU}}$ is a stationary Ornstein--Uhlenbeck process with covariance
$\E\{Z_{\mathrm{OU}}(u)Z_{\mathrm{OU}}(v)\}=\exp(-|u-v|)$, and $V(0)=V(1)=0$.
In particular, the distribution function $F_V$ of $\sup_{t\in[0,1]}V(t)$ is continuous.
\end{theorem}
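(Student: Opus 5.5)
The plan is the classical route for weak convergence in $D[0,1]$: establish finite-dimensional convergence of $\mathbb W_{n,p}$ to $V$, prove tightness, and account separately for the truncation error $\mu_k-\mu_{M,k}$.

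\emph{Decomposition.} Write $S_k=\sum_{i\le k}\bepsilon_i-(k/n)\sum_{i\le n}\bepsilon_i=\sum_{i=1}^n c_{i,k}\bepsilon_i$, where $c_{i,k}=\{k(n-k)/n\}a_{i,k}$. Then
\[
n\sqrt p\,W(k)=\sum_{i,j}c_{i,k}c_{j,k}\bepsilon_i^\top\bepsilon_j .
\]
The mean of this quadratic form is $\mu_k$, so $W(k)-\mu_{M,k}=\{W(k)-\mu_k\}+(\mu_k-\mu_{M,k})$.

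\emph{Truncation term.} The bound $|\Tr\bGam(h)|\le p\,\varpi_{\mathrm{dep}}(h)$ and $\sum_i|a_{i,k}a_{i+h,k}|\lesssim n/\{k(n-k)\}$ give
\[
\sup_k|\mu_k-\mu_{M,k}|\lesssim \sqrt p\sum_{h>M}\varpi_{\mathrm{dep}}(h)\le C\sqrt p\exp(-cM^\varkappa/2).
\]
Since $M=\lceil (n\wedge p)^{1/8}\rceil$, this is $o(1)$ when $p\lesssim n$. When $p\gg n$, the rate is $\exp\{-c n^{\varkappa/8}\}\sqrt p$, and $p=o(n^{3/2})$ makes it $o(1)$. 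The fact that $\omega_p\to\omega$ follows directly from Assumption~\ref{ass:C2}.

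\emph{Finite-dimensional limits.} First compute the covariance. Under the Gaussian (pairing) part, Isserlis' formula yields
\[
\Cov\{n\sqrt pW(k),n\sqrt pW(l)\}=\sum_{i,j,i',j'}c_{i,k}c_{j,k}c_{i',l}c_{j',l}\,[\Tr\{\bGam(i'-i)\bGam(j-j')\}+\Tr\{\bGam(j'-i)\bGam(i-j')^{\top}\ldots\}].
\]
Using summability of $\bGam(h)$ and $\Tr(\bGam(h)\bGam(h'))$, replace the lag sums by $\bOme$; this gives $2\Tr(\bOme^2)\,(\sum_i c_{i,k}c_{i,l})^2$ up to $O(p/n)$ edge errors. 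Here $\sum_i c_{i,k}c_{i,l}=n\,s(1-t)$ for $s\le t$, which produces $\omega_p^2 s^2(1-t)^2$ after dividing by $n^2p$. The non-Gaussian fourth-cumulant correction is $O(n^{-1})$ by Assumption~\ref{ass:C3} with $q=2$.

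For asymptotic normality of linear combinations $\sum_r b_r\{W(k_r)-\mu_{k_r}\}$, I would use the method of cumulants. All cumulants of order three and four of the normalized quadratic form decompose, via the product-cumulant (Leonov--Shiryaev) formula, into sums over partitions connected relative to $\mathfrak P_3,\mathfrak P_4$. For pure pairings these sums reduce to traces like $\Tr(\bOme^3)$ and $\Tr(\bOme^4)$, which are $O(p)$ against the normalization $p^{3/2}$ and $p^2$. For non-Gaussian blocks, they are bounded by $\mathfrak C_{\pi,p}\le C_qp$ times time-index counts, which vanish after normalization. Vanishing third and fourth cumulants are not in general enough for Gaussianity. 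My intention is therefore to pass through a martingale-difference CLT instead: order the quadratic form as $\sum_j \bepsilon_j^\top\sum_{i<j}(\cdot)\bepsilon_i$, approximate by a block/$m$-dependent version using the branch-specific reductions (blocking for (AM), exactness for (MD), coupling for (PD)), and check the conditional-variance and Lyapunov conditions via the fourth-order cumulant bounds already obtained. This is the main obstacle, especially the uniform-in-$p$ coupling step under (AM), where no $m$-dependent replacement is available. I would handle it with large-block/small-block interpolation plus the exponential mixing rate and sub-Gaussian moment bounds.

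\emph{Tightness and the limit.} For tightness, bound $\E|\mathbb W(t)-\mathbb W(s)|^4$ by a constant times $|t-s|^2$ (plus $n^{-1}$ lattice terms) using the same cumulant expansion with $q\le4$. Combined with Billingsley's criterion for processes with jumps at grid points, this gives tightness and continuity of the limit, and hence $\mathbb W_{n,p}\Rightarrow V$. Near $t=0,1$ the variance $s^2(1-s)^2$ vanishes, which controls the endpoints.

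The OU representation follows from a covariance check. For $s\le t$, $u=\log\{s/(1-s)\}$ and $v=\log\{t/(1-t)\}$ give $e^{-(v-u)}=s(1-t)/\{t(1-s)\}$, so
\[
s(1-s)t(1-t)e^{-(v-u)}=s^2(1-t)^2.
\]
Continuity of $F_V$ holds because $\sup V$ is the supremum of a nondegenerate continuous Gaussian process. By Tsirelson's theorem, the distribution of such a supremum is continuous except possibly at the left endpoint of its support. Here the support's left endpoint is $0$, and $\sup V\ge V(1/2)$, which has no atom at $0$ and gives $P(\sup V=0)=0$.
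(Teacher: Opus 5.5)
\textbf{Finite-dimensional convergence.} Your truncation bound, covariance calculation, fourth-moment tightness argument and OU covariance check all agree with the paper. The finite-dimensional CLT, however, is the core of the theorem, and you leave it as an intention.

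You are right that Assumption~\ref{ass:C3} controls cumulants of at most four quadratic forms, so the method of cumulants is unavailable. Under (MD), and under (PD) after an $s_n$-dependent reduction, your blocked martingale CLT is plausible. Under (AM), though, the plan has no mechanism. A martingale CLT needs independent blocks or an explicit martingale approximation, and you supply neither.

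For $pb$-dimensional $\alpha$-mixing blocks there is no usable coupling. Berbee's lemma requires $\beta$-mixing, and Bradley-type couplings are scalar or deteriorate with dimension. The Volkonskii--Rozanov factorization controls only sums $\sum_\ell g_\ell(\mathcal B_\ell)$ of single-block functionals, not the cross-block products $\mathbf Z_\ell^\top\mathbf Z_{\ell'}$ that make up a quadratic form. Writing ``large-block/small-block interpolation'' names the tool without the argument.

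The missing idea in the paper (Lemma~\ref{lem:block-comparison-general}) is to compare smooth functionals instead of proving a CLT for the non-Gaussian form directly.
\begin{itemize}
\item After the small blocks are deleted, the large blocks are replaced \emph{one at a time, in order} by independent copies. Block $\ell$ then only has to be decoupled from the $\sigma$-field $\mathcal F_{\ell+1}^{+}$ of the future original blocks, across a gap of length $s_n$.
\item A Taylor expansion at the zero block shows the first- and second-order terms cancel up to $\E|\E(U_\ell\mid\mathcal F_{\ell+1}^{+})|$ for centered block monomials $U_\ell$. Davydov's inequality bounds this by $\alpha_{\epsilon,p}(s_n)^{1-2/q}$ times polylogarithmic moments.
\item The third-order remainders sum to $O\{(b_n/n)^{1/2}\}$ up to logarithmic factors.
\item Gaussian block replacement and covariance restoration to $\bOme^{1/2}\mathbf B(t)$ (Lemma~\ref{lem:gaussian-cov-restoration}) follow.
\item Finally, Lindeberg--Feller for $\sum_j\lambda_j\{B_j(t)^2-t(1-t)\}$, using only $\Tr(\bOme^4)/\{\Tr(\bOme^2)\}^2=O(p^{-1})$ (Lemma~\ref{lem:ideal-gaussian-dense}), gives the limit.
\end{itemize}
Without a decoupling step of this kind, the (AM) branch of your argument is unproved.

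\textbf{No atom at zero.} Your argument that $\Pr(\sup_t V(t)=0)=0$ is wrong. The inequality $\sup_tV(t)\ge V(1/2)$ only gives $\Pr(\sup_tV(t)=0)\le\Pr\{V(1/2)\le 0\}=1/2$. You need a sample-path property. The paper uses the representation $\sup_tV(t)\overset d=\sup_{r\ge0}B_{\mathrm{BM}}(r)/(1+\sqrt r)^2$ together with the fact that Brownian motion is almost surely positive somewhere in every interval $(0,\varepsilon]$. Alternatively, you can use that the stationary OU process is almost surely positive somewhere on $\mathbb R$. With that fix, the rest of your Tsirelson argument (absolute continuity away from the left endpoint $0$ of the support) is fine.
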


The two temporal weightings are obtained from the same centered path.  For $k\in\mathcal K_\gamma$,
\[
W_\gamma(k)-\mu_{\gamma,M,k}
=v_k^{-2\gamma}\{W(k)-\mu_{M,k}\}.
\]
For $\gamma=0$, Theorem~\ref{Th1} therefore gives a process on the compact interval $[0,1]$.  For $\gamma=1/2$, its population limit on the trimmed interval is
\[
\frac{V(t)}{t(1-t)}
=Z_{\mathrm{OU}}\!\left(\log\frac{t}{1-t}\right),
\]
so maximizing over $\mathcal K_{1/2}$ is asymptotically equivalent to maximizing a stationary Ornstein--Uhlenbeck process over an interval of length $\log h_n$, where
\[
h_n=\left\{(\lambda_n/n)^{-1}-1\right\}^2.
\]

In practice, it is essential to estimate $\mu_{M,k}$ uniformly in $k$ and the finite-dimensional scale $\omega_p$ in Theorem~\ref{Th1}. For $0\le h\le M$, define
\[
\mathbf E_{t,h}^{(M)}=\bX_t-\bX_{t+M+h+1},
\qquad
\mathcal P_{t,h;s,k}^{(M)}=\{\mathbf E_{t,h}^{(M)}\}^{\top}\mathbf E_{s,k}^{(M)}.
\]
Inspired by the moving-range construction of \citet{10.1142/s201032631950014x}, let
\[
N_h=n-M-2h-1
\]
be the actual number of available products and set
\[
\widetilde g_h
=\frac{1}{2N_h}\sum_{t=1}^{N_h}
\mathcal P_{t+h,h;t,h}^{(M)},\qquad 0\le h\le M.
\]
The denominator $2N_h$, rather than $2n$, removes the deterministic attenuation caused by unavailable end products.  Equivalently, the implementation forms the raw $(2n)^{-1}$ average and multiplies it by $n/N_h$.  Under stationarity,
\[
\E(\widetilde g_h)
=g_h-\frac12g_{M+1}-\frac12g_{M+2h+1},
\qquad g_h=\Tr\{\bGam(h)\}.
\]
The former denominator $2n$ multiplies this expectation by $N_h/n$.  Thus, in the MA(0) and MA(2) simulation designs with $M\ge2$, the corrected estimator is unbiased whereas the former estimator has expectation $(N_h/n)g_h$.  Use the edge-corrected estimates in
\[
\widetilde\mu_{M,\lfloor nt\rfloor}
=\frac{\lfloor nt\rfloor^2(n-\lfloor nt\rfloor)^2}{n^3\sqrt p}
\sum_{h=0}^{M}\sum_{i=1}^{n-h}
\{2-\ind{h=0}\}a_{i,\lfloor nt\rfloor}a_{i+h,\lfloor nt\rfloor}
\widetilde g_h.
\]
For $k\in\mathcal K_\gamma$, define the matched feasible centering
\[
\widetilde\mu_{\gamma,M,k}=v_k^{-2\gamma}\widetilde\mu_{M,k}.
\]

For $0\le h,k\le M$, let
\[
\mathcal I_{hk}=\left\{(t,s):
1\le t\le\lfloor n/2\rfloor-M-2h-1,
\quad t+\lfloor n/2\rfloor\le s\le n-M-2k-1
\right\}.
\]
The two trace orientations required by a general vector time series are
\[
\mathcal T_{hk}^{\mathrm F}=\Tr\{\bGam(h)\bGam(k)^\top\},
\qquad
\mathcal T_{hk}^{\mathrm C}=\Tr\{\bGam(h)\bGam(k)\}.
\]
Their difference-based estimators are
\begin{align*}
\widehat{\mathcal T}_{hk}^{\mathrm F}
&=\frac1{4|\mathcal I_{hk}|}
\sum_{(t,s)\in\mathcal I_{hk}}
\mathcal P_{t,h;s,k}^{(M)}\,
\mathcal P_{t+h,h;s+k,k}^{(M)},\\
\widehat{\mathcal T}_{hk}^{\mathrm C}
&=\frac1{4|\mathcal I_{hk}|}
\sum_{(t,s)\in\mathcal I_{hk}}
\mathcal P_{t,h;s+k,k}^{(M)}\,
\mathcal P_{t+h,h;s,k}^{(M)}.
\end{align*}
Since $M=o(n)$, the sets $\mathcal I_{hk}$ are nonempty for all sufficiently large $n$, uniformly over $0\le h,k\le M$. The first pairing estimates a Frobenius product and the second estimates the crossed orientation. The distinction is essential because a general lag covariance $\bGam(h)$ need not be symmetric.

Put
\[
\bOme_M=\bGam(0)+\sum_{h=1}^{M}\{\bGam(h)+\bGam(h)^\top\}.
\]
The identity
\begin{align*}
\Tr(\bOme_M^2)
={}&\mathcal T_{00}^{\mathrm F}
+2\sum_{h=1}^M\{\mathcal T_{h0}^{\mathrm F}+\mathcal T_{0h}^{\mathrm F}\}\\
&+2\sum_{h,k=1}^M
\{\mathcal T_{hk}^{\mathrm F}+\mathcal T_{hk}^{\mathrm C}\}
\end{align*}
leads to the orientation-complete positive-part estimator
\begin{align*}
\widehat\omega
=\bigg[\frac2p\bigg\{&\widehat{\mathcal T}_{00}^{\mathrm F}
+2\sum_{h=1}^M(\widehat{\mathcal T}_{h0}^{\mathrm F}
                 +\widehat{\mathcal T}_{0h}^{\mathrm F})\\
&+2\sum_{h,k=1}^M
(\widehat{\mathcal T}_{hk}^{\mathrm F}
 +\widehat{\mathcal T}_{hk}^{\mathrm C})\bigg\}\bigg]_+^{1/2}.
\end{align*}
When every $\bGam(h)$ is symmetric, the two orientations coincide and this expression reduces to the symmetric-lag scale formula. The positive part affects neither ratio consistency nor the limiting distribution because the population scale is bounded away from zero.

Theorem~\ref{Th2} establishes the consistency rates of the feasible centering and scale estimators.

\begin{theorem}\label{Th2}
Under model~\eqref{eq:single-change-model} and Assumptions~\ref{ass:C1}--\ref{ass:C3}, let $M=\lceil(n\wedge p)^{1/8}\rceil$, $p=o(n^{3/2})$, and
\[
\|\bdelta\|_2^2=o\!\left(\frac{n\sqrt p}{M^2}\right).
\]
Define
\begin{align*}
r_{\mu,n}={}&\frac{M}{\sqrt n}+\sqrt p\,M\eta_M
 +\frac{M^2\|\bdelta\|_2^2}{n\sqrt p}
 +\frac{M^2(\bdelta^\top\bOme\bdelta)^{1/2}}{n\sqrt p},\\
r_{\omega,n}={}&M^2\left(\frac1n+\frac p{n^2}\right)^{1/2}
 +M^2\eta_M+M^2\varpi_{\mathrm{dep}}(n/3)
 +\frac{M^2\|\bdelta\|_2^2}{n\sqrt p}\\
&+\frac{M^3(\bdelta^\top\bOme\bdelta)^{1/2}}{n\sqrt p}
 +\left(\frac{M^2\|\bdelta\|_2^2}{n\sqrt p}\right)^2.
\end{align*}
Then, for both $\gamma\in\{0,1/2\}$,
\[
\max_{k\in\mathcal K_\gamma}
|\widetilde\mu_{\gamma,M,k}-\mu_{\gamma,M,k}|=O_p(r_{\mu,n}),
\qquad
|\widehat\omega^2-\omega_p^2|=O_p(r_{\omega,n}).
\]
Moreover, $r_{\mu,n}\to0$ and $r_{\omega,n}\to0$, and
\[
\max_{k\in\mathcal K_\gamma}
\frac{|\widetilde\mu_{\gamma,M,k}-\mu_{\gamma,M,k}|}{\omega_p}=O_p(r_{\mu,n}),
\qquad
\left|\frac{\widehat\omega}{\omega_p}-1\right|=O_p(r_{\omega,n}).
\]
The centering rate is uniform after either temporal weighting; it does not acquire a factor $n/\lambda_n$ because the unweighted centering coefficients themselves contain the factor $v_k$.
\end{theorem}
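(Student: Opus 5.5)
The proof splits into the centering part and the scale part, and each part separates into a noise-only term and terms induced by the mean change. Since $\mu_{\gamma,M,k}=v_k^{-2\gamma}\mu_{M,k}$ and $\widetilde\mu_{\gamma,M,k}=v_k^{-2\gamma}\widetilde\mu_{M,k}$ share the same coefficients, we write
\[
\widetilde\mu_{\gamma,M,k}-\mu_{\gamma,M,k}=\sum_{h=0}^M c_{\gamma,h}(k)\,\{\widetilde g_h-g_h\},
\qquad
c_{\gamma,h}(k)=\frac{v_k^{-2\gamma}k^2(n-k)^2}{n^3\sqrt p}\sum_{i=1}^{n-h}\{2-\ind{h=0}\}a_{i,k}a_{i+h,k}.
\]
A direct calculation gives $\sum_i a_{i,k}a_{i+h,k}=n/\{k(n-k)\}+O(h/\{k(n-k)\})\cdot\ind{h\ge1}$. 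Consequently $|c_{\gamma,h}(k)|\lesssim v_k^{1-2\gamma}/\sqrt p\le 1/\sqrt p$ uniformly in $k\in\mathcal K_\gamma$ and $\gamma\in\{0,1/2\}$. This is the source of the remark that no factor $n/\lambda_n$ appears. The centering error is therefore bounded by $p^{-1/2}\sum_{h\le M}|\widetilde g_h-g_h|$. It remains to control each $\widetilde g_h-g_h$ uniformly in $h\le M$. The analysis of $\widehat\omega^2$ follows the same pattern, using the displayed identity for $\Tr(\bOme_M^2)$. We decompose $\widehat\omega^2-\omega_p^2$ as
\[
\frac2p\sum(\widehat{\mathcal T}-\E\widehat{\mathcal T})
+\frac2p\sum(\E\widehat{\mathcal T}-\mathcal T)
+\frac2p\{\Tr(\bOme_M^2)-\Tr(\bOme^2)\},
\]
and handle the positive part at the end using $\omega_p^2\to 2\vartheta_\Omega>0$.

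\emph{Bias terms.} Under $H_0$ we already have $\E\widetilde g_h=g_h-\tfrac12g_{M+1}-\tfrac12g_{M+2h+1}$. Since $|g_r|\le p\,\varpi_{\mathrm{dep}}(r)$, the bias is of order $p\,\eta_M$, and multiplying by the $(M+1)$ lags and the $p^{-1/2}$ coefficient gives the term $\sqrt p\,M\eta_M$. For $\widehat{\mathcal T}$, each summand is a product of two inner products of differences of $\bepsilon$. Expanding its expectation with the cumulant expansion (Isserlis plus a fourth-order cumulant) produces the target $\mathcal T^{\mathrm F}_{hk}$ or $\mathcal T^{\mathrm C}_{hk}$, together with three further kinds of term. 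Pairings with lag at least $M+1$ are bounded by $p\,\eta_M$. Pairings across the gap $s-t\ge n/2-O(M)$ contribute $p\,\varpi_{\mathrm{dep}}(n/3)$. Connected fourth cumulants are bounded by $C_2p$ through Assumption~\ref{ass:C3}, and after averaging over $|\mathcal I_{hk}|\asymp n^2$ pairs they decay. Here the separation $s\ge t+\lfloor n/2\rfloor$ is what removes the $\Tr\{\bGam\}^2$-type contamination. Summing over the $O(M^2)$ index pairs gives $M^2\eta_M+M^2\varpi_{\mathrm{dep}}(n/3)$. The truncation term $\Tr(\bOme^2)-\Tr(\bOme_M^2)$ is $O(p\,\eta_M)$ by the operator-norm bound.

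\emph{Variance terms.} The variance of $\widetilde g_h$ is computed with the product-cumulant formula for products of quadratic forms. Gaussian pairings give $O(N_h^{-1}\Tr(\bOme^2))=O(p/n)$, and connected contractions are bounded by $C_qp/n$ via Assumption~\ref{ass:C3}. Hence $\widetilde g_h-\E\widetilde g_h=O_p(\sqrt{p/n})$, and summing over $M$ lags with weight $p^{-1/2}$ gives $M/\sqrt n$. For $\widehat{\mathcal T}_{hk}$, which is a degree-four U-type statistic over about $n^2$ pairs $(t,s)$, the second moment expands into connected contractions up to eighth order; this is exactly why $q\le4$ is required in Assumption~\ref{ass:C3}. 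Dividing by $p^2$, the result is
\[
\Var(\widehat{\mathcal T}_{hk})/p^2\lesssim \frac1n+\frac p{n^2}.
\]
A union/Cauchy--Schwarz bound over the $M^2$ lag pairs then yields $M^2(1/n+p/n^2)^{1/2}$. \textbf{The main obstacle} is this eighth-order bookkeeping. The plan is to classify the partitions of the eight $\epsilon$-factors by which $(t,s)$-summation indices they tie together, and then count the free time indices for each class. The three dependence branches enter only through the cumulant envelope, so they need no separate treatment.

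\emph{Signal terms.} Under $H_1$ we write $\mathbf E^{(M)}_{t,h}=(\bepsilon_t-\bepsilon_{t+M+h+1})-\bdelta\,\ind{t\le\tau<t+M+h+1}$. Only $O(M)$ indices $t$ straddle $\tau$. For $\widetilde g_h$, the deterministic term contributes $O(M\|\bdelta\|_2^2/n)$ and the cross term contributes $O_p(\sqrt{M\bdelta^\top\bOme\bdelta}/n)$. After multiplying by the $M$ lags and $p^{-1/2}$, these are no larger than the two $\bdelta$-terms in $r_{\mu,n}$. For $\widehat{\mathcal T}$, expanding the product of two inner products gives three kinds of contribution: terms linear in $\bdelta$, at most $M^3(\bdelta^\top\bOme\bdelta)^{1/2}/(n\sqrt p)$ once summed; terms quadratic in $\bdelta$, $M^2\|\bdelta\|_2^2/(n\sqrt p)$; and quartic terms, which give the squared term. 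Finally, the condition $p=o(n^{3/2})$ gives $\sqrt{p/n^2}\,M^2\to0$ since $M^2\le n^{1/4}$. The assumption on $\|\bdelta\|_2$ gives $M^2\|\bdelta\|_2^2/(n\sqrt p)\to0$, and Cauchy--Schwarz handles the $\bOme$-terms. The decay bounds on $\eta_M$ and $\varpi_{\mathrm{dep}}$ give $\sqrt pM\eta_M\to0$, because $\eta_M\le C\exp(-cM^{\varkappa}/2)$ with $M\asymp(n\wedge p)^{1/8}$ beats any power of $p$ when $p\lesssim n^{3/2}$. Hence $r_{\mu,n},r_{\omega,n}\to0$. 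Dividing by $\omega_p\to\omega>0$, and using $|\widehat\omega-\omega_p|\le|\widehat\omega^2-\omega_p^2|/\omega_p$, gives the ratio statements.
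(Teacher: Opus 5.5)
Your proposal is correct and follows essentially the same route as the paper's proof. The common steps are the coefficient bound $|c_{\gamma,k,h}^{(\mu)}|\le C/\sqrt p$ uniformly on $\mathcal K_\gamma$, the edge bias $O(p\,\eta_M)$ of $\widetilde g_h$, variance bounds from the product-cumulant formula (with the eighth-order partition count giving $p^2/n+p^3/n^2$ for $\widehat{\mathcal T}_{hk}$), and the signal expansion by the number of $\bdelta$ factors.

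The small deviations do not affect the conclusion. You bound the cross-half fourth cumulant by plain summability rather than the $(1+|r|)^8$ weight; this gives an $M^2/n$ contribution, which is absorbed by $M^2(1/n+p/n^2)^{1/2}$. You also omit the cubic signal terms, but these in fact vanish for a single change: $\mathcal I_{hk}$ keeps the two halves about $n/2$ apart, so only one half can be contaminated.
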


Based on Theorems~\ref{Th1}--\ref{Th2}, put
\[
Q_{\gamma,k}=W_\gamma(k)-\widetilde\mu_{\gamma,M,k},
\qquad \gamma\in\{0,1/2\}.
\]
Only the unweighted path receives the direct second-order correction.  Define
\[
Q_{0,k}^{\mathrm{CF2}}
=Q_{0,k}
-\frac{Q_{0,k}^2-\widehat\omega^2v_k^2}
{3\widetilde\mu_{M,k}},
\qquad 1\le k<n,
\]
and
\[
S_{0}
=\frac1{\widehat\omega}\max_{1\le k<n}Q_{0,k}^{\mathrm{CF2}}
=\max_{1\le k<n}
\left[
\frac{Q_{0,k}}{\widehat\omega}
-\frac{Q_{0,k}^2-\widehat\omega^2v_k^2}
{3\widetilde\mu_{M,k}\widehat\omega}
\right].
\]
The implementation checks that all edge-corrected values
$\widetilde\mu_{M,k}$ are positive before evaluating this expression.  The positivity result in the Supplementary Material shows that this check is asymptotically inactive.

For the boundary-weighted path, retain the first-order centered statistic
\[
S_{1/2}
=\max_{k\in\mathcal K_{1/2}}
\frac{Q_{1/2,k}}{\widehat\omega}
=\max_{k\in\mathcal K_{1/2}}
\frac{W(k)-\widetilde\mu_{M,k}}
{\widehat\omega v_k}.
\]
Thus, CF2 is used for $S_{0}$ but not for $S_{1/2}$.  This distinction follows the final implementation: the second-order map improves the finite-sample calibration of the compact unweighted scan, while the boundary-weighted scan is calibrated directly through its Darling--Erd\H{o}s limit.
Let
\[
A_{\mathrm{DE}}(x)=\sqrt{2\log x},\qquad
D_{\mathrm{DE}}(x)=2\log x+\tfrac12\log\log x-\tfrac12\log\pi.
\]

\begin{theorem}\label{Tms-1}
Under the conditions of Theorem~\ref{Th2} and $H_0$:
\begin{enumerate}
\item[(i)]
\[
S_{0}
\xrightarrow{d}\sup_{0\le t\le1}V(t),
\]
where $V$ is the Gaussian process in Theorem~\ref{Th1};
\item[(ii)] if $\lambda_n\asymp n^\lambda$ for some $\lambda\in(0,1)$, $\log n=o(p^{1/4})$, and
\[
\sqrt{\log\log h_n}\,r_{\mu,n}+(\log\log h_n)r_{\omega,n}\longrightarrow0,
\]
then
\[
A_{\mathrm{DE}}(\log h_n)S_{1/2}
-D_{\mathrm{DE}}(\log h_n)
\xrightarrow{d}Z_{\mathrm{Gu}},
\qquad
\Pr(Z_{\mathrm{Gu}}\le x)=F_{\mathrm{Gu}}(x):=\exp\{-e^{-x}\}.
\]
\end{enumerate}
\end{theorem}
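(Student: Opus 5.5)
For part (i), the plan is to reduce $S_0$ to the supremum of the infeasible normalized path $\mathbb W_{n,p}$ in Theorem~\ref{Th1} and then use the continuous mapping theorem. Write
\[
\frac{Q_{0,k}}{\widehat\omega}
=\frac{\omega_p}{\widehat\omega}\,\mathbb W_{n,p}(k/n)
-\frac{\widetilde\mu_{M,k}-\mu_{M,k}}{\widehat\omega}.
\]
Theorem~\ref{Th2} with $\bdelta=\mathbf0$ gives $\widehat\omega/\omega_p\to1$ and $\max_k|\widetilde\mu_{M,k}-\mu_{M,k}|/\omega_p=O_p(r_{\mu,n})=o_p(1)$. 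Tightness of $\mathbb W_{n,p}$ in $D[0,1]$ gives $\max_k|\mathbb W_{n,p}(k/n)|=O_p(1)$. Hence
\[
\max_k|Q_{0,k}/\widehat\omega-\mathbb W_{n,p}(k/n)|=o_p(1).
\]
It then remains to show that the CF2 correction is negligible uniformly in $k$, which is the main step for (i). The supplementary positivity/lower bound result, together with the last condition of Assumption~\ref{ass:C2}, gives $\mu_{M,k}\gtrsim \sqrt p\,v_k$ uniformly. Combined with Theorem~\ref{Th2}, this yields $\min_k\widetilde\mu_{M,k}/(\sqrt p\,v_k)\ge c>0$ with probability tending to one.

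The numerator of the correction needs a $v_k$-sensitive bound. Use the second-moment bound $\Var\{W(k)\}\asymp\omega_p^2v_k^4$ from the covariance of $V$, and upgrade it to a maximal inequality. Apply a chaining/Kolmogorov-type bound to the process $\mathbb W_{n,p}(t)/(t(1-t))^{1/2}$; increment moments follow from the cumulant bounds in Assumption~\ref{ass:C3}, the same ones used in Theorem~\ref{Th1}. This gives
\[
\max_k Q_{0,k}^2/v_k=O_p(1)
\]
after also absorbing the centering error, since $r_{\mu,n}\lesssim$ the Theorem~\ref{Th2} rate which already carries a factor $v_k$. Therefore the correction term is bounded by
\[
\frac{Q_{0,k}^2+\widehat\omega^2v_k^2}{3\widetilde\mu_{M,k}\widehat\omega}
\lesssim \frac{O_p(v_k)}{\sqrt p\, v_k}=O_p(p^{-1/2})\to0
\]
uniformly. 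The map $x\mapsto\sup_{t}x(t)$ is continuous at continuous paths, and $V$ has continuous paths by Theorem~\ref{Th1}. So $S_0\to_d\sup_t V(t)$.

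For part (ii), write
\[
S_{1/2}=\max_{k\in\mathcal K_{1/2}}\frac{\omega_p}{\widehat\omega}\,\frac{\mathbb W_{n,p}(k/n)}{v_k}-R_n,
\qquad
|R_n|\le \max_k\frac{|\widetilde\mu_{M,k}-\mu_{M,k}|}{\widehat\omega v_k}=O_p(r_{\mu,n}),
\]
using the uniformity remark at the end of Theorem~\ref{Th2}. Since $A_{\mathrm{DE}}(\log h_n)\asymp\sqrt{\log\log h_n}$, the centering error contributes $O_p(\sqrt{\log\log h_n}\,r_{\mu,n})=o_p(1)$. The scale error contributes
\[
A_{\mathrm{DE}}\cdot|\widehat\omega/\omega_p-1|\cdot\max_k|\mathbb W_{n,p}(k/n)|/v_k
=O_p(\sqrt{\log\log h_n}\,r_{\omega,n}\sqrt{\log\log h_n}),
\]
which is $o_p(1)$ by assumption, provided we know the infeasible maximum is $O_p(\sqrt{\log\log h_n})$; that follows from the limit below. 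So it suffices to prove the Darling--Erd\H{o}s law for the infeasible standardized maximum
\[
T_n=\max_{k\in\mathcal K_{1/2}}\mathbb W_{n,p}(k/n)/v_k.
\]

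This is the main obstacle, because weak convergence in $D[0,1]$ does not control the extremes near the trimmed boundaries, where $t\asymp n^{\lambda-1}$. I will split $\mathcal K_{1/2}$ into a middle part $[\epsilon n,(1-\epsilon)n]$ and boundary parts.

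On the middle part, Theorem~\ref{Th1} shows the maximum is $O_p(1)$, which is negligible after Darling--Erd\H{o}s normalization.

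On the boundary part $[\lambda_n,\epsilon n]$ (the upper part is symmetric), I would establish a strong (KMT/Gaussian) approximation of the standardized quadratic CUSUM by $Z_{\mathrm{OU}}(\log\{t/(1-t)\})$, with error $o_p(1/\sqrt{\log\log h_n})$. The key observation is that $W(k)$ is a degenerate U-statistic in time with $p$-dimensional kernel. Its martingale-difference decomposition in $k$ has increments whose conditional variances concentrate at rate governed by $\Tr(\bOme^4)/\{\Tr(\bOme^2)\}^2=O(p^{-1})$. A martingale strong invariance principle (e.g.\ via Skorokhod embedding) then applies. The condition $\log n=o(p^{1/4})$ is exactly what makes the resulting approximation error, of order $p^{-1/4}$ times polylog factors, vanish at the required rate. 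Dependence is handled by the blocking/coupling reductions for (AM), (MD) and (PD) already used in Theorem~\ref{Th1}.

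Finally, apply the classical Darling--Erd\H{o}s extreme-value result for the stationary OU process over an interval of length $\log h_n$, whose normalizing constants are exactly $A_{\mathrm{DE}}$ and $D_{\mathrm{DE}}$. The two boundary ends are asymptotically independent, and their contributions combine into the stated Gumbel limit.
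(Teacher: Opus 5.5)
Your part (i), and the feasible-to-infeasible reductions in part (ii), are sound and are essentially the paper's argument. The paper also compares $Q_{0,k}/\widehat\omega$ uniformly with the infeasible path and gets positivity of $\widetilde\mu_{M,k}$ from the spectral bound $\mu_{M,k}\ge c\sqrt p\,v_k$. It then obtains the $O_p(p^{-1/2})$ CF2 remainder from $\max_kQ_{0,k}^2/(\widehat\omega^2v_k)=O_p(1)$, proved by a weighted maximal inequality over logit blocks (on which $v_k\le Ce^{-|r|}$) with a fourth-moment M\'oricz bound per block. One slip: $\Var\{W(k)\}\asymp\omega_p^2v_k^2$, not $\omega_p^2v_k^4$, since $\E V(t)^2=t^2(1-t)^2$; your target $\max_kQ_{0,k}^2/v_k=O_p(1)$ is still the right one.

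The gap is the only hard step of part (ii), the Darling--Erd\H{o}s law for $T_n$. You reduce it to an asserted strong approximation of the standardized quadratic CUSUM by $Z_{\mathrm{OU}}(\log\{t/(1-t)\})$ with sup-error $o_p\{(\log\log h_n)^{-1/2}\}$. Nothing in the proposal produces this, and the tools you cite cannot:
\begin{itemize}
\item The dependence reductions in the proof of Theorem~\ref{Th1} are not couplings. Under (AM) the blocks are replaced sequentially, and the error is controlled only for expectations of smooth functionals via Davydov's inequality. $\alpha$-mixing of $p$-dimensional vectors with $p\to\infty$ supplies no pathwise block coupling to feed a Skorokhod embedding.
\item $W(k)$ is a CUSUM, so it involves $\sum_{i\le n}\bepsilon_i$ and is not a martingale functional adapted in $k$. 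Its bridge terms have standardized size of order $\sqrt{k/n}$ and would have to be split off.
\item The Gaussianity of the limit is a $p\to\infty$ effect. In a time-direction embedding $M_k=B(\langle M\rangle_k)$, the clock $\langle M\rangle_k$ is itself a non-Gaussian quadratic process with relative fluctuation of order $\{\Tr(\bOme^4)\}^{1/2}/\Tr(\bOme^2)\asymp p^{-1/2}$. It must be controlled uniformly over the $\asymp\log n$ dyadic scales of $[\lambda_n,\epsilon n]$, including lagged and fourth-cumulant contributions, before any Brownian modulus argument applies.
\end{itemize}
The claimed ``$p^{-1/4}$ times polylog'' error, and its calibration by $\log n=o(p^{1/4})$, are asserted rather than derived.

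The missing idea is to take the central limit theorem across coordinates, where the summands are independent, rather than along time, and to work only in distribution. The paper proceeds as follows.
\begin{itemize}
\item It moves the law of the smooth maximum from the data to the ideal Gaussian process $G_{p,1/2}$. This uses the weighted block comparison (error including $(b_n/\lambda_n)^{1/2}$) and Gaussian covariance restoration (error $\lambda_n^{-1}$).
\item Orthogonal invariance gives $G_{p,1/2}(u)=\sum_{j}c_{j,p}^{\Omega}\{Y_j(u)^2-1\}$, with independent stationary Ornstein--Uhlenbeck processes $Y_j$.
\item A Lindeberg replacement of these $p$ independent summands is carried out on a logit grid of mesh $(\log n)^{-8}$. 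Using $\sum_j|c_{j,p}^{\Omega}|^3\le Cp^{-1/2}$, a smooth maximum and Gaussian anti-concentration, this gives Kolmogorov error $C\{(\log N_{\mathrm{OU},n})^7/p\}^{1/8}$ with $\log N_{\mathrm{OU},n}\lesssim\log\log n$.
\item Grid moduli via Gaussian hypercontractivity finish the discretization step. The stationary Gaussian extreme-value theorem is then applied to $Z_{\mathrm{OU}}$ on the whole logit interval of length $\log h_n$.
\end{itemize}
So the paper needs no pathwise approximation, no middle/boundary split, and no independence argument for the two ends.
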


To make the effect of the only second-order term explicit, define
\[
S_{0}^{(1)}
=\max_{1\le k<n}\frac{Q_{0,k}}{\widehat\omega}.
\]
The proof of Theorem~\ref{Tms-1} establishes
\[
|S_{0}-S_{0}^{(1)}|=O_p(p^{-1/2}).
\]
The statistic $S_{1/2}$ is already the first-order feasible maximum by definition, so no CF2 remainder enters its Darling--Erd\H{o}s normalization.

The corresponding component $p$-values are
\begin{align*}
p_{S_{0}}
&=1-F_V\left(S_{0}\right),\\
p_{S_{1/2}}
&=1-F_{\mathrm{Gu}}\left[
A_{\mathrm{DE}}(\log h_n)S_{1/2}
-D_{\mathrm{DE}}(\log h_n)\right].
\end{align*}
The first reference distribution can be simulated from the covariance in Theorem~\ref{Th1}; the second is the standard Gumbel distribution.

\begin{remark}\label{rem-1}
Let $0=t_0<t_1<\cdots<t_{T_d}=1$ be a grid.  For each replication $b=1,\ldots,B_V$, generate
\[
\{V_b(t_0),\ldots,V_b(t_{T_d})\}^{\top}
\sim N(\mathbf0,\mathbf R_{T_d}),
\]
where
\[
(\mathbf R_{T_d})_{k\ell}
=(t_k\wedge t_\ell)^2\{1-(t_k\vee t_\ell)\}^2,
\]
and set $v_b=\max_kV_b(t_k)$.  The empirical distribution of $\{v_b\}$ approximates $F_V$.  No differentiable-process high-threshold expansion is used: the covariance is not differentiable across its diagonal, whereas the $\gamma=1/2$ limit follows instead from the exact Ornstein--Uhlenbeck representation.
\end{remark}

The next proposition gives separate dense-signal conditions for the two temporal weightings.
\begin{prop}\label{prop:Tms-alter}
Suppose the conditions of Theorem~\ref{Th2} hold and $\tau/n\in[\vartheta,1-\vartheta]$ for a fixed $\vartheta\in(0,1/2)$.  For the $\gamma=1/2$ test, also impose the trimming, growth, and feasible-estimation conditions in Theorem~\ref{Tms-1}(ii).  Put
\[
\Delta_{S,n}=\frac{n\|\bdelta\|_2^2}{\sqrt p}.
\]
\begin{enumerate}
\item[(i)] If $\Delta_{S,n}\to\infty$ and
\[
\frac{n\|\bdelta\|_2^2}{p}\longrightarrow0,
\]
then the test based on $S_{0}$ is consistent.  The additional moderate-signal condition keeps the CF2 term smaller than the leading deterministic CUSUM drift.
\item[(ii)] If $\lambda_n\le\tau\le n-\lambda_n$ and
\[
\frac{\Delta_{S,n}}{\sqrt{\log\log h_n}}\longrightarrow\infty,
\]
then the test based on $S_{1/2}$ is consistent.  No CF2-specific signal upper bound is required for this first-order statistic.
\end{enumerate}
\end{prop}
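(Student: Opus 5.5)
The plan is to evaluate each statistic at the true change location $k=\tau$ and show that the normalized value there diverges faster than the relevant null critical value. Under $H_1$, $\bX_i=\bmu_0+\bdelta\ind{i>\tau}+\bepsilon_i$, and the CUSUM vector at $k$ splits as
\[
\sum_{i\le k}\bX_i-\tfrac kn\sum_{i\le n}\bX_i=\mathbf d_k+\mathbf e_k,
\qquad
\mathbf d_\tau=-\tfrac{\tau(n-\tau)}{n}\bdelta ,
\]
where $\mathbf e_k$ is the error CUSUM. Hence
\[
W(\tau)=\frac{\|\mathbf d_\tau\|_2^2}{n\sqrt p}+\frac{2\mathbf d_\tau^\top\mathbf e_\tau}{n\sqrt p}+W^{\epsilon}(\tau).
\]
The first term equals $n v_\tau^2\|\bdelta\|_2^2/\sqrt p\asymp\Delta_{S,n}$, because $\tau/n\in[\vartheta,1-\vartheta]$. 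The noise-only quadratic $W^\epsilon(\tau)$ satisfies $W^\epsilon(\tau)-\mu_{M,\tau}=O_p(1)$ by Theorem~\ref{Th1}. For the cross term, its variance is
\[
4\,\mathbf d_\tau^\top\Cov(\mathbf e_\tau)\mathbf d_\tau/(n^2p)\lesssim n\,\bdelta^\top\bOme\bdelta/p\lesssim \|\bdelta\|_2^2 n/p,
\]
so it is $O_p\{(\Delta_{S,n}/\sqrt p)^{1/2}\}=o_p(\Delta_{S,n})$ whenever $\Delta_{S,n}\to\infty$. Next, Theorem~\ref{Th2} applies under the alternative: its signal condition holds, since $n\|\bdelta\|^2/p\to0$ in (i) and we may restrict to this regime or to the range in which $r_{\mu,n},r_{\omega,n}\to0$. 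It gives $\widetilde\mu_{M,\tau}-\mu_{M,\tau}=o_p(1)$ and $\widehat\omega/\omega_p\to_p1$. Combining these, $Q_{0,\tau}/\widehat\omega\ge c\Delta_{S,n}(1+o_p(1))$.

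For (i), the CF2 correction has to be controlled at $k=\tau$. Write
\[
Q^{\mathrm{CF2}}_{0,\tau}=Q_{0,\tau}\Bigl\{1-\frac{Q_{0,\tau}}{3\widetilde\mu_{M,\tau}}\Bigr\}+\frac{\widehat\omega^2v_\tau^2}{3\widetilde\mu_{M,\tau}} .
\]
The positivity result and Assumption~\ref{ass:C2} give $\widetilde\mu_{M,\tau}\asymp\sqrt p\,v_\tau$ in probability, since the centering is of order $\sqrt p$ times $p^{-1}\sum_h g_h\ge c_f$. Therefore
\[
Q_{0,\tau}/\widetilde\mu_{M,\tau}=O_p(\Delta_{S,n}/\sqrt p)=O_p(n\|\bdelta\|_2^2/p)=o_p(1)
\]
by the moderate-signal condition. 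The bracket is thus $1-o_p(1)$, the last term is nonnegative, and so $S_0\ge c\Delta_{S,n}(1+o_p(1))\to\infty$. The critical value $F_V^{-1}(1-\alpha)$ is a fixed finite number, so $\Pr(p_{S_0}\le\alpha)\to1$. I expect this CF2 step to be the main obstacle. The quadratic map is non-monotone, so a strong drift could push $Q^{\mathrm{CF2}}_{0,\tau}$ negative. The upper bound $n\|\bdelta\|^2/p\to0$ is exactly what keeps the argument $Q_{0,\tau}/\widetilde\mu_{M,\tau}$ inside the increasing region. A further check is needed that Theorem~\ref{Th2}'s remainder terms involving $\bdelta$ stay $o(1)$; under (i) they are bounded by $M^2\Delta_{S,n}/n^{2}$-type quantities, which is fine.

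For (ii), $\tau\in\mathcal K_{1/2}$, so $S_{1/2}\ge Q_{1/2,\tau}/\widehat\omega=(W(\tau)-\widetilde\mu_{M,\tau})/(\widehat\omega v_\tau)\ge c\Delta_{S,n}\{1+o_p(1)\}-O_p(1)$, using $v_\tau\ge\vartheta(1-\vartheta)$. Unlike (i), there is no CF2 term and hence no upper bound on the signal. If $\|\bdelta\|$ is large enough that Theorem~\ref{Th2}'s rates fail, I would argue monotonically instead. The centering estimator's bias from $\bdelta$ enters only through the products $\mathcal P^{(M)}$ at the $O(M)$ differences straddling $\tau$. This contributes at most $O_p(M^2\|\bdelta\|^2/(n\sqrt p))=o(\Delta_{S,n})$ to $\widetilde\mu$, and similarly $\widehat\omega^2=O_p(1+(M^2\|\bdelta\|^2/(n\sqrt p))^2)$, so the drift still dominates. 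The rejection region is $A_{\mathrm{DE}}(\log h_n)S_{1/2}-D_{\mathrm{DE}}(\log h_n)>F_{\mathrm{Gu}}^{-1}(1-\alpha)$, which is equivalent to $S_{1/2}>\sqrt{2\log\log h_n}\,(1+o(1))$. Since $\Delta_{S,n}/\sqrt{\log\log h_n}\to\infty$, this event has probability tending to one, which proves consistency of the $S_{1/2}$ test.
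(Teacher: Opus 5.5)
Your proposal is correct and follows essentially the paper's argument. You evaluate at $k=\tau$ and decompose $Q_{\gamma,\tau}$ into the drift $\asymp\Delta_{S,n}$, a cross term, the $O_p(1)$ null quadratic and the centering errors from Theorem~\ref{Th2}; for (i) you control the CF2 factor through $Q_{0,\tau}/\widetilde\mu_{M,\tau}=O_p(n\|\bdelta\|_2^2/p)$, which the paper packages as the $a_{\mathrm{CF},n}$ expansion of Lemma~\ref{lem:cf2-remainder}, and for (ii) you compare the first-order value with the $\sqrt{2\log\log h_n}$ Darling--Erd\H{o}s threshold. The fallback you sketch for signals that violate Theorem~\ref{Th2}'s rates is unnecessary, because the proposition assumes those conditions, and they already force $r_{\mu,n},r_{\omega,n}\to0$.
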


\section{Adaptive tests}\label{sec:3-adaptive}

The quadratic statistics in Section~\ref{sec:2-L2} aggregate information across all coordinates and are therefore directed primarily at dense alternatives.  To retain power against sparse alternatives, we now introduce coordinatewise standardized CUSUM statistics.  For coordinate $j$, let
\[
P_{k,j}=\sum_{i=1}^kX_{ij},\qquad
\sigma_j^2=(\bOme)_{jj}=\sum_{h\in\mathbb Z}\Cov(\epsilon_{0j},\epsilon_{hj}),
\]
where $\bOme=\sum_{h\in\mathbb Z}\bGam(h)$ is the long-run covariance matrix, and let $\widehat\sigma_j$ estimate $\sigma_j$.  For $\gamma\in\{0,1/2\}$, define
\begin{equation}\label{CUSUM1}
C_{\gamma,j}(k)
=v_k^{-\gamma}\frac{P_{k,j}-t_kP_{n,j}}{\sqrt n\,\widehat\sigma_j},
\qquad k\in\mathcal K_\gamma,\quad j=1,\ldots,p.
\end{equation}
The choice $\gamma=0$ gives the unweighted coordinatewise CUSUM over the full set of candidate locations, whereas $\gamma=1/2$ standardizes its temporal variance and is evaluated on the trimmed set.

Define the matched max-$L_\infty$ statistics
\[
M_{\gamma,n,p}=\max_{k\in\mathcal K_\gamma}\max_{1\le j\le p}|C_{\gamma,j}(k)|.
\]
Thus each quadratic statistic $S_{\gamma,n,p}$ is combined only with the coordinatewise maximum statistic having the same temporal weight $\gamma$.
To state the max-type and joint-limit results, we add the following long-run spatial-correlation condition. Projection sub-Gaussianity is already part of Assumption~\ref{ass:C1}, and the blockwise Gaussian comparison uses Assumptions~\ref{ass:C1}--\ref{ass:C3}.

\begin{assumption}[Weak long-run cross-sectional dependence]\label{ass:C4}
Let $\sigma_j^2=(\bOme)_{jj}$ and
$\rho_{jj'}=(\bOme)_{jj'}/(\sigma_j\sigma_{j'})$. There is $\varrho\in(0,1)$ such that $|\rho_{jj'}|\le\varrho$ for $j\ne j'$. For sequences $\delta_p=o\{(\log p)^{-1}\}$ and $\upsilon_p\to0$, define
\[
 \mathcal N_{p,j}=\{j'\ne j:|\rho_{jj'}|\ge\delta_p\},
\qquad
\mathcal H_p=\{j:|\mathcal N_{p,j}|\ge p^{\upsilon_p}\}.
\]
Assume $|\mathcal H_p|/p\to0$.
\end{assumption}

The next result extends the Gaussian extreme-value limit of \citet{wang2023JRSSB} to all three temporal-dependence branches in Assumption~\ref{ass:C1}.

For the componentwise long-run standard deviations, we use the third-order difference-based estimator of \citet{chan2022AoS}. Let $K_{\mathrm{LRV}}$ be a symmetric bounded kernel supported on $[-1,1]$ with $K_{\mathrm{LRV}}(0)=1$. Assume that, for constants $\widetilde q>1/2$ and $B_{\mathrm{LRV}}\ne0$,
\[
\frac{K_{\mathrm{LRV}}(x)-1}{|x|^{\widetilde q}}\longrightarrow B_{\mathrm{LRV}},
\qquad x\to0.
\]
To make the implementation and the subsequent rate statement explicit, let $\mathbf d^{(3)}=(d_0^{(3)},d_1^{(3)},d_2^{(3)},d_3^{(3)})$ be the normalized third-order difference vector used there, satisfying
\[
\sum_{r=0}^3d_r^{(3)}=0,\qquad \sum_{r=0}^3\{d_r^{(3)}\}^2=1,
\]
with numerical values $(0.1942,0.2809,0.3832,-0.8582)$ up to rounding.  The
zero-sum and unit-norm identities refer to the exact coefficient vector used
in the asymptotic analysis; the four-decimal values reported below describe
the finite-precision implementation and are not used as exact algebraic
identities in the proofs. For coordinate $j$, take a bandwidth $\ell_j$, set $s_j^{(D)}=2\ell_j$, and define
\[
D_{i,j}^{(3)}=\sum_{r=0}^3d_r^{(3)}X_{i-rs_j^{(D)},j},
\qquad i=3s_j^{(D)}+1,\ldots,n.
\]
For $|r|<\ell_j$, put
\[
\widehat c_{r,j}^{D}
=\frac1n\sum_{i=3s_j^{(D)}+|r|+1}^n
D_{i,j}^{(3)}D_{i-|r|,j}^{(3)},
\qquad
\widehat\sigma_j^2
=\sum_{|r|<\ell_j}K_{\mathrm{LRV}}(r/\ell_j)\widehat c_{r,j}^{D}.
\]
Define $\widehat\sigma_j=(\widehat\sigma_j^2)^{1/2}$ when $\widehat\sigma_j^2>0$, and set $\widehat\sigma_j=1$ otherwise. The fallback value is arbitrary and is asymptotically inactive under the conditions below. Let
\[
\ell_{\max}=\max_j\ell_j,\qquad
\ell_{\min}=\min_j\ell_j,
\]
and let $c_{\mathrm{LRV}}<\infty$ be the fixed moment exponent obtained in the componentwise long-run variance section of the Supplementary Material.  The proof there establishes the deterministic rate
\[
a_{\sigma,n}=
\frac{\ell_{\max}\{\log(np)\}^{c_{\mathrm{LRV}}}}{\sqrt n}
+\ell_{\min}^{-\widetilde q}+\eta_{\ell_{\min}}
+\frac{\ell_{\max}^2}{n},
\qquad
r_{\sigma,n}=O_p(a_{\sigma,n}).
\]
If $\ell_j\asymp n^{1/(1+2\widetilde q)}$ uniformly in $j$ and
$\widetilde q>1/2$, put
\[
c_\sigma^\star=\frac{2\widetilde q-1}{2(1+2\widetilde q)}>0.
\]
Then, for every fixed $0<c<c_\sigma^\star$,
\[
a_{\sigma,n}=O(n^{-c}),\qquad r_{\sigma,n}=O_p(n^{-c}).
\]
Consequently, the deterministic requirements in the following theorem hold whenever $p$ is polynomial in $n$.

\begin{theorem}\label{null:Max}
Suppose $H_0$ and Assumptions~\ref{ass:C1}--\ref{ass:C4} hold.  Use the
third-order difference-based estimator defined above, and suppose its kernel
and bandwidth conditions hold with
\[
\ell_{\min}\to\infty,\qquad \ell_{\max}=o(\sqrt n),
\qquad p\lesssim n^\nu
\]
for some fixed $\nu>0$.  Let $a_{\sigma,n}$ be the deterministic rate displayed
above.  Then:
\begin{enumerate}
\item[(i)] if $a_{\sigma,n}\log(2p)\to0$,
\[
2M_{0}^2-\log(2p)\xrightarrow{d}Z_{\mathrm{Gu}},
\qquad \Pr(Z_{\mathrm{Gu}}\le x)=F_{\mathrm{Gu}}(x):=\exp\{-e^{-x}\};
\]
\item[(ii)] if $\lambda_n\asymp n^\lambda$ for some $\lambda\in(0,1)$ and
$a_{\sigma,n}\log\{p\log h_n\}\to0$, then
\[
A_{\mathrm{DE}}(p\log h_n)M_{1/2}-D_{\mathrm{DE}}(p\log h_n)\xrightarrow{d}Z_{\mathrm{Gu}},
\]
where $h_n=\{(\lambda_n/n)^{-1}-1\}^2$, $A_{\mathrm{DE}}(x)=\sqrt{2\log x}$, and
$D_{\mathrm{DE}}(x)=2\log x+\tfrac12\log\log x-\tfrac12\log\pi$.
\end{enumerate}
\end{theorem}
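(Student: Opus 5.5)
The plan has three steps: replace the estimated scales by the true ones, pass from the dependent non-Gaussian sums to a Gaussian field, and then run the extreme-value analysis on that field.

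\emph{Step 1 (oracle statistics).} Let $M_\gamma^\circ$ denote $M_{\gamma,n,p}$ with $\widehat\sigma_j$ replaced by $\sigma_j$. Then
\[
\bigl|M_\gamma-M_\gamma^\circ\bigr|\le M_\gamma^\circ\max_j|\sigma_j/\widehat\sigma_j-1|=M_\gamma^\circ\,O_p(a_{\sigma,n}),
\]
using the uniform rate $r_{\sigma,n}=O_p(a_{\sigma,n})$. Assumption~\ref{ass:C2} gives $\sigma_j^2\ge C_0$, so the fallback $\widehat\sigma_j=1$ is inactive with probability tending to one. The oracle maximum satisfies $M_0^\circ=O_p(\sqrt{\log p})$ and $M_{1/2}^\circ=O_p(\sqrt{\log(p\log h_n)})$. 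In (i) the perturbation of $2M_0^2$ is therefore $O_p(a_{\sigma,n}\log p)=o_p(1)$. In (ii) the perturbation of $A_{\mathrm{DE}}M_{1/2}$ is $O_p(a_{\sigma,n}\log(p\log h_n))=o_p(1)$. It therefore suffices to prove both limits for the oracle statistics.

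\emph{Step 2 (Gaussian approximation).} Under $H_0$ the CUSUM $P_{k,j}-t_kP_{n,j}$ involves only the errors. Write $n^{-1/2}\sum_{i\le k}\epsilon_{ij}/\sigma_j$ as a partial-sum process indexed by $(k,j)$, which has $np\lesssim n^{1+\nu}$ coordinates. I will use a big-block/small-block decomposition with blocks of length $b_n=n^{\beta}$, built differently in each branch:
\begin{itemize}
\item for (AM), Berbee-type coupling, whose cost is controlled by $\alpha(n^{\beta'})$;
\item for (MD), exact independence across gaps;
\item for (PD), an $m$-dependent coupled approximation with error $\exp(-cm^{\gamma_2})$.
\end{itemize}
The independent block sums are then compared with a Gaussian vector by a high-dimensional Gaussian approximation for maxima over hyperrectangles (Chernozhukov--Chetverikov--Kato type). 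This applies uniformly over the $np$ functionals of blocks thanks to the sub-Gaussian projection tails in Assumption~\ref{ass:C1}, with Kolmogorov error $n^{-c}$ since $\log(np)\lesssim\log n$. Interpolating $k$ inside blocks costs $O_p(\sqrt{b_n\log(np)/n})$ in sup-norm, which is negligible even after multiplication by $\sqrt{\log p}$ and, for $\gamma=1/2$ on the trimmed range, by $v_k^{-1/2}\lesssim(n/\lambda_n)^{1/2}$ because $b_n\ll\lambda_n$. The result is a Gaussian field $G_j(t)=B_j(t)-tB_j(1)$, where the Brownian motions $B_j$ have cross-covariance $\rho_{jj'}$. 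The supremum over $t$ in $[0,1]$ (respectively $[\lambda_n/n,1-\lambda_n/n]$) differs from the grid maximum by $o_p((\log p)^{-1/2})$.

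\emph{Step 3 (extremes of the Gaussian field).} For each $j$, $\sup_t|G_j(t)|$ is the supremum of a Brownian bridge, with tail $2e^{-2x^2}(1+o(1))$. Hence $\Pr(\max_j\sup_t|G_j|>x_p)$ with $2x_p^2=\log(2p)+x$ should converge to $1-\exp(-e^{-x})$. For $\gamma=1/2$, $G_j(t)/\sqrt{t(1-t)}$ is a time-changed OU process, $Z_{\mathrm{OU}}(\tfrac12\log\frac{t}{1-t})$, run over a length $\tfrac12\log h_n$. Its absolute maximum has the classical Darling--Erd\H{o}s tail. Across $p$ coordinates this gives the normalisation $A_{\mathrm{DE}}(p\log h_n),D_{\mathrm{DE}}(p\log h_n)$; the factor two for $|\cdot|$ and the OU rate combine to the stated constants.

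The main obstacle is asymptotic independence across coordinates. I would handle it with a Poisson (Chen--Stein or Bonferroni inclusion--exclusion) argument, following \citet{wang2023JRSSB}, on the events $E_j=\{\sup_t|G_j(t)|>x_p\}$. Coordinates in $\mathcal H_p$ are discarded, since $|\mathcal H_p|/p\to0$ so their contribution to the maximum is negligible. For the remaining $j$:
\begin{itemize}
\item pairs with $|\rho_{jj'}|\le\delta_p=o(1/\log p)$ give joint exceedance probabilities $\Pr(E_j)\Pr(E_{j'})(1+o(1))$, by a Slepian/normal comparison of the bivariate sup;
\item the at most $p^{\upsilon_p}$ neighbours with $|\rho|\le\varrho<1$ give $\Pr(E_j\cap E_{j'})\lesssim p^{-1-c(\varrho)}$, so their total contribution $p\cdot p^{\upsilon_p}\cdot p^{-1-c}\to0$.
\end{itemize}
The delicate part is the bivariate process sup bound under correlation $\varrho$, which requires a Pickands-type double-sum estimate. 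For $\gamma=1/2$ it must be done jointly over $t$ as well, since the effective number of independent extremes is $p\log h_n$.
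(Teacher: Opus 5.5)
Your Step~1 is the paper's reduction. The proof of Theorem~\ref{null:Max} writes $C_{\gamma,j}(k)=C^{\epsilon,0}_{\gamma,j}(k)\sigma_j/\widehat\sigma_j$ and gets the same $O_p\{a_{\sigma,n}\log(2p)\}$ and $O_p\{a_{\sigma,n}\log(p\log h_n)\}$ errors. Your Step~3 matches Lemma~\ref{lem:gaussian-extremes-general}, which applies the extreme-value theorem of \citet{wang2023JRSSB} to the ideal Brownian-bridge and Ornstein--Uhlenbeck coordinate maxima and controls the grid-versus-continuum error by chaining. Your (MD) and (PD) reductions are also the paper's.

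The gap is Step~2 in the (AM) branch. Berbee's lemma couples a block with an independent copy at a cost equal to the \emph{absolute-regularity} ($\beta$-mixing) coefficient. Assumption~\ref{ass:C1}(AM) controls only the strong-mixing coefficient $\alpha_{\epsilon,p}$, which does not dominate $\beta$. Two natural substitutes also fail:
\begin{itemize}
\item Bradley's coupling for $\alpha$-mixing is for real-valued variables, and its multivariate versions deteriorate with the dimension, here $b_np$.
\item The event $\{\max_{k,j}|C_{\gamma,j}(k)|\le x\}$ is built from cumulative sums across blocks. It is not a product of block functionals, so the Volkonskii--Rozanov product inequality does not apply.
\end{itemize}
So ``Berbee-type coupling controlled by $\alpha(n^{\beta'})$'' is not available, and this branch needs a different idea. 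One option is the paper's argument in Lemma~\ref{lem:block-comparison-general}, which avoids coupling. It replaces blocks by independent copies one at a time inside a third-order Taylor expansion of a smooth functional of log-sum-exp maxima. The replacements are ordered so that only the one-sided future $\sigma$-field $\mathcal F_{\ell+1}^{+}$ interacts with the block being replaced. The failure of the first- and second-order terms to cancel is then $\E|\E(U_\ell\mid\mathcal F_{\ell+1}^{+})|$, which Davydov's inequality bounds, after truncation, by a power of $\alpha_{\epsilon,p}(s_n)$. The other option is to quote a high-dimensional Gaussian approximation proved directly under $\alpha$-mixing, such as the one in \citet{chang2024bernoulli}.

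A smaller point concerns the CCK-type Kolmogorov bounds over hyperrectangles. These, and the Gaussian-to-Gaussian comparison you implicitly need to pass from the blocked covariance to $K_B(s,t)\rho_{jj'}$, require coordinate variances bounded away from zero. For $\gamma=0$, however, $\Var\{C_{0,j}(k)\}\approx v_k$, which is of order $1/n$ at the endpoints. You must therefore treat indices near $k=0$ and $k=n$ separately and show they stay below the Gumbel threshold, for instance by rescaling dyadic rings near the endpoints. The paper avoids this because it compares only expectations of $C^3$ test functions. It removes the smoothing at the end through continuity of the limiting law, so it never needs anti-concentration for the finite, degenerate array.
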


By Theorem~\ref{null:Max}, the asymptotic max-component $p$-values are
\begin{align*}
p_{M_{0}}&=1-F_{\mathrm{Gu}}\{2M_{0}^2-\log(2p)\},\\
p_{M_{1/2}}&=1-F_{\mathrm{Gu}}\{
A_{\mathrm{DE}}(p\log h_n)M_{1/2}-D_{\mathrm{DE}}(p\log h_n)\}.
\end{align*}
These are the quantities used in all asymptotic statements.

For finite-sample computation, the componentwise long-run variance routine follows the robust-change-point option of \citet{chan2022AoS}.  Each coordinate is first pre-adjusted for a small number of large level disturbances and a piecewise-linear trend.  The third-order difference uses the coefficient vector
\[
(0.1942,0.2809,0.3832,-0.8582)
\]
and is then combined with the parabolic kernel $K_{\mathrm{LRV}}(x)=(1-x^2)_+$ and a data-driven bandwidth $\ell_j$.  This preprocessing is used only inside the long-run variance estimate; the CUSUM numerators and the definitions of $M_{0}$ and $M_{1/2}$ are unchanged.  The implementation computes autocovariances on the available third-difference series of length $n-6\ell_j$.  Replacing the formal original-sample denominator by this available length multiplies the corresponding term by
$n/(n-6\ell_j)=1+O(\ell_j/n)$; uniformly over coordinates, this discrepancy is absorbed by the existing $\ell_{\max}^2/n$ term in $a_{\sigma,n}$.

To account analytically for the finite-sample uncertainty of $\widehat\sigma_j^2$, define
\[
s_K(\ell)=\frac{16\ell}{15}-\frac{1}{15\ell^3},
\qquad
\nu_j=\frac{n-6\ell_j}{(7/6)s_K(\ell_j)},
\]
and let $R_j$ have the Gamma distribution with shape and rate both equal to $\nu_j/2$.  The implemented max-component $p$-values are
\[
p_{M_{0}}^{\mathrm{FLRV}}
=1-\exp\left[-2\sum_{j=1}^p
\left(1+\frac{4M_{0}^2}{\nu_j}\right)^{-\nu_j/2}
\right]
\]
and
\[
\begin{aligned}
p_{M_{1/2}}^{\mathrm{FLRV}}
=1-\exp\Bigg[&-\frac{\exp\{D_{\mathrm{DE}}(p\log h_n)\}}{p}
\sum_{j=1}^p
\psi_{\nu_j}\{A_{\mathrm{DE}}(p\log h_n)M_{1/2}\}\Bigg],\\
\psi_{\nu}(b)&=\E\{\exp(-b\sqrt R)\},
\qquad R\sim\operatorname{Gamma}(\nu/2,\nu/2).
\end{aligned}
\]
No multiplicative finite-sample stabilization factor and no bootstrap are used.  For each fixed $x,b\ge0$, as $\min_j\nu_j\to\infty$,
\[
\left(1+\frac{4x^2}{\nu_j}\right)^{-\nu_j/2}\to e^{-2x^2},
\qquad
\psi_{\nu_j}(b)\to e^{-b},
\]
so, under the auxiliary Gamma approximation and $\min_j\nu_j\to\infty$, these analytic corrections reduce algebraically to the Gumbel calibrations in Theorem~\ref{null:Max}.  In the numerical and empirical implementations, the matched Cauchy combinations replace the max-component asymptotic $p$-value by its $p^{\mathrm{FLRV}}$ version.  The formal theorems below concern the asymptotic $p$-values in the preceding display; the RCP preprocessing and finite-LRV tail formulas are finite-sample calibration choices and are not invoked in their proofs.

In practice, the sparsity level is unknown.  The next theorem establishes asymptotic independence only for the two matched pairs.
\begin{theorem}\label{indnull}
Suppose $H_0$ and Assumptions~\ref{ass:C1}--\ref{ass:C4} hold.  Let $M=\lceil(n\wedge p)^{1/8}\rceil$, assume
\[
p=o(n^{3/2}),
\]
and impose the feasible-estimation requirements in Theorems~\ref{Th2} and \ref{null:Max}(i).  Then, for every $x,y\in\mathbb R$,
\begin{align*}
&\Pr\left\{S_{0}\le x,
2M_{0}^2-\log(2p)\le y\right\}
\longrightarrow F_V(x)F_{\mathrm{Gu}}(y).
\end{align*}
If, in addition, $\lambda_n\asymp n^\lambda$ for some $\lambda\in(0,1)$,
\[
\log n=o(p^{1/4}),\qquad
\sqrt{\log\log h_n}\,r_{\mu,n}+(\log\log h_n)r_{\omega,n}\longrightarrow0,
\]
and the feasible-estimation requirement in Theorem~\ref{null:Max}(ii) holds, then
\begin{align*}
&\Pr\left\{
A_{\mathrm{DE}}(\log h_n)S_{1/2}
-D_{\mathrm{DE}}(\log h_n)\le x,\right.\\[-2mm]
&\hspace{37mm}\left.
A_{\mathrm{DE}}(p\log h_n)M_{1/2}
-D_{\mathrm{DE}}(p\log h_n)\le y\right\}
\longrightarrow F_{\mathrm{Gu}}(x)F_{\mathrm{Gu}}(y).
\end{align*}
No cross-$\gamma$ independence assertion is required or used.
\end{theorem}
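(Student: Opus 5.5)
The plan is to reduce both joint statements to asymptotic independence of infeasible, population-standardized versions of the statistics, and then to transfer the result to the feasible statistics by the estimation rates already established.

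\textbf{Step 1: reduction to infeasible statistics.} By Theorem~\ref{Th2}, replacing $\widetilde\mu_{\gamma,M,k}$, $\widehat\omega$ by $\mu_{\gamma,M,k}$, $\omega_p$ changes $S_0$ by $o_p(1)$. For $S_{1/2}$, the same replacement costs $o_p(A_{\mathrm{DE}}(\log h_n)^{-1})$ under the stated rate condition. By the proof of Theorem~\ref{Tms-1}, the CF2 correction is $O_p(p^{-1/2})$. On the max side, the uniform rate $a_{\sigma,n}$ for $\widehat\sigma_j/\sigma_j-1$ together with $a_{\sigma,n}\log(2p)\to0$ (respectively $a_{\sigma,n}\log\{p\log h_n\}\to0$) shows that replacing $\widehat\sigma_j$ by $\sigma_j$ perturbs $2M_0^2-\log(2p)$ and the Darling--Erd\H{o}s normalized $M_{1/2}$ by $o_p(1)$. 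Since the limit laws $F_V$ and $F_{\mathrm{Gu}}$ are continuous, it suffices to prove joint convergence for the infeasible pair $(S_\gamma^\circ,M_\gamma^\circ)$.

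\textbf{Step 2: blocking and Gaussianization.} Partition $\{1,\dots,p\}$ into the set $\mathcal H_p$ from Assumption~\ref{ass:C4} and a complement split into a small number of groups. Alternatively, use a random-coordinate thinning: the quadratic statistic $W(k)$ loses only $o_p(1)$ when $\mathcal H_p$ is dropped, because $|\mathcal H_p|/p\to0$ and the relevant variance is of order $|\mathcal H_p|/p$. The max statistic can likewise be restricted to coordinates outside $\mathcal H_p$, since by a union bound these coordinates are asymptotically negligible for the Gumbel limit.

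The core argument follows \citet{wang2023JRSSB}. For a fixed finite grid of times (for $S_0$), or the Darling--Erd\H{o}s scale grid (for $S_{1/2}$), one shows the following. Conditional on the event $\{M^\circ\le u_p(y)\}$, or more precisely by a Bonferroni/inclusion--exclusion expansion of $\Pr(S^\circ\le x,\ \bigcup_{j}A_j)$ with $A_j=\{\max_k|C_j(k)|>u_p(y)\}$, each term factorizes as
\[
\Pr(S^\circ\le x,A_{j_1}\cap\cdots\cap A_{j_d})=\Pr(S^\circ\le x)\Pr(A_{j_1}\cap\cdots\cap A_{j_d})\{1+o(1)\}
\]
uniformly over $d$-tuples of weakly correlated coordinates. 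The reason is that $S^\circ$ computed without coordinates $j_1,\dots,j_d$ and their $\delta_p$-neighbourhoods differs from $S^\circ$ by $O_p(d\,p^{\upsilon_p}/\sqrt p)=o_p(1)$. This leave-out statistic is nearly independent of $(\epsilon_{i j_1},\dots,\epsilon_{ij_d})_i$. Near-independence is established by a Gaussian comparison for the $d$-dimensional CUSUM vector jointly with a finite-dimensional projection of the quadratic path. Temporal dependence is handled branch-wise: large-block/small-block interpolation under (AM), exact separation under (MD), and $m$-approximation under (PD). Cross-sectional residual dependence between the leave-out sum and the selected coordinates is controlled by the cumulant bound of Assumption~\ref{ass:C3} and by $|\rho_{jj'}|<\delta_p=o(1/\log p)$.

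\textbf{Step 3: passing to the supremum.} For $S_0$, one uses tightness from Theorem~\ref{Th1} to pass from finite grids to the supremum over $t$. For $\gamma=1/2$, the supremum over $\log h_n$ of the OU process is approximated by maxima over $O(\log h_n)$ well-separated unit blocks. The factorization in Step 2 is applied blockwise, with the uniform-in-$k$ error absorbed by $\log n=o(p^{1/4})$.

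\textbf{Main obstacle.} The main obstacle is the near-independence in Step 2 under non-Gaussian dependence. Specifically, one must show that the degenerate quadratic form built from the $p-d$ remaining coordinates is asymptotically independent of the extreme coordinates, uniformly over the $\binom{p}{d}$ tuples. I would first try a conditional CLT: conditional on the selected coordinates' sample paths, establish the Gaussian-process limit of Theorem~\ref{Th1} for the leave-out path via a martingale CLT whose conditional variance converges in probability to the unconditional one, using Assumption~\ref{ass:C3} to bound the conditional fourth-order corrections.
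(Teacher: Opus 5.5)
Your Step~1 is sound and matches the last part of the paper's proof. It covers feasible centering and scale via Theorem~\ref{Th2}, the $O_p(p^{-1/2})$ CF2 remainder, and the $a_{\sigma,n}\log(2p)$ and $a_{\sigma,n}\log\{p\log h_n\}$ max-side errors.

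The gap is in Step~2, where you run the Bonferroni/inclusion--exclusion factorization directly on the non-Gaussian process. Each term $\Pr(S^\circ\le x,A_{j_1}\cap\cdots\cap A_{j_d})$ is of order $p^{-d}$ and is summed over $\binom{p}{d}$ tuples. The factorization must therefore hold with \emph{relative} error $o(1)$, uniformly over tuples. A Gaussian comparison for the $d$-dimensional CUSUM vector together with finitely many quadratic coordinates gives only additive errors, which are useless at probability scale $p^{-d}$. What you would actually need is a Cram\'er-type moderate-deviation statement for the joint law of the leave-out quadratic path and the selected coordinate paths, and nothing in Assumptions~\ref{ass:C1}--\ref{ass:C4} provides it.

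Removing the $\delta_p$-neighbourhoods does not help either. The remaining coordinates are still weakly correlated with the selected ones, and in aggregate this is not negligible. The component explained by linear regression on the selected coordinates must be projected out, and for non-Gaussian data the resulting residual is only uncorrelated with, not independent of, the selected coordinates.

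The conditional CLT you suggest does not close this. Convergence in probability of the conditional variance says nothing about the conditional law of the leave-out path on $A_J$, an event of vanishing probability. Low-order cumulant bounds such as Assumption~\ref{ass:C3} do not control conditional distributions given extreme values. The same issue affects your union-bound removal of $\mathcal H_p$: sub-Gaussian projection bounds give per-coordinate exceedance probabilities $p^{-c}$ with an unspecified $c$, not $O(p^{-1})$.

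The missing idea is to reverse the order of operations, as the paper does.
\begin{itemize}
\item First Gaussianize the \emph{pair of maxima} jointly. Lemma~\ref{lem:block-comparison-general} compares $\E f\{\operatorname{smax}(\mathcal Q_{\gamma,n}),\operatorname{smax}(\mathcal C^{\pm}_{\gamma,n})\}$ with its covariance-matched Gaussian analogue by blockwise Lindeberg replacement over time blocks, using the branch-specific reductions for (AM), (MD), and (PD). Because the target joint probabilities are of order one, additive errors of size $L_n^{C}r_{n,\gamma}$ suffice.
\item Next, Lemma~\ref{lem:gaussian-cov-restoration} moves jointly to the ideal Brownian-bridge array.
\item Only in that exactly Gaussian model is the decoupling carried out (Lemma~\ref{lem:gaussian-decoupling-general}). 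Gaussian regression on the selected coordinates writes the bridge as a residual process \emph{exactly} independent of them, plus a rank-$O(|J|)$ quadratic remainder of size $O_p\{\log(np)/\sqrt p\}$ uniformly over the grid and for both weights. The factorial-moment argument of \citet{wang2023JRSSB} then goes through.
\item The marginal limits follow from Lemmas~\ref{lem:ideal-gaussian-dense}, \ref{lem:dense-boundary-extreme}, and \ref{lem:gaussian-extremes-general}, and your Step~1 completes the argument.
\end{itemize}
With this ordering, neither a conditional CLT nor any moderate-deviation control of the non-Gaussian process is required.
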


For $\gamma\in\{0,1/2\}$, define
\[
T_{CC,\gamma}=\tfrac12\tan\{\pi(1/2-p_{S_{\gamma,n,p}})\}
+\tfrac12\tan\{\pi(1/2-p_{M_{\gamma,n,p}})\},
\]
and
\[
p_{CC,\gamma}=1-F_C(T_{CC,\gamma}),
\]
where $F_C$ is the standard Cauchy distribution function.
\begin{prop}\label{prop:cauchy-size}
Under the corresponding $\gamma=0$ and $\gamma=1/2$ conditions of Theorem~\ref{indnull} and $H_0$,
\[
T_{CC,0}\xrightarrow{d}C(0,1),\qquad
T_{CC,1/2}\xrightarrow{d}C(0,1).
\]
and, for every $\alpha\in(0,1)$,
\[
\Pr(p_{CC,0}\le\alpha)\to\alpha,
\qquad
\Pr(p_{CC,1/2}\le\alpha)\to\alpha.
\]
\end{prop}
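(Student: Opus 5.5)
The plan is to reduce Proposition~\ref{prop:cauchy-size} to the joint limits in Theorem~\ref{indnull}, using the continuous mapping theorem. Fix $\gamma$ and write the component $p$-values as $p_{S_\gamma}=1-F_\gamma(U_n)$ and $p_{M_\gamma}=1-F_{\mathrm{Gu}}(U_n')$. For $\gamma=0$ take $F_0=F_V$, $U_n=S_0$ and $U_n'=2M_0^2-\log(2p)$. For $\gamma=1/2$ take $F_{1/2}=F_{\mathrm{Gu}}$, $U_n=A_{\mathrm{DE}}(\log h_n)S_{1/2}-D_{\mathrm{DE}}(\log h_n)$ and $U_n'=A_{\mathrm{DE}}(p\log h_n)M_{1/2}-D_{\mathrm{DE}}(p\log h_n)$.

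Theorem~\ref{indnull} gives convergence of the joint distribution function at every $(x,y)\in\mathbb R^2$. The limit $F_\gamma(x)F_{\mathrm{Gu}}(y)$ is continuous: $F_V$ is continuous by Theorem~\ref{Th1}, and $F_{\mathrm{Gu}}$ is continuous. Hence $(U_n,U_n')\xrightarrow{d}(U,U')$ with $U\sim F_\gamma$ and $U'\sim F_{\mathrm{Gu}}$ independent.

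The map $(u,u')\mapsto(1-F_\gamma(u),1-F_{\mathrm{Gu}}(u'))$ is continuous. By the continuous mapping theorem, $(p_{S_\gamma},p_{M_\gamma})\xrightarrow{d}(P_1,P_2)$. By the probability integral transform, $P_1$ and $P_2$ are independent $\mathrm{Unif}(0,1)$ variables; this step again uses continuity of $F_V$ and $F_{\mathrm{Gu}}$.

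The main obstacle is that $g(a,b)=\tfrac12\tan\{\pi(1/2-a)\}+\tfrac12\tan\{\pi(1/2-b)\}$ is continuous only on the open square $(0,1)^2$ and blows up at the boundary. The continuous mapping theorem still applies, because the limit law puts probability zero on $\partial[0,1]^2$. Extending $g$ arbitrarily to the closed square, its discontinuity set is contained in that boundary, which is a null set for $(P_1,P_2)$. Therefore $T_{CC,\gamma}=g(p_{S_\gamma},p_{M_\gamma})\xrightarrow{d}g(P_1,P_2)$.

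It remains to identify the limit law. If $P\sim\mathrm{Unif}(0,1)$, then $\tan\{\pi(1/2-P)\}$ is standard Cauchy. The two summands are independent, and the Cauchy family is stable: $\tfrac12C_1+\tfrac12C_2\sim C(0,1)$, which is immediate from the characteristic function $e^{-|t|}$. Thus $T_{CC,\gamma}\xrightarrow{d}C(0,1)$.

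For the size statement, $p_{CC,\gamma}\le\alpha$ if and only if $T_{CC,\gamma}\ge F_C^{-1}(1-\alpha)$, because $F_C$ is strictly increasing. Since $F_C$ is continuous, the portmanteau theorem gives $\Pr(p_{CC,\gamma}\le\alpha)\to 1-F_C\{F_C^{-1}(1-\alpha)\}=\alpha$.

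The two values of $\gamma$ are handled separately under their respective conditions in Theorem~\ref{indnull}. No cross-$\gamma$ dependence is needed, since each Cauchy statistic involves only its own matched pair.
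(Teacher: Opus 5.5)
Your proposal is correct and follows the same route as the paper's proof. Theorem~\ref{indnull} and continuity of $F_V$ and $F_{\mathrm{Gu}}$ give joint convergence of the two component $p$-values to independent $\mathrm{Unif}(0,1)$ variables, the tangent transform and Cauchy stability then give $T_{CC,\gamma}\xrightarrow{d}C(0,1)$, and continuity of $F_C$ gives the size statement. Your extra care at the boundary of $[0,1]^2$ (for example, $p_{S_0}=1$ whenever $S_0\le 0$), handled through the continuous mapping theorem with a discontinuity set that is null under the limit, is a detail the paper leaves implicit, and you treat it correctly.
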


We also establish matched-pair factorization under local alternatives.  Put
\[
\mathfrak s_{\delta,n}=\frac{n\|\bdelta\|_2^2}{p}.
\]
Let the common local-alternative conditions be
\begin{align*}
H_{1;n,p}:\qquad
&\frac{|\mathcal A|(\log n)^2}{p}\to0,
\qquad
\frac{n\bdelta^\top\bOme\bdelta}{p}\to0,\\
&\sqrt n\,\|\bdelta\|_\infty
=O\!\left[\{\log(np)\}^{b_\delta}\right],
\end{align*}
where $\mathcal A=\{j:\delta_j\ne0\}$ and $b_\delta\ge0$ is fixed.  For the $\gamma=0$ pair, additionally assume
\[
\sqrt p\,\mathfrak s_{\delta,n}^2\to0,
\]
which controls the CF2 remainder.  No corresponding condition is needed for the first-order $\gamma=1/2$ dense statistic.  Define $\chi_{\mathcal A,p}$ and the Schur complement $\bOme_U$ as in the Supplementary Material, assume
$\chi_{\mathcal A,p}\{\log(np)\}^2\to0$, and require the correlation matrix induced by $\bOme_U$ to satisfy Assumption~\ref{ass:C4}.  For the $\gamma=1/2$ pair, also assume $\lambda_n\asymp n^\lambda$ for some $\lambda\in(0,1)$, $\log n=o(p^{1/4})$, and
\[
\left(\frac{n\bdelta^\top\bOme\bdelta}{p}\right)^{1/2}
\sqrt{\log\log h_n}\longrightarrow0.
\]
In addition, with $L_n=\log(np)$, assume the max-component plug-in compatibility
\[
a_{\sigma,n}\{L_n^{2b_\delta}+L_n\}\longrightarrow0,
\]
and the dense-component compatibility
\begin{align*}
r_{\mu,n}+r_{\omega,n}(1+\Delta_{S,n})&\to0,\\
\sqrt{\log\log h_n}\,r_{\mu,n}
+\{\log\log h_n+\sqrt{\log\log h_n}\,\Delta_{S,n}\}r_{\omega,n}&\to0
\end{align*}
for the $\gamma=0$ and $\gamma=1/2$ pairs, respectively.

\begin{theorem}\label{indalter}
Under the preceding local-alternative, support-decoupling, growth, and feasible-estimation conditions, define
\begin{align*}
Z_{S,0,n}
&=S_{0},
&
Z_{M,0,n}
&=2M_{0}^2-\log(2p),\\
Z_{S,1/2,n}
&=A_{\mathrm{DE}}(\log h_n)
  S_{1/2}
  -D_{\mathrm{DE}}(\log h_n),\\
Z_{M,1/2,n}
&=A_{\mathrm{DE}}(p\log h_n)M_{1/2}
  -D_{\mathrm{DE}}(p\log h_n).
\end{align*}
For each $\gamma\in\{0,1/2\}$ and every bounded Lipschitz $f,g$,
\[
\left|\E\{f(Z_{S,\gamma,n})g(Z_{M,\gamma,n})\}
-\E f(Z_{S,\gamma,n})\,\E g(Z_{M,\gamma,n})\right|\to0.
\]
\end{theorem}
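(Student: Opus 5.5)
The plan is to reduce the alternative-case factorization to the null-case factorization of Theorem~\ref{indnull}, using a decomposition of both statistics into a noise part and a deterministic or weakly interacting signal part. Write $\bX_i=\bmu_0+\bdelta\ind{i>\tau}+\bepsilon_i$. The CUSUM numerator splits as $\mathbf N_k+\mathbf d_k$. Here $\mathbf N_k$ is the noise CUSUM and $\mathbf d_k=-\bdelta\,k(n-\tau)/n$ or $-\bdelta\,\tau(n-k)/n$ is deterministic. For the dense statistic,
\[
W(k)=\frac{\|\mathbf N_k\|_2^2}{n\sqrt p}+\frac{2\mathbf d_k^\top\mathbf N_k}{n\sqrt p}+\frac{\|\mathbf d_k\|_2^2}{n\sqrt p}.
\]
The cross term has variance of order $\bdelta^\top\bOme\bdelta\, n/p$ uniformly in $k$, after dividing by $\omega_p$. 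Under $n\bdelta^\top\bOme\bdelta/p\to0$, a maximal inequality over $k$ makes it $o_p(1)$. For $\gamma=1/2$ the extra factor $\sqrt{\log\log h_n}$ is covered by the stated condition. The feasible centering and scale are handled by Theorem~\ref{Th2} and the dense-compatibility conditions. The condition with $(1+\Delta_{S,n})$ controls $\widehat\omega/\omega_p-1$ multiplied by the drift. For $\gamma=0$, the CF2 correction applied to the drift is controlled by $\sqrt p\,\mathfrak s_{\delta,n}^2\to0$. The conclusion is that $Z_{S,\gamma,n}=\Phi_{\gamma,n}(\text{noise path})+o_p(1)$, where $\Phi_{\gamma,n}$ is a deterministic functional: the maximum of the noise path plus the deterministic drift $\|\mathbf d_k\|_2^2/(n\sqrt p\,\omega_p)$, with the appropriate weighting and normalization.

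For the max statistic, I will split coordinates into $\mathcal A$ and $\mathcal A^c$. Since $|\mathcal A|(\log n)^2/p\to0$, the contribution of the coordinates in $\mathcal A$ to the dense quadratic noise term is negligible relative to $\omega_p$. The key device is to regress the noise on $\mathcal A^c$ on the noise on $\mathcal A$ at the level of long-run covariance, giving the Schur complement $\bOme_U$. The residual coordinates $U$ carry the long-run covariance $\bOme_U$, whose correlation matrix satisfies Assumption~\ref{ass:C4}. The condition $\chi_{\mathcal A,p}\{\log(np)\}^2\to0$ should make the replacement of $\mathcal A^c$ noise by $U$ cost $o_p(1)$ at the Gumbel scale $L_n$. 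Then
\[
Z_{M,\gamma,n}=\max\{G_{\gamma,n}(\mathcal A\text{-block}),\,G'_{\gamma,n}(U\text{-block})\}+o_p(1).
\]
The $\widehat\sigma_j$ replacement is controlled by $a_{\sigma,n}\{L_n^{2b_\delta}+L_n\}\to0$, since signal coordinates have $\sqrt n|\delta_j|\lesssim L_n^{b_\delta}$.

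The core step is a joint approximation with three properties. The dense noise path, with the $\mathcal A$ coordinates removed, is asymptotically independent of the $U$-block coordinatewise maximum; this is exactly the mechanism behind Theorem~\ref{indnull}, applied with $\bOme_U$ in place of $\bOme$. The $\mathcal A$-block max is asymptotically independent of the dense path. Finally, the $\mathcal A$-block and the $U$-block are asymptotically independent by construction of the Schur complement, up to Gaussian comparison. For the second property, the dense path depends on the $\mathcal A$ coordinates only through a contribution of relative size $|\mathcal A|/p\to0$. Replacing the dense path by its version built from $\mathcal A^c$ coordinates, or from $U$, therefore costs $o_p(1)$, and the difference between $\mathcal A^c$ and $U$ in the quadratic form is controlled by $\chi_{\mathcal A,p}$.

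Once all three properties hold, the pair becomes a function of two asymptotically independent objects. The first is the dense functional of the $U$-path. The second is the maximum of the $\mathcal A$-max and the $U$-max, where the $U$-max is independent of the dense part by the Theorem~\ref{indnull} argument and the $\mathcal A$-max is approximately independent of the whole $U$-system. Bounded Lipschitz test functions absorb the $o_p(1)$ errors, which gives the factorization. The main obstacle is the first property under the alternative: Theorem~\ref{indnull}'s blocking, Gaussian approximation and extreme-value argument must be rerun for the $U$-system. I would reprove it by conditioning on the dense path through a Gaussian comparison with coupled blocks, and show that the max-type tail events of individual $U$-coordinates are asymptotically uncorrelated with the quadratic functional, as in the null case. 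Uniformity over the deterministic shifts is preserved because the drift enters the dense statistic only deterministically and the $U$-coordinates carry no signal.
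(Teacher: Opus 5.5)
Your architecture matches the paper's almost piece by piece. The paper uses the same ingredients:
\begin{itemize}
\item the split of each dense statistic into noise quadratic, deterministic drift, and a cross term of order $O_p(b_{\delta,n})$, times $\sqrt{\log\log h_n}$ when $\gamma=1/2$;
\item removal of the $\mathcal A$-coordinates from the dense form by a rank argument;
\item the Schur complement $\bOme_U$, with $\chi_{\mathcal A,p}$ controlling the regression part of the $\mathcal A^c$ max coordinates;
\item decoupling of the residual dense/max pair under Assumption~\ref{ass:C4} for $\bOme_U$;
\item feasible replacements via $r_{\mu,n}+r_{\omega,n}(1+\Delta_{S,n})$, $\sqrt p\,\mathfrak s_{\delta,n}^2$, and $a_{\sigma,n}\{L_n^{2b_\delta}+L_n\}$.
\end{itemize}

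Where your plan goes astray is the order of Gaussian comparison and regression. You regress the \emph{non-Gaussian} noise on its $\mathcal A$-block and then rerun the blocking and Gaussian-approximation argument of Theorem~\ref{indnull} for the resulting $U$-system. That step is not available as stated, for two reasons.
\begin{itemize}
\item The long-run regression residual is only uncorrelated with the $\mathcal A$-block, not independent of it. Your third property therefore has no content before a Gaussian comparison.
\item The transformed process $\bepsilon_{i,\mathcal A^c}-\bOme_{\mathcal A^c\mathcal A}\bOme_{\mathcal A\mathcal A}^{-1}\bepsilon_{i,\mathcal A}$ need not satisfy Assumption~\ref{ass:C3} uniformly in $p$. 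Only the operator norm of the regression matrix is controlled, so the absolute coordinatewise cumulant sums can pick up factors growing with $|\mathcal A|$.
\end{itemize}

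The paper reverses the order. It first transfers the \emph{whole} shifted pair to the Gaussian side: the drifted dense path together with all max coordinates and their deterministic shifts. This goes from the original process to the finite Gaussian and then to the ideal bridge, using the shifted forms of Lemmas~\ref{lem:block-comparison-general} and~\ref{lem:gaussian-cov-restoration}. That transfer needs four devices:
\begin{itemize}
\item the multiplier $(1+b_{\delta,n})^3$;
\item clipping the max coordinate at $x_0[\{\log(np)\}^{b_\delta}+\{\log(np)\}^{1/2}]$;
\item mollification at bandwidth $L_n^{-2}$;
\item the strengthened rate $L_n^{c_{\mathrm{out}}}r_{n,\gamma}\to0$, which absorbs the resulting $L_n^{12}$.
\end{itemize}
Only then does it regress $\mathbf B_{\Omega,\mathcal A^c}$ on $\mathbf B_{\Omega,\mathcal A}$. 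There the residual $\mathbf U$ is \emph{exactly} independent of $\mathbf B_{\Omega,\mathcal A}$, so Lemma~\ref{lem:gaussian-decoupling-general} applies directly to $\bOme_U$ with the drift added. No non-Gaussian argument has to be rerun.

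The combination step you left implicit is a conditioning argument. The residual pair $(X_U,Y_U)$ is independent of the signal-block maximum $Y_{\mathcal A}$, and $y\mapsto g\{\max(y,z)\}$ has the same bounded-Lipschitz norm as $g$. Conditioning on $Y_{\mathcal A}$ therefore carries the residual factorization over to $(X_U,\max\{Y_U,Y_{\mathcal A}\})$. This needs the residual factorization to hold uniformly over the bounded-Lipschitz unit ball, which the Gaussian lemma supplies.

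Finally, inside that lemma the decoupling regresses the dense quadratic form on the few exceeding coordinates, at cost $\log(np)/\sqrt p$, rather than conditioning on the dense path as you propose.
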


\begin{prop}\label{prop:cauchy-consistency}
For either matched pair, let $p_{1,n}$ and $p_{2,n}$ denote the two component $p$-values. If $p_{1,n}\to0$ in probability and $(1-p_{2,n})^{-1}=O_p(1)$, or the same conditions hold after interchanging the two components, then the corresponding Cauchy-combination $p$-value converges to zero.
\end{prop}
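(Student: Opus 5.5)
The plan is a purely deterministic argument about the Cauchy transform, applied on events of probability tending to one. Write $T=\tfrac12\tan\{\pi(1/2-p_{1,n})\}+\tfrac12\tan\{\pi(1/2-p_{2,n})\}$ and $p_{CC}=1-F_C(T)$, where $F_C(x)=1/2+\pi^{-1}\arctan x$. Since $F_C(x)\to1$ as $x\to\infty$, it suffices to show $T\to\infty$ in probability. By the symmetry of the definition of $T$ in its two arguments, only the stated ordering of the components needs to be handled; the interchanged case is identical.

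First, treat the component with $p_{1,n}\to0$. For $p\in(0,1/2)$, $\tan\{\pi(1/2-p)\}=\cot(\pi p)\ge 1/(\pi p)-\pi p/3$, or more simply $\cot(\pi p)\ge 1/(2\pi p)$ for small $p$. Hence, for any $K>0$, $\Pr[\tfrac12\tan\{\pi(1/2-p_{1,n})\}\ge K]\to1$, because $p_{1,n}\to0$ in probability.

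Second, control the other component from below. We have $\tan\{\pi(1/2-p_{2,n})\}=-\cot\{\pi(1-p_{2,n})\}$. If $q=1-p_{2,n}\in(0,1/2]$, then $|\cot(\pi q)|\le 1/(\pi q)$. The hypothesis $(1-p_{2,n})^{-1}=O_p(1)$ says exactly that, for each $\varepsilon>0$, there is $B_\varepsilon$ with $\Pr\{1-p_{2,n}\ge 1/B_\varepsilon\}\ge1-\varepsilon$ for all large $n$. On this event the second term satisfies $\tfrac12\tan\{\pi(1/2-p_{2,n})\}\ge -B_\varepsilon/(2\pi)$. This holds both when $p_{2,n}\le1/2$, where the term is nonnegative, and when $p_{2,n}>1/2$, where the cotangent bound applies. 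The second term is therefore bounded below in probability.

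Combining the two steps, for any $K$ and $\varepsilon$ we get $\Pr\{T\ge K-B_\varepsilon/(2\pi)\}\ge1-2\varepsilon$ eventually, so $T\to\infty$ in probability. Continuity and monotonicity of $F_C$ then give $p_{CC}=\pi^{-1}\{\pi/2-\arctan T\}\le 1/(\pi T)$ for $T>0$, hence $p_{CC}\to0$ in probability.

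The only delicate point is the second step. A component $p$-value near $1$ sends its Cauchy transform to $-\infty$ and could offset the diverging term. The assumption $(1-p_{2,n})^{-1}=O_p(1)$ is precisely what excludes this, and no joint-distribution or independence argument (Theorem~\ref{indalter}) is needed.
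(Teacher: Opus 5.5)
Your argument is correct and is essentially the paper's proof. As there, you write the statistic as $\tfrac12\cot(\pi p_{1,n})+\tfrac12\cot(\pi p_{2,n})$, let the first term diverge because $p_{1,n}\to0$, and use $(1-p_{2,n})^{-1}=O_p(1)$ to keep the negative part of the second term bounded in probability; your explicit inequalities replace the paper's expansions of $\cot(\pi u)$ near $u=0$ and $u=1$.

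There is one small slip. The inequality $\cot(\pi p)\ge 1/(\pi p)-\pi p/3$ is false, because every correction term in $\cot x=1/x-x/3-x^3/45-\cdots$ is negative. Your alternative bound $\cot(\pi p)\ge 1/(2\pi p)$ is correct and suffices; it holds, e.g., for $0<p\le 1/4$, since $x\cot x$ is decreasing on $(0,\pi/2)$ and equals $\pi/4$ at $x=\pi/4$.
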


\section{Location estimation}\label{sec:location}
If the null hypothesis is rejected, define the matched component estimators
\begin{align*}
\widehat\tau_{S,\gamma}
&=\argmax_{k\in\mathcal K_\gamma}
Q_{\gamma,k},\\
\widehat\tau_{M,\gamma}
&=\argmax_{k\in\mathcal K_\gamma}
\max_{1\le j\le p}|C_{\gamma,j}(k)|^2.
\end{align*}
The dense localizer maximizes the first-order centered score $Q_{\gamma,k}$ for both temporal weightings.  For $\gamma=1/2$, this is exactly the score used by the test statistic.  For $\gamma=0$, it deliberately differs from the CF2 testing score: the CF2 map is asymptotically negligible under the null but is not globally monotone under an arbitrarily strong alternative and could move the deterministic maximizer.  Retaining $Q_{0,k}$ preserves the unique CUSUM-profile maximizer at the true change without adding a signal upper bound.

The adaptive estimator is
\[
\widehat\tau_\gamma=
\begin{cases}
\widehat\tau_{S,\gamma},&p_{S_{\gamma,n,p}}<p_{M_{\gamma,n,p}},\\
\widehat\tau_{M,\gamma},&p_{S_{\gamma,n,p}}\ge p_{M_{\gamma,n,p}}.
\end{cases}
\]
Retain $\Delta_{S,n}$ from Proposition~\ref{prop:Tms-alter} and put
\[
\Delta_{M,n}=\frac{\sqrt n\|\bdelta\|_\infty}{\sqrt{\log(np)}},
\qquad
L_{h,n}=\log\log h_n.
\]

\begin{theorem}\label{consistency}
Suppose $\tau/n\in[\vartheta,1-\vartheta]$ for a fixed $\vartheta\in(0,1/2)$.
\begin{enumerate}
\item[(i)] Under the conditions of Theorem~\ref{Th2}, if $\Delta_{S,n}\to\infty$, then
\[
\frac{|\widehat\tau_{S,0}-\tau|}{n}
=O_p\left\{\frac{\sqrt p}{n\|\bdelta\|_2^2}
+\frac1{\sqrt{n\|\bdelta\|_2^2}}\right\}.
\]
\item[(ii)] Under the conditions of Theorem~\ref{Th2}, if $\lambda_n\asymp n^\lambda$, $\lambda_n\le\tau\le n-\lambda_n$,
$\log n=o(p^{1/4})$, and $\Delta_{S,n}/\sqrt{L_{h,n}}\to\infty$, then
\[
\frac{|\widehat\tau_{S,1/2}-\tau|}{n}
=O_p\left[\sqrt{L_{h,n}}\left\{
\frac{\sqrt p}{n\|\bdelta\|_2^2}
+\frac1{\sqrt{n\|\bdelta\|_2^2}}\right\}\right].
\]
\item[(iii)] Under Assumptions~\ref{ass:C1}--\ref{ass:C3}, use the
componentwise difference-based long-run variance estimator defined above and
suppose its kernel and bandwidth conditions hold with
$\ell_{\min}\to\infty$, $\ell_{\max}=o(\sqrt n)$, and
$a_{\sigma,n}\to0$.  If $p\le n^\nu$ for a fixed $\nu>0$ and
$\Delta_{M,n}\to\infty$, then
\[
\frac{|\widehat\tau_{M,0}-\tau|}{n}
=O_p\left\{
\frac{\sqrt{\log(np)}}{\sqrt n\|\bdelta\|_\infty}
+\frac{\log(np)}{n\|\bdelta\|_\infty^2}\right\}.
\]
\item[(iv)] Under the conditions in part (iii), if $\lambda_n\asymp n^\lambda$ and $\lambda_n\le\tau\le n-\lambda_n$, the same rate holds for $\widehat\tau_{M,1/2}$.
\end{enumerate}
All displayed deterministic rates converge to zero under their stated signal conditions.
\end{theorem}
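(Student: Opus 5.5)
The proof of Theorem~\ref{consistency} uses the same argmax template for all four parts.  For each localizer, the score is decomposed into a deterministic signal profile plus a stochastic remainder.  The profile is uniquely maximized at $\tau$ and decays at least linearly in $|k-\tau|$ near $\tau$.  A peeling argument over dyadic shells $\{k:2^{m}r_n<|k-\tau|\le2^{m+1}r_n\}$ then shows that the remainder cannot overturn the profile outside a window of the stated width.

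For the dense scores, write $\bX_i=\bmu_0+\bdelta\ind{i>\tau}+\bepsilon_i$.  Under the alternative the CUSUM vector is $\mathbf d_k+\mathbf e_k$, where
\[
\mathbf d_k=-\frac{k(n-\tau)}{n}\bdelta\ \ (k\le\tau),\qquad \mathbf d_k=-\frac{\tau(n-k)}{n}\bdelta\ \ (k>\tau),
\]
and $\mathbf e_k=\sum_{i\le k}\bepsilon_i-t_k\sum_{i\le n}\bepsilon_i$.  Hence
\[
Q_{\gamma,k}=v_k^{-2\gamma}\bigl\{\|\mathbf d_k\|_2^2+2\mathbf d_k^\top\mathbf e_k\bigr\}/(n\sqrt p)+v_k^{-2\gamma}\{\|\mathbf e_k\|_2^2/(n\sqrt p)-\mu_{M,k}\}+v_k^{-2\gamma}(\mu_{M,k}-\widetilde\mu_{M,k}).
\]
\begin{enumerate}
\item[(a)] \emph{Signal profile.}  The first term is $n\|\bdelta\|_2^2 g_\gamma(k/n)/\sqrt p$, with $g_0(t)=t^2(1-\tau/n)^2$ for $t\le\tau/n$ and symmetric otherwise.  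Since $\tau/n\in[\vartheta,1-\vartheta]$, one has $g_\gamma(\tau/n)-g_\gamma(t)\gtrsim|t-\tau/n|$ for both $\gamma$.  For $\gamma=1/2$ this holds because $g_{1/2}(t)=t(1-\tau/n)^2/(1-t)$ is increasing with derivative bounded below.
\item[(b)] \emph{Cross term.}  Take the increment $2(\mathbf d_k-\mathbf d_\tau)^\top\mathbf e_k+2\mathbf d_\tau^\top(\mathbf e_k-\mathbf e_\tau)$.  Its variance is of order $|k-\tau|\,n\,\bdelta^\top\bOme\bdelta\lesssim|k-\tau|\,n\|\bdelta\|_2^2$.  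This uses $\lambda_{\max}(\bOme)\le C_1$ together with a maximal inequality for partial sums of $\bdelta^\top\bepsilon_i$, available under each branch of Assumption~\ref{ass:C1}.  Divided by $n\sqrt p$ and compared with the drift $\|\bdelta\|_2^2|k-\tau|/\sqrt p$, it produces the term $1/\sqrt{n\|\bdelta\|_2^2}$.
\item[(c)] \emph{Centered quadratic noise.}  By Theorem~\ref{Th1}, the increments of $\mathbb W_{n,p}$ satisfy a modulus bound: $\E|\mathbb W_{n,p}(t)-\mathbb W_{n,p}(\tau/n)|^2\lesssim|t-\tau/n|$, read off from the covariance of $V$ and the fourth-cumulant control of Assumption~\ref{ass:C3}.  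The supremum over a shell of width $\rho$ is then $O_p(\rho^{1/2})$, whereas the drift there is $\gtrsim\rho\, n\|\bdelta\|_2^2/\sqrt p$.  This yields the term $\sqrt p/(n\|\bdelta\|_2^2)$.
\item[(d)] \emph{Estimated centering.}  Theorem~\ref{Th2} gives a uniform $O_p(r_{\mu,n})$ error.  Since $\widetilde\mu_{M,k}$ is smooth in $k$ with slope $O(1/n)$ times the same factor, its increment is absorbed as well.
\end{enumerate}
Part (ii) follows the same way.  Near the boundary of $\mathcal K_{1/2}$ the weight $v_k^{-1}$ inflates the noise, and that supremum is controlled by the Darling--Erd\H{o}s scale of Theorem~\ref{Tms-1}(ii).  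This introduces the extra $\sqrt{L_{h,n}}$, because the noise sup over the trimmed range is $O_p(\sqrt{L_{h,n}})$ while the interior drift is unchanged.

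For the max-type estimators (iii)--(iv), let $j^\ast$ attain $\|\bdelta\|_\infty$.  The squared score at $(\tau,j^\ast)$ is at least $n\|\bdelta\|_\infty^2v_\tau^{2-2\gamma}/\sigma_{j^\ast}^2\cdot\{1+o_p(1)\}$, using $\max_j|\widehat\sigma_j/\sigma_j-1|=O_p(a_{\sigma,n})\to0$, as established for Theorem~\ref{null:Max}.  For every $j$, the deterministic part of $C_{\gamma,j}(k)$ is $\sqrt n\,\delta_j\,g^{1/2}$-type and decays linearly away from $\tau$.  The noise part is uniformly $O_p(\sqrt{\log(np)})$ over $(k,j)$, by a Gaussian approximation/Bernstein bound valid under Assumption~\ref{ass:C1} with $p\le n^\nu$.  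Its increments over $|k-\tau|\le\rho n$ are $O_p(\sqrt{\rho\log(np)})$.

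Comparing the score at $\widehat\tau$ with that at $\tau$, write $a=\sqrt n\|\bdelta\|_\infty$ and $L=\log(np)$.  One needs
\[
a^2\rho\lesssim a\sqrt{\rho L}+a\rho\sqrt L\cdot o(1)+L\rho,
\]
which gives $\rho\lesssim L/a^2+\sqrt L/a$, the stated rate.  The $\gamma=1/2$ case restricts to $\mathcal K_{1/2}$ with $\tau$ inside, and there the weight is bounded near $\tau$.

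The main obstacle is step (c): a uniform-in-shell modulus of continuity for the centered quadratic process at scale $n^{-1}$ rather than in the $D[0,1]$ sense.  Weak convergence alone does not give local increments of order $\rho^{1/2}$ for small $\rho\to0$.  I would first compute the second moment of the increment exactly through the product-cumulant formula, using Assumption~\ref{ass:C3}, and obtain a fourth moment bound $\lesssim\rho^2$.  A chaining (Billingsley-type) maximal inequality over the integer grid would then finish the argument.  For the max statistic, the corresponding step is the analogue for the union over coordinates.
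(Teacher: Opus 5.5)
Your argument for (i)--(ii) is sound in substance, but the argument for (iii)--(iv) has a genuine gap.

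The localizer maximizes $\max_j|C_{\gamma,j}(k)|$ jointly over $(k,j)$. So the coordinate $\widehat j$ attaining the maximum at $\widehat\tau_{M,\gamma}$ need not be your $j^\ast$, and the feasible factors $\sigma_j/\widehat\sigma_j$ differ across coordinates. This leaves a dichotomy that the proposal never resolves.

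\emph{Cross-coordinate comparison.} If you compare the score at $(\widehat\tau,\widehat j)$ with the score at $(\tau,j^\ast)$, the scale part of your factor $\{1+o_p(1)\}$ is of order $r_{\sigma,n}=O_p(a_{\sigma,n})$. It enters additively with size $a^2a_{\sigma,n}$, and against the drift $a^2\rho$ this only gives $\rho=O_p(a_{\sigma,n}+\sqrt L/a)$. The term $a_{\sigma,n}$ is not in the stated rate and can dominate it. For $\|\bdelta\|_\infty$ bounded away from zero, $\sqrt L/a\asymp\{\log(np)/n\}^{1/2}$, whereas the available control $a_{\sigma,n}$ is polynomially larger than $n^{-1/2}$.

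\emph{Within-coordinate comparison.} If instead you compare within coordinate $\widehat j$, the scale factor cancels and a same-coordinate increment bound is usable. But then the drift gap is $n\,\mathfrak a_{\widehat j}^2\rho$ with $\mathfrak a_j=|\delta_j|/\sigma_j$, not $a^2\rho$. Nothing in the proposal shows that $\mathfrak a_{\widehat j}$ is comparable to $\mathfrak a_{\max}=\max_j\mathfrak a_j$. Your displayed inequality tacitly assumes both the cancellation and the large gap.

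The paper supplies exactly this step. Set up the following notation:
\begin{itemize}
\item $\mathcal Z=\max_{j,k}|Z_{\gamma,j}(k)|=O_p\{\sqrt{\log(np)}\}$, the population-standardized noise maximum (Lemma~\ref{lem:cusum-concentration-general});
\item the drift of coordinate $j$ has modulus $\sqrt n\,\mathfrak a_jh_{\gamma,\theta}(k/n)$, where $h_{\gamma,\theta}(t)=\{t(1-t)\}^{-\gamma}\{t(1-\theta)\ind{t\le\theta}+\theta(1-t)\ind{t>\theta}\}$;
\item $j_\star$ attains $\mathfrak a_{\max}$, and $A=\sqrt n\,\mathfrak a_{\max}h_{\gamma,\theta}(\theta)$.
\end{itemize}
Consider the event $\{r_{\sigma,n}\le1/8,\ A\ge8\mathcal Z\}$, whose probability tends to one. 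On it, the feasible score at $(\tau,j_\star)$ is at least $\tfrac89(A-\mathcal Z)$. Every coordinate with $\mathfrak a_j<\mathfrak a_{\max}/3$ scores at most $\tfrac87(A/3+\mathcal Z)$ at every $k$. Hence $\mathfrak a_{\widehat j}\ge\mathfrak a_{\max}/3$.

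Next, $\widehat\tau$ also maximizes $|C_{\gamma,\widehat j}(k)|$ over $k$, and $\sigma_{\widehat j}/\widehat\sigma_{\widehat j}$ does not depend on $k$. Comparing with $k=\tau$ in that coordinate gives $c|\widehat\tau-\tau|\mathfrak a_{\widehat j}^2\le C\sqrt n\,\mathfrak a_{\widehat j}\mathcal Z+2\mathcal Z^2$. This is the stated rate, and it uses only the global bound on $\mathcal Z$, with no coordinatewise modulus.

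An equivalent repair closer to your own route: because the drift factorizes in $k$, $\max_j|C_{\gamma,j}(k)|$ lies within $\max_j(\sigma_j/\widehat\sigma_j)\mathcal Z$ of $\widehat A\,h_{\gamma,\theta}(k/n)$, where $\widehat A=\max_j(\sigma_j/\widehat\sigma_j)\sqrt n\,\mathfrak a_j$. Comparing with the coordinate attaining $\widehat A$, rather than $j^\ast$, removes the scale error.

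On (i)--(ii), your route differs from the paper's but is correct.
\begin{itemize}
\item For (i), your dyadic peeling is valid. The shell bounds come from the fourth-moment increment bound in Lemma~\ref{lem:quadratic-cumulants} and M\'oricz's maximal inequality.
\item The paper needs none of this. It divides $2\max_k|\mathfrak r(k)|=O_p\{1+(n\|\bdelta\|_2^2/p)^{1/2}+r_{\mu,n}+\sqrt p\,\eta_M\}$ by the linear gap $c_\vartheta\|\bdelta\|_2^2/\sqrt p$, which already gives the displayed rate. So the local modulus you flag as the main obstacle is unnecessary.
\item For (ii), the global Darling--Erd\H{o}s-scale supremum you invoke is exactly what the paper uses.
\item Peeling would in fact give more, namely the squares of the two stated terms. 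This requires correcting your cross-increment variance to order $n^2|k-\tau|\,\bdelta^\top\bOme\bdelta$, since $\|\mathbf d_\tau\|_2\asymp n\|\bdelta\|_2$.
\end{itemize}
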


\begin{coro}\label{cor:adaptive-location}
For each $\gamma\in\{0,1/2\}$, $\widehat\tau_\gamma$ is consistent if both matched component estimators are consistent.  Suppose instead that only one component estimator is known to be consistent.  Let $p_{\mathrm{sel},n}$ be its component $p$-value and $p_{\mathrm{other},n}$ the other component $p$-value.  Then $\widehat\tau_\gamma$ remains consistent provided that
\[
p_{\mathrm{sel},n}\xrightarrow{p}0,
\qquad
\lim_{\varepsilon\downarrow0}\limsup_{n\to\infty}
\Pr(p_{\mathrm{other},n}\le\varepsilon)=0.
\]
\end{coro}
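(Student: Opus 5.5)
Consistency of $\widehat\tau_\gamma$ means $|\widehat\tau_\gamma-\tau|/n\xrightarrow{p}0$. The argument is a decomposition on the selection event. Let $E_n=\{p_{S_{\gamma,n,p}}<p_{M_{\gamma,n,p}}\}$, so that $\widehat\tau_\gamma=\widehat\tau_{S,\gamma}$ on $E_n$ and $\widehat\tau_\gamma=\widehat\tau_{M,\gamma}$ on $E_n^c$. Then, for every $\epsilon>0$,
\[
\Pr(|\widehat\tau_\gamma-\tau|>\epsilon n)
\le \Pr(|\widehat\tau_{S,\gamma}-\tau|>\epsilon n,\,E_n)
+\Pr(|\widehat\tau_{M,\gamma}-\tau|>\epsilon n,\,E_n^c).
\]

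\emph{Both components consistent.} Drop the intersections with $E_n$ and $E_n^c$. Each term is then bounded by the marginal probability for the corresponding component estimator, and both tend to zero. These marginal rates are supplied, for instance, by Theorem~\ref{consistency}(i)--(iv) under their respective signal conditions. No property of the $p$-values is needed in this case.

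\emph{Only one component known consistent.} Call the consistent component the selected one and write $\widehat\tau_{\mathrm{sel}}$, $\widehat\tau_{\mathrm{other}}$ for the two estimators. Let $F_n$ be the event that the other component is chosen, namely $\{p_{\mathrm{other},n}<p_{\mathrm{sel},n}\}$ or $\{p_{\mathrm{other},n}\le p_{\mathrm{sel},n}\}$ depending on which component plays the role of $S$; ties are harmless because the inclusion below uses $\le$. Then
\[
\Pr(|\widehat\tau_\gamma-\tau|>\epsilon n)\le \Pr(|\widehat\tau_{\mathrm{sel}}-\tau|>\epsilon n)+\Pr(F_n).
\]
The first term tends to zero by assumption. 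For the second, fix $\varepsilon>0$ and use
\[
F_n\subseteq\{p_{\mathrm{other},n}\le\varepsilon\}\cup\{p_{\mathrm{sel},n}>\varepsilon\}.
\]
Since $p_{\mathrm{sel},n}\xrightarrow{p}0$, we have $\Pr(p_{\mathrm{sel},n}>\varepsilon)\to0$, and hence
\[
\limsup_n\Pr(F_n)\le\limsup_n\Pr(p_{\mathrm{other},n}\le\varepsilon).
\]
The left side does not depend on $\varepsilon$, so letting $\varepsilon\downarrow0$ and using the stated condition gives $\limsup_n\Pr(F_n)=0$. Combining the two bounds yields consistency.

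There is no real obstacle here. The only step needing care is the order of limits, $n\to\infty$ first and then $\varepsilon\downarrow0$, which is exactly the form of the hypothesis. One should also note that ties in the selection rule are covered by the non-strict inequality in the inclusion for $F_n$.
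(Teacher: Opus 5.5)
Your proof is correct and follows essentially the same route as the paper: a union bound when both components are consistent, and otherwise $\Pr(|\widehat\tau_\gamma-\tau|>\epsilon n)\le\Pr(|\widehat\tau_{\mathrm{sel}}-\tau|>\epsilon n)+\Pr(F_n)$ with $F_n\subseteq\{p_{\mathrm{other},n}\le\varepsilon\}\cup\{p_{\mathrm{sel},n}>\varepsilon\}$. The paper fixes a tolerance $\eta$, chooses $\varepsilon$ accordingly and then lets $\eta\downarrow0$, whereas you let $\varepsilon\downarrow0$ directly; these are equivalent, and your explicit treatment of ties in the selection rule is slightly more careful than the paper's.
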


\section{Multiple change-point estimation by wild binary segmentation}\label{sec:wbs}

We now consider a piecewise-constant mean sequence with an unknown number of changes.  Let
\begin{equation}\label{eq:wbs-multiple-model}
 \bX_i=\bmu^{(j)}+\bepsilon_i,
 \qquad \tau_{j-1}<i\le \tau_j,
 \qquad j=1,\ldots,K_{\mathrm{cp}}+1,
\end{equation}
where
\[
0=\tau_0<\tau_1<\cdots<\tau_{K_{\mathrm{cp}}}<\tau_{K_{\mathrm{cp}}+1}=n,
\qquad
\bdelta_j=\bmu^{(j+1)}-\bmu^{(j)}\ne\mathbf0.
\]
The number of changes is $K_{\mathrm{cp}}$.  Put
\[
\Delta_n=\min_{0\le j\le K_{\mathrm{cp}}}(\tau_{j+1}-\tau_j),
\qquad
\kappa_{2,n}=\min_{1\le j\le K_{\mathrm{cp}}}\|\bdelta_j\|_2,
\qquad
\kappa_{\infty,n}=\min_{1\le j\le K_{\mathrm{cp}}}\|\bdelta_j\|_\infty.
\]

For an interval $I=(s,e]$ of length $m_I=e-s$ and a candidate split
$b\in\{s+1,\ldots,e-1\}$, we say that $I$ contains the $j$th change only when $s<\tau_j<e$; a change at an interval endpoint does not create two mean levels inside that interval.  Define
\[
m_{I,-}(b)=b-s,
\qquad
m_{I,+}(b)=e-b,
\qquad
t_{I,b}=\frac{m_{I,-}(b)}{m_I},
\qquad
v_{I,b}=t_{I,b}(1-t_{I,b}),
\]
\[
\mathbf U_I(b)
=\sum_{i=s+1}^{b}\bX_i
-\frac{m_{I,-}(b)}{m_I}\sum_{i=s+1}^{e}\bX_i,
\]
\[
W_I(b)=\frac{\|\mathbf U_I(b)\|_2^2}{m_I\sqrt p},
\qquad
W_{\gamma,I}(b)=v_{I,b}^{-2\gamma}W_I(b),
\qquad \gamma\in\{0,1/2\}.
\]
For $i\in\{s+1,\ldots,e\}$, let
\[
a_{i,b}^{I}=m_{I,-}(b)^{-1}\ind{i\le b}
-m_{I,+}(b)^{-1}\ind{i>b}.
\]
The full-sample estimators of $\Tr\{\bGam(h)\}$ are reused on every random interval.  Replacing the full-sample CUSUM coefficients in Section~\ref{sec:2-L2} by $a_{i,b}^{I}$ gives
\begin{align}\label{eq:wbs-local-centering}
\widetilde\mu_{\gamma,M,I}(b)
={}&v_{I,b}^{-2\gamma}
\frac{m_{I,-}(b)^2m_{I,+}(b)^2}{m_I^3\sqrt p}
\sum_{h=0}^{M}\sum_{i=s+1}^{e-h}
\{2-\ind{h=0}\}a_{i,b}^{I}a_{i+h,b}^{I}
\widetilde g_h.
\end{align}
The population quantity $\mu_{\gamma,M,I}(b)$ is obtained by replacing the estimated traces in \eqref{eq:wbs-local-centering} by their population values.

Fix a WBS-specific trimming exponent $\lambda_{\mathrm W}\in(0,1/4)$ and set
\[
\lambda_m^{\mathrm W}=\lceil m^{\lambda_{\mathrm W}}\rceil,
\qquad
\mathcal K_0(I)=\{s+1,\ldots,e-1\},
\qquad
\mathcal K_{1/2}(I)=\{s+\lambda_{m_I}^{\mathrm W},\ldots,e-\lambda_{m_I}^{\mathrm W}\}.
\]
The WBS-specific exponent is separated from the full-sample trimming sequence $\lambda_n$.  The restriction $\lambda_{\mathrm W}<1/4$ makes the trimming and separation-guard requirements below simultaneously feasible.
An interval is discarded when its candidate set is empty.  Define
\begin{align*}
\mathcal S_\gamma(I)
&=\max_{b\in\mathcal K_\gamma(I)}
\{W_{\gamma,I}(b)-\widetilde\mu_{\gamma,M,I}(b)\},\\
C_{\gamma,j}^{I}(b)
&=v_{I,b}^{-\gamma}
\frac{U_{I,j}(b)}{\sqrt{m_I}\,\widehat\sigma_j},\\
\mathcal M_\gamma(I)
&=\max_{b\in\mathcal K_\gamma(I)}\max_{1\le j\le p}
|C_{\gamma,j}^{I}(b)|.
\end{align*}
The corresponding localizers are
\begin{align*}
\widetilde b_{S,\gamma}(I)
&=\min\argmax_{b\in\mathcal K_\gamma(I)}
\{W_{\gamma,I}(b)-\widetilde\mu_{\gamma,M,I}(b)\},\\
\widetilde b_{M,\gamma}(I)
&=\min\argmax_{b\in\mathcal K_\gamma(I)}
\max_{1\le j\le p}|C_{\gamma,j}^{I}(b)|.
\end{align*}
Thus, the long-run covariance quantities are estimated once, while the local CUSUM contrasts are recomputed on each random interval.  CF2 is used only for the full-sample $S_{0}$ null calibration; $S_{1/2}$, WBS, and all location estimators retain the first-order centered score.  This score has a unique deterministic CUSUM-profile maximizer for arbitrary signal strength, whereas the second-order map is not globally monotone under very strong alternatives.

The multiple changes affect only differences whose endpoints straddle a change.  Put
\[
\mathfrak J_{2,n}=\sum_{j=1}^{K_{\mathrm{cp}}}\|\bdelta_j\|_2^2,
\qquad
\mathfrak J_{\Omega,n}
=\left(\sum_{j=1}^{K_{\mathrm{cp}}}\bdelta_j^\top\bOme\bdelta_j\right)^{1/2},
\]
and define
\begin{align*}
r_{\mu,n}^{\mathrm{mc}}={}&\frac{M}{\sqrt n}+\sqrt p\,M\eta_M
 +\frac{M^2\mathfrak J_{2,n}}{n\sqrt p}
 +\frac{M^2\mathfrak J_{\Omega,n}}{n\sqrt p},\\
r_{\omega,n}^{\mathrm{mc}}={}&M^2\left(\frac1n+\frac p{n^2}\right)^{1/2}
 +M^2\eta_M+M^2\varpi_{\mathrm{dep}}(n/3)
 +\frac{M^2\mathfrak J_{2,n}}{n\sqrt p}\\
&+\frac{M^3\mathfrak J_{\Omega,n}}{n\sqrt p}
 +\left(\frac{M^2\mathfrak J_{2,n}}{n\sqrt p}\right)^2.
\end{align*}
The Supplementary Material derives these rates from model~\eqref{eq:wbs-multiple-model}.

Let $c_{\mathrm{dep}}\ge1/2$ be the finite exponent implied by Assumption~\ref{ass:C1} such that
\[
\left\|\sum_iw_i\mathbf u^\top\bepsilon_i\right\|_q
\le Cq^{c_{\mathrm{dep}}}\|w\|_2
\]
for every $q\ge2$, every deterministic unit vector $\mathbf u$, and every finitely supported deterministic coefficient sequence $w$.  If $N_{\mathrm W,n}$ random intervals are used, put
\[
\mathfrak L_{\mathrm W,n}=\log(2npN_{\mathrm W,n}),
\qquad
a_{\mathrm{lin},n}^{\mathrm W}=\mathfrak L_{\mathrm W,n}^{c_{\mathrm{dep}}},
\]
\[
a_{\mathrm{quad},0,n}^{\mathrm W}=\{N_{\mathrm W,n}\log(2n)\}^{1/4},
\qquad
a_{\mathrm{quad},1/2,n}^{\mathrm W}=N_{\mathrm W,n}^{1/4}\{\log(2n)\}^{1/2},
\]
\[
a_{S,\gamma,n}^{\mathrm W}
=a_{\mathrm{quad},\gamma,n}^{\mathrm W}+r_{\mu,n}^{\mathrm{mc}}
+\frac{\{a_{\mathrm{lin},n}^{\mathrm W}\}^2}{\sqrt p}.
\]
The three terms in $a_{S,\gamma,n}^{\mathrm W}$ control the centered quadratic noise, feasible centering, and the squared signal--noise contribution, respectively.

Independently of the observations, draw
\[
(S_r^{\mathrm W},E_r^{\mathrm W}),
\qquad r=1,\ldots,N_{\mathrm W,n},
\]
independently and uniformly with replacement from
\[
\mathbb I_n^{\mathrm W}
=\{(s,e):0\le s<e\le n,\ e-s\ge\ell_{\mathrm W,n}\},
\]
and write $I_r^{\mathrm W}=(S_r^{\mathrm W},E_r^{\mathrm W}]$.  This retains the random-localization mechanism of wild binary segmentation \citep{Fryzlewicz2014WBS}.  To ensure that both temporal weightings are analyzed without a cancellation condition on multi-change intervals, we use a shortest-significant selection rule within the WBS interval pool, in the spirit of narrowest-over-threshold localization \citep{BaranowskiChenFryzlewicz2019}.

For a recursive segment $(s,e]$, retain
\[
\mathcal I_{\mathrm W}(s,e)
=\{r:s\le S_r^{\mathrm W}<E_r^{\mathrm W}\le e\}.
\]
Let $\Lambda_{S,\gamma,n}^{\mathrm W}>0$ and $\Lambda_{M,\gamma,n}^{\mathrm W}>0$ be component thresholds and define
\[
\mathfrak R_{S,\gamma}^{\mathrm W}(I)
=\frac{\mathcal S_\gamma(I)}{\widehat\omega_{\mathrm W}\Lambda_{S,\gamma,n}^{\mathrm W}},
\qquad
\mathfrak R_{M,\gamma}^{\mathrm W}(I)=\frac{\mathcal M_\gamma(I)}{\Lambda_{M,\gamma,n}^{\mathrm W}},
\qquad
\widehat\omega_{\mathrm W}=\widehat\omega\vee n^{-1}.
\]
The lower truncation in $\widehat\omega_{\mathrm W}$ is asymptotically inactive.  The recursive adaptive rule uses componentwise WBS thresholds rather than recalibrating a Cauchy $p$-value on every random interval: an interval is retained when at least one matched component exceeds its uniform threshold, and the larger threshold ratio determines the local component.  This formulation permits simultaneous control over the entire sampled interval pool while preserving adaptation to dense and sparse changes.

For each fixed $\gamma\in\{0,1/2\}$, the guarded WBS recursion is defined as follows.
\begin{enumerate}
\item Generate the random intervals once and call $\operatorname{WBS}_\gamma(0,n)$.
\item For a call $\operatorname{WBS}_\gamma(s,e)$, stop if $e-s<\ell_{\mathrm W,n}$ or $\mathcal I_{\mathrm W}(s,e)$ is empty.  Form
\[
\mathfrak E_\gamma^{\mathrm W}(s,e)
=\{(\mathfrak c,r):\mathfrak c\in\{\mathrm S,\mathrm M\},\ r\in\mathcal I_{\mathrm W}(s,e),\ \mathfrak R_{\mathfrak c,\gamma}^{\mathrm W}(I_r^{\mathrm W})>1\}.
\]
For a component-specific recursion, restrict $\mathfrak c$ to $\mathrm S$ or to $\mathrm M$.
\item If $\mathfrak E_\gamma^{\mathrm W}(s,e)$ is empty, stop.  Otherwise set
\[
m_\star=\min_{(\mathfrak c,r)\in\mathfrak E_\gamma^{\mathrm W}(s,e)}|I_r^{\mathrm W}|,
\]
and choose, among the pairs with $|I_r^{\mathrm W}|=m_\star$, the pair $(\mathfrak c^\star,r^\star)$ having the largest threshold ratio.  Break remaining ties by the smaller interval index and then by $\mathrm S$ before $\mathrm M$.
\item Set
\[
\widetilde\tau=\widetilde b_{\mathfrak c^\star,\gamma}(I_{r^\star}^{\mathrm W}),
\]
record $\widetilde\tau$, and recurse on
\[
(s,\widetilde\tau-d_{\mathrm W,n}]
\quad\text{and}\quad
(\widetilde\tau+d_{\mathrm W,n},e],
\]
whenever the corresponding child has positive length.
\end{enumerate}
The shortest-significant rule makes the selected interval shorter than an available isolating interval and hence, with high probability, shorter than the minimum spacing.  The separation guard removes the already localized change from both children and prevents duplicate detections.

Write $\widehat{\mathfrak T}_{S,\gamma}$,
$\widehat{\mathfrak T}_{M,\gamma}$, and
$\widehat{\mathfrak T}_{\gamma}$ for the location sets returned by the dense-only, coordinatewise-only, and adaptive recursions, respectively, and let their cardinalities be
$\widehat K_{S,\gamma}$, $\widehat K_{M,\gamma}$, and
$\widehat K_{\gamma}$.  Their ordered WBS locations are denoted by
$\widetilde\tau_{S,\gamma,j}$,
$\widetilde\tau_{M,\gamma,j}$, and
$\widetilde\tau_{\gamma,j}$.  These are the final multiple-change estimates.  All WBS probabilities below are joint over the observations and the independent random-interval draw.

\begin{assumption}[Multiple-change configuration]\label{ass:WBS-geometry}
The number $K_{\mathrm{cp}}$ is fixed, and there is $c_\Delta>0$ such that
\[
\Delta_n\ge c_\Delta n.
\]
The jumps satisfy
\[
\max_{1\le j\le K_{\mathrm{cp}}}\|\bdelta_j\|_2\le C_\delta,
\qquad
M=\lceil(n\wedge p)^{1/8}\rceil,
\qquad
p=o(n^{3/2}).
\]
\end{assumption}

The relations $M=o(\Delta_n)$ and
$r_{\mu,n}^{\mathrm{mc}}+r_{\omega,n}^{\mathrm{mc}}\to0$ are consequences of
Assumptions~\ref{ass:C1}--\ref{ass:C3} and
\ref{ass:WBS-geometry}; they are proved in the Supplementary Material and are
not imposed as additional assumptions.

\begin{assumption}[WBS interval, guard, and feasible-variance tuning]\label{ass:WBS-tuning}
For fixed constants $0<\kappa_{\mathrm W}<1$ and $C_{\mathrm W}<\infty$, and for the trimming exponent $\lambda_{\mathrm W}\in(0,1/4)$ and sequence $\lambda_m^{\mathrm W}$ defined above,
\[
n^{\kappa_{\mathrm W}}\le\ell_{\mathrm W,n}=o(\Delta_n),
\qquad
N_{\mathrm W,n}\le n^{C_{\mathrm W}}.
\]
With the fixed universal constant $c_{\mathrm W}=1/100$,
\[
K_{\mathrm{cp}}\{1-c_{\mathrm W}(\Delta_n/n)^2\}^{N_{\mathrm W,n}}\longrightarrow0.
\]
Put
\[
\bar\lambda_{\Delta,n}^{\mathrm W}
=\left\lceil(\Delta_n/2)^{\lambda_{\mathrm W}}\right\rceil.
\]
The guard $d_{\mathrm W,n}$ is a positive integer and satisfies
\[
d_{\mathrm W,n}\to\infty,
\qquad
\bar\lambda_{\Delta,n}^{\mathrm W}=o(d_{\mathrm W,n}),
\qquad
d_{\mathrm W,n}=o(\Delta_n).
\]
The componentwise bandwidths satisfy
\[
\ell_{\min}\longrightarrow\infty,
\qquad
\ell_{\max}=o(\sqrt n),
\qquad
a_{\sigma,n}\longrightarrow0,
\qquad
a_{\sigma,n}a_{\mathrm{lin},n}^{\mathrm W}\longrightarrow0.
\]
The ratio rate $r_{\sigma,n}=O_p(a_{\sigma,n})$ is derived from these
primitive bandwidth conditions and the residual-process assumptions; it is not
assumed.
\end{assumption}

The tuning conditions are nonempty.  To see this in a transparent polynomial regime, take
$\ell_{\mathrm W,n}=\lceil n^{\kappa_{\mathrm W}}\rceil$,
$N_{\mathrm W,n}=\lceil C\log n\rceil$, suppose
$p\asymp n^{\nu_p}$ for a fixed $0<\nu_p<3/2$, and let both minimal jump norms be bounded away from zero.  The componentwise choice
$\ell_j\asymp n^{1/(1+2\widetilde q)}$ with $\widetilde q>1/2$
satisfies the feasible-variance conditions.  Choose
\[
\max\{\lambda_{\mathrm W},1/2,\nu_p/2\}<\eta_{\mathrm W}<1,
\qquad
d_{\mathrm W,n}=\lceil n^{\eta_{\mathrm W}}\rceil.
\]
Then $\bar\lambda_{\Delta,n}^{\mathrm W}=o(d_{\mathrm W,n})$ and $d_{\mathrm W,n}=o(\Delta_n)$.  The coverage requirement holds for a sufficiently large $C$.  For the dense recursion, one may take
\[
\Lambda_{S,\gamma,n}^{\mathrm W}=n^{\zeta_S},
\qquad
\max\{0,1-\eta_{\mathrm W},2-\nu_p/2-2\eta_{\mathrm W}\}<\zeta_S<1-\nu_p/2.
\]
For the coordinatewise recursion, one may take
\[
\Lambda_{M,\gamma,n}^{\mathrm W}=n^{\zeta_M},
\qquad
\max\{0,1-\eta_{\mathrm W}\}<\zeta_M<1/2.
\]
For shrinking jumps, the theorem conditions below state the required separation directly.  Define the preliminary rates
\begin{align*}
r_{S,\gamma,n}^{\mathrm{pre,W}}
={}&\ind{\gamma=1/2}\bar\lambda_{\Delta,n}^{\mathrm W}
+\Delta_n\left\{
\frac{a_{S,\gamma,n}^{\mathrm W}}{\Lambda_{S,\gamma,n}^{\mathrm W}}
+\frac{a_{\mathrm{lin},n}^{\mathrm W}}
{p^{1/4}\{\Lambda_{S,\gamma,n}^{\mathrm W}\}^{1/2}}
\right\},\\
r_{M,\gamma,n}^{\mathrm{pre,W}}
={}&\ind{\gamma=1/2}\bar\lambda_{\Delta,n}^{\mathrm W}
+\Delta_n\frac{a_{\mathrm{lin},n}^{\mathrm W}}{\Lambda_{M,\gamma,n}^{\mathrm W}},\\
r_{A,\gamma,n}^{\mathrm{pre,W}}
={}&\ind{\gamma=1/2}\bar\lambda_{\Delta,n}^{\mathrm W}
+\Delta_n\left\{
\frac{a_{S,\gamma,n}^{\mathrm W}}{\Lambda_{S,\gamma,n}^{\mathrm W}}
+\frac{a_{\mathrm{lin},n}^{\mathrm W}}
{p^{1/4}\{\Lambda_{S,\gamma,n}^{\mathrm W}\}^{1/2}}
+\frac{a_{\mathrm{lin},n}^{\mathrm W}}{\Lambda_{M,\gamma,n}^{\mathrm W}}
\right\}.
\end{align*}
In the preceding polynomial example, bounded jumps and
$p\asymp n^{\nu_p}$ give, up to powers of $\log n$,
\begin{align*}
r_{S,\gamma,n}^{\mathrm{pre,W}}
&=n^{\lambda_{\mathrm W}}
+n^{1-\zeta_S+o(1)}
+n^{1-\nu_p/4-\zeta_S/2+o(1)},\\
r_{M,\gamma,n}^{\mathrm{pre,W}}
&=n^{\lambda_{\mathrm W}}
+n^{1-\zeta_M+o(1)}.
\end{align*}
Thus the displayed lower threshold bounds imply
$r_{S,\gamma,n}^{\mathrm{pre,W}}+r_{M,\gamma,n}^{\mathrm{pre,W}}
=o(n^{\eta_{\mathrm W}})$, while the upper bounds imply
$\Lambda_{S,\gamma,n}^{\mathrm W}=o(\Delta_n/\sqrt p)$ and
$\Lambda_{M,\gamma,n}^{\mathrm W}=o(\sqrt{\Delta_n})$.
The exponent intervals are nonempty because
$\eta_{\mathrm W}>\nu_p/2$ and $\eta_{\mathrm W}>1/2$.

\begin{theorem}[Dense WBS recovery]\label{thm:wbs-dense}
Suppose Assumptions~\ref{ass:C1}--\ref{ass:C3},
\ref{ass:WBS-geometry}, and~\ref{ass:WBS-tuning} hold.  Fix
$\gamma\in\{0,1/2\}$ and run the recursion using only the dense component.  If
\[
a_{S,\gamma,n}^{\mathrm W}=o(\Lambda_{S,\gamma,n}^{\mathrm W}),
\qquad
\Lambda_{S,\gamma,n}^{\mathrm W}
=o\left(\frac{\Delta_n\kappa_{2,n}^2}{\sqrt p}\right),
\qquad
r_{S,\gamma,n}^{\mathrm{pre,W}}=o(d_{\mathrm W,n}),
\]
then
\[
\Pr(\widehat K_{S,\gamma}=K_{\mathrm{cp}})\longrightarrow1,
\qquad
\max_{1\le j\le K_{\mathrm{cp}}}
|\widetilde\tau_{S,\gamma,j}-\tau_j|
=O_p(r_{S,\gamma,n}^{\mathrm{pre,W}}).
\]
Moreover,
\[
\max_{1\le j\le K_{\mathrm{cp}}}
\frac{|\widetilde\tau_{S,\gamma,j}-\tau_j|}{\Delta_n}
\xrightarrow{p}0.
\]
\end{theorem}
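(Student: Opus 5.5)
The proof will be organized around a single high-probability event $\mathcal E_n$ on which every step of the guarded recursion can be checked deterministically, followed by an induction over the recursion tree. The event $\mathcal E_n$ is the intersection of four pieces.

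\emph{Coverage.} For every $j$, the random pool contains an interval $(S_r^{\mathrm W},E_r^{\mathrm W}]$ that contains $\tau_j$ and no other change. Moreover, each endpoint of this interval lies within distance $\Delta_n/3$ of $\tau_j$ while staying at distance at least $\Delta_n/10$ from it. The probability that some $j$ lacks such an interval is at most $K_{\mathrm{cp}}\{1-c_{\mathrm W}(\Delta_n/n)^2\}^{N_{\mathrm W,n}}\to0$, with $c_{\mathrm W}=1/100$.

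\emph{Uniform noise control.} Uniformly over all sampled intervals $I$ and all $b\in\mathcal K_\gamma(I)$, we have
\[
|W_{\gamma,I}(b)-\widetilde\mu_{\gamma,M,I}(b)-\mathcal D_{\gamma,I}(b)|\lesssim a_{S,\gamma,n}^{\mathrm W}+\frac{a_{\mathrm{lin},n}^{\mathrm W}}{p^{1/4}}\,\mathcal D_{\gamma,I}(b)^{1/2}.
\]
Here $\mathcal D_{\gamma,I}(b)=v_{I,b}^{-2\gamma}\|\E\mathbf U_I(b)\|_2^2/(m_I\sqrt p)$ is the deterministic CUSUM drift.

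\emph{Feasible scale.} $\widehat\omega/\omega_p\to1$, which follows from Theorem~\ref{Th2} with $r_{\omega,n}^{\mathrm{mc}}$. The Supplementary identity gives $r_{\mu,n}^{\mathrm{mc}}+r_{\omega,n}^{\mathrm{mc}}\to0$.

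To prove the noise control, I would decompose $\|\mathbf U_I(b)\|_2^2$ into three pieces.

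1. The centered quadratic noise. Its $q$-th moments are controlled through the cumulant bounds of Assumption~\ref{ass:C3} at $q=4$, as in the proof of Theorem~\ref{Th1}. A maximal inequality over the at most $nN_{\mathrm W,n}$ pairs $(I,b)$ then gives the $a_{\mathrm{quad},\gamma,n}^{\mathrm W}$ term. For $\gamma=1/2$ the extra $\sqrt{\log n}$ comes from the weight $v_{I,b}^{-1}$ being at most of order $m_I/\lambda^{\mathrm W}_{m_I}$, handled by chaining in $b$.

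2. The cross term $2\E\mathbf U_I(b)^\top\{\mathbf U_I(b)-\E\mathbf U_I(b)\}$. This is linear in the errors, so the moment bound with exponent $c_{\mathrm{dep}}$ plus a union bound give a bound of order $a_{\mathrm{lin},n}^{\mathrm W}$ times its standard deviation. That standard deviation is at most $\lambda_{\max}(\bOme)^{1/2}$ times $\|\E\mathbf U_I(b)\|_2$, which yields the square-root drift term.

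3. The centering error $\widetilde\mu-\mu$. It is controlled by $r_{\mu,n}^{\mathrm{mc}}$, obtained by rerunning the proof of Theorem~\ref{Th2} interval by interval using the shared $\widetilde g_h$. The residual $(a_{\mathrm{lin}}^{\mathrm W})^2/\sqrt p$ in $a_{S,\gamma,n}^{\mathrm W}$ absorbs the diagonal square of the linear noise.

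The induction over the recursion tree then proceeds as follows. Call a segment $(s,e]$ \emph{admissible} if each change it contains lies at distance at least $\Delta_n/2$ from both endpoints, or the segment contains no undetected change and its endpoints lie within $r^{\mathrm{pre,W}}_{S,\gamma,n}+d_{\mathrm W,n}$ of detected changes. On $\mathcal E_n$ I will verify three claims.

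(a) If the segment contains an undetected change, the isolating interval from coverage has drift of order $\Delta_n\kappa_{2,n}^2/\sqrt p$. This follows from the standard single-change CUSUM profile, which for $\gamma=1/2$ is evaluated in the trimmed set since $\bar\lambda^{\mathrm W}_{\Delta,n}\ll\Delta_n$. By the condition $\Lambda_{S,\gamma,n}^{\mathrm W}=o(\Delta_n\kappa_{2,n}^2/\sqrt p)$, its ratio exceeds $1$, so $\mathfrak E_\gamma^{\mathrm W}$ is nonempty.

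(b) Any interval containing no change inside has drift zero, so its statistic is $O(a_{S,\gamma,n}^{\mathrm W})=o(\Lambda_{S,\gamma,n}^{\mathrm W})$ and it is never significant. Consequently the shortest significant interval has length at most that of the isolating interval, which is below $2\Delta_n/3$. It therefore contains exactly one change $\tau_j$.

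(c) On a single-change interval, the drift $\mathcal D_{\gamma,I}(b)$ is uniquely maximized at $\tau_j$ and decays linearly: $\mathcal D(\tau_j)-\mathcal D(b)\gtrsim|b-\tau_j|\,\|\bdelta_j\|_2^2/\sqrt p$. This holds up to trimming, which costs at most $\bar\lambda^{\mathrm W}_{\Delta,n}$ when $\gamma=1/2$. The argmax must beat the noise bound, and significance forces $\mathcal D(\tau_j)\gtrsim\Lambda_{S,\gamma,n}^{\mathrm W}$. Together these give $|\widetilde b-\tau_j|\lesssim r_{S,\gamma,n}^{\mathrm{pre,W}}$. The two pieces inside the braces of that rate correspond to the constant and square-root noise terms, each divided by the slope and scaled against $\Lambda_{S,\gamma,n}^{\mathrm W}$ and $\Delta_n$.

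Since $r^{\mathrm{pre,W}}_{S,\gamma,n}=o(d_{\mathrm W,n})$ and $d_{\mathrm W,n}=o(\Delta_n)$, the guard removes $\tau_j$ from both children and leaves every other change at distance at least $\Delta_n/2$ from the new endpoints. Hence both children are admissible. When no undetected change remains, claim (b) shows the recursion stops. Because $K_{\mathrm{cp}}$ is fixed, this yields exactly $K_{\mathrm{cp}}$ detections with the stated uniform rate. Dividing by $\Delta_n$ gives the final statement, since $r^{\mathrm{pre,W}}=o(d_{\mathrm W,n})=o(\Delta_n)$.

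I expect the main obstacle to be the uniform noise bound, specifically for $\gamma=1/2$ near trimmed boundaries and over random intervals. There I would first try a fixed-$(I,b)$ moment bound of order $q=\log(nN_{\mathrm W,n})$ for the quadratic form via the cumulant expansion. If the cumulant expansion is too crude at such orders, I would fall back on a fourth-moment union bound; this explains the $N^{1/4}$ factors in $a_{\mathrm{quad},\gamma,n}^{\mathrm W}$, and it is the likely intended route.
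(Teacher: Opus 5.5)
Your overall architecture matches the paper's proof. It has the same ingredients: a single event combining random-interval coverage, uniform quadratic-noise bounds from fourth moments, the M\'oricz inequality and a union bound over the $N_{\mathrm W,n}$ intervals (hence the $N_{\mathrm W,n}^{1/4}$ factors), a projection bound for the signal--noise cross term, and the shared $\widetilde g_h,\widehat\omega$ with multiple-change rates. It then uses the same shortest-significant-interval argument and an induction through the guard. Your distance invariant is a harmless variant of the paper's invariant that each undetected change keeps a coverage interval inside the current segment. Since your coverage endpoints lie within $\Delta_n/3$ of $\tau_j$, your invariant implies the paper's.

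The step that fails as written is the absolute slope in (c). The selected interval is only the \emph{shortest} significant one, and nothing forces $\tau_j$ to sit well inside it. With $a=\tau_j-s$, $c=e-\tau_j$, $m=a+c$ and $h=\tau_j-b\le a$, the unweighted profile gives
\[
\mathcal D_{0,I}(\tau_j)-\mathcal D_{0,I}(b)=\frac{c^2h(2a-h)}{m^3}\,\frac{\|\bdelta_j\|_2^2}{\sqrt p}\le 2\Bigl(\frac cm\Bigr)^2h\,\frac{\|\bdelta_j\|_2^2}{\sqrt p},
\]
which is $o(h\|\bdelta_j\|_2^2/\sqrt p)$ when $c/m\to0$. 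Such an interval is still significant once $a^2c^2m^{-3}\|\bdelta_j\|_2^2/\sqrt p\gtrsim\Lambda_{S,\gamma,n}^{\mathrm W}$. The hypothesis $\Lambda_{S,\gamma,n}^{\mathrm W}=o(\Delta_n\kappa_{2,n}^2/\sqrt p)$ allows this with $c/m\to0$, and nothing prevents such an interval from being shorter than the coverage interval. For $\gamma=1/2$ the gap equals $\frac{ch}{c+h}\|\bdelta_j\|_2^2/\sqrt p$, which saturates at order $c$ rather than growing linearly in $h$.

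What holds uniformly over single-change intervals is the relative gap
\[
\mathcal D_{\gamma,I}(b^0)-\mathcal D_{\gamma,I}(b)\ge c_0\,\frac{|b-b^0|}{m_I}\,\mathcal D_{\gamma,I}(b^0),
\]
where $b^0$ is the admissible point nearest $\tau_j$. Combine this with $\mathcal D_{\gamma,I}(b^0)\ge c\Lambda_{S,\gamma,n}^{\mathrm W}$ and $m_I\lesssim\Delta_n$. The lower bound follows from significance, $a_{S,\gamma,n}^{\mathrm W}=o(\Lambda_{S,\gamma,n}^{\mathrm W})$, and Young's inequality on the cross term. Together these give exactly $r_{S,\gamma,n}^{\mathrm{pre,W}}$, which is why $\Lambda_{S,\gamma,n}^{\mathrm W}$ rather than $\|\bdelta_j\|_2^2$ appears in the rate.

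Your remark about scaling against $\Lambda_{S,\gamma,n}^{\mathrm W}$ and $\Delta_n$ suggests you intended this. As stated, however, your inequality would give a rate of order $\sqrt p\,a_{S,\gamma,n}^{\mathrm W}/\|\bdelta_j\|_2^2$, which is $o(r_{S,\gamma,n}^{\mathrm{pre,W}})$. That is a sign it cannot hold in general.

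A minor point: for $\gamma=1/2$ the extra logarithmic factor does not come from the size of $v_{I,b}^{-1}$, since the weighted coefficient vectors have unit norm. It comes from the $O(\log m_I)$ unit logit blocks, each carrying a bounded fourth moment of its maximum.
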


\begin{theorem}[Coordinatewise WBS recovery]\label{thm:wbs-max}
Suppose Assumptions~\ref{ass:C1}--\ref{ass:C3},
\ref{ass:WBS-geometry}, and~\ref{ass:WBS-tuning} hold.  Fix
$\gamma\in\{0,1/2\}$ and run the recursion using only the coordinatewise component.  If
\[
a_{\mathrm{lin},n}^{\mathrm W}=o(\Lambda_{M,\gamma,n}^{\mathrm W}),
\qquad
\Lambda_{M,\gamma,n}^{\mathrm W}=o(\sqrt{\Delta_n}\,\kappa_{\infty,n}),
\qquad
r_{M,\gamma,n}^{\mathrm{pre,W}}=o(d_{\mathrm W,n}),
\]
then
\[
\Pr(\widehat K_{M,\gamma}=K_{\mathrm{cp}})\longrightarrow1,
\qquad
\max_{1\le j\le K_{\mathrm{cp}}}
|\widetilde\tau_{M,\gamma,j}-\tau_j|
=O_p(r_{M,\gamma,n}^{\mathrm{pre,W}}).
\]
Moreover,
\[
\max_{1\le j\le K_{\mathrm{cp}}}
\frac{|\widetilde\tau_{M,\gamma,j}-\tau_j|}{\Delta_n}
\xrightarrow{p}0.
\]
\end{theorem}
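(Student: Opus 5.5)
The plan follows the template of the dense case in Theorem~\ref{thm:wbs-dense}, replacing quadratic-form control by uniform control of linear CUSUM contrasts. I will construct a single high-probability event $\mathcal E_n$ on which the recursion behaves deterministically.

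\textbf{Step 1: the good event.} $\mathcal E_n$ is the intersection of four events.
\begin{enumerate}
\item[(a)] A uniform noise bound: for all sampled intervals $I_r^{\mathrm W}$, all $b$, and all $j$,
\[
|U^{\epsilon}_{I,j}(b)|/\sqrt{m_I}\lesssim v_{I,b}^{1/2}\,a_{\mathrm{lin},n}^{\mathrm W},
\]
where $U^\epsilon$ is the noise part of the contrast.
\item[(b)] Ratio consistency $\max_j|\widehat\sigma_j/\sigma_j-1|\lesssim a_{\sigma,n}$.
\item[(c)] Coverage: for every change $\tau_j$ there is a drawn interval $I_r^{\mathrm W}\subset(\tau_j-\Delta_n/2,\tau_j+\Delta_n/2]$ that contains $\tau_j$, has both segments of length at least $\Delta_n/10$, and has length at least $\ell_{\mathrm W,n}$.
\item[(d)] The variances satisfy $\sigma_j^2\in[C_0,C_1]$ by Assumption~\ref{ass:C2}.
\end{enumerate}

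Event (a) follows from the moment bound $\|\sum_i w_i\epsilon_{ij}\|_q\le Cq^{c_{\mathrm{dep}}}\|w\|_2$ (with $\mathbf u=e_j$). The weight vector has $\|w\|_2=\{m_{-}m_{+}/m_I\}^{1/2}=\sqrt{m_I v_{I,b}}$. Choosing $q\asymp\mathfrak L_{\mathrm W,n}$ and taking a union bound over at most $N_{\mathrm W,n}np$ triples gives (a). Event (b) is the $r_{\sigma,n}=O_p(a_{\sigma,n})$ statement of Assumption~\ref{ass:WBS-tuning}. Event (c) follows from the condition $K_{\mathrm{cp}}\{1-c_{\mathrm W}(\Delta_n/n)^2\}^{N_{\mathrm W,n}}\to0$ together with $\ell_{\mathrm W,n}=o(\Delta_n)$. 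Since $a_{\sigma,n}a_{\mathrm{lin},n}^{\mathrm W}\to0$, on $\mathcal E_n$ the feasible statistic $C_{\gamma,j}^I(b)$ differs from its oracle version, multiplicatively, by $1+O(a_{\sigma,n})$. This perturbs the noise only by $o(1)$ beyond the bound in (a), and perturbs the signal by a relative factor $1+o(1)$.

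\textbf{Step 2: detection and no false detection.}
\begin{itemize}
\item On an interval with no interior change, the signal vanishes. Hence $\mathcal M_\gamma(I)\lesssim a_{\mathrm{lin},n}^{\mathrm W}=o(\Lambda_{M,\gamma,n}^{\mathrm W})$, so $\mathfrak R_{M,\gamma}^{\mathrm W}(I)<1$.
\item On the covering interval from (c), fix $b=\tau_j$ and the coordinate achieving $\|\bdelta_j\|_\infty$. The signal contrast is $\sqrt{m_I}\,v_{I,b}\|\bdelta_j\|_\infty$, divided by $v^{\gamma}$. Since $v\ge c$ on this interval, it is $\gtrsim\sqrt{\Delta_n}\kappa_{\infty,n}\gg\Lambda_{M,\gamma,n}^{\mathrm W}$, and $\tau_j\in\mathcal K_{1/2}(I)$ because trimming is $O(\Delta_n^{\lambda_{\mathrm W}})$. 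So it is significant.
\item Consequently, in each recursive segment that still contains an undetected change at distance at least $\Delta_n/2-o(\Delta_n)$ from the segment ends, $\mathfrak E$ is nonempty. Every significant interval contains at least one change.
\item By the shortest-significant rule, $m_\star\le\Delta_n$. A significant interval shorter than $\Delta_n$ contains exactly one change, because two interior changes force length greater than $\Delta_n$.
\end{itemize}

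\textbf{Step 3: localization on a single-change interval (the main obstacle).} I must show that the argmax of $\max_j|C^I_{\gamma,j}(b)|$ lies within $r_{M,\gamma,n}^{\mathrm{pre,W}}$ of $\tau_j$. The difficulty is that the selected interval may have its change near an endpoint, so the $v$ factors degenerate.

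For a single change, each coordinate's signal profile $v_{I,b}^{1-\gamma}\sqrt{m_I}\,|\delta_{j'}|$ is unimodal, peaking at $\tau$. I plan to compare $b$ with $\tau$ using the drop
\[
\text{peak}-\text{profile}(b)\gtrsim \text{peak}\cdot|b-\tau|/d,
\]
where $d$ is the length of the segment of $I$ between $\tau$ and $b$'s side. The significance of the selected interval gives a floor $\text{peak}\ge\Lambda_{M,\gamma,n}^{\mathrm W}(1-o(1))-O(a_{\mathrm{lin},n}^{\mathrm W})$. The argmax requires the drop to be at most twice the uniform noise, $O(a_{\mathrm{lin},n}^{\mathrm W})$. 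Using $d\le m_I\le\Delta_n$, this yields
\[
|b-\tau|\lesssim\Delta_n\,a_{\mathrm{lin},n}^{\mathrm W}/\Lambda_{M,\gamma,n}^{\mathrm W}.
\]
For $\gamma=1/2$, I additionally lose the trimming width $\bar\lambda^{\mathrm W}_{\Delta,n}$ when $\tau$ lies within the trimmed boundary. The step I expect to need the most care is making the drop inequality uniform over coordinates, since the maximizing coordinate at $b$ may differ from the one at $\tau$. I would handle this by bounding $\max_{j'}$ of the signal profiles, which is itself unimodal in $b$ with the same piecewise-linear-over-$v$ shape.

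\textbf{Step 4: recursion and conclusion.} Since $r^{\mathrm{pre,W}}_{M,\gamma,n}=o(d_{\mathrm W,n})$ and $d_{\mathrm W,n}=o(\Delta_n)$, the guard removes the localized change from both children. The remaining changes stay at distance at least $\Delta_n-o(\Delta_n)$ from the child boundaries, so their covering intervals from (c) remain inside the children. Also, no child contains the old change in its interior. Induction over the at most $K_{\mathrm{cp}}$ recursion levels then gives exactly $K_{\mathrm{cp}}$ detections, each with the stated rate. Finally, dividing by $\Delta_n$ gives the ratio statement, because $r^{\mathrm{pre,W}}_{M,\gamma,n}=o(d_{\mathrm W,n})=o(\Delta_n)$.
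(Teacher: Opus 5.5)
Your argument is essentially the paper's proof. It uses the same joint event (random-interval coverage, a union-bounded coordinatewise noise bound of order $a_{\mathrm{lin},n}^{\mathrm W}$, variance-ratio consistency), the same fact that no-change intervals are never significant so the shortest significant interval isolates exactly one change, the same relative-gap-plus-significance-floor localization (with the trimming loss for $\gamma=1/2$), and the same guard induction preserving the coverage intervals.

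The one step to repair is event (b). Assumption~\ref{ass:WBS-tuning} explicitly states that $r_{\sigma,n}=O_p(a_{\sigma,n})$ is derived, not assumed, and Lemma~\ref{lem:lrv-general} only covers $H_0$ or a single change. You therefore need to add the routine multiple-change contamination bound, as in Lemma~\ref{lem:wbs-multiple-nuisance}: each of the $K_{\mathrm{cp}}$ bounded jumps corrupts at most $6\ell_j$ third-difference stencils, adding an $O(\ell_j^2/n)$ bias and a smaller stochastic term, both absorbed in $a_{\sigma,n}$.
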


\begin{coro}[Adaptive WBS recovery]\label{cor:wbs-adaptive}
Suppose the assumptions common to Theorems~\ref{thm:wbs-dense} and
\ref{thm:wbs-max} hold, together with
\[
a_{S,\gamma,n}^{\mathrm W}=o(\Lambda_{S,\gamma,n}^{\mathrm W}),
\qquad
a_{\mathrm{lin},n}^{\mathrm W}=o(\Lambda_{M,\gamma,n}^{\mathrm W}),
\qquad
r_{A,\gamma,n}^{\mathrm{pre,W}}=o(d_{\mathrm W,n}).
\]
For every $j$, assume that at least one of
\[
\frac{\Delta_n\|\bdelta_j\|_2^2}
 {\sqrt p\,\Lambda_{S,\gamma,n}^{\mathrm W}}\longrightarrow\infty,
\qquad
\frac{\sqrt{\Delta_n}\|\bdelta_j\|_\infty}
 {\Lambda_{M,\gamma,n}^{\mathrm W}}\longrightarrow\infty
\]
holds.  Then
\[
\Pr(\widehat K_\gamma=K_{\mathrm{cp}})\longrightarrow1,
\qquad
\max_{1\le j\le K_{\mathrm{cp}}}|\widetilde\tau_{\gamma,j}-\tau_j|
=O_p(r_{A,\gamma,n}^{\mathrm{pre,W}}).
\]
In addition,
\[
\max_{1\le j\le K_{\mathrm{cp}}}
\frac{|\widetilde\tau_{\gamma,j}-\tau_j|}{\Delta_n}
\xrightarrow{p}0.
\]
\end{coro}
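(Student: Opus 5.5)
The plan is to treat Corollary~\ref{cor:wbs-adaptive} as a merged version of the proofs of Theorems~\ref{thm:wbs-dense} and~\ref{thm:wbs-max}. The corollary adds only the mixed selection rule, in which the shortest significant interval is chosen across both components and the threshold ratio breaks ties.

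\textbf{Step 1: a single high-probability event.} Work on an event $\mathcal E_n$ with $\Pr(\mathcal E_n)\to1$ that intersects five events:
\begin{itemize}
\item[(a)] the uniform noise bounds over the whole random-interval pool used in the two theorems. The centered quadratic part is $O_p(a_{\mathrm{quad},\gamma,n}^{\mathrm W})$, the feasible centering is $O_p(r_{\mu,n}^{\mathrm{mc}})$, the linear signal--noise cross terms are $O_p(a_{\mathrm{lin},n}^{\mathrm W})$, and the ratio bound is $\max_j|\widehat\sigma_j/\sigma_j-1|=O_p(a_{\sigma,n})$;
\item[(b)] the ratio consistency $\widehat\omega/\omega_p\to1$ from Theorem~\ref{Th2} in its multiple-change form, which uses $r_{\omega,n}^{\mathrm{mc}}\to0$;
\item[(c)] the coverage event: for every $j$ the pool contains an interval of length at most $\Delta_n$ that contains only $\tau_j$ and places it at a fixed fraction from both endpoints. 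This has probability $1-K_{\mathrm{cp}}\{1-c_{\mathrm W}(\Delta_n/n)^2\}^{N_{\mathrm W,n}}\to1$;
\item[(d)] the null-interval event: no component is significant on any interval containing no change. For the dense component this follows from $a_{S,\gamma,n}^{\mathrm W}=o(\Lambda_{S,\gamma,n}^{\mathrm W})$, and for the coordinatewise component from $a_{\mathrm{lin},n}^{\mathrm W}=o(\Lambda_{M,\gamma,n}^{\mathrm W})$;
\item[(e)] the single-change localization bounds, proved in the two theorems, applied to every significant interval containing exactly one change. For a significant interval in component $\mathfrak c$ they give $|\widetilde b_{\mathfrak c,\gamma}(I)-\tau_j|\lesssim r_{\mathfrak c,\gamma,n}^{\mathrm{pre,W}}$, and both of these rates are dominated by $r_{A,\gamma,n}^{\mathrm{pre,W}}$.
\end{itemize}
All five parts are uniform over the pool and have already been established separately. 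A union of their complements therefore has vanishing probability, because $N_{\mathrm W,n}\le n^{C_{\mathrm W}}$ is absorbed into $\mathfrak L_{\mathrm W,n}$.

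\textbf{Step 2: induction over the recursion tree.} Call a segment $(s,e]$ \emph{admissible} if each of its endpoints is either at distance at least $\Delta_n-d_{\mathrm W,n}-r_{A,\gamma,n}^{\mathrm{pre,W}}$ from every undetected change, or within $r_{A,\gamma,n}^{\mathrm{pre,W}}+d_{\mathrm W,n}$ of an already detected change. No change still inside the segment may lie within $d_{\mathrm W,n}-r_{A,\gamma,n}^{\mathrm{pre,W}}$ of an endpoint. The induction shows two things.

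\emph{Existence.} If an admissible segment contains an undetected $\tau_j$, then the isolating interval from (c) lies inside $(s,e]$, because $d_{\mathrm W,n}=o(\Delta_n)$. Here the corollary's hypothesis enters: for this $j$ at least one of the two signal conditions holds. The deterministic CUSUM drift on the isolating interval is of order $\Delta_n\|\bdelta_j\|_2^2/\sqrt p$ for $\mathcal S_\gamma$ and $\sqrt{\Delta_n}\|\bdelta_j\|_\infty$ for $\mathcal M_\gamma$, because $v_{I,b}$ is bounded away from zero at the change. The corresponding component is therefore significant, so $\mathfrak E_\gamma^{\mathrm W}(s,e)\neq\emptyset$.

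\emph{Localization.} The shortest-significant selection returns an interval no longer than this isolating interval, so its length is below $\Delta_n$. Any significant interval must contain a change by (d), so the selected interval contains exactly one change. One point needs a check: a change lying within the guard of a previous detection contributes only a drift of order $d_{\mathrm W,n}\|\bdelta\|$ scale. Here I would use $r_{A,\gamma,n}^{\mathrm{pre,W}}=o(d_{\mathrm W,n})$ together with the fact that, by the admissibility of segments, such a change cannot sit inside a child.

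Once a single change is isolated, (e) applies to whichever component $\mathfrak c^\star$ was selected, and either one yields error $O(r_{A,\gamma,n}^{\mathrm{pre,W}})$. The guard then removes $\tau_j$ from both children, since $r_{A,\gamma,n}^{\mathrm{pre,W}}=o(d_{\mathrm W,n})$, and admissibility passes to the children.

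\textbf{Step 3: termination.} When every change has been detected, each remaining segment contains no change in its interior at distance greater than the guard slack. By (d) neither component fires, so the recursion stops. This gives $\widehat K_\gamma=K_{\mathrm{cp}}$ and the stated uniform rate. The relative statement then follows from $r_{A,\gamma,n}^{\mathrm{pre,W}}=o(d_{\mathrm W,n})=o(\Delta_n)$.

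\textbf{Main obstacle.} I expect the hard step to be the "contains exactly one change" claim under mixed selection. A component that is weak for $\tau_j$ may still fire, on an even shorter interval, because of the residual of an already detected neighbouring change. That residual survives when the segment boundary lies within $r_{A,\gamma,n}^{\mathrm{pre,W}}+d_{\mathrm W,n}$ of it.

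My first approach is to bound the drift that a change at distance $u\le r_{A,\gamma,n}^{\mathrm{pre,W}}$ from the endpoint creates on any subinterval. It is at most of order $u\|\bdelta\|_2^2/\sqrt p$ in the dense component and $\sqrt u\|\bdelta\|_\infty$ in the coordinatewise component, with $v_{I,b}^{-\gamma}$ controlled by the trimming $\bar\lambda_{\Delta,n}^{\mathrm W}$. I would then show that this is $o$ of the respective threshold using the bounds on $\|\bdelta_j\|_2$ and $r_{A,\gamma,n}^{\mathrm{pre,W}}$.

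If that bound fails for a change whose signal is strong in one component but weak in the other, the fallback is a different argument. The guard places such residual changes outside the child entirely, not merely near its boundary, so they create no two mean levels inside any interval of the child. This is the convention stated above, under which a change at an endpoint does not count as contained.
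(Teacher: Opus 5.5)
Your proposal is correct and follows the paper's proof. It uses the coverage-interval induction of Theorem~\ref{thm:wbs-dense}, with the shortest-significant rule forcing exactly one change in the selected interval. The one-change localization bound then applies to whichever component $\mathfrak c^\star$ is selected, because significance of $\mathfrak c^\star$ alone certifies enough drift, so $r_{A,\gamma,n}^{\mathrm{pre,W}}$ controls the error whichever signal condition holds for that $j$.

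The ``main obstacle'' you flag does not arise. Since $|\widetilde\tau-\tau_j|\le Cr_{A,\gamma,n}^{\mathrm{pre,W}}=o(d_{\mathrm W,n})$, the detected change lies strictly inside the removed band $(\widetilde\tau-d_{\mathrm W,n},\widetilde\tau+d_{\mathrm W,n}]$ and hence outside both children, so there is no residual drift to bound. This is your fallback and exactly the paper's argument; the residual-drift bound you try first is unnecessary and is not guaranteed by the stated rates.
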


All probability statements are joint over the observations and the independent random-interval draw.  The proofs combine random-interval isolation, uniform local moment bounds derived from Assumption~\ref{ass:C1}, exact one-change contrast formulas, and a guarded recursion argument.  The shortest-significant rule avoids a high-level cancellation condition, while the separation guard is used only to prevent repeated detection of an already localized change.

\section{Simulation Results}\label{sec:simulation-compatibility}
In this section, we conduct a comparative analysis of our proposed methods against the test procedures presented by \citet{li2019change} (hereinafter referred to as LXZL) and \citet{wang2022inference} (hereinafter referred to as WZVS). Specifically, for the WZVS test, we set the tuning parameter \(\eta_{\mathrm{WZVS}}\) to 0.02 as per their method. In addition to the aforementioned comparisons, we further evaluated our methods against two other approaches that rely on the assumption of independence and identical distribution---specifically, the methods proposed by \citet{10.1007/s11425-016-0058-5} (hereinafter referred to as JPYZ) and \citet{10.1142/s201032631950014x} (hereinafter referred to as WZWY).

We set $M=\lceil(n\wedge p)^{1/8}\rceil$ and use the definitions in Sections~\ref{sec:2-L2}--\ref{sec:3-adaptive} without a multiplicative stabilization factor or bootstrap calibration.  In the empirical-size study, we report all six final procedures,
\[
S_{0},\quad M_{0},\quad T_{CC,0},\quad
S_{1/2},\quad M_{1/2},\quad T_{CC,1/2},
\]
together with the four competing methods.  We use $\lambda_n=\lfloor\sqrt n\rfloor$ for the boundary-weighted statistics.  The finite-order filters below satisfy the trace nondegeneracy in Assumption~\ref{ass:C2}; this verifies the simulation design and is not an additional assumption in the general theory.

For the size experiment, null samples are generated from
\begin{eqnarray*}
\bX_i = \sum\limits_{h=0}^{q_{\mathrm{MA}}} \mathbf{A}_h\boldsymbol\zeta_{i-h},\qquad i=1,\ldots,n,
\end{eqnarray*}
where $\mathbf{A}_0,\dots,\mathbf{A}_{q_{\mathrm{MA}}}$ are $p\times p$ matrices that determine the autocovariance structure, $\boldsymbol\zeta_i=\boldsymbol\Sigma_{\zeta}^{1/2}\bm u_i$, and $\bm u_i=(u_{i1},\ldots,u_{ip})^\top$ has independent coordinates.  We consider either
\[
u_{ij}\sim N(0,1)
\]
or the standardized Gamma innovation
\[
u_{ij}=\frac{G_{ij}-4}{2},
\qquad
G_{ij}\sim\operatorname{Gamma}(\text{shape}=4,\text{scale}=1),
\]
which has mean zero and variance one.  For $0\le h\le q_{\mathrm{MA}}$, this data-generating model implies $\Cov(\bX_i,\bX_{i+h})=\bGam(h)$ with
\[
\bGam(h)=\sum_{k=0}^{q_{\mathrm{MA}}-h}
\mathbf{A}_k\boldsymbol\Sigma_{\zeta}\mathbf{A}_{k+h}^{\T},
\qquad
\bGam(-h)=\bGam(h)^\top.
\]
The innovation covariance is either $\boldsymbol\Sigma_{\zeta}=\mathbf I_p$ or $\boldsymbol\Sigma_{\zeta}=(r_\zeta^{|i-j|})_{1\le i,j\le p}$ for a fixed $r_\zeta\in(0,1)$, and $\mathbf A_h=(a_{h,ij})_{p\times p}$ is given by
\begin{equation*}
\mathbf{A}_h(i,j)=\left\{
\begin{array}{cl}
\dfrac{\phi}{h}, & \mbox{if}~ |i-j|=0,\\
\dfrac{\phi}{h|i-j|^2}, & \mbox{if}~1 \leq |i-j|\leq \lfloor p\pi_{\mathrm A} \rfloor,\\
0,  & \mbox{if}~ |i-j|> \lfloor p\pi_{\mathrm A} \rfloor,
\end{array}
\right.
\end{equation*}
for $h=1,\dots,q_{\mathrm{MA}}$. Here, $\phi$ controls the temporal cross-coordinate dependence and $\pi_{\mathrm A}$ controls the relative bandwidth of $\mathbf A_h$. We set $\mathbf A_0=\mathbf I_p$ and consider $q_{\mathrm{MA}}\in\{0,2\}$. The two covariance scenarios are
\begin{enumerate}
\item[(S1)] $\boldsymbol\Sigma_{\zeta}=\mathbf I_p$, $\phi=0.5$, and $\pi_{\mathrm A}=0.5$;
\item[(S2)] $r_\zeta=0.5$, $\boldsymbol\Sigma_{\zeta}=(r_\zeta^{|i-j|})_{1\le i,j\le p}$, $\phi=0.2$, and $\pi_{\mathrm A}=0.2$.
\end{enumerate}

The empirical sizes are computed from 1000 Monte Carlo replications at the nominal level $\alpha=5\%$.  The power and single-change localization experiments are based on 500 Monte Carlo replications, whereas the multiple-change localization experiment below uses 1000 replications.  The Gaussian matrix-filter design is nonseparable and finitely dependent. Its higher-order connected cumulants vanish beyond Gaussian pairings, while the bounded-filter construction gives the projection-tail and covariance-summability requirements; the chosen parameters also satisfy the long-run spectral bounds in Assumption~\ref{ass:C2}. The standardized Gamma design is asymmetric and sub-exponential rather than projection sub-Gaussian, and is included as a non-Gaussian robustness experiment outside Assumption~\ref{ass:C1}. For the Gaussian-process calibration of $S_{0}$, we take $T_d=B_V=10000$ in Remark~\ref{rem-1} and obtain the estimated $0.95$ critical value $\widehat c_{0.95}=0.9345$.

Tables~\ref{tab1}--\ref{tab2} report the empirical rejection percentages for the normal and standardized Gamma innovations, respectively.  All entries are computed from the complete 1000-replication results.

\begin{table}[htbp]
\centering
\scriptsize
\setlength{\tabcolsep}{2.2pt}
\renewcommand\arraystretch{0.72}
\resizebox{\textwidth}{!}{%
\begin{tabular}{ccccccccccccc} \hline \hline
 $n$ & $p$ & $q_{\mathrm{MA}}$ & $S_{0}$ & $M_{0}$ & $T_{CC,0}$ & $S_{1/2}$ & $M_{1/2}$ & $T_{CC,1/2}$ & LXZL & WZVS & JPYZ & WZWY\\ \hline
 \multicolumn{13}{c}{Scenario (S1)}\\ \hline
 400 & 250 & 0 & 4.0 & 3.3 & 4.5 & 5.2 & 4.0 & 3.7 & 4.9 & 5.5 & 6.9 & 4.6 \\
 400 & 500 & 0 & 5.5 & 3.9 & 5.3 & 4.4 & 2.5 & 2.8 & 6.1 & 5.4 & 6.2 & 5.9 \\
 800 & 250 & 0 & 3.4 & 4.5 & 3.8 & 2.9 & 3.2 & 2.7 & 4.4 & 4.6 & 5.6 & 4.1 \\
 800 & 500 & 0 & 5.2 & 3.6 & 4.3 & 5.2 & 3.8 & 2.9 & 5.4 & 6.0 & 7.5 & 5.2 \\
 400 & 250 & 2 & 3.2 & 4.0 & 3.8 & 4.9 & 3.0 & 2.6 & 4.9 & 4.9 & 100.0 & 100.0 \\
 400 & 500 & 2 & 3.0 & 4.6 & 3.8 & 4.0 & 4.1 & 3.7 & 4.4 & 5.0 & 100.0 & 100.0 \\
 800 & 250 & 2 & 2.6 & 6.4 & 5.0 & 4.9 & 5.8 & 5.5 & 5.5 & 4.8 & 100.0 & 100.0 \\
 800 & 500 & 2 & 4.7 & 6.3 & 6.9 & 5.6 & 4.9 & 5.7 & 6.9 & 6.0 & 100.0 & 100.0 \\
 \hline
 \multicolumn{13}{c}{Scenario (S2)}\\ \hline
 400 & 250 & 0 & 4.5 & 3.3 & 4.3 & 5.1 & 2.8 & 3.5 & 5.8 & 4.4 & 7.4 & 5.8 \\
 400 & 500 & 0 & 4.9 & 4.3 & 4.9 & 5.7 & 3.3 & 4.2 & 6.4 & 5.1 & 7.4 & 6.3 \\
 800 & 250 & 0 & 4.4 & 4.7 & 4.6 & 6.4 & 2.6 & 4.6 & 5.3 & 5.6 & 7.6 & 5.1 \\
 800 & 500 & 0 & 5.0 & 4.4 & 4.9 & 6.0 & 3.8 & 5.2 & 5.0 & 5.3 & 6.8 & 5.4 \\
 400 & 250 & 2 & 4.0 & 4.9 & 3.7 & 5.4 & 2.9 & 3.6 & 5.2 & 5.3 & 100.0 & 100.0 \\
 400 & 500 & 2 & 4.3 & 4.1 & 3.6 & 6.2 & 2.9 & 4.8 & 6.0 & 6.1 & 100.0 & 100.0 \\
 800 & 250 & 2 & 2.4 & 4.9 & 3.8 & 4.9 & 3.4 & 4.3 & 4.0 & 4.7 & 100.0 & 100.0 \\
 800 & 500 & 2 & 3.7 & 5.5 & 4.9 & 4.6 & 4.8 & 4.8 & 5.6 & 6.1 & 100.0 & 100.0 \\ \hline \hline
\end{tabular}%
}
\caption{Empirical sizes (in percent) under normal innovations, based on 1000 Monte Carlo replications.}
\label{tab1}
\end{table}

\begin{table}[htbp]
\centering
\scriptsize
\setlength{\tabcolsep}{2.2pt}
\renewcommand\arraystretch{0.72}
\resizebox{\textwidth}{!}{%
\begin{tabular}{ccccccccccccc} \hline \hline
 $n$ & $p$ & $q_{\mathrm{MA}}$ & $S_{0}$ & $M_{0}$ & $T_{CC,0}$ & $S_{1/2}$ & $M_{1/2}$ & $T_{CC,1/2}$ & LXZL & WZVS & JPYZ & WZWY\\ \hline
 \multicolumn{13}{c}{Scenario (S1)}\\ \hline
 400 & 250 & 0 & 4.8 & 3.9 & 4.6 & 6.3 & 5.0 & 5.1 & 6.2 & 3.8 & 8.0 & 6.2 \\
 400 & 500 & 0 & 6.6 & 4.8 & 6.4 & 5.9 & 5.9 & 4.8 & 6.4 & 6.7 & 7.4 & 5.8 \\
 800 & 250 & 0 & 4.0 & 4.9 & 4.8 & 4.7 & 4.5 & 3.9 & 5.1 & 5.4 & 6.2 & 4.7 \\
 800 & 500 & 0 & 5.3 & 5.2 & 5.5 & 3.5 & 5.4 & 4.6 & 3.4 & 5.2 & 6.2 & 2.9 \\
 400 & 250 & 2 & 3.4 & 5.3 & 4.4 & 5.2 & 4.7 & 4.6 & 5.3 & 4.9 & 100.0 & 100.0 \\
 400 & 500 & 2 & 3.4 & 5.5 & 3.6 & 3.6 & 3.3 & 2.8 & 5.4 & 5.0 & 100.0 & 100.0 \\
 800 & 250 & 2 & 4.6 & 4.9 & 4.6 & 7.4 & 6.2 & 7.8 & 6.0 & 4.0 & 100.0 & 100.0 \\
 800 & 500 & 2 & 3.3 & 6.4 & 5.2 & 4.8 & 6.3 & 6.0 & 5.6 & 5.2 & 100.0 & 100.0 \\
 \hline
 \multicolumn{13}{c}{Scenario (S2)}\\ \hline
 400 & 250 & 0 & 5.6 & 5.3 & 5.6 & 8.2 & 5.2 & 7.5 & 6.9 & 4.4 & 9.5 & 6.2 \\
 400 & 500 & 0 & 5.5 & 5.3 & 7.0 & 7.1 & 5.2 & 7.2 & 5.8 & 4.6 & 6.6 & 5.6 \\
 800 & 250 & 0 & 6.3 & 4.6 & 5.8 & 8.6 & 4.6 & 6.7 & 6.8 & 6.6 & 10.6 & 6.4 \\
 800 & 500 & 0 & 5.4 & 4.5 & 5.7 & 5.7 & 5.5 & 5.1 & 6.9 & 5.6 & 7.7 & 6.4 \\
 400 & 250 & 2 & 2.9 & 4.4 & 4.1 & 5.4 & 2.9 & 3.6 & 4.8 & 5.4 & 100.0 & 100.0 \\
 400 & 500 & 2 & 3.6 & 4.2 & 4.9 & 3.7 & 4.0 & 4.2 & 3.9 & 5.4 & 100.0 & 100.0 \\
 800 & 250 & 2 & 4.2 & 4.4 & 4.3 & 7.1 & 4.3 & 6.3 & 5.5 & 4.7 & 100.0 & 100.0 \\
 800 & 500 & 2 & 5.0 & 5.1 & 5.6 & 5.4 & 4.5 & 5.5 & 6.3 & 5.2 & 100.0 & 100.0 \\ \hline \hline
\end{tabular}%
}
\caption{Empirical sizes (in percent) under standardized Gamma innovations, based on 1000 Monte Carlo replications.}
\label{tab2}
\end{table}
\FloatBarrier

Under normal innovations, the six proposed procedures have empirical sizes between $2.4\%$ and $6.9\%$ over all 16 settings.  Under standardized Gamma innovations, the unweighted matched family $(S_{0},M_{0},T_{CC,0})$ remains between $2.9\%$ and $7.0\%$.  The boundary-weighted dense statistic and its combination are mildly liberal in a few asymmetric settings, with maxima of $8.6\%$ and $7.8\%$, respectively, but neither exhibits the severe distortion caused by ignoring serial dependence.  The two dependent-data competitors, LXZL and WZVS, also remain broadly close to the nominal level.  In contrast, JPYZ and WZWY are reasonably calibrated when $q_{\mathrm{MA}}=0$ but reject in every replication when $q_{\mathrm{MA}}=2$, confirming that their independent-data calibration is invalid for these serially dependent designs.

For the power comparison, the first $s$ coordinates have a common mean change,
\[
\delta_i=c_\tau\sqrt{\frac{\log p}{ns}},\qquad i=1,\ldots,s,
\qquad
\delta_i=0,\qquad i>s,
\]
where $s$ controls the sparsity of the alternative.  We take $s\in\{1,2,3,5,7,9,20,30,40,50\}$, set $c_\tau=20$ for $\tau=0.3n$ and $c_\tau=15$ for $\tau=0.5n$, and report results for $n=400$, $p=500$, Scenario (S2), and $q_{\mathrm{MA}}=2$.  Other combinations of $n$ and $p$ lead to similar qualitative conclusions.  The methods JPYZ and WZWY are omitted because their severe size distortions under serial dependence make an unadjusted power comparison uninformative.

\begin{figure}[htbp]
    \centering
    \includegraphics[width=0.9\linewidth]{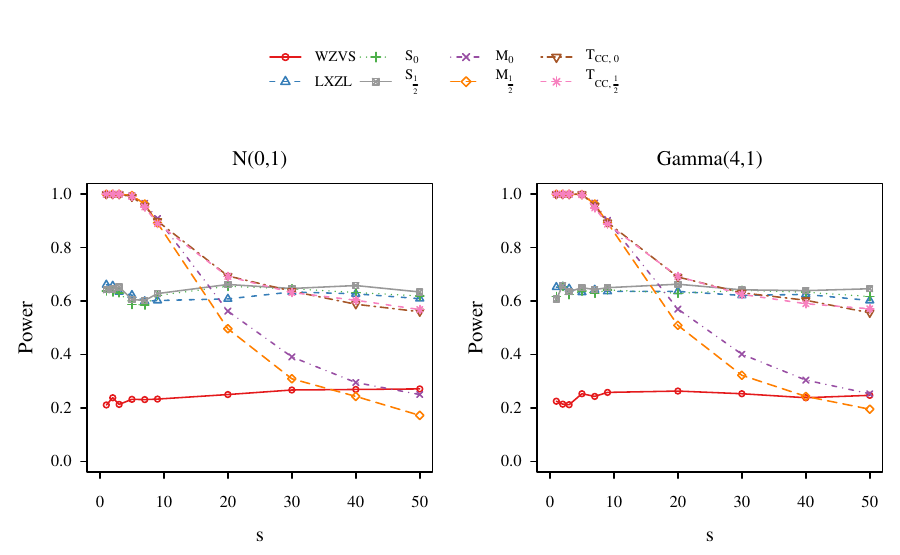}
    \caption{Empirical power under Scenario (S2), $q_{\mathrm{MA}}=2$, and $\tau=0.3n$.  The left and right panels correspond to normal and standardized $\operatorname{Gamma}(4,1)$ innovations, respectively.}
    \label{fig:power_s2_03}
\end{figure}
\begin{figure}[htbp]
    \centering
    \includegraphics[width=0.9\linewidth]{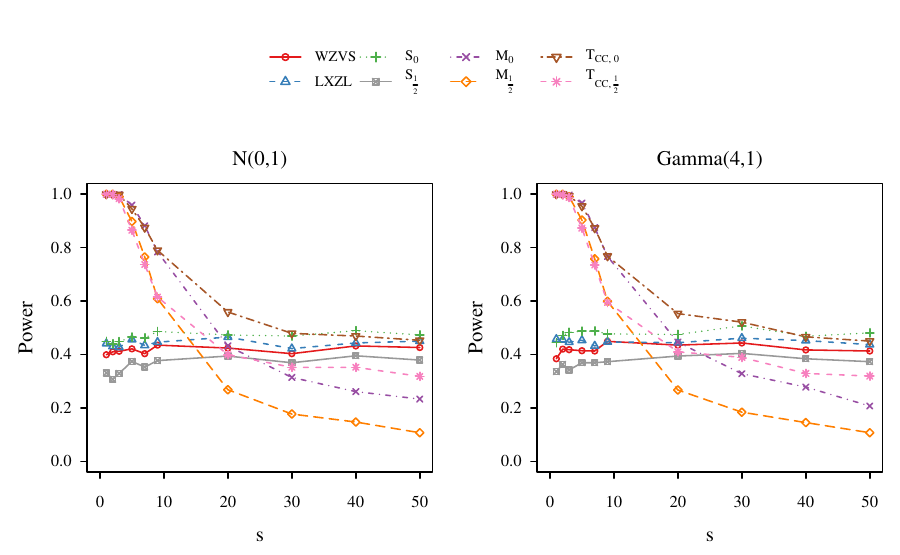}
    \caption{Empirical power under Scenario (S2), $q_{\mathrm{MA}}=2$, and $\tau=0.5n$.  The left and right panels correspond to normal and standardized $\operatorname{Gamma}(4,1)$ innovations, respectively.}
    \label{fig:power_s2_05}
\end{figure}
Figures~\ref{fig:power_s2_03}--\ref{fig:power_s2_05} report the empirical rejection probabilities for the two change-point locations.  Under the stated calibration,
\[
\|\boldsymbol\delta\|_2^2=s\delta_1^2=\frac{c_\tau^2\log p}{n},
\]
so the aggregate squared signal remains fixed as $s$ increases, whereas the magnitude of each nonzero coordinate decreases.  The experiment therefore moves from sparse alternatives with strong coordinatewise changes to denser alternatives with weaker coordinatewise changes while holding the total $L_2$ signal constant.

For very sparse alternatives, $M_{0}$ and $M_{1/2}$ attain power close to one and clearly dominate the quadratic procedures.  Their power decreases rapidly with $s$, reflecting the loss of coordinatewise signal strength.  By contrast, $S_{0}$ and $S_{1/2}$ remain comparatively stable over the displayed sparsity range and become preferable after the transition from sparse to dense alternatives.  This complementary behavior is observed for both change-point locations and under both innovation distributions.

The two Cauchy combinations adapt to this transition.  For small $s$, $T_{CC,\gamma}$ closely follows the corresponding maximum statistic $M_{\gamma,n,p}$; for moderate and large $s$, it avoids the sharp power deterioration of the maximum component and retains the contribution of $S_{\gamma,n,p}$.  In particular, $T_{CC,0}$ provides stable performance across the full range of sparsity levels.  When $\tau=0.5n$, the unweighted statistics generally outperform their $\gamma=1/2$ counterparts once $s$ is moderate.  When $\tau=0.3n$, the differences between the two quadratic statistics and between the two combined tests are smaller, although $M_{0}$ remains more powerful than $M_{1/2}$ for moderate and large $s$.  Thus, the boundary-weighted construction does not yield a systematic power gain in the present design.

Relative to the competing procedures, $S_{0}$ and $T_{CC,0}$ generally outperform WZVS, with particularly pronounced gains when $\tau=0.3n$, and are competitive with LXZL throughout the displayed settings.  The power curves under normal and standardized Gamma innovations are also very similar.  This agreement indicates that the proposed procedures retain their sparsity-adaptive behavior under the asymmetric non-Gaussian design considered here.

\begin{figure}[htbp]
    \centering
    \includegraphics[width=0.9\linewidth]{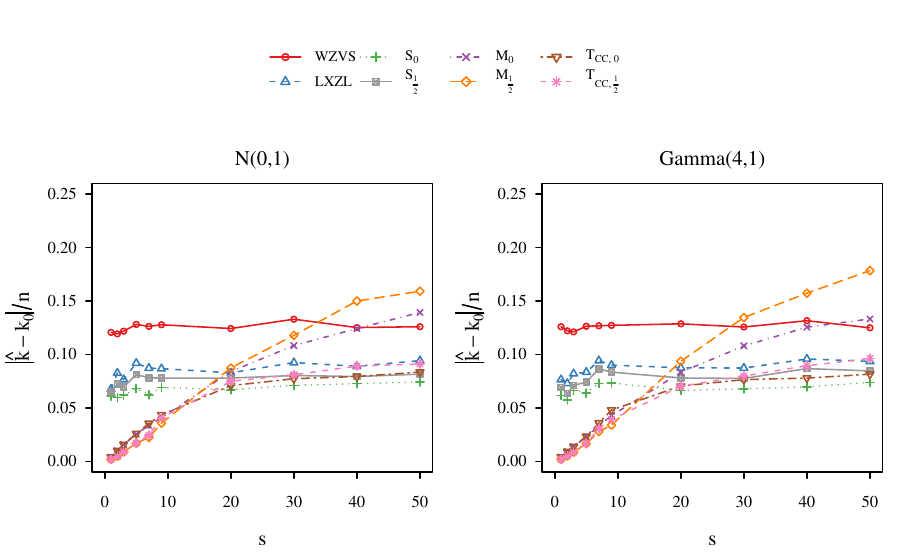}
    \caption{Average normalized absolute localization error under Scenario (S2), $q_{\mathrm{MA}}=2$, and $\tau=0.3n$. The left and right panels correspond to normal and standardized $\operatorname{Gamma}(4,1)$ innovations, respectively. Smaller values indicate more accurate localization.}
    \label{fig:acc_s2_03}
\end{figure}

\begin{figure}[htbp]
    \centering
    \includegraphics[width=0.9\linewidth]{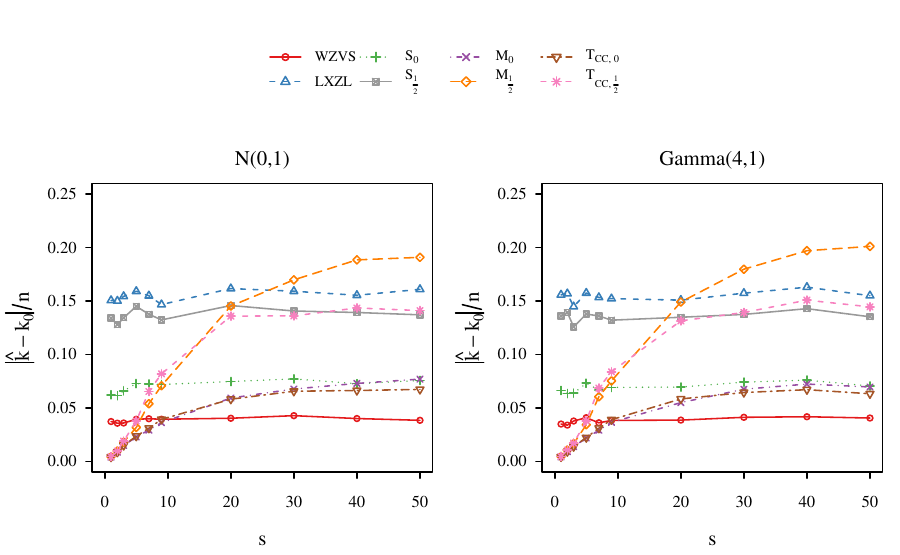}
    \caption{Average normalized absolute localization error under Scenario (S2), $q_{\mathrm{MA}}=2$, and $\tau=0.5n$. The left and right panels correspond to normal and standardized $\operatorname{Gamma}(4,1)$ innovations, respectively. Smaller values indicate more accurate localization.}
    \label{fig:acc_s2_05}
\end{figure}
We next evaluate single-change localization under the same design and signal calibration. Figures~\ref{fig:acc_s2_03}--\ref{fig:acc_s2_05} report the Monte Carlo average of $n^{-1}|\widehat k-k_0|$ over 500 replications, where $k_0$ and $\widehat k$ denote the true and estimated change locations, respectively.

For very sparse alternatives, the max-type localizers $M_{0}$ and $M_{1/2}$, together with the corresponding adaptive localizers, have errors close to zero for both change locations. As $s$ increases, the errors of the max-type localizers increase because the signal in each affected coordinate becomes weaker. This deterioration is especially pronounced for $M_{1/2}$. In contrast, the dense localizers $S_{0}$ and $S_{1/2}$ vary much less with $s$, and $S_{0}$ is consistently more accurate than $S_{1/2}$ in the displayed settings.

The effect of the change location is also evident. When $\tau=0.3n$, WZVS has a comparatively large localization error, whereas $S_{0}$ and the adaptive localizer associated with $T_{CC,0}$ substantially improve upon WZVS and remain competitive with LXZL. When $\tau=0.5n$, WZVS is highly accurate for moderate and large $s$, while LXZL and the $\gamma=1/2$ dense localizer have noticeably larger errors. Thus, the relative performance of WZVS is sensitive to whether the change occurs near the center or away from it.

Among the proposed procedures, the $\gamma=0$ adaptive localizer provides the most balanced performance across sparsity levels. It inherits the near-zero localization error of the maximum component for very sparse changes and avoids the sharp deterioration of the max-type localizers under denser alternatives. The curves under normal and standardized Gamma innovations are very similar, indicating that the localization performance is stable under the asymmetric non-Gaussian design considered here.

We finally examine multiple-change localization.  We set $n=400$, $p=500$, use Scenario~(S2) with $q_{\mathrm{MA}}=2$, and retain the normal and standardized Gamma innovation distributions considered above.  The mean contains two changes at $0.3n$ and $0.7n$ and follows the three-segment pattern $\bm 0$, $\mu_s\bm v_s$, and $-\mu_s\bm v_s$, where the first $s$ coordinates of $\bm v_s$ equal one and the remaining coordinates equal zero.  We take $s\in\{1,2,3,5,7,9,20,30,40,50\}$ and set
\[
\mu_s=\{10.5+68.5s^{-1/2}\}\sqrt{\frac{\log p}{n}}.
\]
Thus, the first and second jumps are $\mu_s\bm v_s$ and $-2\mu_s\bm v_s$, respectively.  The proposed WBS procedures use 300 random intervals, a minimum interval length of 20, a separation guard of 95, and the common component threshold $n^{1/4}$.

Performance is summarized by two directed localization errors and the absolute change-count error.  For a given method, let $\mathfrak T=\{\tau_1,\ldots,\tau_{K_{\mathrm{cp}}}\}$ and $\widehat{\mathfrak T}$ denote the true and estimated change-point sets, respectively, and let $\widehat K=|\widehat{\mathfrak T}|$.  The two directed errors are
\[
\begin{aligned}
\mathrm{OE}
&=\max_{\tau\in\mathfrak T}
  \min_{\widehat\tau\in\widehat{\mathfrak T}}
  |\widehat\tau-\tau|,\\
\mathrm{UE}
&=\max_{\widehat\tau\in\widehat{\mathfrak T}}
  \min_{\tau\in\mathfrak T}
  |\widehat\tau-\tau|.
\end{aligned}
\]
Thus, OE measures the largest distance from a true change point to its nearest estimate, whereas UE measures the largest distance from an estimated change point to its nearest true change point.  The third criterion is $|\widehat K-K_{\mathrm{cp}}|$.  All three criteria are averaged over 1000 replications, and smaller values indicate better performance.

\begin{figure}[!htbp]
    \centering
    \includegraphics[width=0.98\linewidth]{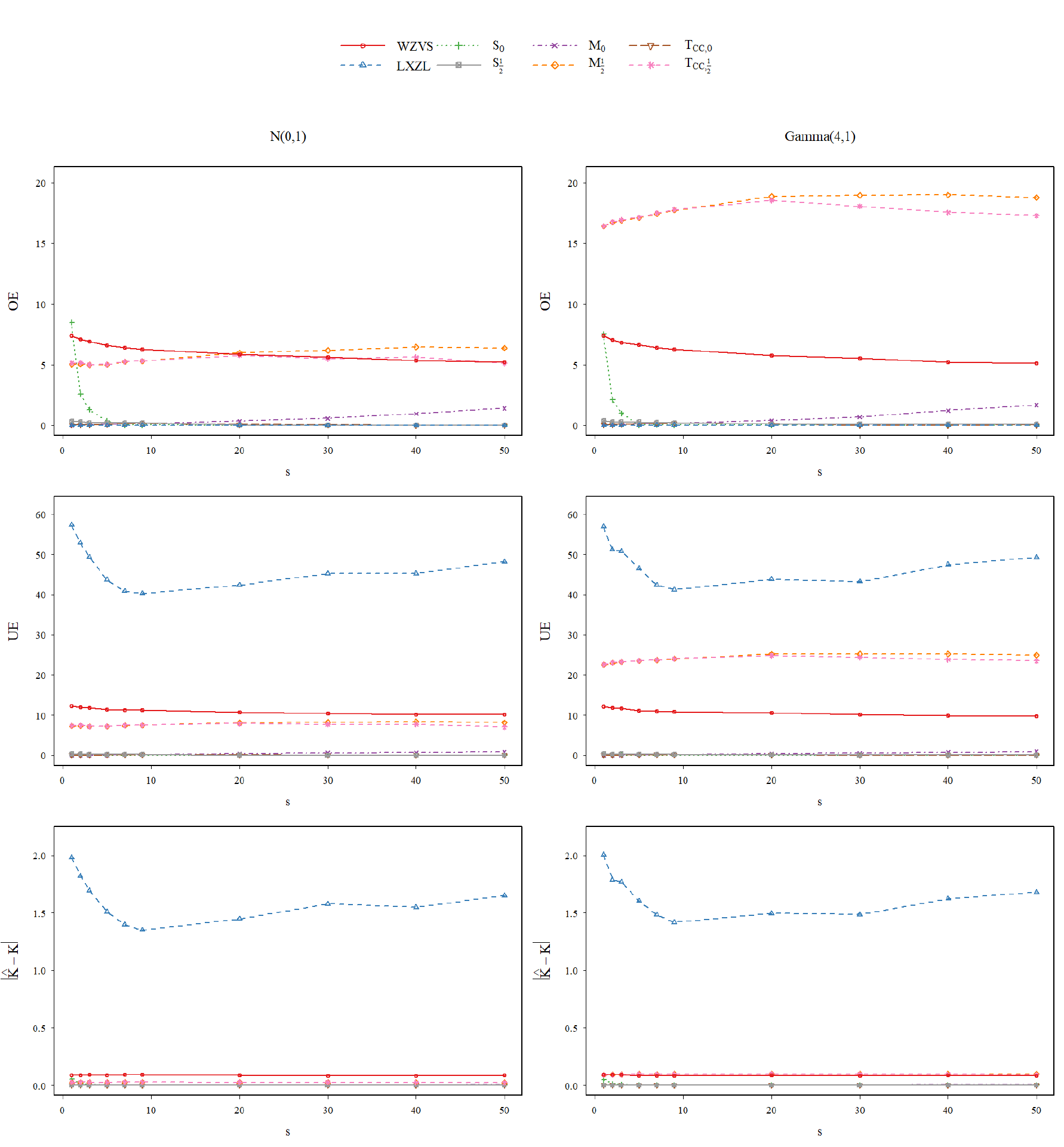}
    \caption{Multiple-change localization under Scenario~(S2), $q_{\mathrm{MA}}=2$, $n=400$, and $p=500$.  The top, middle, and bottom rows report the average OE, UE, and $|\widehat K-K_{\mathrm{cp}}|$, respectively.  The left and right columns correspond to normal and standardized $\operatorname{Gamma}(4,1)$ innovations.  Results are based on 1000 Monte Carlo replications.}
    \label{fig:multiple-localization}
\end{figure}

Figure~\ref{fig:multiple-localization} shows a clear transition between the dense and sparse recursions.  For $s=1$, $S_{0}$ occasionally misses one of the two changes, leading to average OE values of 8.50 and 7.54 under normal and standardized Gamma innovations, respectively.  Its error decreases rapidly as the signal becomes less sparse and is essentially zero for moderate and large $s$.  Conversely, $M_{0}$ is nearly exact for sparse changes but its localization error increases gradually with $s$ as the coordinatewise signal weakens.  The adaptive recursion $T_{CC,0}$ follows $M_{0}$ in the sparse regime and the dense component in the less sparse regime.  It recovers the correct number of changes in every replication for all 20 configurations, while both directed errors remain below 0.2 on average.

The variance-standardized dense recursion $S_{1/2}$ is also highly accurate: it recovers two changes in every replication and keeps both directed errors below 0.42 throughout the experiment.  In contrast, $M_{1/2}$ and $T_{CC,1/2}$ have noticeably larger localization errors, particularly under standardized Gamma innovations, although their mean absolute count errors are at most 0.1.  In this design, the $\gamma=1/2$ adaptive recursion largely inherits the finite-sample behavior of its maximum component.  The comparison therefore supports retaining both temporal weightings, with $T_{CC,0}$ providing the most stable adaptation across sparsity levels and innovation distributions.

Among the competitors, WZVS estimates the number of changes reasonably well but has average OE between approximately 5 and 7 and average UE between approximately 10 and 12.  LXZL typically places an estimate very close to each true change, as indicated by its near-zero OE, but it substantially oversegments the series: its average UE ranges from about 40 to 57 and its mean absolute count error ranges from 1.35 to 2.01.  Taken together, the three criteria show that a small one-sided localization error alone does not guarantee satisfactory multiple-change recovery.  The proposed unweighted adaptive recursion achieves the best overall balance between locating all true changes, avoiding spurious detections, and adapting to the unknown sparsity pattern.
\FloatBarrier

\section{Empirical applications}\label{sec:applications}

\subsection{FRED-MD macroeconomic panel}\label{sec:fredmd}

We apply the proposed procedures to the June 2026 vintage of FRED-MD, a monthly database of U.S. macroeconomic indicators maintained by the Federal Reserve Bank of St. Louis \citep{McCrackenNg2016}.  The database covers output and income, the labour market, housing, consumption and orders, money and credit, interest and exchange rates, prices, and financial markets.  We use the official transformation codes supplied with FRED-MD and retain the longest balanced panel with no missing observations.  The resulting data set contains $n=789$ monthly observations from January 1960 to September 2025 and $p=119$ series.  Variables with a later starting date or internal missing values are excluded; no interpolation or additional rescaling is applied before implementing the procedures.

Table~\ref{tab:fred-single} reports the single-change results.  At the $5\%$ level, WZVS, LXZL, and $S_{0}$ do not reject the null hypothesis of a constant mean vector.  In contrast, $S_{1/2}$, both maximum statistics, and both matched Cauchy tests reject decisively.  The maximum statistics and the two Cauchy combinations all estimate the dominant change at June 2020, whereas the variance-standardized quadratic statistic $S_{1/2}$ estimates April 2022.  The former date coincides with the abrupt macroeconomic disruption induced by the COVID-19 pandemic, while the latter indicates a later broad-based adjustment in the joint mean structure.  The difference between the two dates also illustrates why combining dense and sparse components is useful: the components may respond to distinct forms of structural change in the same high-dimensional panel.

\begin{table}[htbp]
\centering
\small
\begin{tabular}{lccc}
\toprule
Procedure & Estimated date & $p$-value & Reject at $5\%$ \\
\midrule
WZVS & 2004-12 & $>0.20$ & No \\
LXZL & 2022-04 & $0.6654$ & No \\
$S_{0}$ & 2022-04 & $0.7204$ & No \\
$S_{1/2}$ & 2022-04 & $1.100\times10^{-21}$ & Yes \\
$M_{0}$ & 2020-06 & $2.330\times10^{-20}$ & Yes \\
$M_{1/2}$ & 2020-06 & $8.174\times10^{-26}$ & Yes \\
$T_{CC,0}$ & 2020-06 & $2.018\times10^{-15}$ & Yes \\
$T_{CC,1/2}$ & 2020-06 & $1.009\times10^{-15}$ & Yes \\
\bottomrule
\end{tabular}
\caption{Single-change analysis of the FRED-MD panel.}
\label{tab:fred-single}
\end{table}

We next apply the guarded wild binary segmentation procedures in Section~\ref{sec:wbs}.  The estimated multiple change points are summarized in Table~\ref{tab:fred-multiple} and Figure~\ref{fig:fred-multiple}.  WZVS detects only one change, in June 2017, and LXZL detects none.  By contrast, each proposed procedure identifies several regime transitions.  The two quadratic procedures each return six changes, and their paired locations differ by at most 10 months.  The five locations common to $M_{0}$ and $M_{1/2}$ differ by at most 9 months, with $M_{1/2}$ additionally detecting November 1970 and March 1995.  The two matched Cauchy procedures share six locations within 13 months, and $T_{CC,1/2}$ also detects March 1995.  This agreement across the two temporal weightings indicates that the principal change-point chronology is not driven by a particular choice of $\gamma$.

\begin{table}[!htbp]
\centering
\footnotesize
\begin{tabular}{@{}lcp{0.70\textwidth}@{}}
\toprule
Procedure & Number & Estimated dates \\
\midrule
WZVS & 1 & 2017-06 \\
LXZL & 0 & -- \\
$S_{0}$ & 6 & 1975-04; 1982-11; 1992-02; 2001-02; 2009-08; 2020-05 \\
$S_{1/2}$ & 6 & 1975-04; 1982-12; 1991-04; 2001-02; 2009-05; 2020-05 \\
$M_{0}$ & 5 & 1978-01; 1985-10; 2000-04; 2011-01; 2020-06 \\
$M_{1/2}$ & 7 & 1970-11; 1978-01; 1985-10; 1995-03; 2001-01; 2011-01; 2020-06 \\
$T_{CC,0}$ & 6 & 1969-10; 1978-01; 1985-10; 2000-04; 2011-01; 2020-06 \\
$T_{CC,1/2}$ & 7 & 1970-11; 1978-01; 1985-10; 1995-03; 2001-01; 2011-01; 2020-05 \\
\bottomrule
\end{tabular}
\caption{Multiple-change estimates for the FRED-MD panel.}
\label{tab:fred-multiple}
\end{table}

\begin{figure}[!htbp]
\centering
\includegraphics[width=0.96\linewidth]{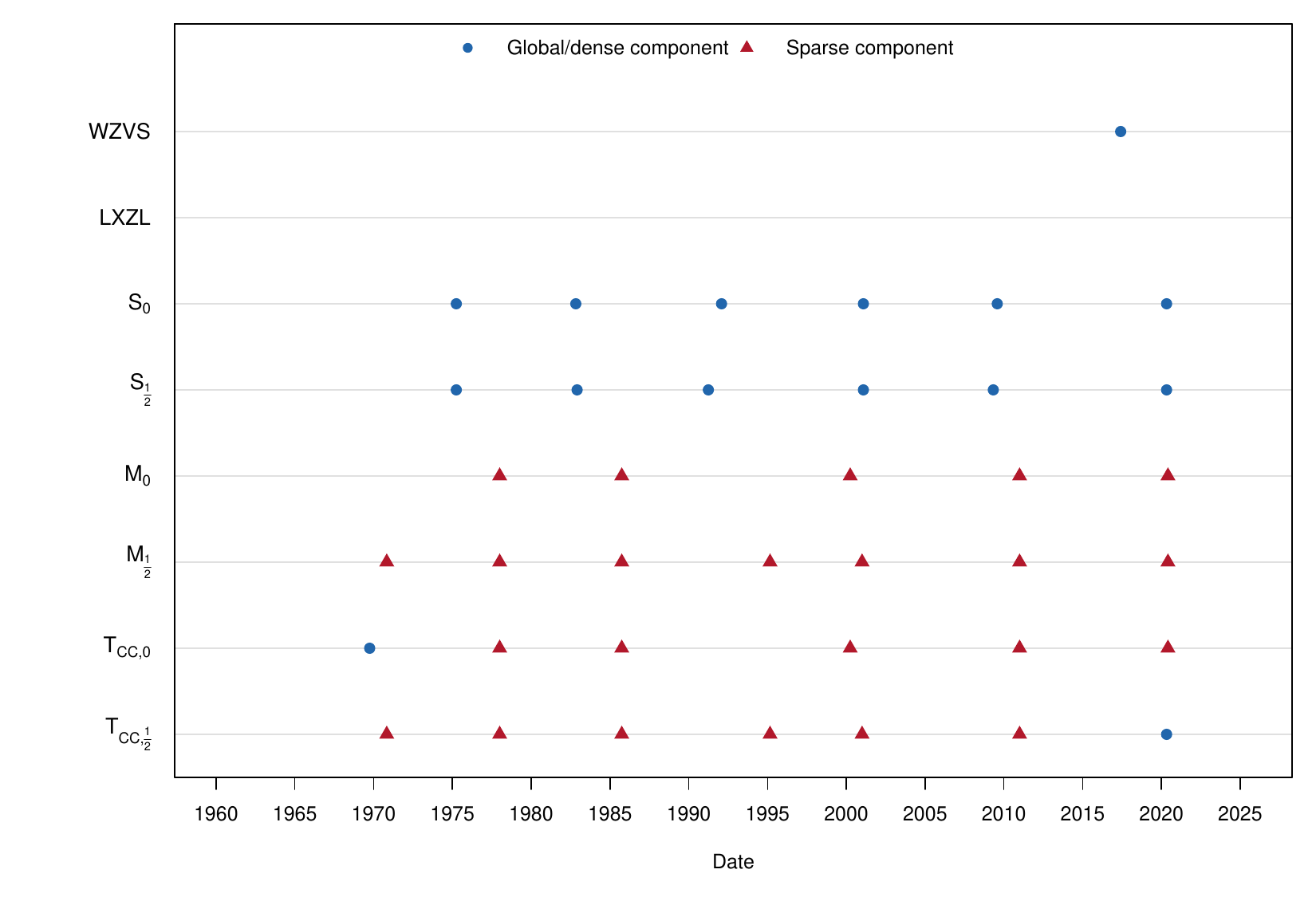}
\caption{Estimated multiple change points for the FRED-MD panel.  Circles and triangles denote changes attributed to the global or dense component and the sparse component, respectively.}
\label{fig:fred-multiple}
\end{figure}
\FloatBarrier

The estimated dates correspond closely to major transitions in the U.S. macroeconomy, although they should be interpreted as changes in the joint mean vector rather than as official dates for individual economic events.  The quadratic procedures identify changes around the aftermath of the 1973--1975 recession and first oil shock, the end of the 1981--1982 recession and the Volcker disinflation, the recovery from the 1990--1991 recession, the 2001 recession and collapse of the technology boom, the end of the Great Recession, and the COVID-19 shock.  The sparse and combined procedures add changes around the 1969--1970 recession, the late-1970s inflation and energy-price transition, the post-1985 exchange-rate and monetary environment, the mid-1990s soft-landing period, and the post-financial-crisis adjustment around 2011.  The dense and sparse components therefore provide complementary descriptions: the former tend to identify broad changes across many macroeconomic series, whereas the latter also detect transitions concentrated in a smaller subset of indicators.

Finally, we examine whether removing the estimated mean changes eliminates serial dependence.  Using the seven change points selected by $T_{CC,1/2}$, we divide the sample into eight regimes and subtract the regime-specific mean from each coordinate.  We then apply the $L_2$-type high-dimensional white-noise test implemented by \texttt{WN\_test} in the \texttt{HDTSA} package \citep{ChangHeLinYao2024HDTSA}, with 2000 bootstrap replications.  The resulting $p$-values are $0.0180$, $0.0050$, and at most $0.0004998$ for maximum lags 3, 5, and 24, respectively.  Thus, substantial temporal dependence remains after the piecewise-constant mean has been removed.  This diagnostic supports the use of dependence-adjusted calibration in the proposed procedures and shows that treating the transformed macroeconomic panel as temporally independent would not be justified.

\subsection{Electricity-consumption panel}\label{sec:electricity}

We next analyze the Electricity Load Diagrams 2011--2014 data set from the UCI Machine Learning Repository \citep{Trindade2015Electricity}.  The source contains electricity consumption for 370 clients observed every 15 minutes on the raw load scale, with time stamps reported in Portuguese local time.  We shift each time stamp back by 15 minutes before aggregating the observations to daily means, so that an interval ending at midnight is assigned to the preceding day.  The analysis uses the period from January 2012 through December 2014, thereby avoiding most of the artificial zero-load histories in 2011 for clients that entered the database later.  We remove series with missing daily values, zero temporal variance, or a nonpositive daily mean.  One further series, \texttt{MT\_362}, is excluded because it accounts for approximately $96.7\%$ of the sum of the coordinatewise temporal variances and would otherwise dominate the raw-scale quadratic statistics.  The resulting panel has $n=1096$ daily observations and $p=289$ client series.  Even after this exclusion, the largest remaining series contributes approximately $61.1\%$ of the total coordinatewise variance, so the panel remains strongly heterogeneous.  These variance shares are descriptive only; the procedures are applied to the daily load panel without additional coordinatewise standardization.

Table~\ref{tab:electricity-single} reports the single-change analysis.  WZVS does not reject the constant-mean null, whereas LXZL rejects and estimates a change in October 2012.  Among the proposed procedures, $S_{1/2}$, $M_0$, $M_{1/2}$, and $T_{CC,1/2}$ reject decisively, while $S_0$ and $T_{CC,0}$ do not reject.  The variance-standardized quadratic statistic and $T_{CC,1/2}$ agree with LXZL on a change in October 2012, whereas the two maximum statistics place the dominant change in December 2012.  The separation between these estimates suggests that the panel contains both a broad shift affecting many client series and a later change concentrated in a smaller subset of coordinates.

\begin{table}[!htbp]
\centering
\small
\begin{tabular}{lccc}
\toprule
Procedure & Estimated date & $p$-value & Reject at $5\%$ \\
\midrule
WZVS & 2012-09-14 & $0.10\le p\le0.20$ & No \\
LXZL & 2012-10-13 & $9.738\times10^{-31}$ & Yes \\
$S_0$ & 2012-10-14 & $1.0000$ & No \\
$S_{1/2}$ & 2012-10-13 & $4.273\times10^{-216}$ & Yes \\
$M_0$ & 2012-12-14 & $2.094\times10^{-50}$ & Yes \\
$M_{1/2}$ & 2012-12-13 & $7.663\times10^{-47}$ & Yes \\
$T_{CC,0}$ & 2012-12-14 & $0.5000$ & No \\
$T_{CC,1/2}$ & 2012-10-13 & $1.009\times10^{-15}$ & Yes \\
\bottomrule
\end{tabular}
\caption{Single-change analysis of the Electricity Load Diagrams panel.}
\label{tab:electricity-single}
\end{table}

For multiple-change estimation, we report the procedures that reject the global null and use the conservative WBS specification adopted for this application.  Table~\ref{tab:electricity-multiple} and Figure~\ref{fig:electricity-multiple} show that the principal estimated locations cluster around spring and autumn 2012, June 2013, and mid-2014.  The variance-standardized quadratic procedure returns four changes, while the two maximum procedures and $T_{CC,1/2}$ return five.  The adaptive procedure estimates changes in May 2012, October 2012, June 2013, October 2013, and July 2014, thereby retaining locations identified by both the dense and sparse components.

\begin{table}[!htbp]
\centering
\footnotesize
\begin{tabular}{@{}lcp{0.70\textwidth}@{}}
\toprule
Procedure & Number & Estimated dates \\
\midrule
LXZL & 6 & 2012-05; 2012-10; 2013-06; 2013-09; 2014-06; 2014-10 \\
$S_{1/2}$ & 4 & 2012-05; 2012-10; 2013-06; 2014-07 \\
$M_0$ & 5 & 2012-04; 2012-10; 2013-06; 2014-01; 2014-07 \\
$M_{1/2}$ & 5 & 2012-04; 2012-11; 2013-06; 2013-10; 2014-07 \\
$T_{CC,1/2}$ & 5 & 2012-05; 2012-10; 2013-06; 2013-10; 2014-07 \\
\bottomrule
\end{tabular}
\caption{Multiple-change estimates for the Electricity Load Diagrams panel.}
\label{tab:electricity-multiple}
\end{table}

\begin{figure}[!htbp]
\centering
\includegraphics[width=0.96\linewidth]{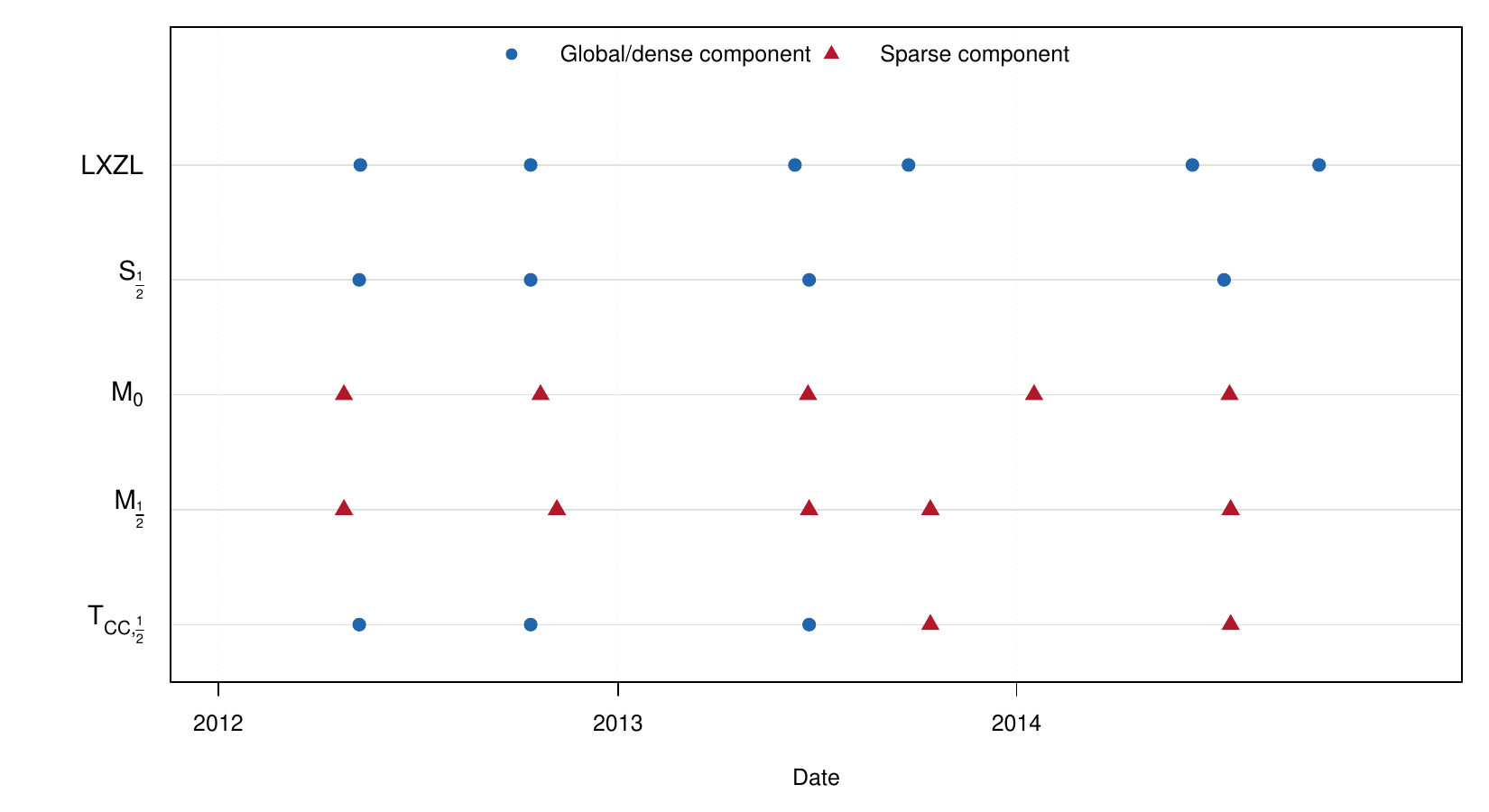}
\caption{Estimated multiple change points for the Electricity Load Diagrams panel under the conservative WBS specification.  Circles and triangles denote changes attributed to the global or dense component and the sparse component, respectively.}
\label{fig:electricity-multiple}
\end{figure}
\FloatBarrier

The data set does not provide client-level information on industry, location, or contract type, so the estimated dates cannot be assigned to specific firm-level events.  They are more appropriately interpreted as changes in the cross-sectional load regime.  The most stable location is October 2012: LXZL, $S_{1/2}$, $M_0$, and $T_{CC,1/2}$ select this month, while $M_{1/2}$ selects November 2012.  A second common change occurs in June 2013, which is identified by every reported procedure.  The proposed procedures also cluster around July 2014, whereas LXZL separates the corresponding transition into changes in June and October.  The concentration of detections around spring or summer and autumn is consistent with seasonal transitions in electricity use, while the differences between the quadratic and maximum procedures indicate that some transitions are broad across the panel and others are concentrated among fewer clients.  In particular, $T_{CC,1/2}$ gives a concise segmentation that preserves both types of change.

As a final diagnostic, we remove the segment-specific coordinatewise means determined by the five $T_{CC,1/2}$ change points and apply the $L_2$-type \texttt{WN\_test} with maximum lag 24 and 2000 bootstrap replications.  The white-noise null is rejected for every lag from 1 to 24, with all reported $p$-values no larger than $1/(2000+1)=0.00049975$.  Thus, strong serial dependence remains after the main mean shifts have been removed.  This result is consistent with the persistence of daily electricity demand and provides a second empirical setting in which dependence-adjusted change-point procedures are required.

\section{Conclusion}

We developed sparsity-adaptive tests and localizers for high-dimensional mean changes under temporal dependence by combining quadratic and maximum CUSUM statistics, with valid calibration, asymptotic independence, and consistency.

Guarded wild binary segmentation consistently recovers multiple changes.  Simulations support the finite-sample performance, while the FRED-MD and electricity-load applications identify interpretable regime shifts and confirm the practical importance of accounting for serial dependence.

The present theory assumes a fixed number of proportionally separated changes and a bounded long-run spectrum.  Extending it to a diverging number of changes and shrinking spacing would require multiscale segmentation and uniform tail approximations, potentially building on wild binary segmentation, narrowest-over-threshold methods, and optimal high-dimensional multiple-change procedures \citep{Fryzlewicz2014WBS,BaranowskiChenFryzlewicz2019,PilliatCarpentierVerzelen2023}.  Other directions include strong-factor covariance structures, data-driven joint tuning of lag truncation and segmentation thresholds, and temporally dependent changes in covariance or other distributional features.

\par\bigskip
\hrule
\vspace{1.2em}
\begin{center}
{\Large\bfseries Supplementary Material}\\[0.5em]
{\large for ``High-Dimensional Change Point Analysis for Temporally Dependent Data''}
\end{center}
\vspace{1em}

\setcounter{section}{0}
\setcounter{equation}{0}
\setcounter{theorem}{0}
\setcounter{proposition}{0}
\setcounter{corollary}{0}
\setcounter{lemma}{0}
\setcounter{assumption}{0}
\setcounter{remark}{0}
\setcounter{table}{0}
\setcounter{figure}{0}
\renewcommand{\thesection}{S.\arabic{section}}
\renewcommand{\theequation}{S.\arabic{equation}}
\renewcommand{\thetheorem}{S.\arabic{theorem}}
\renewcommand{\theproposition}{S.\arabic{proposition}}
\renewcommand{\thecorollary}{S.\arabic{corollary}}
\renewcommand{\thelemma}{S.\arabic{lemma}}
\renewcommand{\theassumption}{S.\arabic{assumption}}
\renewcommand{\theremark}{S.\arabic{remark}}
\renewcommand{\thetable}{S\arabic{table}}
\renewcommand{\thefigure}{S\arabic{figure}}

The supplement proves the matched $\gamma=0$ and $\gamma=1/2$ statements in Theorems~\ref{Th1}--\ref{consistency}, the multiple-change Theorems~\ref{thm:wbs-dense}--\ref{thm:wbs-max}, Propositions~\ref{prop:Tms-alter}, \ref{prop:cauchy-size}, and \ref{prop:cauchy-consistency}, and Corollaries~\ref{cor:adaptive-location} and~\ref{cor:wbs-adaptive}.  All assumptions are imposed on the observed vector residual process $\{\bepsilon_i\}$ and deterministic mean sequence; neither a latent independent-coordinate representation nor a scalar-separable factorization $\bGam(h)\propto\boldsymbol\Sigma$ is used.  The two orientations
\[
\mathcal T_{hk}^{\mathrm F}=\Tr\{\bGam(h)\bGam(k)^\top\},
\qquad
\mathcal T_{hk}^{\mathrm C}=\Tr\{\bGam(h)\bGam(k)\}
\]
are kept distinct throughout.  Remainders are given with deterministic rates.  Results from the literature are cited where they enter, and the calculations below verify their assumptions for the present statistics.

Notation inherited from the main article keeps the same meaning throughout this supplement.  The superscripts on Gaussian analogues are ordered by source and then temporal weight, as in $M_{n,p}^{G,\gamma}$, $M_{n,p}^{B,\gamma}$, and $M_{n,p}^{\epsilon,\gamma}$; a terminal superscript $0$ on a CUSUM denotes its centered no-change version, not the value $\gamma=0$.  The operator $\operatorname{smax}_\beta$ is reserved for the log-sum-exp smooth maximum, whereas $\mathcal M_\gamma(I)$ denotes the WBS coordinatewise statistic.  The moving-range product is $\mathcal P_{t,h;s,k}^{(M)}$, the pair partition used in the cumulant condition is $\mathfrak P_q$, and the corresponding block envelope is $\mathfrak K_B^{\mathrm{cum}}$; these objects are not interchangeable.  Moving-range offset sets use $\mathcal O^{\mathrm{mr}}$, not the WBS statistic symbol $\mathcal S_\gamma(I)$.  The symbol $\mathcal A$ denotes the support of a sparse mean shift; affected index sets and WBS high-probability events are denoted by $\mathcal I^{\mathrm{aff}}$ and $\mathcal E^{\mathrm W}$, respectively.  The temporal quadratic-weight function is $\varphi_\gamma(t)=\{t(1-t)\}^{-2\gamma}$; the letter $q$ is reserved for moment orders.  The moving-range shift in the scale proof is $s_h^{\mathrm{mr}}=M+h+1$, distinct from the boundary-extreme factor $L_{h,n}=\log\log h_n$.  Proof-local symbols are defined at first use and do not overwrite any statistic, tuning parameter, or model quantity from the main article.

\section{Primitive consequences of vector weak dependence}

Throughout the supplement, $C,c>0$ denote constants independent of $n$ and $p$.  Their values may change between displays.  Put
\[
\varpi_{\mathrm{dep}}(r)=C\exp(-cr^{\varkappa}),
\qquad
\eta_m=\sum_{r>m}(1+r)^8\varpi_{\mathrm{dep}}(r),
\]
with the convention $\varpi_{\mathrm{dep}}(r)=C\ind{r\le m_0}$ under fixed $m_0$-dependence.  Assumption~\ref{ass:C1} gives, for every $A>0$,
\[
\eta_m\le C_A m^{-A},
\qquad
\eta_m\le C\exp(-cm^{\varkappa}/2).
\]
For a deterministic sequence $w=(w_i)$, let $\|w\|_r=(\sum_i|w_i|^r)^{1/r}$ and $\|w\|_\infty=\max_i|w_i|$.

\begin{lemma}[Covariance decay and weighted projection moments]\label{lem:vector-basic}
Under Assumption~\ref{ass:C1}, the following bounds hold uniformly in $p$.
\begin{enumerate}
\item[(i)] For every $h\in\mathbb Z$,
\[
\|\bGam(h)\|_{\mathrm{op}}
\le \varpi_{\mathrm{dep}}(|h|).
\]
Consequently,
$\sum_h(1+|h|)^8\|\bGam(h)\|_{\mathrm{op}}<\infty$.
\item[(ii)] There is a finite exponent $c_{\mathrm{dep}}\ge1/2$, depending only on the constants in Assumption~\ref{ass:C1}, such that, for every $q\ge2$, every deterministic unit vector $\mathbf u\in\mathbb R^p$, and every finitely supported deterministic sequence $w=(w_i)$,
\[
\left\|\sum_iw_i\mathbf u^\top\bepsilon_i\right\|_q
\le Cq^{c_{\mathrm{dep}}}\|w\|_2.
\]
For arbitrary $\mathbf u$, the right-hand side is multiplied by
$\|\mathbf u\|_2$.
\item[(iii)] If $I\subset\mathbb Z$ is an interval of length $m$, then
\[
\sup_{\|\mathbf u\|_2=1}
\left\|\sum_{i\in I}\mathbf u^\top\bepsilon_i\right\|_q
\le Cq^{c_{\mathrm{dep}}}m^{1/2}.
\]
\end{enumerate}
\end{lemma}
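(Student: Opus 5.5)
We treat the three dependence branches of Assumption~\ref{ass:C1} separately for part (i), then derive (ii) from a projection-level moment inequality, and obtain (iii) as the special case $w_i=\ind{i\in I}$.

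\emph{Part (i).} We use the variational representation $\|\bGam(h)\|_{\mathrm{op}}=\sup_{\|\mathbf u\|_2=\|\mathbf v\|_2=1}\Cov(\mathbf u^\top\bepsilon_0,\mathbf v^\top\bepsilon_h)$ and bound the scalar covariance uniformly in the unit vectors.
\begin{itemize}
\item Under (AM), Davydov's covariance inequality with exponents $(q,q,r)$ gives $|\Cov|\le C\alpha_{\epsilon,p}(|h|)^{1-2/q}\|\mathbf u^\top\bepsilon_0\|_q\|\mathbf v^\top\bepsilon_h\|_q$. Projection sub-Gaussianity bounds the moments by $Cq$, and taking $q=4$ yields $C\exp(-cr^{\gamma_2})$.
\item Under (MD), the covariance vanishes for $|h|>m_0$, and it is bounded by $K_{\mathrm{sg}}^2$ otherwise.
\item Under (PD), we write $\mathbf v^\top\bepsilon_h=\sum_{r\ge h}\mathcal P_{h-r}(\mathbf v^\top\bepsilon_h)$ with projection operators $\mathcal P_j=\E(\cdot\mid\mathcal F_j)-\E(\cdot\mid\mathcal F_{j-1})$. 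The standard bound $\|\mathcal P_{i-r}\mathbf v^\top\bepsilon_i\|_2\le\|\mathbf v^\top(\bepsilon_i-\bepsilon_i^{\{r\}})\|_2$ then gives $|\Cov|\le\sum_{r\ge0}\theta(r)\theta(r+|h|)\le C\exp(-c|h|^{\gamma_2})$.
\end{itemize}
With $\varkappa=\gamma_2$ in each case, summability of $(1+|h|)^8\varpi_{\mathrm{dep}}(|h|)$ is immediate.

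\emph{Part (ii).} This is the main step. The scalar sequence $Y_i=\mathbf u^\top\bepsilon_i$ is stationary, sub-Gaussian uniformly in $\mathbf u$, and inherits the dependence coefficients uniformly in $\mathbf u$, so we need a Rosenthal/Burkholder-type inequality with polynomial dependence on $q$.
\begin{itemize}
\item Under (PD), Burkholder's inequality applied to the martingale differences $\sum_iw_i\mathcal P_{i-r}Y_i$ (Wu's functional-dependence argument) gives $\|\sum_iw_iY_i\|_q\le C\sqrt q\,\|w\|_2\sum_r\min\{\theta_q(r),\|Y\|_q\}$. Since $\theta_q(r)\le K\sqrt q\,e^{-cr^{\gamma_2}}$, the sum is $O(\sqrt q)$, and we obtain $c_{\mathrm{dep}}=1$.
\item Under (MD), we split indices into $m_0+1$ residue classes. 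Each class is a sum of independent sub-Gaussian variables, so it has $\psi_2$-norm $\lesssim\|w\|_2$. Minkowski's inequality gives $c_{\mathrm{dep}}=1/2$.
\item Under (AM), we expect the main obstacle: uniform control with a polynomial-in-$q$ constant. Our first attempt would be an exponential Bernstein-type inequality for $\alpha$-mixing sequences with stretched-exponential decay, in the weighted form of Merlev\`ede--Peligrad--Rio. It gives $\Pr(|\sum w_iY_i|>x)\le C\exp\{-cx^{\beta}/\|w\|_2^{\beta}\}+(\text{moderate-deviation term})$ with $\beta=\beta(\gamma_2)>0$. Integrating this tail gives $\|\cdot\|_q\le Cq^{1/\beta}\|w\|_2$.

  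If the weighted version is awkward, we would instead use a large-block/small-block decomposition with blocks of length $b\asymp q^{C}$. The large blocks are coupled to independent copies by Berbee's lemma, and each is then handled by the independent case. Small blocks and coupling errors are bounded by the Hölder/sub-Gaussian moment $Cq^{1/2}\|w\|_2$ times $\alpha(b)^{1/q}$, which is bounded once $b^{\gamma_2}\gtrsim q\log q$. This yields a finite $c_{\mathrm{dep}}$.
\end{itemize}
Homogeneity handles arbitrary $\mathbf u$, giving the extra factor $\|\mathbf u\|_2$.

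\emph{Part (iii).} Taking $w_i=\ind{i\in I}$, so that $\|w\|_2=m^{1/2}$, the bound follows directly from (ii). All constants depend only on $K_{\mathrm{sg}},K_1,K_2,\gamma_2,m_0$, hence are uniform in $p$.
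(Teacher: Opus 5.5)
Your proposal follows the paper's proof essentially step for step: Davydov's inequality with fourth moments, the vanishing of covariances beyond lag $m_0$, and the orthogonal martingale-projection expansion with coupling bounds for part (i); the weighted Merlev\`ede--Peligrad--Rio Bernstein-type bound, residue classes modulo $m_0+1$, and Burkholder's inequality applied to the projections for part (ii); and the choice $w_i=\ind{i\in I}$ for part (iii). One caution about your fallback blocking argument for branch (AM): Berbee's coupling lemma requires $\beta$-mixing, whereas (AM) controls only the $\alpha$-mixing coefficient, so that route would need Bradley's coupling applied to the scalar block sums, or Rio's Rosenthal inequality for strongly mixing sequences, which also gives the $\|w\|_2$ normalization directly.
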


\begin{proof}
For unit vectors $\mathbf u,\mathbf v$, put
$Y_0=\mathbf u^\top\bepsilon_0$ and
$Y_h=\mathbf v^\top\bepsilon_h$.  Under branch (AM), Davydov's covariance inequality \citep{davydov1968} with fourth moments gives
\begin{align*}
|\mathbf u^\top\bGam(h)\mathbf v|
&=|\Cov(Y_0,Y_h)|\\
&\le8\alpha_{\epsilon,p}(|h|)^{1/2}
       \|Y_0\|_4\|Y_h\|_4\\
&\le C\exp(-c|h|^{\gamma_2}).
\end{align*}
Under branch (MD), this covariance is zero for $|h|>m_0$.
Under branch (PD), it is enough to consider $h\ge0$ because
$\bGam(-h)=\bGam(h)^\top$.  Put
\[
\mathcal F_j=\sigma(\boldsymbol\xi_j,\boldsymbol\xi_{j-1},\ldots),
\qquad
P_jX=\E(X\mid\mathcal F_j)-\E(X\mid\mathcal F_{j-1}).
\]
The causal projection expansion in $L_2$ gives
\[
Y_0=\sum_{r=0}^{\infty}P_{-r}Y_0,
\qquad
Y_h=\sum_{s=0}^{\infty}P_{h-s}Y_h.
\]
Martingale projections at different innovation times are orthogonal.  Hence only the terms with $-r=h-s$, equivalently $s=h+r$, remain, and
\begin{align*}
\Cov(Y_0,Y_h)
&=\sum_{r=0}^{\infty}
\E\{P_{-r}Y_0\,P_{-r}Y_h\}.
\end{align*}
The standard coupling representation of a projection and Assumption~\ref{ass:C1} yield
\[
\|P_{-r}Y_0\|_2
\le \|Y_0-Y_0^{\{r\}}\|_2
\le C\exp(-cr^{\gamma_2}),
\]
\[
\|P_{-r}Y_h\|_2
\le \|Y_h-Y_h^{\{h+r\}}\|_2
\le C\exp\{-c(h+r)^{\gamma_2}\}.
\]
Therefore, by Cauchy--Schwarz,
\begin{align*}
|\mathbf u^\top\bGam(h)\mathbf v|
&\le C\sum_{r=0}^{\infty}
\exp(-cr^{\gamma_2})
\exp\{-c(h+r)^{\gamma_2}\}\\
&\le C\exp(-ch^{\varkappa}).
\end{align*}
Taking the supremum over $\mathbf u$ and $\mathbf v$ proves part (i).

For part (ii), first consider branch (AM).  The weighted Rosenthal--Bernstein inequality for geometrically strongly mixing triangular arrays, obtained from Theorem~1 of \citet{merlevede2011bernstein} by truncation and integration and used in the high-dimensional form by \citet{chang2024bernoulli}, states that a centered scalar array $X_i$ satisfying
\[
\sup_i\sup_{r\ge2}r^{-1/2}\|X_i\|_r\le K,
\qquad
\alpha_X(h)\le K_1\exp(-K_2h^{\gamma_2}),
\]
obeys, for deterministic coefficients $w_i$,
\[
\left\|\sum_iw_iX_i\right\|_q
\le Cq^{c_{\mathrm{AM}}}
\left(\sum_iw_i^2\right)^{1/2},
\qquad q\ge2,
\]
where $c_{\mathrm{AM}}<\infty$ depends only on the displayed constants.
For $X_i=\mathbf u^\top\bepsilon_i$, projection sub-Gaussianity verifies the moment condition, and measurable transformations cannot increase the strong-mixing coefficient.  Hence the preceding inequality applies.

Under branch (MD), write
\[
\mathcal I_r=\{i:i\equiv r\pmod{m_0+1}\},
\qquad r=0,\ldots,m_0.
\]
The variables in each $\mathcal I_r$ are independent.  The usual moment inequality for independent centered sub-Gaussian variables gives
\[
\left\|\sum_{i\in\mathcal I_r}w_i
\mathbf u^\top\bepsilon_i\right\|_q
\le Cq^{1/2}
\left(\sum_{i\in\mathcal I_r}w_i^2\right)^{1/2}.
\]
Minkowski's inequality and Cauchy--Schwarz yield
\begin{align*}
\left\|\sum_iw_i\mathbf u^\top\bepsilon_i\right\|_q
&\le Cq^{1/2}\sum_{r=0}^{m_0}
\left(\sum_{i\in\mathcal I_r}w_i^2\right)^{1/2}\\
&\le Cq^{1/2}\|w\|_2.
\end{align*}

Under branch (PD), let
\[
P_jX=\E(X\mid\mathcal F_j)-\E(X\mid\mathcal F_{j-1}),
\qquad
\mathcal F_j=\sigma(\boldsymbol\xi_j,\boldsymbol\xi_{j-1},\ldots).
\]
The causal projection expansion and the coupling inequality give
\[
\|P_{i-r}(\mathbf u^\top\bepsilon_i)\|_q
\le
\|\mathbf u^\top(\bepsilon_i-\bepsilon_i^{\{r\}})\|_q
\le Cq^{1/2}\exp(-cr^{\gamma_2}).
\]
For fixed $r$, the sequence
$\{P_{i-r}(\mathbf u^\top\bepsilon_i)\}_i$ is a martingale-difference sequence.  Burkholder's inequality therefore gives
\begin{align*}
\left\|\sum_iw_iP_{i-r}(\mathbf u^\top\bepsilon_i)\right\|_q
&\le Cq^{1/2}
\left\{\sum_iw_i^2
\|P_{i-r}(\mathbf u^\top\bepsilon_i)\|_q^2\right\}^{1/2}\\
&\le Cq\exp(-cr^{\gamma_2})\|w\|_2.
\end{align*}
Summing the projection expansion over $r\ge0$ gives
\[
\left\|\sum_iw_i\mathbf u^\top\bepsilon_i\right\|_q
\le Cq\|w\|_2.
\]
This is also the dependence-adjusted sub-Gaussian case of the weighted-sum bound in \citet[Theorem~3]{wu2016performance}.  Taking
$c_{\mathrm{dep}}=\max(c_{\mathrm{AM}},1)$ proves part (ii) simultaneously for all three branches.  Part (iii) follows from part (ii) with $w_i=\ind\{i\in I\}$, for which $\|w\|_2=m^{1/2}$.
\end{proof}

For $k=0,\ldots,n$, define
\[
c_{i,k}^{\mathrm C}=\ind{i\le k}-\frac{k}{n},
\qquad
\mathbf U_n(k/n)=\frac1{\sqrt n}\sum_{i=1}^n
c_{i,k}^{\mathrm C}\bepsilon_i.
\]
For non-grid $t$, set $\mathbf U_n(t)=\mathbf U_n(\lfloor nt\rfloor/n)$.  The Brownian-bridge covariance kernel is
\[
K_B(s,t)=s\wedge t-st.
\]

\begin{lemma}[CUSUM covariance approximation]\label{lem:cusum-cov-general}
Under Assumption~\ref{ass:C1}, uniformly in $s,t\in[0,1]$,
\[
\left\|\Cov\{\mathbf U_n(s),\mathbf U_n(t)\}
-K_B(s,t)\bOme\right\|_{\mathrm{op}}
\le \frac Cn.
\]
Consequently,
\[
\left\|\Cov\{\mathbf U_n(s),\mathbf U_n(t)\}
-K_B(s,t)\bOme\right\|_{\mathrm F}
\le \frac{C\sqrt p}{n}.
\]
Moreover, for $0\le s\le t\le1$ and
$\Delta_{s,t}=t-s+n^{-1}$,
\[
\|\Cov\{\mathbf U_n(t)-\mathbf U_n(s)\}\|_{\mathrm{op}}
\le C\Delta_{s,t}.
\]
\end{lemma}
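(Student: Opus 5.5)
We compute the covariance exactly on the grid and compare it with $K_B(s,t)\bOme$ through summable lag bounds. Write $k=\lfloor ns\rfloor$, $l=\lfloor nt\rfloor$. Stationarity gives the exact identity
\[
\Cov\{\mathbf U_n(s),\mathbf U_n(t)\}
=\frac1n\sum_{i,i'=1}^n c_{i,k}^{\mathrm C}c_{i',l}^{\mathrm C}\,\bGam(i'-i)
=\sum_{|h|<n}\bGam(h)\,\frac1n\sum_{i}c_{i,k}^{\mathrm C}c_{i+h,l}^{\mathrm C},
\]
where the inner sum runs over $1\le i,i+h\le n$. We compare this, lag by lag, with
\[
\frac1n\sum_{i=1}^n c_{i,k}^{\mathrm C}c_{i,l}^{\mathrm C}
=\frac{k\wedge l}{n}-\frac{kl}{n^2}=K_B(k/n,l/n).
\]

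The key step is a lag-by-lag perturbation bound. The coefficients $c_{i,k}^{\mathrm C}$ are bounded by $1$ and change value only at the single index $i=k$. Hence for fixed $h$,
\[
\Bigl|\frac1n\sum_i c_{i,k}^{\mathrm C}c_{i+h,l}^{\mathrm C}-K_B(k/n,l/n)\Bigr|\le \frac{C|h|}{n}.
\]
Two effects produce this bound. First, $c_{i+h,l}^{\mathrm C}\ne c_{i,l}^{\mathrm C}$ for at most $|h|$ indices. Second, the range of $i$ loses $|h|$ boundary terms. Therefore
\[
\Bigl\|\Cov-K_B(k/n,l/n)\bOme\Bigr\|_{\mathrm{op}}
\le \sum_{|h|<n}\frac{C|h|}{n}\|\bGam(h)\|_{\mathrm{op}}
+\sum_{|h|\ge n}\|\bGam(h)\|_{\mathrm{op}}.
\]
By Lemma~\ref{lem:vector-basic}(i), $\sum_h(1+|h|)\|\bGam(h)\|_{\mathrm{op}}<\infty$. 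For the tail, $\sum_{|h|\ge n}\|\bGam(h)\|_{\mathrm{op}}\le n^{-1}\sum|h|\varpi_{\mathrm{dep}}(|h|)$. So the whole expression is $O(1/n)$ uniformly in $k,l$.

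It remains to replace $K_B(k/n,l/n)$ by $K_B(s,t)$. Since $K_B$ is Lipschitz, $|K_B(k/n,l/n)-K_B(s,t)|\le 2/n$. Also $\|\bOme\|_{\mathrm{op}}\le C_1$, which follows from Assumption~\ref{ass:C1} through Lemma~\ref{lem:vector-basic}(i). These give the operator-norm claim, which is the first display of the lemma. The Frobenius claim then follows from $\|\mathbf A\|_{\mathrm F}\le\sqrt p\,\|\mathbf A\|_{\mathrm{op}}$.

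For the increment bound we apply the same exact formula to the coefficients $d_i=c_{i,l}^{\mathrm C}-c_{i,k}^{\mathrm C}$, with $k\le l$. These equal $\ind{k<i\le l}-(l-k)/n$. We then bound directly:
\[
\|\Cov\{\mathbf U_n(t)-\mathbf U_n(s)\}\|_{\mathrm{op}}
\le \frac1n\sum_h\|\bGam(h)\|_{\mathrm{op}}\sum_i|d_i||d_{i+h}|
\le \frac1n\Bigl(\sum_h\|\bGam(h)\|_{\mathrm{op}}\Bigr)\sum_i d_i^2,
\]
where the second inequality is Cauchy--Schwarz, $\sum_i|d_i||d_{i+h}|\le\sum_i d_i^2$. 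Now $\sum_i d_i^2=(l-k)\{1-(l-k)/n\}\le l-k\le n(t-s)+1$. Dividing by $n$ gives $C\Delta_{s,t}$.

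The main obstacle is the bookkeeping in the perturbation count, specifically making the $|h|/n$ bound uniform in $k,l$ including the endpoint cases $k,l\in\{0,n\}$. This is routine once we observe that each coefficient sequence has exactly one jump.
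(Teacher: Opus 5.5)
Your proposal is correct and follows essentially the same route as the paper: the exact lag expansion $\sum_{|h|<n}b_{n,h}(s,t)\bGam(h)$, a per-lag perturbation bound of order $(1+|h|)/n$ from boundary losses and the shifted indicator jump, summability of $(1+|h|)\|\bGam(h)\|_{\mathrm{op}}$ for the central and tail lags, the $\sqrt p$ passage to the Frobenius norm, and Cauchy--Schwarz on the increment coefficients $d_i$. The only difference is cosmetic. You first compare with $K_B(k/n,l/n)$ and then pay the floor error through $\|\bOme\|_{\mathrm{op}}$, whereas the paper folds the floor error directly into the per-lag bound $|b_{n,h}(s,t)-K_B(s,t)|\le C(1+|h|)/n$.
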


\begin{proof}
Put $k=\lfloor ns\rfloor$ and $\ell=\lfloor nt\rfloor$.  Stationarity gives
\begin{align*}
\Cov\{\mathbf U_n(s),\mathbf U_n(t)\}
&=\frac1n\sum_{i,j=1}^n
 c_{i,k}^{\mathrm C}c_{j,\ell}^{\mathrm C}\bGam(j-i)\\
&=\sum_{|h|<n}b_{n,h}(s,t)\bGam(h),
\end{align*}
where
\[
b_{n,h}(s,t)=\frac1n
\sum_{1\le i,i+h\le n}
 c_{i,k}^{\mathrm C}c_{i+h,\ell}^{\mathrm C}.
\]
For $h=0$, direct summation gives
\[
b_{n,0}(s,t)=K_B(k/n,\ell/n).
\]
The floor error obeys
\[
|K_B(k/n,\ell/n)-K_B(s,t)|\le 2n^{-1}.
\]
For $|h|<n$, changing $h$ from zero removes at most $|h|$ boundary terms and moves each indicator discontinuity across at most $|h|$ indices.  Since
$|c_{i,k}^{\mathrm C}|\le1$,
\[
|b_{n,h}(s,t)-K_B(s,t)|
\le C\frac{1+|h|}{n}.
\]
Extend the definition by $b_{n,h}(s,t)=0$ for $|h|\ge n$.  Then
\begin{align*}
&\left\|\Cov\{\mathbf U_n(s),\mathbf U_n(t)\}
-K_B(s,t)\bOme\right\|_{\mathrm{op}}\\
&\quad\le
\sum_{h\in\mathbb Z}
|b_{n,h}(s,t)-K_B(s,t)|\,\|\bGam(h)\|_{\mathrm{op}}\\
&\quad\le
\frac Cn\sum_{|h|<n}(1+|h|)\|\bGam(h)\|_{\mathrm{op}}
+C\sum_{|h|\ge n}\|\bGam(h)\|_{\mathrm{op}}\\
&\quad\le
\frac Cn\sum_{h\in\mathbb Z}(1+|h|)\|\bGam(h)\|_{\mathrm{op}}\\
&\quad\le \frac Cn.
\end{align*}
The Frobenius bound follows from
$\|\mathbf A\|_{\mathrm F}\le\sqrt p\|\mathbf A\|_{\mathrm{op}}$.

For the increment result, put
\[
d_i=c_{i,\ell}^{\mathrm C}-c_{i,k}^{\mathrm C}.
\]
Direct summation gives
\[
\frac1n\sum_{i=1}^nd_i^2
\le C\Delta_{s,t}.
\]
Stationarity and Cauchy--Schwarz imply
\begin{align*}
\left\|\Cov\{\mathbf U_n(t)-\mathbf U_n(s)\}\right\|_{\mathrm{op}}
&\le
\frac1n\sum_{|h|<n}\|\bGam(h)\|_{\mathrm{op}}
\sum_{1\le i,i+h\le n}|d_id_{i+h}|\\
&\le
\frac1n\sum_{h\in\mathbb Z}\|\bGam(h)\|_{\mathrm{op}}
\left(\sum_i d_i^2\right)^{1/2}
\left(\sum_i d_{i+h}^2\right)^{1/2}\\
&\le C\Delta_{s,t}.
\end{align*}
\end{proof}

For $q\in\{2,3,4\}$, the product-cumulant formula of \citet{leonov1959} expresses the joint cumulant of $q$ products as a sum over partitions connected relative to the $q$ product pairs.  Assumption~\ref{ass:C3} was formulated exactly for these partitions.  The next elementary convolution bound explains why every cumulant block in that assumption is anchored separately.

\begin{lemma}[Coefficient contraction for cumulant blocks]\label{lem:cumulant-coefficient-contraction}
Let $q\in\{2,3,4\}$, let $\pi$ be a partition of
$\{1,\ldots,2q\}$, and let $x_a=(x_{a,i})_{i\in\mathbb Z}$ be finitely supported deterministic sequences.  With $j(a)=\lceil a/2\rceil$ and $\mathfrak C_{\pi,p}$ defined before Assumption~\ref{ass:C3},
\begin{align*}
&\sum_{j_1,\ldots,j_q=1}^p
\sum_{i_1,\ldots,i_{2q}\in\mathbb Z}
\prod_{B\in\pi}
\left|\Cum\{\epsilon_{i_a,j(a)}:a\in B\}\right|
\prod_{a=1}^{2q}|x_{a,i_a}|\\
&\qquad\le
\mathfrak C_{\pi,p}\prod_{a=1}^{2q}\|x_a\|_2.
\end{align*}
If $\pi$ has a single block $B=\{1,\ldots,2q\}$, then, for every $a_0\in B$,
\begin{align*}
&\sum_{j_1,\ldots,j_q=1}^p
\sum_{i_1,\ldots,i_{2q}\in\mathbb Z}
\left|\Cum\{\epsilon_{i_a,j(a)}:a\in B\}\right|
\prod_{a=1}^{2q}|x_{a,i_a}|\\
&\qquad\le
\mathfrak C_{\pi,p}\,
\|x_{a_0}\|_1\prod_{a\ne a_0}\|x_a\|_\infty.
\end{align*}
\end{lemma}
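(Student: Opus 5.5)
The plan is to use stationarity to reduce everything to the anchored cumulants in $\mathfrak K_B^{\mathrm{cum}}$, and to sum block by block. Fix the coordinate indices $j_1,\ldots,j_q$. The summand on the left-hand side has the form $\prod_{B\in\pi}\{|\Cum\{\epsilon_{i_a,j(a)}:a\in B\}|\prod_{a\in B}|x_{a,i_a}|\}$. Each factor involves only the time indices $(i_a:a\in B)$, and the blocks partition $\{1,\ldots,2q\}$. Therefore the sum over $(i_1,\ldots,i_{2q})\in\mathbb Z^{2q}$ factorizes exactly into a product over $B\in\pi$ of the corresponding blockwise sums. A block with $|B|=1$ contributes a first-order cumulant, which is zero by centering. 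In that case the left-hand side vanishes and the bound is trivial, so I assume every block has $|B|\ge2$.

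For a block $B$ with anchor $a_B=\min B$, reparametrize the times by $i_{a_B}$ and the lags $r_a=i_a-i_{a_B}$ for $a\in B\setminus\{a_B\}$. Strict stationarity (Assumption~\ref{ass:C1}) gives
\[
\Cum\{\epsilon_{i_a,j(a)}:a\in B\}=\Cum\bigl(\epsilon_{0,j(a_B)},\{\epsilon_{r_a,j(a)}:a\in B\setminus\{a_B\}\}\bigr),
\]
which does not depend on $i_{a_B}$. The blockwise sum is then
\[
\sum_{(r_a)}\bigl|\Cum(\cdots)\bigr|\sum_{i\in\mathbb Z}\prod_{a\in B}|x_{a,i+r_a}|,
\qquad r_{a_B}=0.
\]
For fixed lags, pick any second element $a'\in B$. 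By Cauchy--Schwarz in $i$ applied to the factors indexed by $a_B$ and $a'$, and by bounding the remaining factors with sup norms, the inner sum is at most $\|x_{a_B}\|_2\|x_{a'}\|_2\prod_{a\in B\setminus\{a_B,a'\}}\|x_a\|_\infty\le\prod_{a\in B}\|x_a\|_2$, since $\|\cdot\|_\infty\le\|\cdot\|_2$. This bound is uniform in the lags. Since the weight $\{1+\max|r_a|\}^8\ge1$, the remaining lag sum of $|\Cum|$ is at most $\mathfrak K_B^{\mathrm{cum}}(j_1,\ldots,j_q)$. Multiplying over blocks gives $\prod_B\mathfrak K_B^{\mathrm{cum}}\prod_{a=1}^{2q}\|x_a\|_2$. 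Summing over $j_1,\ldots,j_q$ then produces exactly $\mathfrak C_{\pi,p}$, which is the first claim.

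For the single-block claim, I repeat the argument with $B=\{1,\ldots,2q\}$. The cumulant is still expressed through lags relative to $a_B=1$, since that is how $\mathfrak K_B^{\mathrm{cum}}$ is defined. For fixed lags, however, I bound the time sum differently. I change the summation variable to $i'=i+r_{a_0}$, which is a bijection of $\mathbb Z$. This gives $\sum_{i'}|x_{a_0,i'}|\prod_{a\ne a_0}|x_{a,i'-r_{a_0}+r_a}|\le\|x_{a_0}\|_1\prod_{a\ne a_0}\|x_a\|_\infty$. As before, summing $|\Cum|$ over lags with the weight dropped to $1$ and then over $j$ gives the bound $\mathfrak C_{\pi,p}\|x_{a_0}\|_1\prod_{a\ne a_0}\|x_a\|_\infty$.

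The only point needing care, and the step I expect to be the main obstacle, is the bookkeeping that justifies the factorization. The cumulant of a block depends on the coordinate indices only through $j(a)$ for $a\in B$, which is why the $j$-sum has to be kept outside until the very end. Because $\mathfrak K_B^{\mathrm{cum}}$ is allowed to depend on all of $(j_1,\ldots,j_q)$, this causes no loss. With that in place, the remaining steps are elementary applications of Hölder's inequality.
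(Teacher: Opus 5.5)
Your proposal is correct and follows essentially the same route as the paper. Both arguments factor the time sum over blocks, anchor each block at $a_B=\min B$ via stationarity, bound the inner time sum uniformly in the lags, drop the weight $\{1+\max|r_a|\}^8\ge1$, sum over $j_1,\ldots,j_q$ only at the end, and handle the single-block case by keeping $x_{a_0}$ in $\ell_1$. The only cosmetic difference is in the inner time sum: you use Cauchy--Schwarz on two factors plus sup norms on the rest, whereas the paper applies H\"older with all exponents equal to $|B|$ and then $\|x_a\|_{|B|}\le\|x_a\|_2$; both give the same bound $\prod_{a\in B}\|x_a\|_2$.
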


\begin{proof}
Fix $j_1,\ldots,j_q$ and a block $B\in\pi$.  Let
$a_B=\min B$ and put $r_{a_B}=0$.  Stationarity gives
\begin{align*}
&\sum_{(i_a:a\in B)\in\mathbb Z^{|B|}}
\left|\Cum\{\epsilon_{i_a,j(a)}:a\in B\}\right|
\prod_{a\in B}|x_{a,i_a}|\\
&\quad=
\sum_{(r_a:a\in B\setminus\{a_B\})}
\left|\Cum\left(
\epsilon_{0,j(a_B)},
\{\epsilon_{r_a,j(a)}:a\in B\setminus\{a_B\}\}
\right)\right|
\sum_{t\in\mathbb Z}\prod_{a\in B}|x_{a,t+r_a}|.
\end{align*}
For $|B|\ge2$, H\"older's inequality with all exponents equal to
$|B|$ gives
\[
\sum_t\prod_{a\in B}|x_{a,t+r_a}|
\le\prod_{a\in B}\|x_a\|_{|B|}
\le\prod_{a\in B}\|x_a\|_2.
\]
A singleton block contributes zero because the process is centered.  The weighted relative-lag sum in the definition of $\mathfrak K_B^{\mathrm{cum}}$ is at least its unweighted version.  Hence the preceding block sum is bounded by
\[
\mathfrak K_B^{\mathrm{cum}}(j_1,\ldots,j_q)
\prod_{a\in B}\|x_a\|_2.
\]
The time variables belonging to distinct blocks are disjoint, so their sums factor.  Multiplying the block bounds and summing over the coordinate indices proves the first assertion.

For a single block, keep one coefficient in $\ell_1$ and bound the remaining coefficients pointwise:
\[
\sum_t\prod_{a\in B}|x_{a,t+r_a}|
\le\|x_{a_0}\|_1\prod_{a\ne a_0}\|x_a\|_\infty.
\]
Substitution into the same relative-lag expansion proves the second assertion.
\end{proof}

\begin{lemma}[Connected contractions of quadratic CUSUMs]\label{lem:quadratic-cumulants}
Let
\[
Q_{n,p}(t)=\frac{\|\mathbf U_n(t)\|_2^2-
\E\|\mathbf U_n(t)\|_2^2}{\sqrt p}.
\]
Under Assumptions~\ref{ass:C1}--\ref{ass:C3}, uniformly in
$s,t\in[0,1]$,
\begin{align*}
\Cov\{Q_{n,p}(s),Q_{n,p}(t)\}
&=\frac2p\Tr\left[
\Cov\{\mathbf U_n(s),\mathbf U_n(t)\}
\Cov\{\mathbf U_n(t),\mathbf U_n(s)\}
\right]+R_{4,n}(s,t),\\
|R_{4,n}(s,t)|&\le Cn^{-1}.
\end{align*}
For $0\le s\le t\le1$,
\[
\left|\Cov\{Q_{n,p}(s),Q_{n,p}(t)\}
-2K_B(s,t)^2\frac{\Tr(\bOme^2)}p\right|
\le Cn^{-1}.
\]
Furthermore,
\[
\E|Q_{n,p}(t)-Q_{n,p}(s)|^4
\le C\Delta_{s,t}^2,
\qquad
\Delta_{s,t}=t-s+n^{-1}.
\]
\end{lemma}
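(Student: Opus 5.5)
The plan is to write $\|\mathbf U_n(t)\|_2^2=\sum_{j=1}^p U_{n,j}(t)^2$, where $U_{n,j}(t)=n^{-1/2}\sum_i c_{i,\ell}^{\mathrm C}\epsilon_{ij}$ with $\ell=\lfloor nt\rfloor$. Every covariance of quadratic forms then expands through the product-cumulant formula of \citet{leonov1959}. For the covariance, set $q=2$ with the pairs $\{1,2\},\{3,4\}$. Index $1,2$ carries coefficient $x_{a}=n^{-1/2}c^{\mathrm C}_{\cdot,k}$, index $3,4$ carries $n^{-1/2}c^{\mathrm C}_{\cdot,\ell}$, and the coordinates are $j_1$ for the first pair and $j_2$ for the second.

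The partitions of $\{1,2,3,4\}$ that are connected relative to $\mathfrak P_2$ fall into two groups.
\begin{enumerate}
\item[(a)] The two Gaussian pairings $\{1,3\}\{2,4\}$ and $\{1,4\}\{2,3\}$. Summed over $j_1,j_2$, each gives exactly $\Tr[\Cov\{\mathbf U_n(s),\mathbf U_n(t)\}\Cov\{\mathbf U_n(t),\mathbf U_n(s)\}]$. Together with the $p^{-1}$ normalization, this produces the displayed leading term.
\item[(b)] The single fourth-order block $\{1,2,3,4\}$. This gives $R_{4,n}$. It is bounded with the second assertion of Lemma~\ref{lem:cumulant-coefficient-contraction}: take $\|x_{a_0}\|_1\le n\cdot n^{-1/2}$ and $\|x_a\|_\infty\le n^{-1/2}$ for the other three. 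The bound is $\mathfrak C_{\pi,p}\,n^{1/2}n^{-3/2}/p\le C p\, n^{-1}/p=Cn^{-1}$ by Assumption~\ref{ass:C3}.
\end{enumerate}

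For the second display, I would replace each cross-covariance by $K_B\bOme$ using Lemma~\ref{lem:cusum-cov-general}. Writing $\mathbf A=\Cov\{\mathbf U_n(s),\mathbf U_n(t)\}$, $\mathbf B=\Cov\{\mathbf U_n(t),\mathbf U_n(s)\}=\mathbf A^\top$ and $\mathbf A_0=K_B(s,t)\bOme$, I would use
\[
|\Tr(\mathbf A\mathbf B)-\Tr(\mathbf A_0^2)|\le \|\mathbf A-\mathbf A_0\|_{\mathrm F}\|\mathbf B\|_{\mathrm F}+\|\mathbf A_0\|_{\mathrm F}\|\mathbf B-\mathbf A_0\|_{\mathrm F}.
\]
By Lemma~\ref{lem:cusum-cov-general} and Lemma~\ref{lem:vector-basic}(i), both Frobenius norms of $\mathbf A$ and $\mathbf A_0$ are $O(\sqrt p)$ and the differences are $O(\sqrt p/n)$. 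After division by $p$ the error is $O(n^{-1})$. Since $\bOme$ is symmetric, $\Tr(\mathbf A_0^2)=K_B^2\Tr(\bOme^2)$.

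For the fourth moment of the increment, write $Q(t)-Q(s)=p^{-1/2}\sum_j\{(U_j(t)^2-U_j(s)^2)-\E(\cdot)\}$. I would factor $U_j(t)^2-U_j(s)^2=D_j(U_j(t)+U_j(s))$, where $D_j=\mathbf e_j^\top\{\mathbf U_n(t)-\mathbf U_n(s)\}$ has coefficient sequence $d=n^{-1/2}(c^{\mathrm C}_{\cdot,\ell}-c^{\mathrm C}_{\cdot,k})$ with $\|d\|_2^2\le C\Delta_{s,t}$. The increment is thus a centered sum of products with one "small" factor per pair. I would expand $\E X^4=3\{\Var X\}^2+\Cum_4(X)$.
\begin{enumerate}
\item[(i)] The variance is a $q=2$ product-cumulant sum. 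The Gaussian pairings give $p^{-1}\Tr(\mathbf D\mathbf S)$-type terms with $\mathbf D=\Cov\{\mathbf U_n(t)-\mathbf U_n(s)\}$. By Lemma~\ref{lem:cusum-cov-general}, $\|\mathbf D\|_{\mathrm{op}}\le C\Delta_{s,t}$, so these terms are $O(\Delta_{s,t})$. The fourth-order block is bounded by the first assertion of Lemma~\ref{lem:cumulant-coefficient-contraction}: it is $p^{-1}\mathfrak C_{\pi,p}\|d\|_2^2\cdot O(1)\le C\Delta_{s,t}$.
\item[(ii)] The fourth cumulant expands over $q=4$ connected partitions of eight indices. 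The first assertion of Lemma~\ref{lem:cumulant-coefficient-contraction} bounds each by $p^{-2}\mathfrak C_{\pi,p}\prod\|x_a\|_2\le Cp^{-1}\Delta_{s,t}^2$, since each of the four pairs contributes a factor $\|d\|_2$ and the others are $O(1)$.
\end{enumerate}
Hence $\E|Q(t)-Q(s)|^4\le C\Delta_{s,t}^2$.

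The main obstacle is step (ii): enumerating the connected eight-point partitions and verifying that each receives exactly one $\|d\|_2$ factor per product pair. This depends on assigning $d$ to one slot of every pair, and it should not be spoiled by mixed pure-Gaussian cycles. Those cycles are products of four covariance matrices, which I would bound by trace inequalities as $p^{-2}\Tr(\cdots)\le p^{-1}\|\mathbf D\|_{\mathrm{op}}^2\le C\Delta_{s,t}^2$. The mixed cycles are absorbed by the $q=3,4$ parts of Assumption~\ref{ass:C3}.
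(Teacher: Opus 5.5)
Your proposal is correct and follows essentially the same route as the paper. Both arguments use the Leonov--Shiryaev expansion and bound $R_{4,n}$ by the second (one $\ell_1$, three $\ell_\infty$) contraction estimate. Both use Frobenius telescoping against $K_B(s,t)\bOme$. For the increment, both factor $Q_{n,p}(t)-Q_{n,p}(s)=p^{-1/2}\{\mathbf A^\top\mathbf B-\E(\mathbf A^\top\mathbf B)\}$, write $\E X^4=3(\Var X)^2+\Cum_4(X)$, and apply the first contraction bound to obtain $Cp^{-1}\Delta_{s,t}^2$ for every connected eight-point partition. The only difference is cosmetic: you bound the pure-pairing four-cycles by trace and operator-norm inequalities, which also needs $\|\Cov(\mathbf A,\mathbf B)\|_{\mathrm{op}}\le C\Delta_{s,t}^{1/2}$, whereas the paper lets Assumption~\ref{ass:C3} cover every connected partition at once. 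As a result, the enumeration you flag as the main obstacle is not needed, since the bound $\prod_a\|x_a\|_2$ holds for every partition.
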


\begin{proof}
Write $U_a(s)$ for the $a$th component of $\mathbf U_n(s)$.  The fourth-order moment--cumulant identity gives
\begin{align*}
\Cov\{U_a(s)^2,U_b(t)^2\}
={}&2\Cov\{U_a(s),U_b(t)\}
      \Cov\{U_b(t),U_a(s)\}\\
&+\Cum\{U_a(s),U_a(s),U_b(t),U_b(t)\}.
\end{align*}
Summing over $a,b$ and dividing by $p$ gives the first equality, with
\[
R_{4,n}(s,t)=\frac1p\sum_{a,b=1}^p
\Cum\{U_a(s),U_a(s),U_b(t),U_b(t)\}.
\]
Put
\[
x_i(s)=n^{-1/2}c_{i,\lfloor ns\rfloor}^{\mathrm C},
\qquad
x_i(t)=n^{-1/2}c_{i,\lfloor nt\rfloor}^{\mathrm C}.
\]
Then
\[
\|x(s)\|_1+\|x(t)\|_1\le C\sqrt n,
\qquad
\|x(s)\|_\infty+\|x(t)\|_\infty\le Cn^{-1/2}.
\]
The four variables in $R_{4,n}$ form one cumulant block.  The second assertion of Lemma~\ref{lem:cumulant-coefficient-contraction} and Assumption~\ref{ass:C3} with $q=2$ therefore give
\begin{align*}
|R_{4,n}(s,t)|
&\le \frac1p(C_2p)
(C\sqrt n)(Cn^{-1/2})^3\\
&\le Cn^{-1}.
\end{align*}

Let
$\mathbf C_n(s,t)=\Cov\{\mathbf U_n(s),\mathbf U_n(t)\}$.
Lemma~\ref{lem:cusum-cov-general} and Assumption~\ref{ass:C2} imply
\begin{align*}
&\left|\frac2p\Tr\{\mathbf C_n(s,t)\mathbf C_n(t,s)\}
-2K_B(s,t)^2\frac{\Tr(\bOme^2)}p\right|\\
&\quad\le\frac2p
\|\mathbf C_n(s,t)-K_B(s,t)\bOme\|_{\mathrm F}
\{\|\mathbf C_n(t,s)\|_{\mathrm F}
 +|K_B(s,t)|\|\bOme\|_{\mathrm F}\}\\
&\quad\le Cn^{-1}.
\end{align*}
This proves the covariance approximation.

For the fourth increment moment, put
\[
\mathbf A=\mathbf U_n(t)-\mathbf U_n(s),
\qquad
\mathbf B=\mathbf U_n(t)+\mathbf U_n(s),
\]
so that
\[
Q_{n,p}(t)-Q_{n,p}(s)
=p^{-1/2}\{\mathbf A^\top\mathbf B-
\E(\mathbf A^\top\mathbf B)\}.
\]
Let
\[
u_i^{s,t}=n^{-1/2}
\{c_{i,\lfloor nt\rfloor}^{\mathrm C}
-c_{i,\lfloor ns\rfloor}^{\mathrm C}\},
\qquad
v_i^{s,t}=n^{-1/2}
\{c_{i,\lfloor nt\rfloor}^{\mathrm C}
+c_{i,\lfloor ns\rfloor}^{\mathrm C}\}.
\]
Direct summation gives
\[
\|u^{s,t}\|_2^2\le C\Delta_{s,t},
\qquad
\|v^{s,t}\|_2^2\le C.
\]
The fourth moment of the centered product is the fourth joint cumulant of four copies of $p^{-1/2}\mathbf A^\top\mathbf B$ plus the three products of pairwise covariances.  For a pairwise covariance, the product--cumulant formula with $q=2$, Lemma~\ref{lem:cumulant-coefficient-contraction}, Assumption~\ref{ass:C3}, and the Gaussian pairing bounds from Assumption~\ref{ass:C2} give
\[
\left|\Cov\left(
\frac{\mathbf A^\top\mathbf B}{\sqrt p},
\frac{\mathbf A^\top\mathbf B}{\sqrt p}
\right)\right|
\le C\|u^{s,t}\|_2^2\|v^{s,t}\|_2^2
\le C\Delta_{s,t}.
\]
Hence the sum of the three covariance products is bounded by
$C\Delta_{s,t}^2$.

For the joint cumulant of four products, \citet{leonov1959} gives a finite sum over partitions $\pi$ connected relative to four product pairs.  Lemma~\ref{lem:cumulant-coefficient-contraction} and Assumption~\ref{ass:C3} yield, for each such partition,
\begin{align*}
&\frac1{p^2}\mathfrak C_{\pi,p}
\prod_{r=1}^4\|u^{s,t}\|_2\|v^{s,t}\|_2\\
&\qquad\le
\frac{C}{p}\|u^{s,t}\|_2^4\|v^{s,t}\|_2^4
\le C\Delta_{s,t}^2.
\end{align*}
Summing the finitely many connected partitions proves the fourth-moment bound.
\end{proof}

\section{Gaussian comparison and the dense-process limit}

Let $\{\bepsilon_i^G:i\in\mathbb Z\}$ be a centered stationary Gaussian vector process satisfying
\[
\Cov(\bepsilon_i^G,\bepsilon_{i+h}^G)=\bGam(h),
\qquad h\in\mathbb Z.
\]
Such a process exists because every finite covariance matrix generated by $\{\bGam(h)\}$ is positive semidefinite.  Superscript $G$ denotes a statistic computed from this Gaussian process.
For $t\in(0,1)$ and $\gamma\in\{0,1/2\}$, put
\[
v(t)=t(1-t),\qquad \varphi_\gamma(t)=v(t)^{-2\gamma},
\]
\[
Q_{\gamma,n,p}(t)=\varphi_\gamma(t)Q_{n,p}(t),
\qquad
Q_{\gamma,n,p}^{G}(t)=\varphi_\gamma(t)Q_{n,p}^{G}(t).
\]
The index sets are
\[
\mathcal K_0=\{1,\ldots,n-1\},\qquad
\mathcal K_{1/2}=\{\lambda_n,\ldots,n-\lambda_n\}.
\]
Thus $Q_{0,n,p}=Q_{n,p}$ and
$Q_{1/2,n,p}(t)=Q_{n,p}(t)/v(t)$.

For a finite vector $z=(z_1,\ldots,z_N)^\top$, define
\[
\operatorname{smax}_\beta(z)=\beta^{-1}\log\left(\sum_{a=1}^Ne^{\beta z_a}\right).
\]
The elementary derivative bounds are
\begin{align*}
0\le\operatorname{smax}_\beta(z)-\max_a z_a&\le\beta^{-1}\log N,\\
\sum_a|\partial_a\operatorname{smax}_\beta|&=1,\\
\sum_{a,b}|\partial_{ab}\operatorname{smax}_\beta|&\le2\beta,\\
\sum_{a,b,c}|\partial_{abc}\operatorname{smax}_\beta|&\le6\beta^2.
\end{align*}

For the max coordinates, retain the same sets $\mathcal K_0$ and $\mathcal K_{1/2}$ and, with the population long-run standard deviations $\sigma_j^2=\Omega_{jj}$, define
\[
C_{\gamma,j}^{\epsilon,0}(k)=
\left\{\frac{k}{n}\left(1-\frac{k}{n}\right)\right\}^{-\gamma}
\frac{\sum_{i=1}^nc_{i,k}^{\mathrm C}\epsilon_{ij}}
{\sqrt n\,\sigma_j}.
\]
For the covariance-matched Gaussian process, define
\[
C_{\gamma,j}^{G,0}(k)=
\left\{\frac{k}{n}\left(1-\frac{k}{n}\right)\right\}^{-\gamma}
\frac{\sum_{i=1}^nc_{i,k}^{\mathrm C}\epsilon_{ij}^{G}}
{\sqrt n\,\sigma_j}.
\]
Let $\mathcal C_{\gamma,n}^{\pm}$ contain both signs of all
$C_{\gamma,j}^{\epsilon,0}(k)$ and let
$(\mathcal C_{\gamma,n}^{G})^{\pm}$ contain both signs of all
$C_{\gamma,j}^{G,0}(k)$, with $j\le p$ and $k\in\mathcal K_\gamma$.  Finally,
let
\[
\mathcal Q_{\gamma,n}=\{Q_{\gamma,n,p}(k/n):k\in\mathcal K_\gamma\},
\qquad
\mathcal Q_{\gamma,n}^G=\{Q_{\gamma,n,p}^G(k/n):k\in\mathcal K_\gamma\}.
\]

\begin{lemma}[Blockwise Gaussian comparison]\label{lem:block-comparison-general}
Suppose Assumptions~\ref{ass:C1}--\ref{ass:C3} hold and
$p\le n^\nu$ for a fixed $\nu>0$.  When $\gamma=1/2$, assume additionally
that $\lambda_n\asymp n^\lambda$ for a fixed $\lambda\in(0,1)$.  Let
\[
L_n=\log(np),
\qquad
s_n=\left\lceil c_0L_n^{1/\varkappa}\right\rceil,
\qquad
b_n=\left\lceil L_n^{\kappa_b}\right\rceil,
\]
where $\kappa_b$ is a sufficiently large fixed constant and $c_0$ is chosen sufficiently large.  There is a finite constant $c_{\mathrm{cmp}}$ such that, for every $f\in C_b^3(\mathbb R^2)$,
\begin{align*}
&\left|\E f\left\{\operatorname{smax}_{L_n^2}(\mathcal Q_{\gamma,n}),
\operatorname{smax}_{L_n^2}(\mathcal C_{\gamma,n}^{\pm})\right\}
-\E f\left\{\operatorname{smax}_{L_n^2}(\mathcal Q_{\gamma,n}^G),
\operatorname{smax}_{L_n^2}((\mathcal C_{\gamma,n}^G)^{\pm})\right\}\right|\\
&\quad\le C\|f\|_{C^3}r_{n,\gamma},
\end{align*}
where, for every fixed $A>0$,
\begin{align*}
r_{n,0}
&\le CL_n^{c_{\mathrm{cmp}}}
\left\{(s_n/b_n)^{1/2}+(b_n/n)^{1/2}\right\}+(np)^{-A},\\
r_{n,1/2}
&\le CL_n^{c_{\mathrm{cmp}}}
\left\{(s_n/b_n)^{1/2}+(b_n/n)^{1/2}
 +(b_n/\lambda_n)^{1/2}\right\}+(np)^{-A}.
\end{align*}
Both rates converge to zero.  More precisely, for every prescribed finite
$c_{\mathrm{out}}>0$, the fixed exponent $\kappa_b$ may be selected so that
\[
L_n^{c_{\mathrm{out}}}r_{n,\gamma}\longrightarrow0,
\qquad \gamma\in\{0,1/2\}.
\]
Indeed, the only purely logarithmic term is
$L_n^{c_{\mathrm{out}}+c_{\mathrm{cmp}}+(1/\varkappa-\kappa_b)/2}$;
the terms containing $n^{-1/2}$ or $\lambda_n^{-1/2}$ dominate every fixed
power of $L_n$, and the final polynomial tail is made smaller by choosing $A$
large enough.

For fixed $m$, fixed $0<t_1<\cdots<t_m<1$, and every
$F\in C_b^3(\mathbb R^m)$,
\[
\left|\E F\{Q_{n,p}(t_1),\ldots,Q_{n,p}(t_m)\}
-\E F\{Q_{n,p}^G(t_1),\ldots,Q_{n,p}^G(t_m)\}\right|
\le C_m\|F\|_{C^3}r_{n,0}.
\]
The two assertions remain valid under deterministic shifts
$\mathbf h_k$ in the quadratic CUSUM vectors and scalar shifts in the max coordinates, with the right-hand side multiplied by
$\{1+\max_k\|\mathbf h_k\|_2/\sqrt p\}^3$.
\end{lemma}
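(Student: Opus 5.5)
We follow the blocking-plus-Lindeberg route.  First we write everything as a function of the innovation sequence.  Both smooth maxima act on statistics built from $\bepsilon_1,\ldots,\bepsilon_n$: each $Q_{\gamma,n,p}(k/n)$ is a centered quadratic form and each $C^{\epsilon,0}_{\gamma,j}(k)$ is a linear form.  Partition $\{1,\ldots,n\}$ into alternating big blocks $\mathcal B_1,\ldots,\mathcal B_{N_b}$ of length $b_n$ and small blocks of length $s_n$, with $N_b\asymp n/b_n$.

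\textbf{Step 1: deleting the small blocks.}  The small-block contributions are removed first.  For the linear coordinates, Lemma~\ref{lem:vector-basic}(ii) with $q\asymp L_n$ and a union bound over $2p(n-1)$ coordinates shows the deleted part is $O_p\{L_n^{c}(s_n/b_n)^{1/2}\}$ uniformly.  For the quadratic coordinates, the deleted part is a cross term plus a square.  We bound it in $L_q$ by the decoupled form of Lemma~\ref{lem:quadratic-cumulants}, i.e.\ the Leonov--Shiryaev expansion with Lemma~\ref{lem:cumulant-coefficient-contraction}, with the coefficient vector restricted to small blocks so that $\|x\|_2^2\lesssim s_n/b_n$.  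This gives the $(s_n/b_n)^{1/2}$ term.  Because $\operatorname{smax}$ is $1$-Lipschitz in sup-norm and $f\in C^3_b$, the perturbation costs $\|f\|_{C^3}$ times that rate.

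\textbf{Step 2: coupling to an independent-block array.}  Next we replace the big-block vectors by independent copies with the same joint law; gaps of length $s_n$ separate them.  Under (MD) this is exact once $s_n>m_0$.  Under (AM) we use Berbee's coupling, or a characteristic-function interpolation with Davydov's inequality, at a total cost of $N_b\,\alpha(s_n)\le n\exp(-Cc_0^{\varkappa}L_n)\le (np)^{-A}$ for $c_0$ large.  Under (PD) we replace each $\bepsilon_i$ by its $s_n$-memory conditional expectation.  Burkholder's inequality bounds the replacement error by $\exp(-cs_n^{\varkappa})$, which is again $(np)^{-A}$.  The Gaussian process $\bepsilon^G$ is treated identically, so both sides now involve independent big blocks $\mathbf Y_r$ and $\mathbf Y_r^G$ with matched covariances.

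\textbf{Step 3: Lindeberg swapping.}  We swap $\mathbf Y_r$ for $\mathbf Y_r^G$ one block at a time.  The statistic vector is a quadratic polynomial in the blocks, so the dependence on block $r$ splits into a linear part (its interaction with the other blocks) and a within-block quadratic part.  We Taylor expand $f(\operatorname{smax}_\beta,\operatorname{smax}_\beta)$ to third order using the stated derivative bounds, which cost $\beta^2=L_n^4$ at third order.  The first- and second-order terms cancel because first and second moments match and the other blocks are independent.  One exception needs care: the within-block quadratic part has a nonzero mean matched to second order only through the Gaussian pairing.  Its fourth-cumulant discrepancy is controlled by Assumption~\ref{ass:C3} and is $O(b_n/n)$ per normalization.

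The third-order remainder per block is bounded by $L_n^{c}$ times $\E\max\|\cdot\|^3$ of the block increment.  The quadratic coordinate increments are $O(L_n^{c}(b_n/n)^{1/2})$ in size: the linear cross term is $p^{-1/2}\mathbf Y_r^\top(\text{rest})$, which has variance $\lesssim b_n/n$ by Lemma~\ref{lem:cusum-cov-general}.  Summing $N_b$ blocks gives $N_b(b_n/n)^{3/2}L_n^{c}=L_n^c(b_n/n)^{1/2}$.  For $\gamma=1/2$, the weight $v(t)^{-1}$, respectively $v(t)^{-1/2}$, inflates block increments near $\lambda_n$ by a factor up to $(n/\lambda_n)^{1/2}$.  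This produces the extra $(b_n/\lambda_n)^{1/2}$ term.  The maxima over $2p(n-1)$ coordinates enter only through $\log N\lesssim L_n$.  Choosing $\kappa_b$ large makes $L_n^{c_{\mathrm{out}}}r_{n,\gamma}\to0$, as stated.

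\textbf{Fixed-dimensional and shifted versions.}  The finite-dimensional statement is the same argument with $F$ in place of $f\circ\operatorname{smax}$; there is no $\beta$ penalty.  For deterministic shifts $\mathbf h_k$, the quadratic form acquires the linear term $2p^{-1/2}\mathbf h_k^\top\mathbf U$.  This scales every derivative bound by at most $1+\|\mathbf h_k\|_2/\sqrt p$, hence the cube.

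\textbf{Main obstacle.}  The main obstacle is the third-order Lindeberg remainder for the quadratic coordinates under general non-Gaussian vector dependence.  It requires uniform (over $k$) high-moment bounds for $p^{-1/2}\mathbf Y_r^\top\mathbf Z$ with $\mathbf Z$ dependent on the other blocks.  The plan is to condition on $\mathbf Z$, use projection sub-Gaussianity (Lemma~\ref{lem:vector-basic}(ii)) with $\mathbf u=\mathbf Z/\|\mathbf Z\|_2$, and control $\|\mathbf Z\|_2^2/p$ via Lemma~\ref{lem:quadratic-cumulants}.
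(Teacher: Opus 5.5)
Your architecture (deletion, independent blocks, Lindeberg swap, shifts) matches the paper's, but two steps fail as written.

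\textbf{Step 1 does not give uniform control of the quadratic deletion error.} To use that $\operatorname{smax}$ is $1$-Lipschitz, you need a bound on $\E\max_{k\in\mathcal K_\gamma}|Q_{\gamma,n,p}(k/n)-Q^{I}_{\gamma,n,p}(k/n)|$. Assumption~\ref{ass:C3} controls connected contractions only up to order eight. The Leonov--Shiryaev route therefore gives, at each fixed $k$, an $L_4$ bound of order $(s_n/b_n)^{1/2}$, not an $L_q$ bound with $q\asymp L_n$. The H\"older-type bound that works for (PD) carries a factor $\sqrt p$; $e^{-cs_n^{\varkappa}}$ absorbs it, but $(s_n/b_n)^{1/2}$ does not. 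A union bound over the $n$ split points using fourth moments thus loses $n^{1/4}$, which no polylogarithmic choice of $b_n$ can offset. The missing ingredient is a maximal inequality. The paper proves increment bounds $\E|\mathfrak e_\gamma(k_2)-\mathfrak e_\gamma(k_1)|^4\le C\,\mathfrak h^{\mathrm{del}}_\gamma(k_1,k_2)^2$ for a superadditive control function built from the deleted coefficient norms. For $\gamma=1/2$ this is done in logit coordinates, with the extra term $\sum_{i\in\mathcal J\cap(k_1,k_2]}1/\{i\wedge(n-i)\}$. It then applies M\'oricz's theorem, at no cost for $\gamma=0$ and at cost $L_n^{1/2}$ for $\gamma=1/2$. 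The deleted norms also contain $b_n/n$ from the terminal block, and $b_n/\lambda_n$ when $\gamma=1/2$, not only $s_n/b_n$.

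\textbf{Step 2 fails under (AM), and the Gaussian side is not symmetric.} Berbee's lemma controls the probability that a coupled block differs by the absolute-regularity coefficient $\beta$, not by $\alpha$. Under strong mixing alone there is no coupling at total cost $N_b\alpha(s_n)$, and Bradley-type couplings are scalar and not of total-variation type. Characteristic-function interpolation does not rescue this. The quadratic coordinates contain cross-block products, so the statistic is not additive over blocks, and Fourier inversion for a test function of $\asymp pn$ coordinates is unavailable.

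The paper instead uses a one-sided sequential hybrid replacement $(\mathcal B_1^*,\ldots,\mathcal B_{\ell}^*,\mathcal B_{\ell+1},\ldots,\mathcal B_L)$. It Taylor-expands in the $\ell$th block with the future original blocks left random, and truncates block projections at a polylogarithmic level. Each centered linear or quadratic Taylor coefficient $U_\ell$ then satisfies $\E|\E(U_\ell\mid\mathcal F^+_{\ell+1})|=\Cov(U_\ell,V_\ell)=O\{(np)^{-A}\}$, where $V_\ell$ is the sign of the conditional mean. This follows from Davydov's inequality and needs only $\alpha\{\sigma(\mathcal B_\ell),\mathcal F^+_{\ell+1}\}\le\alpha_{\epsilon,p}(s_n)$. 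So the (AM) reduction is not free: it pays third-order remainders $\sum_\ell\sup_u w_{\ell,u}^3$, of the same order as your Lindeberg step.

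Finally, the covariance-matched Gaussian process does not automatically inherit (AM), so ``treated identically'' is unavailable on the Gaussian side. The paper builds independent Gaussian blocks with the within-block covariances. It then restores the deleted observations and cross-block covariances by Gaussian interpolation, with operator-norm covariance error $C\{s_n/b_n+b_n/n+\eta_{s_n}\}$, plus $b_n/\lambda_n$ for $\gamma=1/2$. In that interpolation the trace-centering terms cancel the Hessian traces exactly.
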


\begin{proof}
Write $b=b_n$, $s=s_n$, and $L=\lfloor n/(b+s)\rfloor$.  Define large and small blocks
\[
I_\ell=\{(\ell-1)(b+s)+1,\ldots,(\ell-1)(b+s)+b\},
\]
\[
J_\ell=\{(\ell-1)(b+s)+b+1,\ldots,\ell(b+s)\},
\qquad \ell=1,\ldots,L,
\]
and put the remaining observations in $J_{L+1}$.  The number of deleted observations satisfies
\[
|J_1\cup\cdots\cup J_{L+1}|
\le 2ns/b+b+s.
\]
For a CUSUM coordinate indexed by $u=(\gamma,k)$, put
\[
a_{i,u}=\left\{\frac{k}{n}\left(1-\frac{k}{n}\right)\right\}^{-\gamma}
\frac{c_{i,k}^{\mathrm C}}{\sqrt n},
\qquad
w_{\ell,u}^2=\sum_{i\in I_\ell}a_{i,u}^2.
\]
Direct summation gives
\[
\sum_{\ell=1}^L\sup_{k<n}w_{\ell,(0,k)}^3
\le C(b/n)^{1/2},
\]
and, when $\lambda_n\le k\le n-\lambda_n$,
\[
\sum_{\ell=1}^L\sup_k w_{\ell,(1/2,k)}^3
\le C\{(b/n)^{1/2}+(b/\lambda_n)^{1/2}\}.
\]
The second inequality follows from
\[
w_{\ell,(1/2,k)}^2
\le C\frac{b}{\max(k,r_\ell,\lambda_n)}+C\frac bn,
\]
where $r_\ell$ is the left endpoint of $I_\ell$, followed by summation of
$\{b/(\lambda_n+rb)\}^{3/2}$ over $r\ge0$.  Time reversal gives the same bound for $k>n/2$.

Let
\[
\mathbf Z_{\ell,u}=\sum_{i\in I_\ell}a_{i,u}\bepsilon_i.
\]
For every deterministic $\mathbf v$ and every $q\ge2$, Lemma~\ref{lem:vector-basic} gives
\[
\|\mathbf v^\top\mathbf Z_{\ell,u}\|_q
\le Cq^{c_{\mathrm{dep}}}\|\mathbf v\|_2w_{\ell,u}.
\]
Consequently,
\begin{align*}
\E\|\mathbf Z_{\ell,u}\|_2^2
&\le Cpw_{\ell,u}^2,\\
\E\|\mathbf Z_{\ell,u}\|_2^4
&\le Cp^2w_{\ell,u}^4,\\
\sup_{\|\mathbf v\|_2=1}
\|\mathbf v^\top\mathbf Z_{\ell,u}/w_{\ell,u}\|_q
&\le Cq^{c_{\mathrm{dep}}}.
\end{align*}
The fourth-moment bound follows by expanding
$\|\mathbf Z\|_2^4$ into Gaussian pairings and the connected fourth cumulant, and then applying Assumptions~\ref{ass:C2}--\ref{ass:C3}.  The same bounds hold for the covariance-matched Gaussian block.

Let $\mathcal J=J_1\cup\cdots\cup J_{L+1}$, put
$m_J=|\mathcal J|$, and let superscript $I$ denote deletion of the
observations in $\mathcal J$.  For $k\in\mathcal K_\gamma$, define
\[
w_{i,k}^{(\gamma,\mathrm{del})}=a_{i,(\gamma,k)}\ind{i\in\mathcal J},
\qquad
r_{n,\gamma}^{\mathrm{del}}=
\sup_{k\in\mathcal K_\gamma}\|w_{\cdot,k}^{(\gamma,\mathrm{del})}\|_2.
\]
Counting deleted observations separately on the two constant-weight pieces of
a CUSUM coefficient gives
\[
\{r_{n,0}^{\mathrm{del}}\}^2\le C\left(\frac sb+\frac bn\right),
\qquad
\{r_{n,1/2}^{\mathrm{del}}\}^2\le C\left(\frac sb+\frac bn+\frac b{\lambda_n}\right).
\]
The second bound follows, for $k\le n/2$, from
\[
\sum_{i\in\mathcal J} \{w_{i,k}^{(1/2,\mathrm{del})}\}^2
\le C\left
\{\frac{|\mathcal J\cap[1,k]|}{k}
+\frac{k|\mathcal J\cap(k,n]|}{n^2}\right\},
\]
combined with
\[
|\mathcal J\cap[1,k]|\le C\left(\frac{ks}{b}+b\right),
\qquad m_J\le C\left(\frac{ns}{b}+b\right),
\]
and time reversal handles $k>n/2$.

For $k_1<k_2$, write $d=(k_2-k_1)/n$ and
$q_J(k_1,k_2)=|\mathcal J\cap(k_1,k_2]|$.  Direct subtraction of the
unweighted CUSUM coefficients yields
\[
\|w_{\cdot,k_2}^{(0,\mathrm{del})}-w_{\cdot,k_1}^{(0,\mathrm{del})}\|_2^2
\le \frac{q_J(k_1,k_2)}n+\frac{m_Jd^2}{n}.
\]
Moreover,
\[
\|a_{\cdot,(0,k_2)}-a_{\cdot,(0,k_1)}\|_2^2
=d(1-d)\le d.
\]
Hence the nonnegative control function
\[
\mathfrak h_{0}^{\mathrm{del}}(k_1,k_2)=
\{r_{n,0}^{\mathrm{del}}\}^2d+\frac{q_J(k_1,k_2)}n+\frac{m_Jd^2}{n}
\]
is superadditive on ordered triples and satisfies
\[
\mathfrak h_{0}^{\mathrm{del}}(0,n)\le C\{r_{n,0}^{\mathrm{del}}\}^2.
\]

For the boundary-standardized coefficients, put
$u_k=\log\{k/(n-k)\}$ and
\[
\nu_J(k_1,k_2)=
\sum_{i\in\mathcal J\cap(k_1,k_2]}
\frac1{i\wedge(n-i)}.
\]
On each half of the sample, differentiation of
\[
\left\{\frac{n-k}{nk}\right\}^{1/2}
\quad\hbox{and}\quad
-\left\{\frac{k}{n(n-k)}\right\}^{1/2}
\]
away from the crossing indices, followed by a separate summation over the
crossing indices, gives
\[
\|w_{\cdot,k_2}^{(1/2,\mathrm{del})}-w_{\cdot,k_1}^{(1/2,\mathrm{del})}\|_2^2
\le C\left[
\{r_{n,1/2}^{\mathrm{del}}\}^2|u_{k_2}-u_{k_1}|
+\nu_J(k_1,k_2)\right].
\]
For the complete index set, the standardized CUSUM coefficient vectors have inner product
\[
\left\langle a_{\cdot,(1/2,k_1)},
                  a_{\cdot,(1/2,k_2)}\right\rangle
=\exp\left(-\frac{|u_{k_2}-u_{k_1}|}{2}\right).
\]
Because each vector has squared norm one,
\[
\|a_{\cdot,(1/2,k_2)}-a_{\cdot,(1/2,k_1)}\|_2^2
=2\{1-e^{-|u_{k_2}-u_{k_1}|/2}\}
\le|u_{k_2}-u_{k_1}|.
\]
Since the deleted indices occur in blocks of length at most $s$ separated by
large blocks of length $b$,
\[
\sum_{i\in\mathcal J\cap[\lambda_n,n-\lambda_n]}
\frac1{i\wedge(n-i)}
\le C\left\{\frac sb\log\frac n{\lambda_n}
+\frac b{\lambda_n}\right\}
\le CL_n\{r_{n,1/2}^{\mathrm{del}}\}^2.
\]
Consequently,
\[
\mathfrak h_{1/2}^{\mathrm{del}}(k_1,k_2)=
\{r_{n,1/2}^{\mathrm{del}}\}^2|u_{k_2}-u_{k_1}|
+\nu_J(k_1,k_2)
\]
is additive on ordered logit intervals and satisfies
\[
\mathfrak h_{1/2}^{\mathrm{del}}(\lambda_n,n-\lambda_n)
\le CL_n\{r_{n,1/2}^{\mathrm{del}}\}^2.
\]

For the matched quadratic process, define
\[
\mathbf U_{\gamma,k}=v(k/n)^{-\gamma}\mathbf U_k,
\qquad
\mathbf U_{\gamma,k}^{I}=v(k/n)^{-\gamma}\mathbf U_k^{I},
\qquad
\mathbf U_{\gamma,k}^{\mathrm{del}}=\mathbf U_{\gamma,k}-\mathbf U_{\gamma,k}^{I}.
\]
The coefficient sequence of $\mathbf U_{\gamma,k}^{\mathrm{del}}$ is
$w_{\cdot,k}^{(\gamma,\mathrm{del})}$, and
\begin{align*}
Q_{\gamma,n,p}(k/n)-Q_{\gamma,n,p}^{I}(k/n)
={}&\frac2{\sqrt p}
\{(\mathbf U_{\gamma,k}^{I})^\top\mathbf U_{\gamma,k}^{\mathrm{del}}
-\E((\mathbf U_{\gamma,k}^{I})^\top\mathbf U_{\gamma,k}^{\mathrm{del}})\}\\
&+\frac1{\sqrt p}
\{\|\mathbf U_{\gamma,k}^{\mathrm{del}}\|_2^2-
\E\|\mathbf U_{\gamma,k}^{\mathrm{del}}\|_2^2\}.
\end{align*}
For deterministic coefficient sequences $a,r$, the product--cumulant formula
and Assumption~\ref{ass:C3}, including its order-eight contractions, give
\begin{align*}
&\left\|p^{-1/2}
\left[(\textstyle\sum_i a_i\bepsilon_i)^\top
      (\textstyle\sum_i r_i\bepsilon_i)
-\E\{(\textstyle\sum_i a_i\bepsilon_i)^\top
      (\textstyle\sum_i r_i\bepsilon_i)\}\right]
\right\|_4
\le C\|a\|_2\|r\|_2,\\
&\left\|p^{-1/2}
\left[\|\textstyle\sum_i r_i\bepsilon_i\|_2^2
-\E\|\textstyle\sum_i r_i\bepsilon_i\|_2^2\right]
\right\|_4
\le C\|r\|_2^2.
\end{align*}
Expanding the increment between $k_1$ and $k_2$ and applying these two
bounds term by term yields
\[
\E\left|\mathfrak e_{\gamma}^{\mathrm{quad,del}}(k_2)-\mathfrak e_{\gamma}^{\mathrm{quad,del}}(k_1)\right|^4
\le C \mathfrak h_{\gamma}^{\mathrm{del}}(k_1,k_2)^2,
\]
where
\[
\mathfrak e_{\gamma}^{\mathrm{quad,del}}(k)=Q_{\gamma,n,p}(k/n)-Q_{\gamma,n,p}^{I}(k/n).
\]
The maximal-moment inequality of \citet[Theorem~1]{moricz1982}, applied in the
natural order for $\gamma=0$ and in the logit order for $\gamma=1/2$, therefore
gives
\[
\left\|\max_{k\in\mathcal K_0}|\mathfrak e_{0}^{\mathrm{quad,del}}(k)|\right\|_4
\le Cr_{n,0}^{\mathrm{del}},
\qquad
\left\|\max_{k\in\mathcal K_{1/2}}|\mathfrak e_{1/2}^{\mathrm{quad,del}}(k)|\right\|_4
\le CL_n^{1/2}r_{n,1/2}^{\mathrm{del}}.
\]

For a coordinatewise linear deletion remainder, Lemma~\ref{lem:vector-basic}
shows, for every $q\ge2$,
\[
\left\|\mathfrak e_{\gamma,j}^{\mathrm{lin,del}}(k_2)-\mathfrak e_{\gamma,j}^{\mathrm{lin,del}}(k_1)\right\|_q
\le Cq^{c_{\mathrm{dep}}}\mathfrak h_{\gamma}^{\mathrm{del}}(k_1,k_2)^{1/2}.
\]
The same maximal inequality and
$\|\max_{j\le p}X_j\|_q\le p^{1/q}\max_j\|X_j\|_q$, with
$q=2L_n$, imply
\[
\E\max_{k\in\mathcal K_\gamma}\max_{j\le p}
|C_{\gamma,j}^{\epsilon,0}(k)-C_{\gamma,j}^{\epsilon,I}(k)|
\le CL_n^{c_{\mathrm{cmp}}}r_{n,\gamma}^{\mathrm{del}}.
\]
Thus deletion changes the matched quadratic and coordinatewise maxima by
\[
CL_n^{c_{\mathrm{cmp}}}
\left\{(s/b)^{1/2}+(b/n)^{1/2}\right\}
\]
for $\gamma=0$, with the additional term
$CL_n^{c_{\mathrm{cmp}}}(b/\lambda_n)^{1/2}$ for $\gamma=1/2$.
These are the deletion terms in $r_{n,\gamma}$.

For branch (MD), retained large blocks are independent when $s>m_0$.  For branch (PD), let
\[
\mathcal H_{i,s}=\sigma(\boldsymbol\xi_i,\ldots,\boldsymbol\xi_{i-s}),
\qquad
\bepsilon_i^{(s)}=\E(\bepsilon_i\mid\mathcal H_{i,s}).
\]
This process is $s$-dependent because $\bepsilon_i^{(s)}$ is measurable with respect to the $s+1$ innovations in $\mathcal H_{i,s}$.  To control the approximation error without identifying it with a truncation of the past-filtration martingale projections, let
\[
\bepsilon_i^{[s,*]}
=\mathcal G_p(\boldsymbol\xi_i,\ldots,\boldsymbol\xi_{i-s},
 \boldsymbol\xi_{i-s-1}',\boldsymbol\xi_{i-s-2}',\ldots),
\]
where the primed remote tail is an independent copy of the original remote tail and is independent of the complete innovation sequence.  Conditional on $\mathcal H_{i,s}$,
\[
\E(\bepsilon_i^{[s,*]}\mid\mathcal H_{i,s})
=\E(\bepsilon_i\mid\mathcal H_{i,s})=\bepsilon_i^{(s)}.
\]
Conditioning first on the complete original innovation sequence and then applying Jensen's inequality gives, for every unit $\mathbf u$ and $q\ge2$,
\begin{align*}
\|\mathbf u^\top(\bepsilon_i-\bepsilon_i^{(s)})\|_q
&\le
\|\mathbf u^\top(\bepsilon_i-\bepsilon_i^{[s,*]})\|_q\\
&\le\sum_{r>s}
\|\mathbf u^\top(\bepsilon_i-\bepsilon_i^{\{r\}})\|_q\\
&\le Cq^{1/2}\sum_{r>s}e^{-cr^{\gamma_2}}\\
&\le Cq^{1/2}e^{-cs^{\varkappa}}.
\end{align*}
The second inequality follows by replacing the remote innovations one at a time and using Minkowski's inequality.

Put $\mathbf R_i^{(s)}=\bepsilon_i-\bepsilon_i^{(s)}$ and let
$\mathbf R_i^{(s),\{r\}}$ be its coupled version after replacing
$\boldsymbol\xi_{i-r}$.  If $0\le r\le s$, the triangle inequality and the preceding bound give
\[
\sup_{\|\mathbf u\|_2=1}
\|\mathbf u^\top(\mathbf R_i^{(s)}-
 \mathbf R_i^{(s),\{r\}})\|_q
\le Cq^{1/2}e^{-cs^{\varkappa}}.
\]
If $r>s$, the conditional expectation $\bepsilon_i^{(s)}$ is unchanged by the coupling, and hence
\[
\sup_{\|\mathbf u\|_2=1}
\|\mathbf u^\top(\mathbf R_i^{(s)}-
 \mathbf R_i^{(s),\{r\}})\|_q
\le Cq^{1/2}e^{-cr^{\gamma_2}}.
\]
Consequently, after decreasing $c>0$ if necessary,
\begin{align*}
&\sum_{r=0}^{\infty}
\sup_{\|\mathbf u\|_2=1}q^{-1/2}
\|\mathbf u^\top(\mathbf R_i^{(s)}-
 \mathbf R_i^{(s),\{r\}})\|_q\\
&\qquad\le C\{(s+1)e^{-cs^{\varkappa}}
 +\sum_{r>s}e^{-cr^{\gamma_2}}\}
\le Ce^{-cs^{\varkappa}}.
\end{align*}
Applying the martingale-projection and Burkholder calculation in
Lemma~\ref{lem:vector-basic}(ii) to the residual process
$\{\mathbf R_i^{(s)}\}$ therefore yields, for every finitely supported deterministic coefficient sequence $a=(a_i)$,
\[
\left\|\sum_i a_i\mathbf u^\top
(\bepsilon_i-\bepsilon_i^{(s)})\right\|_q
\le Cq e^{-cs^{\varkappa}}\|a\|_2.
\]
Take $q=2(A+5)L_n$.  Markov's inequality and a union bound over the $O(np)$ linear CUSUM coordinates give a probability at most $C(np)^{-A}$ that any linear argument changes by more than
$CL_n^{c_{\mathrm{cmp}}}e^{-cs^{\varkappa}}$.  For a quadratic CUSUM argument, let
\[
\mathbf x=\sum_i a_i\bepsilon_i^{(s)},
\qquad
\mathbf r=\sum_i a_i(\bepsilon_i-\bepsilon_i^{(s)}),
\qquad \|a\|_2\le C.
\]
The projection bounds above and Lemma~\ref{lem:vector-basic}(ii) imply, for every $q\ge2$,
\[
\max_{j\le p}\|x_j\|_{2q}\le Cq^{c_{\mathrm{dep}}},
\qquad
\max_{j\le p}\|r_j\|_{2q}
\le Cq e^{-cs^{\varkappa}}.
\]
Minkowski's inequality applied to the sums of squares gives
\[
\|\|\mathbf x\|_2\|_{2q}
\le\left(\sum_{j=1}^p\|x_j\|_{2q}^2\right)^{1/2}
\le Cq^{c_{\mathrm{dep}}}\sqrt p,
\]
\[
\|\|\mathbf r\|_2\|_{2q}
\le Cq e^{-cs^{\varkappa}}\sqrt p.
\]
Consequently, H\"older's inequality yields
\begin{align*}
\left\|p^{-1/2}\mathbf x^\top\mathbf r\right\|_q
&\le p^{-1/2}
 \|\|\mathbf x\|_2\|_{2q}
 \|\|\mathbf r\|_2\|_{2q}
 \le Cq^{c_{\mathrm{dep}}+1}\sqrt p\,e^{-cs^{\varkappa}},\\
\left\|p^{-1/2}\|\mathbf r\|_2^2\right\|_q
&\le Cq^2\sqrt p\,e^{-2cs^{\varkappa}}.
\end{align*}
Using
\[
\left|p^{-1/2}\{\|\mathbf x+\mathbf r\|_2^2-
\|\mathbf x\|_2^2\}\right|
\le2p^{-1/2}|\mathbf x^\top\mathbf r|
+p^{-1/2}\|\mathbf r\|_2^2
\]
and Markov's inequality with the same $q$, followed by a union bound over the $n-1$ quadratic indices, shows that their maximum change is at most
\[
CL_n^{c_{\mathrm{cmp}}}\sqrt p\,e^{-cs^{\varkappa}}
\le C(np)^{-A}
\]
with probability at least $1-C(np)^{-A}$ when $c_0$ is sufficiently large.  The process $\{\bepsilon_i^{(s)}\}$ is $s$-dependent, and its retained large blocks are therefore independent.

For branch (AM), let $\mathcal B_\ell=(\bepsilon_i:i\in I_\ell)$ and let
$\mathcal B_\ell^*$ be mutually independent blocks satisfying
$\mathcal B_\ell^*\overset d=\mathcal B_\ell$, independent of the complete
original sequence.  The dependence reduction is carried out by a one-sided
sequential replacement.  For $r=0,\ldots,L$, define the hybrid block array
\[
\boldsymbol{\mathcal B}^{(r)}
=(\mathcal B_1^*,\ldots,\mathcal B_r^*,
  \mathcal B_{r+1},\ldots,\mathcal B_L),
\]
where $\boldsymbol{\mathcal B}^{(0)}$ is the retained original array and
$\boldsymbol{\mathcal B}^{(L)}$ consists of independent copies.  If $H$
denotes the smooth composition in the statement of the lemma, then
\[
\E H\{\boldsymbol{\mathcal B}^{(0)}\}
-\E H\{\boldsymbol{\mathcal B}^{(L)}\}
=\sum_{\ell=1}^L
\E\left[
H\{\boldsymbol{\mathcal B}^{(\ell-1)}\}
-H\{\boldsymbol{\mathcal B}^{(\ell)}\}
\right].
\]
At the $\ell$th replacement put
\[
\mathcal F_{\ell-1}^{*,\mathrm{past}}
=\sigma(\mathcal B_1^*,\ldots,\mathcal B_{\ell-1}^*),
\qquad
\mathcal F_{\ell+1}^{+}
=\sigma(\mathcal B_{\ell+1},\ldots,\mathcal B_L),
\]
with the trivial sigma-field convention at the two ends.  By construction,
\[
\mathcal F_{\ell-1}^{*,\mathrm{past}}
\perp\!\!\!\perp
\sigma(\mathcal B_\ell)\vee\mathcal F_{\ell+1}^{+},
\qquad
\mathcal B_\ell^*
\perp\!\!\!\perp
\mathcal F_{\ell-1}^{*,\mathrm{past}}\vee\mathcal F_{\ell+1}^{+}.
\]
Moreover, $I_\ell$ and every index generating
$\mathcal F_{\ell+1}^{+}$ are separated by the deleted block $J_\ell$.
Hence the definition of the vector strong-mixing coefficient gives
\[
\alpha\{\sigma(\mathcal B_\ell),
          \mathcal F_{\ell+1}^{+}\}
\le \alpha_{\epsilon,p}(s).
\]
This one-sided comparison is the point of the sequential ordering: no bound is
needed between $\mathcal B_\ell$ and the joint sigma-field generated by
original blocks on both sides of $I_\ell$.

For fixed values of the preceding replacement blocks, regard $H$ as a
function of its $\ell$th block argument, with the future original blocks left
random, and expand at the zero block.  Taylor's formula with integral
remainder gives
\begin{align*}
&H\{\boldsymbol{\mathcal B}^{(\ell-1)}\}
-H\{\boldsymbol{\mathcal B}^{(\ell)}\}\\
={}&DH_\ell(0)[\mathcal B_\ell-\mathcal B_\ell^*]\\
&+\frac12D^2H_\ell(0)[\mathcal B_\ell,\mathcal B_\ell]
 -\frac12D^2H_\ell(0)[\mathcal B_\ell^*,\mathcal B_\ell^*]
 +R_{\ell,3}.
\end{align*}
After truncating every normalized block projection at
$T_A=CL_n^{c_{\mathrm{dep}}+2}$, the derivative sums of the two log-sum-exp
maps and the block-weight bounds above imply
\[
\E|R_{\ell,3}|
\le C\|f\|_{C^3}L_n^{c_{\mathrm{cmp}}}
\sup_u w_{\ell,u}^3.
\]
Lemma~\ref{lem:vector-basic}(ii), Markov's inequality with
$q=2(A+8)L_n$, and a union bound over the block projections show that the
total probability of truncation is at most $C(np)^{-A}$.

It remains to control the linear and quadratic Taylor terms.  Let $U_\ell$
be any truncated scalar block monomial of degree one or two that occurs in
one of these terms, centered by its marginal expectation.  Since
$\mathcal F_{\ell-1}^{*,\mathrm{past}}$ is independent of
$(U_\ell,\mathcal F_{\ell+1}^{+})$,
\[
\E(U_\ell\mid
\mathcal F_{\ell-1}^{*,\mathrm{past}}\vee\mathcal F_{\ell+1}^{+})
=
\E(U_\ell\mid\mathcal F_{\ell+1}^{+})
\quad\text{a.s.}
\]
The corresponding centered monomial formed from $\mathcal B_\ell^*$ has
conditional expectation zero because the replacement block is independent of
$\mathcal F_{\ell-1}^{*,\mathrm{past}}\vee\mathcal F_{\ell+1}^{+}$ and has the same marginal
moments as $\mathcal B_\ell$.  Define
\[
V_\ell
=\Sgn\{\E(U_\ell\mid\mathcal F_{\ell+1}^{+})\}.
\]
Then $V_\ell$ is $\mathcal F_{\ell+1}^{+}$-measurable,
$|V_\ell|\le1$, and
\begin{align*}
\E\left|\E(U_\ell\mid\mathcal F_{\ell+1}^{+})\right|
&=\E(U_\ell V_\ell)
=\Cov(U_\ell,V_\ell).
\end{align*}
Davydov's inequality \citep{davydov1968} with conjugate exponents
$q,q,(1-2/q)^{-1}$ therefore yields, for $q=(A+8)L_n$,
\begin{align*}
\E\left|\E(U_\ell\mid\mathcal F_{\ell+1}^{+})\right|
&\le8\|U_\ell\|_q\|V_\ell\|_q
       \alpha_{\epsilon,p}(s)^{1-2/q}\\
&\le C T_A^2\alpha_{\epsilon,p}(s)^{1-2/q}\\
&\le C(np)^{-A-3},
\end{align*}
provided $c_0$ is sufficiently large.  Conditioning first on
$\mathcal F_{\ell-1}^{*,\mathrm{past}}$ and then integrating over its distribution shows that
this same bound applies to every random derivative coefficient in the
$\ell$th Taylor expansion.  It therefore controls the failure of the linear
and quadratic terms to cancel.  Summing over all $L\le n$ sequential
replacements and over the derivative contractions contributes at most
$C(np)^{-A}$.

The third-order terms satisfy
\begin{align*}
\sum_{\ell=1}^L\sup_u w_{\ell,u}^3
&\le C(b/n)^{1/2}, &&\gamma=0,\\
\sum_{\ell=1}^L\sup_u w_{\ell,u}^3
&\le C\{(b/n)^{1/2}+(b/\lambda_n)^{1/2}\},
&&\gamma=1/2.
\end{align*}
Thus the total sequential dependence-reduction remainder is bounded by the
corresponding terms in $r_{n,\gamma}$.

The retained independent blocks are next replaced one at a time by independent
Gaussian blocks having the same within-block covariance matrices.  For the
$\ell$th replacement, Taylor expansion at the zero block gives
\[
DH(0)[\mathbf Z_\ell-\mathbf Z_\ell^G]
+\frac12D^2H(0)[\mathbf Z_\ell,\mathbf Z_\ell]
-\frac12D^2H(0)[\mathbf Z_\ell^G,\mathbf Z_\ell^G]
+R_{\ell,3}^G.
\]
The expectations of the constant and linear terms are zero, while covariance
matching gives
\[
\E D^2H(0)[\mathbf Z_\ell,\mathbf Z_\ell]
=\E D^2H(0)[\mathbf Z_\ell^G,\mathbf Z_\ell^G].
\]
The derivative-sum bounds for $\operatorname{smax}_{L_n^2}$ and the block moment bounds
therefore imply
\[
\E|R_{\ell,3}^G|
\le C\|f\|_{C^3}L_n^{c_{\mathrm{cmp}}}
\sup_u w_{\ell,u}^3.
\]
Summation over $\ell$ produces the same $(b/n)^{1/2}$ term, and additionally
the $(b/\lambda_n)^{1/2}$ term when $\gamma=1/2$.

It remains to restore deleted observations and cross-block Gaussian covariances.  Let $\widetilde{\mathbf Y}_u^G$ be the independent retained-block Gaussian CUSUM and $\mathbf Y_u^G$ the full Gaussian CUSUM.  Direct counting of deleted coefficients and covariance pairs separated by at least $s$ gives
\[
\max_{u,v}
\|\Cov(\mathbf Y_u^G,\mathbf Y_v^G)
-\Cov(\widetilde{\mathbf Y}_u^G,
       \widetilde{\mathbf Y}_v^G)\|_{\mathrm{op}}
\le d_{n,\gamma},
\]
where
\[
d_{n,0}\le C\{s/b+b/n+\eta_s\},
\qquad
d_{n,1/2}\le C\{s/b+b/n+b/\lambda_n+\eta_s\}.
\]
The Gaussian interpolation calculation in Lemma~\ref{lem:gaussian-cov-restoration} applies to these two Gaussian arrays: trace-centering terms cancel the Hessian trace terms, and the remaining quadratic--quadratic, quadratic--linear, and linear--linear contractions are bounded by the displayed covariance error and the derivative sums of the smooth maxima.  Hence covariance restoration contributes at most
$C\|f\|_{C^3}L_n^{c_{\mathrm{cmp}}}d_{n,\gamma}$.
Combining deletion, dependence reduction, independent-block Gaussian
replacement, and covariance restoration proves the first assertion.  For the
fixed-dimensional assertion, replace the two log-sum-exp maps by the coordinate
projection
\[
(\mathbf x_1,\ldots,\mathbf x_m)\longmapsto
F(x_1,\ldots,x_m).
\]
Its derivative contractions contain only the fixed $m$ coordinates, so the
preceding Taylor expansions apply with no logarithmic smoothing factor and
give
\[
\left|\E F\{Q_{n,p}(t_r):r\le m\}
-\E F\{Q_{n,p}^G(t_r):r\le m\}\right|
\le C_m\|F\|_{C^3}r_{n,0}.
\]

If deterministic shifts $\mathbf h_k$ are added, the first block derivative of a quadratic coordinate changes from
$2p^{-1/2}\mathbf U_k^\top\mathbf Z_{\ell,k}$ to
$2p^{-1/2}(\mathbf U_k+\mathbf h_k)^\top\mathbf Z_{\ell,k}$.  The projection bound gives
\[
\left\|p^{-1/2}\mathbf h_k^\top\mathbf Z_{\ell,k}
/w_{\ell,k}\right\|_q
\le Cq^{c_{\mathrm{dep}}}\frac{\|\mathbf h_k\|_2}{\sqrt p}.
\]
Every third derivative contains at most three first block derivatives, so the comparison error is multiplied by at most
$\{1+\max_k\|\mathbf h_k\|_2/\sqrt p\}^3$.  Scalar shifts of max coordinates do not alter derivative sums.
\end{proof}

Let $\mathbf B(t)=(B_1(t),\ldots,B_p(t))^\top$, where the $B_j$ are independent standard Brownian bridges, and define the ideal Gaussian quadratic process
\[
G_p(t)=\frac{\mathbf B(t)^\top\bOme\mathbf B(t)
-t(1-t)\Tr(\bOme)}{\{2\Tr(\bOme^2)\}^{1/2}}.
\]

\begin{lemma}[Limit of the ideal Gaussian quadratic process]\label{lem:ideal-gaussian-dense}
Under Assumption~\ref{ass:C2},
\[
G_p\Rightarrow V\quad\text{in }C[0,1],
\]
where $V$ is centered Gaussian with
\[
\Cov\{V(s),V(t)\}=K_B(s,t)^2.
\]
\end{lemma}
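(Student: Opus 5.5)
We prove the lemma by diagonalizing $\bOme$, which turns $G_p$ into a normalized sum of independent centered chi-square-type processes. We then get finite-dimensional convergence from a Lindeberg/Lyapunov central limit theorem and tightness from a fourth-moment Kolmogorov criterion. Write $\bOme=\sum_{j=1}^p\lambda_j\mathbf e_j\mathbf e_j^\top$. Since $\mathbf B$ has i.i.d. standard Brownian-bridge coordinates, its law is rotation invariant, so $\mathbf e_j^\top\mathbf B$ are again independent Brownian bridges $\widetilde B_j$. Hence
\[
G_p(t)=\sum_{j=1}^p\frac{\lambda_j}{\{2\Tr(\bOme^2)\}^{1/2}}\{\widetilde B_j(t)^2-t(1-t)\}
=:\sum_{j=1}^p c_{j}\,\xi_j(t),
\]
with $\sum_jc_j^2=1/2$ and $\max_jc_j^2\le\lambda_{\max}(\bOme)^2/\{2\Tr(\bOme^2)\}\le C/p\to0$ by Assumption~\ref{ass:C2}. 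The $\xi_j$ are i.i.d. centered processes with continuous paths.

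\textbf{Covariance.} For a single bridge, Isserlis' formula gives $\Cov\{\widetilde B(s)^2,\widetilde B(t)^2\}=2K_B(s,t)^2$. Therefore $\Cov\{G_p(s),G_p(t)\}=2K_B(s,t)^2\sum_jc_j^2=K_B(s,t)^2$ exactly, for every $p$.

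\textbf{Finite-dimensional convergence.} Fix $t_1,\ldots,t_m$ and real weights $\alpha_r$. The variable $\sum_r\alpha_rG_p(t_r)=\sum_jc_j\zeta_j$ is a sum of independent centered terms, where $\zeta_j=\sum_r\alpha_r\xi_j(t_r)$ are i.i.d. with finite fourth moment. The Lyapunov ratio is bounded by $C\sum_jc_j^4/(\sum_jc_j^2)^2\le C\max_jc_j^2\to0$; equivalently this is the condition $\Tr(\bOme^4)/\{\Tr(\bOme^2)\}^2=O(p^{-1})$ noted after Assumption~\ref{ass:C3}. The Cramér--Wold device then gives convergence to the Gaussian vector with covariance $K_B(t_r,t_{r'})^2$. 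If the limiting variance is zero, the conclusion is trivial.

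\textbf{Tightness.} We aim for $\E|G_p(t)-G_p(s)|^4\le C|t-s|^2$ uniformly in $p$, which by Kolmogorov--Chentsov gives tightness in $C[0,1]$ and continuity of $V$. Write $\xi_j(t)-\xi_j(s)=A_jD_j-\E(A_jD_j)$ with $A_j=\widetilde B_j(t)-\widetilde B_j(s)$ and $D_j=\widetilde B_j(t)+\widetilde B_j(s)$. Gaussian hypercontractivity (second chaos) gives $\|\xi_j(t)-\xi_j(s)\|_4\le C\|A_j\|_2\|D_j\|_2\le C|t-s|^{1/2}$. Rosenthal's inequality for the independent summands $c_j\{\xi_j(t)-\xi_j(s)\}$ then yields
\[
\E|G_p(t)-G_p(s)|^4\le C\Big\{\Big(\sum_jc_j^2\Big)^2|t-s|^2+\sum_jc_j^4|t-s|^2\Big\}\le C|t-s|^2 .
\]
Alternatively, since $G_p$ itself lies in the second Wiener chaos, hypercontractivity applied directly to $G_p(t)-G_p(s)$ gives the fourth moment as at most $9$ times the squared variance. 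That variance equals $K_B(t,t)^2+K_B(s,s)^2-2K_B(s,t)^2\le C|t-s|$, which gives the same bound more cleanly.

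The main obstacle is only the bookkeeping of uniformity in $p$. That is handled entirely by the two spectral facts $\sum_jc_j^2=1/2$ and $\max_jc_j^2\to0$ from Assumption~\ref{ass:C2}. The vanishing of $G_p$ at $t=0,1$ is automatic, so the limit is identified on all of $[0,1]$.
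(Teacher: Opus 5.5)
Your proof is correct and follows essentially the paper's route: diagonalize $\bOme$ into independent centered squared-bridge terms, use Lindeberg/Lyapunov with the $O(p^{-1})$ fourth-moment ratio and Cram\'er--Wold for the finite-dimensional limits, then obtain a uniform bound $\E|G_p(t)-G_p(s)|^4\le C|t-s|^2$ and apply Kolmogorov--Chentsov for tightness; the only difference is that the paper computes the increment fourth moments explicitly via $\E(z^2-1)^4=60$, where you cite Rosenthal or hypercontractivity. One harmless slip: second-chaos hypercontractivity gives $\|X\|_4\le 3\|X\|_2$, i.e.\ $\E X^4\le 81(\E X^2)^2$ rather than $9(\E X^2)^2$ (the sharp Gaussian constant is $15$), and any constant suffices for the argument.
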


\begin{proof}
Let $\lambda_{1,p},\ldots,\lambda_{p,p}$ be the eigenvalues of $\bOme$.  Orthogonal invariance of $\mathbf B$ gives
\[
G_p(t)=\sum_{j=1}^p
\frac{\lambda_{j,p}}{\{2\Tr(\bOme^2)\}^{1/2}}
\{B_j(t)^2-t(1-t)\}.
\]
Fix $0<t_1<\cdots<t_m<1$ and $a_1,\ldots,a_m\in\mathbb R$.  Put
\[
Z_{j,p}=
\frac{\lambda_{j,p}}{\{2\Tr(\bOme^2)\}^{1/2}}
\sum_{r=1}^ma_r\{B_j(t_r)^2-t_r(1-t_r)\}.
\]
The $Z_{j,p}$ are independent and centered.  Their total variance is
\begin{align*}
\sum_{j=1}^p\Var(Z_{j,p})
&=\sum_{r,u=1}^ma_ra_uK_B(t_r,t_u)^2.
\end{align*}
Because Gaussian fourth moments are finite uniformly over the fixed time points,
\begin{align*}
\sum_{j=1}^p\E Z_{j,p}^4
&\le C_m
\frac{\sum_{j=1}^p\lambda_{j,p}^4}
     {\{\Tr(\bOme^2)\}^2}\\
&\le C_m
\frac{\|\bOme\|_{\mathrm{op}}^2}
     {\Tr(\bOme^2)}\\
&\le C_mp^{-1}.
\end{align*}
For every $\varepsilon>0$,
\[
\sum_j\E\{Z_{j,p}^2\ind{|Z_{j,p}|>\varepsilon}\}
\le\varepsilon^{-2}\sum_j\E Z_{j,p}^4
\le C_m\varepsilon^{-2}p^{-1}.
\]
The Lindeberg--Feller theorem and the Cram\'er--Wold device give finite-dimensional convergence.

For tightness, fix $0\le s\le t\le1$, put $d=t-s$, and define
\[
B_j^{-}(s,t)=B_j(t)-B_j(s),\qquad
B_j^{+}(s,t)=B_j(t)+B_j(s),
\]
\[
\Upsilon_j(s,t)
=B_j^{-}(s,t)B_j^{+}(s,t)
-\E\{B_j^{-}(s,t)B_j^{+}(s,t)\},\qquad
c_{j,p}^{\Omega}=\frac{\lambda_{j,p}}{\{2\Tr(\bOme^2)\}^{1/2}}.
\]
Then
\[
G_p(t)-G_p(s)=\sum_{j=1}^pc_{j,p}^{\Omega}\Upsilon_j(s,t),
\]
and the $\Upsilon_j(s,t)$ are independent and centered.  The Brownian-bridge covariance gives
\[
\Var\{B_j^{-}(s,t)\}\le d,\qquad
\Var\{B_j^{+}(s,t)\}\le4,
\qquad
|\Cov\{B_j^{-}(s,t),B_j^{+}(s,t)\}|\le2d^{1/2}.
\]
For a centered product of two jointly Gaussian variables,
\[
\E\{\Upsilon_j(s,t)^2\}
=\Var\{B_j^{-}(s,t)\}\Var\{B_j^{+}(s,t)\}
 +\Cov\{B_j^{-}(s,t),B_j^{+}(s,t)\}^2
\le Cd.
\]
Writing
$\{B_j^{-}(s,t),B_j^{+}(s,t)\}^{\top}=\mathbf L_j\mathbf z_j$
with $\mathbf z_j\sim N(\mathbf0,\mathbf I_2)$ diagonalizes
$\Upsilon_j(s,t)$ as
$\kappa_{1j}(z_{1j}^2-1)+\kappa_{2j}(z_{2j}^2-1)$.  Since
$\E(z^2-1)^2=2$ and $\E(z^2-1)^4=60$,
\[
\E\{\Upsilon_j(s,t)^4\}
\le15\big[\E\{\Upsilon_j(s,t)^2\}\big]^2\le Cd^2.
\]
Independence across $j$ therefore yields
\begin{align*}
\E|G_p(t)-G_p(s)|^4
&=\sum_{j=1}^p(c_{j,p}^{\Omega})^4\E\{\Upsilon_j(s,t)^4\}\\
&\quad+6\sum_{j<\ell}(c_{j,p}^{\Omega})^2(c_{\ell,p}^{\Omega})^2
   \E\{\Upsilon_j(s,t)^2\}\E\{\Upsilon_\ell(s,t)^2\}\\
&\le Cd^2\left\{\sum_{j=1}^p(c_{j,p}^{\Omega})^2\right\}^2
\le Cd^2.
\end{align*}
The Kolmogorov--Chentsov criterion now proves tightness in $C[0,1]$.
\end{proof}

\begin{lemma}[Boundary-standardized quadratic extreme]\label{lem:dense-boundary-extreme}
Suppose Assumption~\ref{ass:C2} holds,
$\lambda_n\asymp n^\lambda$ for some $\lambda\in(0,1)$, and
$\log n=o(p^{1/4})$.  Then
\[
A_{\mathrm{DE}}(\log h_n)
\max_{k\in\mathcal K_{1/2}}G_{p,1/2}(k/n)
-D_{\mathrm{DE}}(\log h_n)
\Longrightarrow Z_{\mathrm{Gu}}.
\]
\end{lemma}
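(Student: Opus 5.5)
We reduce the statement to a Darling--Erd\H{o}s limit for a chi-square-type process in logit time. Put $u=\log\{t/(1-t)\}$ and $Y_j(u)=B_j(t)/\{t(1-t)\}^{1/2}$. Each $Y_j$ is a stationary Gaussian process with covariance $e^{-|u-v|/2}$. This is the Ornstein--Uhlenbeck representation already used in Theorem~\ref{Th1}, up to a time rescaling. With $c_{j,p}^{\Omega}=\lambda_{j,p}/\{2\Tr(\bOme^2)\}^{1/2}$ as in Lemma~\ref{lem:ideal-gaussian-dense}, we have
\[
G_{p,1/2}(t)=\sum_{j=1}^pc_{j,p}^{\Omega}\{Y_j(u)^2-1\}=:Z_p(u).
\]
This is a stationary, non-Gaussian process with unit variance and correlation $e^{-|u-v|}$. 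The grid $\{k/n:k\in\mathcal K_{1/2}\}$ maps to a grid in $u$ on an interval of length $\log h_n$, with mesh at most $O(1/\lambda_n)$ near the ends. The argument proceeds in three steps: replace $Z_p$ by a Gaussian OU process at the relevant level, invoke the classical Darling--Erd\H{o}s theorem for the OU process, and control the grid discretization.

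\emph{Step 1 (reference limit).} Let $Z_{\mathrm{OU}}$ be the Gaussian process with covariance $e^{-|u-v|}$. By the Darling--Erd\H{o}s theorem for the stationary OU process, taken in the normalization that defines $A_{\mathrm{DE}}$ and $D_{\mathrm{DE}}$, the normalized supremum of $Z_{\mathrm{OU}}$ over an interval of length $T=\log h_n\to\infty$ converges to $Z_{\mathrm{Gu}}$. Here $h_n\to\infty$ because $\lambda_n\asymp n^\lambda$ with $\lambda<1$.

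\emph{Step 2 (discretization).} The grid in $u$ has mesh $\le C/\lambda_n$. A modulus-of-continuity bound on the $u$-scale, obtained from the fourth-moment increment bound in the proof of Lemma~\ref{lem:ideal-gaussian-dense} rewritten in logit time, combined with a chaining or union bound over the $O(n)$ grid cells, shows that the grid maximum and the continuous supremum differ by $o_p\{(\log\log h_n)^{-1/2}\}$. This is enough because $A_{\mathrm{DE}}(\log h_n)\asymp(\log\log h_n)^{1/2}$, and the mesh is polynomially small.

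\emph{Step 3 (Gaussian replacement at the extreme level).} This is the main obstacle. The relevant threshold is $x_n=A_{\mathrm{DE}}(T)^{-1}\{x+D_{\mathrm{DE}}(T)\}\sim(2\log T)^{1/2}$, where $T=\log h_n\asymp\log n$. We need
\[
\Pr\Bigl(\sup_{u\le T}Z_p(u)>x_n\Bigr)=\Pr\Bigl(\sup_{u\le T}Z_{\mathrm{OU}}(u)>x_n\Bigr)+o(1).
\]
The plan is to partition $[0,T]$ into unit-length cells, separated by small gaps of length $(\log T)^2$ that are negligible, and then apply the Poisson/Bonferroni (Berman-type) argument. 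This requires two ingredients.

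(a) \emph{Single-cell tails.} We need $\Pr(\sup_{\text{cell}}Z_p>x_n)=\Pr(\sup_{\text{cell}}Z_{\mathrm{OU}}>x_n)\{1+o(1)\}$ uniformly. Here $Z_p$ is a weighted sum of independent centered $\chi^2_1$ processes whose maximal weight is $\max_jc_{j,p}^{\Omega}\le Cp^{-1/2}$ by Assumption~\ref{ass:C2}. A Cram\'er-type moderate-deviation expansion therefore holds, with relative error $O\{x_n^3p^{-1/2}\}=O\{(\log\log n)^{3/2}p^{-1/2}\}$. To get a moderate deviation for the cell supremum, and not only at fixed points, we would use a smooth-max and Lindeberg comparison at level $x_n$ over a grid of mesh $x_n^{-2}$, as in Lemma~\ref{lem:block-comparison-general}, and control the within-cell oscillation as in Step 2. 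The hypothesis $\log n=o(p^{1/4})$ is much stronger than needed at these levels, which leaves slack for the polylogarithmic losses of the comparison.

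(b) \emph{Asymptotic independence across cells.} The correlation $e^{-|u-v|}$ makes the mixing of $Z_p$ exponential in the lag. More directly, $\Cov(Y_j(u)^2,Y_j(v)^2)=2e^{-|u-v|}$, so joint exceedance probabilities of distant cells factor up to $e^{-\text{gap}}$ errors. A Berman-type double sum bounds the dependent-pair contribution by $T^2e^{-(\log T)^2}\to0$ together with a near-pair term that is negligible.

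Combining (a) and (b), the exceedance counts of $Z_p$ and $Z_{\mathrm{OU}}$ over the cells have the same Poisson limit. By Step 1 this limit is $\Pr(Z_{\mathrm{Gu}}>x)$, and Step 2 transfers the result to the grid maximum over $\mathcal K_{1/2}$.
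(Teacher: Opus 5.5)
Your Steps 1 and 2 match the paper's final steps: the classical stationary Gaussian extreme-value theorem for $Z_{\mathrm{OU}}$ on an interval of length $\log h_n$, and modulus bounds to pass between grids and the continuum. The gap is Step 3. It carries all of the non-Gaussian content, but it is only sketched, and several of its steps do not work as stated.

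\emph{Absolute versus per-cell errors.} A smooth-max Lindeberg comparison yields \emph{absolute} errors. Your Poisson argument, however, needs each of the $\asymp T$ cell-exceedance probabilities (each of order $1/T$) matched to within $o(1/T)$. Suppose you pass from smooth functionals to probabilities with the generic anti-concentration bound $C\varepsilon\sqrt{\log N_c}$. Then the per-cell error is at least of order $p^{-1/8}$, up to $\log\log$ factors. Summing over cells requires roughly $\log n=o(p^{1/8})$, which $\log n=o(p^{1/4})$ does not give. Repairing this needs a tail-level bound such as $\Pr(x_n<\sup_{\text{cell}}Z_{\mathrm{OU}}\le x_n+\varepsilon)\lesssim x_n\varepsilon\,\Pr(\sup_{\text{cell}}Z_{\mathrm{OU}}>x_n)$, which you do not supply. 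The pointwise Cram\'er expansion says nothing about cell suprema.

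\emph{Blocking.} The blocking scheme is inconsistent. If unit-length cells are separated by gaps of length $(\log T)^2$, the gaps cover most of $[0,T]$ and are not negligible. You need blocks of length $\ell\to\infty$ with gaps of length $o(\ell)$.

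\emph{Cross-cell independence.} The identity $\Cov\{Y_j(u)^2,Y_j(v)^2\}=2e^{-|u-v|}$ does not by itself give factorization of exceedance events for a non-Gaussian process. You need an actual mixing coefficient; for example, the $\rho$-mixing coefficient of the vector of independent OU paths is $e^{-g/2}$ uniformly in $p$, which would suffice. In addition, the near-pair (anti-clustering) sum for $Z_p$ needs bivariate tail estimates, which you only declare negligible.

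The idea that removes all of this is to compare the maxima \emph{globally}, which is what the paper does. The logit interval has length $\log h_n\asymp\log n$, so a regular grid of mesh $(\log n)^{-8}$ on it has $N$ points with $\log N=O(\log\log n)$. The grid vector is $\sum_{j\le p}\boldsymbol\Xi_j$ with independent summands $\boldsymbol\Xi_j=c_{j,p}^{\Omega}\{Y_j(u_r)^2-1\}_r$. Replace these one at a time by Gaussian vectors with the same covariance, using $\operatorname{smax}_\beta$ with $\beta=\log N/\varepsilon$ and a $C^3$ smoothed indicator. Since $\sum_j|c_{j,p}^{\Omega}|^3\le Cp^{-1/2}$, this costs $C(\log N)^2/(\sqrt p\,\varepsilon^3)$. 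Nazarov's inequality for the unit-variance Gaussian grid vector then gives Kolmogorov distance $C\{(\log N)^7/p\}^{1/8}\to0$ between $\max_rZ_p(u_r)$ and $\max_rZ_{\mathrm{OU}}(u_r)$.

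This bound is uniform in the threshold, so it passes through the monotone Darling--Erd\H{o}s normalization with no relative-error requirement. All dependence across the interval is then handled on the Gaussian side by the classical theorem. Your cellwise bookkeeping is exactly what multiplies the comparison error by the number of cells; taking a single cell is the paper's argument. Your Step 2 then compares the regular grid, the continuum, and the observed logit grid, whose mesh $C/\lambda_n$ is $o\{(\log n)^{-8}\}$.
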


\begin{proof}
Let $\lambda_{1,p},\ldots,\lambda_{p,p}$ be the eigenvalues of $\bOme$ and put
\[
c_{j,p}^{\Omega}=\frac{\lambda_{j,p}}{\{2\Tr(\bOme^2)\}^{1/2}}.
\]
Orthogonal invariance gives, for
$u=\log\{t/(1-t)\}$,
\[
G_{p,1/2}(t)=\sum_{j=1}^pc_{j,p}^{\Omega}\{Y_j(u)^2-1\},
\]
where $Y_1,\ldots,Y_p$ are independent stationary Ornstein--Uhlenbeck
processes with
\[
\Cov\{Y_j(u),Y_j(v)\}=e^{-|u-v|/2}.
\]
Assumption~\ref{ass:C2} implies
\[
\max_j\sqrt p\,|c_{j,p}^{\Omega}|\le C,
\qquad
2\sum_{j=1}^p(c_{j,p}^{\Omega})^2=1,
\qquad
\sum_{j=1}^p|c_{j,p}^{\Omega}|^3\le Cp^{-1/2}.
\]
Moreover,
\begin{align*}
\Cov\{G_{p,1/2}(s),G_{p,1/2}(t)\}
&=2\sum_{j=1}^p(c_{j,p}^{\Omega})^2
 e^{-|u_s-u_t|}\\
&=e^{-|u_s-u_t|}.
\end{align*}

Put
\[
L_{\mathrm{OU},n}=\log h_n,
\qquad R_{\mathrm{OU},n}=L_{\mathrm{OU},n}/2,
\qquad \delta_{u,n}=(\log n)^{-8},
\]
and let $\mathcal U_n$ be the regular grid of mesh at most $\delta_{u,n}$ on
$[-R_{\mathrm{OU},n},R_{\mathrm{OU},n}]$.  Its cardinality $N_{\mathrm{OU},n}$ satisfies
\[
N_{\mathrm{OU},n}\le C\{1+L_{\mathrm{OU},n}/\delta_{u,n}\},
\qquad
\log N_{\mathrm{OU},n}\le C\log\log(en).
\]
For $u_r\in\mathcal U_n$, define
\[
\Xi_{j,r}^{\mathrm{OU}}=c_{j,p}^{\Omega}\{Y_j(u_r)^2-1\}.
\]
The vectors $\boldsymbol\Xi_j^{\mathrm{OU}}=(\Xi_{j,r}^{\mathrm{OU}}:u_r\in\mathcal U_n)$ are independent
and centered, and
\[
G_{p,1/2}(u_r)=\sum_{j=1}^p \Xi_{j,r}^{\mathrm{OU}}.
\]
Since $Y_j(u_r)^2-1$ has uniformly bounded moments of every fixed
order,
\[
\sup_{r_1,r_2,r_3}
\E\left|
\{Y_j(u_{r_1})^2-1\}
\{Y_j(u_{r_2})^2-1\}
\{Y_j(u_{r_3})^2-1\}
\right|\le C.
\]
Let $\boldsymbol\Xi_j^{G,\mathrm{OU}}$ be independent centered Gaussian vectors having the
same covariance matrices as
\[
\boldsymbol\Xi_j^{\mathrm{OU}}
=\{\Xi_{j,r}^{\mathrm{OU}}:u_r\in\mathcal U_n\}.
\]
Then $\sum_j\boldsymbol\Xi_j^{G,\mathrm{OU}}$ has the same covariance as a stationary
Gaussian vector
$\{Z_{\mathrm{OU}}(u_r):u_r\in\mathcal U_n\}$ with correlation
$e^{-|u-v|}$.

Fix $x\in\mathbb R$ and $\varepsilon\in(0,1)$.  Choose a three-times
continuously differentiable function $\psi_{x,\varepsilon}$ satisfying
\[
\ind{z\le x}\le \psi_{x,\varepsilon}(z)
\le\ind{z\le x+\varepsilon},
\qquad
\|\psi_{x,\varepsilon}^{(r)}\|_\infty\le C\varepsilon^{-r},
\quad r=1,2,3.
\]
Apply the smooth maximum $\operatorname{smax}_\beta$ defined above to vectors in $\mathbb R^{N_{\mathrm{OU},n}}$, with
\[
\beta=\frac{\log N_{\mathrm{OU},n}}{\varepsilon}.
\]
The derivative identities of
\citet[Lemmas~A.5--A.6]{MR3161448} imply
\begin{align*}
\sum_r|\partial_r(\psi_{x,\varepsilon}\circ\operatorname{smax}_\beta)|
&\le C\varepsilon^{-1},\\
\sum_{r,s}|\partial_{rs}(\psi_{x,\varepsilon}\circ\operatorname{smax}_\beta)|
&\le C(\varepsilon^{-2}+\beta\varepsilon^{-1}),\\
\sum_{r,s,t}|\partial_{rst}(\psi_{x,\varepsilon}\circ\operatorname{smax}_\beta)|
&\le C(\varepsilon^{-3}+\beta\varepsilon^{-2}
+\beta^2\varepsilon^{-1}).
\end{align*}
Replace $\boldsymbol\Xi_1^{\mathrm{OU}},\ldots,\boldsymbol\Xi_p^{\mathrm{OU}}$ one at a time by $\boldsymbol\Xi_1^{G,\mathrm{OU}},\ldots,\boldsymbol\Xi_p^{G,\mathrm{OU}}$.  Taylor expansion through
order two cancels because the paired vectors have the same means and
covariances.  The integral third-order remainder and
$\sum_j|c_{j,p}^{\Omega}|^3\le Cp^{-1/2}$ give
\begin{align*}
&\left|\E \psi_{x,\varepsilon}
\left\{\operatorname{smax}_\beta\left(\sum_j\boldsymbol\Xi_j^{\mathrm{OU}}\right)\right\}
-
\E \psi_{x,\varepsilon}
\left\{\operatorname{smax}_\beta\left(\sum_j\boldsymbol\Xi_j^{G,\mathrm{OU}}\right)\right\}
\right|\\
&\quad\le
\frac C{\sqrt p}
(\varepsilon^{-3}+\beta\varepsilon^{-2}
+\beta^2\varepsilon^{-1})
\le C\frac{(\log N_{\mathrm{OU},n})^2}{\sqrt p\,\varepsilon^3}.
\end{align*}
Furthermore,
\[
0\le\operatorname{smax}_\beta(z)-\max_rz_r
\le\frac{\log N_{\mathrm{OU},n}}{\beta}=\varepsilon.
\]
The Gaussian anti-concentration inequality
\citep[Lemma~A.1]{chernozhukov2017central} therefore yields
\begin{align*}
&\sup_x\left|
\Pr\left\{\max_{u_r\in\mathcal U_n}G_{p,1/2}(u_r)\le x\right\}
-
\Pr\left\{\max_{u_r\in\mathcal U_n}Z_{\mathrm{OU}}(u_r)\le x\right\}
\right|\\
&\quad\le
C\varepsilon\sqrt{\log N_{\mathrm{OU},n}}
+C\frac{(\log N_{\mathrm{OU},n})^2}{\sqrt p\,\varepsilon^3}.
\end{align*}
Taking
$\varepsilon=p^{-1/8}(\log N_{\mathrm{OU},n})^{3/8}$ gives the explicit bound
\[
C\left\{\frac{(\log N_{\mathrm{OU},n})^7}{p}\right\}^{1/8},
\]
which converges to zero under $\log n=o(p^{1/4})$.

It remains to compare the regular-grid, continuous, and observed-grid maxima.
For $|u-v|\le1$,
\[
\|G_{p,1/2}(u)-G_{p,1/2}(v)\|_2^2
=2\{1-e^{-|u-v|}\}\le2|u-v|.
\]
The increment belongs to the second Gaussian Wiener chaos.  Gaussian
hypercontractivity \citep[Theorem~5.10]{janson1997gaussian} gives, for every
$q\ge2$,
\[
\|G_{p,1/2}(u)-G_{p,1/2}(v)\|_q
\le Cq|u-v|^{1/2}.
\]
Subdivide every interval of $\mathcal U_n$ dyadically.  At level $m$, there
are at most
\[
N_{\mathrm{OU},n,m}\le C N_{\mathrm{OU},n}2^m
\]
increments of length at most $\delta_{u,n}2^{-m}$.  With
$q_{\mathrm{OU},n,m}=2\log(eN_{\mathrm{OU},n,m})$, Markov's inequality gives
\begin{align*}
&\Pr\left
\{\max_{\ell\le N_{\mathrm{OU},n,m}}
|\mathfrak d_{\ell,m}^{\mathrm{dyad}}|>
Ce q_{\mathrm{OU},n,m}(\delta_{u,n}2^{-m})^{1/2}\right\}\\
&\qquad\le
N_{\mathrm{OU},n,m}
\left\{\frac{Cq_{\mathrm{OU},n,m}(\delta_{u,n}2^{-m})^{1/2}}
{Ce q_{\mathrm{OU},n,m}(\delta_{u,n}2^{-m})^{1/2}}\right\}^{q_{\mathrm{OU},n,m}}
\le N_{\mathrm{OU},n,m}^{-1}.
\end{align*}
Summing over $m\ge0$ and then summing the dyadic increments shows that
\[
\sup_{|u-v|\le\delta_{u,n}}
|G_{p,1/2}(u)-G_{p,1/2}(v)|
=O_p\left[
\delta_{u,n}^{1/2}\log\{eL_{\mathrm{OU},n}/\delta_{u,n}\}
\right].
\]
For the Gaussian process $Z_{\mathrm{OU}}$, the first-chaos inequality
$\|Z_{\mathrm{OU}}(u)-Z_{\mathrm{OU}}(v)\|_q
\le Cq^{1/2}|u-v|^{1/2}$ gives the same, slightly smaller, bound.
Consequently,
\[
A_{\mathrm{DE}}(L_{\mathrm{OU},n})
\delta_{u,n}^{1/2}\log\{eL_{\mathrm{OU},n}/\delta_{u,n}\}
\longrightarrow0.
\]

The observed logit grid
$u_k=\log\{k/(n-k)\}$ covers the same interval and satisfies
\[
\max_{k\in\mathcal K_{1/2}}(u_{k+1}-u_k)
\le C/\lambda_n=o(\delta_{u,n}).
\]
Every point of either grid is therefore within $2\delta_{u,n}$ of a point of the
other grid.  The preceding modulus bounds imply that replacing either grid
maximum by the continuous maximum changes the
$A_{\mathrm{DE}}(L_{\mathrm{OU},n})$-normalized statistic by a quantity converging to zero in
probability.

Finally, the correlation $r(u)=e^{-|u|}$ satisfies
\[
1-r(u)=|u|+O(u^2),\qquad u\to0,
\]
and, for every $c>1$,
\[
\sup_{|u|\ge c\log L_{\mathrm{OU},n}}|r(u)|L_{\mathrm{OU},n}
\le L_{\mathrm{OU},n}^{1-c}\longrightarrow0.
\]
The stationary Gaussian extreme-value theorem
\citep[Chapter~12]{leadbetter1983extremes} gives
\[
\Pr\left\{
A_{\mathrm{DE}}(L_{\mathrm{OU},n})
\sup_{|u|\le L_{\mathrm{OU},n}/2}Z_{\mathrm{OU}}(u)
-D_{\mathrm{DE}}(L_{\mathrm{OU},n})\le x
\right\}
\longrightarrow\exp\{-e^{-x}\}.
\]
Combining the Gaussian approximation, the two modulus bounds, and this
extreme-value limit proves the lemma.
\end{proof}

\begin{lemma}[Gaussian covariance restoration]\label{lem:gaussian-cov-restoration}
Let
\[
Q_{n,p}^G(t)=\frac{\|\mathbf U_n^G(t)\|_2^2-
\E\|\mathbf U_n^G(t)\|_2^2}{\sqrt p}.
\]
Under Assumptions~\ref{ass:C1}--\ref{ass:C2}, for every fixed finite set
$t_1,\ldots,t_m\in(0,1)$ and every $F\in C_b^3(\mathbb R^m)$,
\[
\left|\E F\left\{\frac{Q_{n,p}^G(t_r)}{\omega_p}:r\le m\right\}
-\E F\{G_p(t_r):r\le m\}\right|
\le C_m\|F\|_{C^3}n^{-1}.
\]
If deterministic vectors $\mathbf h_1,\ldots,\mathbf h_m\in\mathbb R^p$
are inserted in both Gaussian quadratic arrays, the same bound is multiplied by
\[
\left\{1+\max_{r\le m}\frac{\|\mathbf h_r\|_2}{\sqrt p}\right\}^{3}.
\]
Here the shifted finite-sample coordinate is
\[
\frac{\|\mathbf U_n^G(t_r)+\mathbf h_r\|_2^2
-\E\|\mathbf U_n^G(t_r)\|_2^2}{\sqrt p\,\omega_p},
\]
and the shifted ideal coordinate is defined by replacing
$\mathbf U_n^G(t_r)$ with $\bOme^{1/2}\mathbf B(t_r)$.

A joint version holds for the dense and max arrays.  Put
$L_n=\log(np)$, $\beta_n=L_n^2$, and let $\mathbf q_j$ denote the $j$th
canonical basis vector of $\mathbb R^p$.  Define
\[
C_{\gamma,j}^{B,0}(k)=
\left\{\frac{k}{n}\left(1-\frac{k}{n}\right)\right\}^{-\gamma}
\frac{\mathbf q_j^\top\bOme^{1/2}\mathbf B(k/n)}{\sigma_j},
\qquad \sigma_j^2=\Omega_{jj}.
\]
Let $(\mathcal C_{\gamma,n}^{B})^\pm$ contain both signs of these coordinates
and put
\[
G_{p,\gamma}(t)=\varphi_\gamma(t)G_p(t),
\qquad
\mathcal G_{p,\gamma,n}=\{G_{p,\gamma}(k/n):k\in\mathcal K_\gamma\}.
\]
There is a finite constant $c_{\mathrm{br}}$ such that, for every
$f\in C_b^3(\mathbb R^2)$,
\begin{align*}
&\Bigg|\E f\left[
\operatorname{smax}_{\beta_n}\left\{\frac{Q_{\gamma,n,p}^G(k/n)}{\omega_p}:k\in\mathcal K_\gamma\right\},
\operatorname{smax}_{\beta_n}\{(\mathcal C_{\gamma,n}^{G})^\pm\}
\right]\\
&\qquad-\E f\left[
\operatorname{smax}_{\beta_n}(\mathcal G_{p,\gamma,n}),
\operatorname{smax}_{\beta_n}\{(\mathcal C_{\gamma,n}^{B})^\pm\}
\right]\Bigg|\\
&\qquad\le C\|f\|_{C^3}L_n^{c_{\mathrm{br}}}
 d_{n,\gamma}^{\mathrm{br}},
\end{align*}
where
\[
d_{n,0}^{\mathrm{br}}=n^{-1},
\qquad
d_{n,1/2}^{\mathrm{br}}=\lambda_n^{-1}.
\]
The same joint bound holds after adding deterministic quadratic shifts
$\mathbf h_k$ and deterministic scalar shifts to the max coordinates, with the
right-hand side multiplied by
$\{1+\max_k\|\mathbf h_k\|_2/\sqrt p\}^{3}$.
Finally, with $\Delta_{s,t}=|t-s|+n^{-1}$,
\[
\E\left|\frac{Q_{n,p}^G(t)-Q_{n,p}^G(s)}{\omega_p}\right|^4
\le C\Delta_{s,t}^2.
\]
\end{lemma}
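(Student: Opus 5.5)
The plan is a Gaussian interpolation (Slepian--Stein smart path) between the two Gaussian arrays. The feature that drives it is that the finite-sample array and the ideal array differ only through their covariance kernels, and Lemma~\ref{lem:cusum-cov-general} bounds that difference in operator norm. Stack the vectors $\mathbf X_k=\mathbf U_n^G(k/n)$ and $\mathbf Y_k=\bOme^{1/2}\mathbf B(k/n)$, taken independent of each other, and put $\mathbf Z_k(\theta)=\sqrt\theta\,\mathbf X_k+\sqrt{1-\theta}\,\mathbf Y_k$ for $\theta\in[0,1]$. Define the interpolated quadratic coordinate
\[
\frac{\|\mathbf Z_k(\theta)\|_2^2-\E\|\mathbf Z_k(\theta)\|_2^2}{\sqrt p\,\omega_p}.
\]
Its centering is affine in $\theta$, and at the endpoints it reproduces $Q_{n,p}^G/\omega_p$ and $G_p$; the latter holds because $\E\|\mathbf Y_k\|_2^2=v(k/n)\Tr(\bOme)$ and $\{2\Tr(\bOme^2)\}^{1/2}=\sqrt p\,\omega_p$ by definition. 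The max coordinates are treated in the same way as linear functionals $\mathbf q_j^\top\mathbf Z_k(\theta)/\sigma_j$. Let $H(\theta)=f\circ(\operatorname{smax}_{\beta_n},\operatorname{smax}_{\beta_n})$ evaluated on these coordinates, or $F$ evaluated on finitely many coordinates in the fixed-$m$ case. Gaussian integration by parts gives
\[
\frac{d}{d\theta}\E H(\theta)=\frac12\sum_{k,l}\E\,\Tr\!\left[\nabla^2_{kl}H\,\{\Cov(\mathbf X_k,\mathbf X_l)-\Cov(\mathbf Y_k,\mathbf Y_l)\}\right]-\frac{d}{d\theta}(\text{centering terms}).
\]
It is therefore enough to bound the integrand uniformly in $\theta$.

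I will split the Hessian by the chain rule. The first piece is the first derivative of the outer map multiplied by the Hessian of a quadratic coordinate, which equals $2\mathbf I_p/(\sqrt p\,\omega_p)$ on the diagonal block $k=l$. This piece produces $\Tr\{\Cov(\mathbf X_k)-\Cov(\mathbf Y_k)\}/(\sqrt p\,\omega_p)$, and it cancels exactly against the $\theta$-derivative of the affine centering. This is the trace-centering cancellation. The remaining pieces are second derivatives of the outer map contracted with products of gradients. Write $\boldsymbol\Delta_{kl}=\Cov(\mathbf X_k,\mathbf X_l)-\Cov(\mathbf Y_k,\mathbf Y_l)$. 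The quadratic--quadratic contraction is $4\mathbf Z_k^\top\boldsymbol\Delta_{kl}\mathbf Z_l/(p\,\omega_p^2)$, the quadratic--linear contraction is $2\mathbf Z_k^\top\boldsymbol\Delta_{kl}\mathbf q_j/(\sqrt p\,\omega_p\sigma_j)$, and the linear--linear contraction is $\mathbf q_j^\top\boldsymbol\Delta_{kl}\mathbf q_{j'}/(\sigma_j\sigma_{j'})$. Each is bounded by $\|\boldsymbol\Delta_{kl}\|_{\mathrm{op}}$ times $\|\mathbf Z_k\|_2\|\mathbf Z_l\|_2/p$, $\|\mathbf Z_k\|_2/\sqrt p$, or $1$, respectively, using $\sigma_j^2\ge C_0$ and $\omega_p\to\omega>0$.

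Lemma~\ref{lem:cusum-cov-general} gives $\|\boldsymbol\Delta_{kl}\|_{\mathrm{op}}\le C/n$ uniformly for $\gamma=0$. For $\gamma=1/2$ the coordinates carry the weights $v(k/n)^{-1/2}v(l/n)^{-1/2}$, which are at most $Cn/\lambda_n$ on $\mathcal K_{1/2}$, so the weighted error is $O(\lambda_n^{-1})$; this produces $d_{n,\gamma}^{\mathrm{br}}$. The weights sum correctly because the smax second-derivative sums are $\le2\beta_n=2L_n^2$ and the first derivatives sum to one. I then bound the random factors by taking expectations: Hölder's inequality together with $\E\|\mathbf Z_k\|_2^{2q}\le(Cq)^qp^q$, which follows from Lemma~\ref{lem:vector-basic} and Gaussianity, gives $\E\max_k\|\mathbf Z_k\|_2^2/p\le CL_n$ through an $L_q$ union bound with $q\asymp L_n$. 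Collecting these factors gives $L_n^{c_{\mathrm{br}}}d_{n,\gamma}^{\mathrm{br}}$. In the fixed-$m$ version there is no smoothing factor, which gives $C_m\|F\|_{C^3}n^{-1}$.

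Deterministic shifts $\mathbf h_k$ enter only through the gradients $2(\mathbf Z_k+\mathbf h_k)/(\sqrt p\,\omega_p)$. Each contraction contains at most two such gradients, so the error is multiplied by $(1+\max_k\|\mathbf h_k\|_2/\sqrt p)^2\le(\cdot)^3$. Scalar shifts in the max coordinates do not affect any derivative.

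For the final fourth-moment bound, $Q_{n,p}^G(t)-Q_{n,p}^G(s)$ lies in the second Gaussian chaos. Hypercontractivity \citep[Theorem~5.10]{janson1997gaussian} reduces its fourth moment to the square of its variance. Lemma~\ref{lem:quadratic-cumulants} applied to the Gaussian process, where the cumulant remainder $R_{4,n}$ vanishes, together with the increment bound $\|\Cov\{\mathbf U_n(t)-\mathbf U_n(s)\}\|_{\mathrm{op}}\le C\Delta_{s,t}$, gives a variance of $O(\Delta_{s,t})$ and hence a fourth moment of $O(\Delta_{s,t}^2)$.

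I expect the main obstacle to be the quadratic--quadratic contraction. The cross-covariance $\Cov(\mathbf X_k,\mathbf X_l)$ is not symmetric, so the bound must come from the operator norm of the whole off-diagonal block, not from trace or Frobenius arguments, which would lose a factor $\sqrt p$. Controlling the random factor $\|\mathbf Z_k\|_2\|\mathbf Z_l\|_2/p$ inside the smax-weighted double sum also requires bounding the maximum over $k$ rather than a fixed $k$. I would handle both by pulling out $\max_{k,l}\|\boldsymbol\Delta_{kl}\|_{\mathrm{op}}$ deterministically and then bounding $\E\max_k\|\mathbf Z_k\|_2^2/p$ by $CL_n$.
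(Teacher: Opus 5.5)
Your proposal is correct and is essentially the paper's proof. The common steps are Gaussian interpolation between the finite-sample and ideal arrays, where your independent-copy path $\sqrt\theta\,\mathbf X+\sqrt{1-\theta}\,\mathbf Y$ is equivalent to the paper's linear covariance path $\boldsymbol\Sigma_\theta$; exact cancellation of the Hessian trace terms against the $\theta$-derivative of the centering; and control of the quadratic--quadratic, quadratic--linear and linear--linear contractions through the covariance error of Lemma~\ref{lem:cusum-cov-general} and the smooth-max derivative sums, with the $v^{-1/2}$ weights turning $n^{-1}$ into $\lambda_n^{-1}$ when $\gamma=1/2$. The differences are minor: you bound the quadratic--quadratic term pathwise by $\|\boldsymbol\Delta_{kl}\|_{\mathrm{op}}\max_k\|\mathbf Z_k\|_2^2/p$, whereas the paper pairs $\|\Delta_{ru}\|_{\mathrm F}\le C\sqrt p/n$ with $\|\Cov(\mathbf X_{\theta,u},\mathbf X_{\theta,r})\|_{\mathrm F}\le C\sqrt p$ without loss (so your claim that a Frobenius argument costs $\sqrt p$ holds only for a pathwise Frobenius bound), and you get the fourth-moment bound from second-chaos hypercontractivity instead of the paper's explicit diagonalization $\sum_r\kappa_r(z_r^2-1)$.
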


\begin{proof}
Fix $t_1,\ldots,t_m$ and stack the Gaussian CUSUM vectors into
$\mathbf X=(\mathbf X_1^\top,\ldots,\mathbf X_m^\top)^\top$.  Let
$\boldsymbol\Sigma_0$ be the covariance of the finite-sample stack and
$\boldsymbol\Sigma_1$ the covariance of
$(\bOme^{1/2}\mathbf B(t_1)^\top,\ldots,
\bOme^{1/2}\mathbf B(t_m)^\top)^\top$.  Put
\[
\boldsymbol\Sigma_\theta
=(1-\theta)\boldsymbol\Sigma_0+\theta\boldsymbol\Sigma_1,
\qquad 0\le\theta\le1,
\]
and let $\mathbf X_\theta\sim N(\mathbf0,\boldsymbol\Sigma_\theta)$.  For every
block $(r,u)$, Lemma~\ref{lem:cusum-cov-general} gives
\[
\|\Delta_{ru}\|_{\mathrm{op}}\le Cn^{-1},
\qquad
\|\Delta_{ru}\|_{\mathrm F}\le C\sqrt p\,n^{-1},
\qquad
\Delta_{ru}=(\boldsymbol\Sigma_1-\boldsymbol\Sigma_0)_{ru}.
\]

Let
\[
q_{r,\theta}(\mathbf X_\theta)
=\frac{\|\mathbf X_{\theta,r}+\mathbf h_r\|_2^2
-\Tr(\boldsymbol\Sigma_{\theta,rr})}
{\{2\Tr(\bOme^2)\}^{1/2}}.
\]
The Gaussian interpolation identity, obtained by differentiating the Gaussian
density and integrating twice by parts, is
\begin{align*}
\frac{d}{d\theta}\E F(q_{1,\theta},\ldots,q_{m,\theta})
={}&\frac12\sum_{a,b}\Delta_{ab}
\E\{\partial_{ab}[F(q_{1,\theta},\ldots,q_{m,\theta})]\}\\
&-\sum_{r=1}^m
\frac{\Tr(\Delta_{rr})}{\{2\Tr(\bOme^2)\}^{1/2}}
\E\{\partial_rF(q_{1,\theta},\ldots,q_{m,\theta})\}.
\end{align*}
The part of the Hessian in the first line obtained by differentiating
$\|\mathbf X_{\theta,r}\|_2^2$ twice equals the second line with the opposite
sign.  Hence these trace terms cancel exactly.  The remaining terms contain
two first derivatives of quadratic coordinates.  For each $r,u$ their absolute
contraction is bounded by
\begin{align*}
&\frac{C}{\Tr(\bOme^2)}
\left|
\E\{(\mathbf X_{\theta,r}+\mathbf h_r)^\top
\Delta_{ru}(\mathbf X_{\theta,u}+\mathbf h_u)\}
\right|\\
&\quad\le
\frac{C}{p}
\left\{\|\Delta_{ru}\|_{\mathrm F}
  \|\Cov(\mathbf X_{\theta,u},\mathbf X_{\theta,r})\|_{\mathrm F}
+\|\Delta_{ru}\|_{\mathrm{op}}
  \|\mathbf h_r\|_2\|\mathbf h_u\|_2\right\}\\
&\quad\le
\frac{C}{n}
\left\{1+\max_{v\le m}\frac{\|\mathbf h_v\|_2^2}{p}\right\},
\end{align*}
where $\Tr(\bOme^2)\asymp p$ and every covariance block has Frobenius norm at
most $C\sqrt p$.  Summation over the fixed number of blocks and integration
over $\theta$ prove the first assertion; the displayed cubic multiplier is a
convenient upper bound for the quadratic multiplier just obtained.

For the joint assertion, stack all Gaussian CUSUM vectors on the data grid and
apply the same interpolation identity to the composition of $f$ with the two
log-sum-exp maps.  In the $\gamma$th comparison, replace each CUSUM vector by
$v(k/n)^{-\gamma}$ times that vector before forming the quadratic coordinate.  Every first derivative of the quadratic coordinate then contains the same standardized linear CUSUM coefficient as the matched max coordinate.  The coefficient summations in the block comparison therefore give the rate $d_{n,\gamma}^{\mathrm{br}}$.  The derivative sums
\[
\sum_a|\partial_a\operatorname{smax}_{\beta_n}|=1,
\qquad
\sum_{a,b}|\partial_{ab}\operatorname{smax}_{\beta_n}|\le2\beta_n,
\qquad
\sum_{a,b,c}|\partial_{abc}\operatorname{smax}_{\beta_n}|\le6\beta_n^2
\]
prevent a factor equal to the number of grid or coordinate indices.  The
quadratic--quadratic contractions are bounded by the trace calculation above.
For a quadratic--linear contraction, Cauchy--Schwarz and
$\|\bOme\|_{\mathrm{op}}/\{\Tr(\bOme^2)\}^{1/2}\le Cp^{-1/2}$ give the same
bound up to a fixed power of $L_n$.  Linear--linear contractions are bounded by
the largest covariance error of two standardized CUSUM coordinates.
Lemma~\ref{lem:cusum-cov-general} gives
\[
\max_{j,j',k,\ell}
|\Cov(C_{0,j}^{G,0}(k),C_{0,j'}^{G,0}(\ell))
-\Cov(C_{0,j}^{B,0}(k),C_{0,j'}^{B,0}(\ell))|
\le Cn^{-1}.
\]
For $\gamma=1/2$, division by
$\{t(1-t)u(1-u)\}^{1/2}$ and
$t,u\in[\lambda_n/n,1-\lambda_n/n]$ gives instead
\[
\max_{j,j',k,\ell}
|\Cov(C_{1/2,j}^{G,0}(k),C_{1/2,j'}^{G,0}(\ell))
-\Cov(C_{1/2,j}^{B,0}(k),C_{1/2,j'}^{B,0}(\ell))|
\le C\lambda_n^{-1}.
\]
The chain rule therefore bounds the derivative of the joint interpolated
expectation by
\[
C\|f\|_{C^3}L_n^{c_{\mathrm{br}}}d_{n,\gamma}^{\mathrm{br}}
\left\{1+\max_k\frac{\|\mathbf h_k\|_2}{\sqrt p}\right\}^{3}.
\]
Scalar shifts of linear coordinates do not enter covariance derivatives.
Integration over $\theta$ proves the joint assertion.

For the increment bound, put
\[
\mathbf A=\mathbf U_n^G(t)-\mathbf U_n^G(s),
\qquad
\mathbf B=\mathbf U_n^G(t)+\mathbf U_n^G(s).
\]
Then
\[
Q_{n,p}^G(t)-Q_{n,p}^G(s)
=p^{-1/2}\{\mathbf A^\top\mathbf B-
\E(\mathbf A^\top\mathbf B)\}.
\]
Let $\mathbf C_A=\Cov(\mathbf A)$,
$\mathbf C_B=\Cov(\mathbf B)$, and
$\mathbf C_{AB}=\Cov(\mathbf A,\mathbf B)$.  Stack
$\mathbf Y=(\mathbf A^\top,\mathbf B^\top)^\top$ and put
\[
\mathbf H=\frac12
\begin{pmatrix}
\mathbf0&\mathbf I_p\\
\mathbf I_p&\mathbf0
\end{pmatrix}.
\]
Then $\mathbf A^\top\mathbf B=\mathbf Y^\top\mathbf H\mathbf Y$.  If
$\boldsymbol\Sigma_Y=\Cov(\mathbf Y)$ and
\[
\mathbf K_Y=\boldsymbol\Sigma_Y^{1/2}
\mathbf H\boldsymbol\Sigma_Y^{1/2}
=\mathbf P\operatorname{diag}(\kappa_1,\ldots,\kappa_{2p})\mathbf P^\top,
\]
a standard normal vector $\mathbf z$ gives
\[
\mathbf A^\top\mathbf B-\E(\mathbf A^\top\mathbf B)
\overset d=\sum_{r=1}^{2p}\kappa_r(z_r^2-1).
\]
Because $\E(z_r^2-1)^2=2$ and $\E(z_r^2-1)^4=60$, independence yields
\begin{align*}
\E\left\{\sum_{r=1}^{2p}\kappa_r(z_r^2-1)\right\}^{4}
&=12\left(\sum_{r=1}^{2p}\kappa_r^2\right)^2
 +48\sum_{r=1}^{2p}\kappa_r^4\\
&\le60\left(\sum_{r=1}^{2p}\kappa_r^2\right)^2.
\end{align*}
Moreover,
\[
2\sum_{r=1}^{2p}\kappa_r^2
=\Var(\mathbf A^\top\mathbf B)
=\Tr(\mathbf C_A\mathbf C_B)
 +\Tr(\mathbf C_{AB}\mathbf C_{AB}).
\]
Lemma~\ref{lem:cusum-cov-general} gives
\[
\|\mathbf C_A\|_{\mathrm{op}}\le C\Delta_{s,t},
\qquad
\|\mathbf C_B\|_{\mathrm{op}}\le C,
\qquad
\Tr(\mathbf C_A)\le Cp\Delta_{s,t}.
\]
The positive semidefiniteness of the covariance matrix of
$(\mathbf A^\top,\mathbf B^\top)^\top$ implies
\[
\mathbf C_{AB}
=\mathbf C_A^{1/2}\mathbf K\mathbf C_B^{1/2}
\]
for a contraction $\|\mathbf K\|_{\mathrm{op}}\le1$.  Hence
\begin{align*}
\Tr(\mathbf C_A\mathbf C_B)
&\le\|\mathbf C_B\|_{\mathrm{op}}\Tr(\mathbf C_A)
\le Cp\Delta_{s,t},\\
|\Tr(\mathbf C_{AB}\mathbf C_{AB})|
&\le\|\mathbf C_{AB}\|_{\mathrm F}^2\\
&\le\|\mathbf C_B\|_{\mathrm{op}}\Tr(\mathbf C_A)
\le Cp\Delta_{s,t}.
\end{align*}
Thus
\[
\E\{\mathbf A^\top\mathbf B-\E(\mathbf A^\top\mathbf B)\}^4
\le Cp^2\Delta_{s,t}^2.
\]
After division by $p^2$ and use of $\inf_p\omega_p>0$, the stated normalized
bound follows.
\end{proof}

\begin{proof}[of Theorem~\ref{Th1}]
Under $H_0$,
\[
W(\lfloor nt\rfloor)-\mu_{\lfloor nt\rfloor}=Q_{n,p}(t).
\]
The centering truncation satisfies
\begin{align*}
\max_{0\le k\le n}|\mu_k-\mu_{M,k}|
&\le C\sqrt p\sum_{|h|>M}\|\bGam(h)\|_{\mathrm{op}}\\
&\le C\sqrt p\,\eta_M.
\end{align*}
Because $M=(n\wedge p)^{1/8}$ and $\eta_M$ decreases faster than every power,
$\sqrt p\eta_M\le C_A(n\wedge p)^{-A}$ for any fixed $A$ after increasing $A$.

For every fixed collection of interior time points, Lemma~\ref{lem:block-comparison-general} first gives convergence of expectations for all functions in $C_b^3(\mathbb R^m)$.  If $f\in C_b(\mathbb R^m)$ and $\phi_\varepsilon$ is a standard mollifier, then $f_\varepsilon=f*\phi_\varepsilon\in C_b^\infty(\mathbb R^m)$ and
\[
\sup_{\|x\|_2\le R}|f_\varepsilon(x)-f(x)|\longrightarrow0
\qquad(\varepsilon\downarrow0)
\]
for every fixed $R<\infty$.  The fourth-moment bounds in Lemmas~\ref{lem:quadratic-cumulants} and~\ref{lem:ideal-gaussian-dense} make both finite-dimensional arrays tight.  Letting first $n,p\to\infty$, then $R\to\infty$, and finally $\varepsilon\downarrow0$ extends the comparison from $C_b^3$ to $C_b$.  Lemmas~\ref{lem:gaussian-cov-restoration} and~\ref{lem:ideal-gaussian-dense} therefore yield
\[
\left\{\frac{W(\lfloor nt_r\rfloor)-\mu_{M,\lfloor nt_r\rfloor}}{\omega_p}:r\le m\right\}
\Longrightarrow
\{V(t_r):r\le m\}.
\]
Assumption~\ref{ass:C2} gives
\[
\omega_p^2=2p^{-1}\Tr(\bOme^2)
\longrightarrow2\vartheta_\Omega.
\]

Lemma~\ref{lem:quadratic-cumulants} implies
\[
\E\left|
\frac{Q_{n,p}(t)-Q_{n,p}(s)}{\omega_p}
\right|^4
\le C\Delta_{s,t}^2.
\]
For grid points separated by at least $n^{-1}$, the fourth-moment bound is the polygonal-interpolation tightness criterion in \citet[Theorem~12.3]{billingsley2013convergence}.  The maximal jump between adjacent grid points satisfies
\begin{align*}
\Pr\left\{\max_{k<n}
|Q_{n,p}((k+1)/n)-Q_{n,p}(k/n)|>\varepsilon\right\}
&\le\varepsilon^{-4}
\sum_{k=0}^{n-1}C n^{-2}\\
&\le C\varepsilon^{-4}n^{-1}.
\end{align*}
Hence the step process and its polygonal interpolation differ by a quantity converging to zero in probability with the displayed $n^{-1}$ tail bound.  Tightness in $D[0,1]$ follows.  The limiting covariance is
$K_B(s,t)^2$, which equals $s^2(1-t)^2$ for $s\le t$.

For $0<s,t<1$, put $u=\log\{s/(1-s)\}$ and
$v=\log\{t/(1-t)\}$.  Since
\[
K_B(s,t)^2=s^2(1-t)^2
=s(1-s)t(1-t)e^{-|u-v|},
\qquad s\le t,
\]
the process has the Ornstein--Uhlenbeck representation in Theorem~\ref{Th1}.  Let $B_{\mathrm{BM}}$ be a standard Brownian motion.  The stationary Ornstein--Uhlenbeck process admits the representation
\[
\{Z_{\mathrm{OU}}(u):u\in\mathbb R\}
\overset d=
\{e^{-u}B_{\mathrm{BM}}(e^{2u}):u\in\mathbb R\}.
\]
With $x=e^u=t/(1-t)$ and $r=x^2$, it follows that
\begin{align*}
t(1-t)Z_{\mathrm{OU}}\!\left(\log\frac{t}{1-t}\right)
&\overset d=
\frac{x}{(1+x)^2}x^{-1}B_{\mathrm{BM}}(x^2)\\
&=\frac{B_{\mathrm{BM}}(r)}{(1+\sqrt r)^2}.
\end{align*}
Hence
\[
\sup_{0\le t\le1}V(t)
\overset d=
\sup_{r\ge0}\frac{B_{\mathrm{BM}}(r)}{(1+\sqrt r)^2}.
\]
The sample paths of $V$ are continuous on $[0,1]$, so their supremum is finite and nonnegative.  The Brownian representation and the reflection principle give
\[
\Pr\left\{\sup_{0<r\le1}B_{\mathrm{BM}}(r)\le0\right\}=0,
\]
so $\Pr\{\sup_{0\le t\le1}V(t)=0\}=0$.  Fix $x>0$.  On the event
$\{\sup_{0\le t\le1}V(t)=x\}$, continuity and $V(0)=V(1)=0$ imply that the maximum is attained in
$[m^{-1},1-m^{-1}]$ for at least one integer $m\ge3$.  Hence
\[
\left\{\sup_{0\le t\le1}V(t)=x\right\}
\subseteq
\bigcup_{m=3}^{\infty}
\left\{\sup_{m^{-1}\le t\le1-m^{-1}}V(t)=x\right\}.
\]
For every fixed $m$, the restriction of $V$ to
$[m^{-1},1-m^{-1}]$ is a continuous Gaussian process with
\[
\inf_{m^{-1}\le t\le1-m^{-1}}\Var\{V(t)\}
=\inf_{m^{-1}\le t\le1-m^{-1}}t^2(1-t)^2>0,
\]
and it is nonconstant because its covariance is not rank one on any
nondegenerate interval.  The Gaussian-supremum density theorem of
\citet{tsirelson1976density} therefore gives an absolutely continuous
supremum distribution on this compact interval.  Each event on the right has
probability zero.  Thus the distribution function $F_V$ of the full supremum is continuous.\end{proof}

\section{Uniform feasible centering and orientation-complete scale estimation}

For $0\le h\le M$, put
\[
s_h^{\mathrm{mr}}=M+h+1,
\qquad
\mathbf E_{t,h}^{\epsilon}=\bepsilon_t-\bepsilon_{t+s_h^{\mathrm{mr}}},
\]
and define
\[
Z_{t,h}=(\mathbf E_{t+h,h}^{\epsilon})^\top
\mathbf E_{t,h}^{\epsilon}
-\E\{(\mathbf E_{t+h,h}^{\epsilon})^\top
\mathbf E_{t,h}^{\epsilon}\}.
\]
For $\nu\in\{\mathrm F,\mathrm C\}$, let
\begin{align*}
H_{t,s}^{\mathrm F;h,k}
&=\{(\mathbf E_{t,h}^{\epsilon})^\top
       \mathbf E_{s,k}^{\epsilon}\}
  \{(\mathbf E_{t+h,h}^{\epsilon})^\top
       \mathbf E_{s+k,k}^{\epsilon}\},\\
H_{t,s}^{\mathrm C;h,k}
&=\{(\mathbf E_{t,h}^{\epsilon})^\top
       \mathbf E_{s+k,k}^{\epsilon}\}
  \{(\mathbf E_{t+h,h}^{\epsilon})^\top
       \mathbf E_{s,k}^{\epsilon}\}.
\end{align*}

\begin{lemma}[Covariance bounds for differenced products]\label{lem:difference-product-general}
Under Assumptions~\ref{ass:C1}--\ref{ass:C3}, uniformly for
$0\le h,k,h',k'\le M=o(n)$,
\[
\sum_{t,s}|\Cov(Z_{t,h},Z_{s,k})|\le Cnp.
\]
If, in addition, $p=o(n^{3/2})$, then, uniformly for
$\nu,\nu'\in\{\mathrm F,\mathrm C\}$,
\[
\left|\Cov(\widehat{\mathcal T}_{hk}^{\nu,\epsilon},
\widehat{\mathcal T}_{h'k'}^{\nu',\epsilon})\right|
\le C\left(\frac{p^2}{n}+\frac{p^3}{n^2}\right),
\]
where the superscript $\epsilon$ denotes the noise-only version of the estimator in the main paper.
\end{lemma}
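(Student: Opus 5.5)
Both bounds come from the same device. We expand each covariance by the product--cumulant (Leonov--Shiryaev) formula into Gaussian pairings plus connected cumulant contractions. The pairings are bounded with the operator-norm decay of Lemma~\ref{lem:vector-basic}(i); the connected parts are bounded with Lemma~\ref{lem:cumulant-coefficient-contraction} and Assumption~\ref{ass:C3}.

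\textbf{First bound.} Write $Z_{t,h}$ as a centered bilinear form $\sum_{a,b}x_a y_b\bepsilon_a^\top\bepsilon_b$. Here $x,y$ are the two-point difference coefficient sequences supported on $\{t+h,t+h+s_h^{\mathrm{mr}}\}$ and $\{t,t+s_h^{\mathrm{mr}}\}$, with entries $\pm1$. Then $\Cov(Z_{t,h},Z_{s,k})$ splits into two pieces.
\begin{itemize}
\item The two Gaussian pairings give traces of the form $\Tr\{\bGam(a_1-b_1)\bGam(a_2-b_2)\}$, possibly transposed. Each is bounded by $p\,\|\bGam(a_1-b_1)\|_{\mathrm{op}}\|\bGam(a_2-b_2)\|_{\mathrm{op}}$. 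The lags differ from $t-s$ only by offsets in $\{0,\pm h,\pm k,\pm s^{\mathrm{mr}}\}$, so summing over $t,s$ gives at most $Cnp\sum_r\varpi_{\mathrm{dep}}(|r|)\le Cnp$. The sum is uniform in $h,k\le M$ because the number of offset combinations is bounded.
\item The fourth-order cumulant term is a single connected block with $q=2$. Summing over $s$ first, we apply the second assertion of Lemma~\ref{lem:cumulant-coefficient-contraction}, keeping the indicator of the $s$-sequence in $\ell_1$ and the rest in $\ell_\infty$. For each $t$ this gives $C\,\mathfrak C_{\pi,p}\le Cp$, and the sum over $t$ gives $Cnp$.
\end{itemize}

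\textbf{Second bound.} Each $\widehat{\mathcal T}^{\nu,\epsilon}_{hk}$ is an average over $|\mathcal I_{hk}|\asymp n^2$ of products of two bilinear forms, $H^{\nu;h,k}_{t,s}$. So the covariance of two estimators equals $|\mathcal I_{hk}|^{-1}|\mathcal I_{h'k'}|^{-1}\sum\sum\Cov(H_{t,s},H_{t',s'})$, where each $H$ is a product of four linear forms in $\bepsilon$ with contracted coordinate indices. Leonov--Shiryaev gives a sum over partitions of the eight factors that are connected relative to the two $H$'s.

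We sort these partitions by their coordinate trace structure.
\begin{itemize}
\item Pure Gaussian pairings form at most two closed index loops, contributing a factor $p^2$ or $p$. Time localization forces at least one pairing to link $(t,s)$ with $(t',s')$, so the quadruple sum is $O(n^3)$. After dividing by $n^4$, the pairings give $O(p^2/n)$. Pairings that link both coordinates contribute $O(n^2 p^2)$ or less, which is smaller.
\item Partitions containing a genuine cumulant block of order $4,6,8$ are bounded with the first assertion of Lemma~\ref{lem:cumulant-coefficient-contraction}. Assumption~\ref{ass:C3} replaces each connected cumulant trace by $Cp$ where a Gaussian loop would give $p$.
\end{itemize}

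The main obstacle is the $p^3/n^2$ term. It arises from configurations where $(t,s)$ and $(t',s')$ are linked only through one time coordinate, while the remaining coordinate contractions close into three index loops. One example is the off-diagonal square $\Tr(\bGam\bGam)$ appearing cross-linked with the mean $\E H$. Here the covariance does not decay in the second time difference through a trace, and a naive count loses a factor. I would handle it by using the separation $s\ge t+\lfloor n/2\rfloor$ built into $\mathcal I_{hk}$. On that set, any pairing between $\bepsilon_t$-type and $\bepsilon_s$-type factors costs $\varpi_{\mathrm{dep}}(n/3)$ unless the moving-range shift $s_h^{\mathrm{mr}}\le 2M+1$ bridges it, which it cannot. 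This should reduce the surviving connected diagrams to those with at most two free time sums out of four, and at most three coordinate loops. Those diagrams are $O(n^2p^3)$, which after normalization gives $p^3/n^2$.

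Once the bookkeeping of diagrams is done this way, all other partitions are dominated by these two rates. Uniformity over $h,k,h',k'\le M$ follows because the offsets only shift lags by $O(M)=o(n)$, and the lag weight $(1+|r|)^8$ in $\mathfrak K_B^{\mathrm{cum}}$ absorbs these shifts.
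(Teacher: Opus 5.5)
Your proposal is correct and follows essentially the paper's own argument. The common structure is: the product--cumulant expansion into partitions with a block meeting both kernels; Assumption~\ref{ass:C3} applied to each connected component of coordinate pairs; anchor counting ($n^3$ for connected diagrams, $n^2$ for two-component ones); the separation $s\ge t+\lfloor n/2\rfloor$ in $\mathcal I_{hk}$ to make within-kernel cross-half contractions negligible; and division by $|\mathcal I_{hk}||\mathcal I_{h'k'}|\asymp n^4$. There are three small repairs: in the first bound, the $\ell_1$ norm must sit on the fixed-$t$ two-point sequence, not on the summed $s$-indicator, which would cost an extra factor $n$; in the second bound, the cumulant partitions must be counted by the same anchored summation as the pairings, not by the product-of-$\ell_2$-norms assertion of Lemma~\ref{lem:cumulant-coefficient-contraction}, which ignores the shared time indices; and your \emph{three loops, one link} diagram never survives, because a third loop forces an inner product to contract with itself across the two halves, which the separation makes negligible.
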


\begin{proof}
Each coordinate of $\mathbf E_{t,h}^{\epsilon}$ is a signed sum of
$\epsilon_{t,a}$ and $\epsilon_{t+s_h^{\mathrm{mr}},a}$, where $s_h^{\mathrm{mr}}=M+h+1$.
Consequently, an order-$r$ covariance or cumulant of differenced coordinates is a sum of at most $2^r$ covariances or cumulants of the primitive residual process.  All constants below therefore remain uniform in $h,k,h',k'\le M$.

The product--cumulant identity gives
\begin{align*}
\Cov(Z_{t,h},Z_{s,k})
={}&\sum_{a,b=1}^p
\Cov(E_{t+h,h,a}^{\epsilon},E_{s+k,k,b}^{\epsilon})
\Cov(E_{t,h,a}^{\epsilon},E_{s,k,b}^{\epsilon})\\
&+\sum_{a,b=1}^p
\Cov(E_{t+h,h,a}^{\epsilon},E_{s,k,b}^{\epsilon})
\Cov(E_{t,h,a}^{\epsilon},E_{s+k,k,b}^{\epsilon})\\
&+\sum_{a,b=1}^p
\Cum(E_{t+h,h,a}^{\epsilon},E_{t,h,a}^{\epsilon},
E_{s+k,k,b}^{\epsilon},E_{s,k,b}^{\epsilon}).
\end{align*}
For the first covariance pairing, put
\[
\mathcal O_{hk}^{\mathrm{mr}}=\{b s_k^{\mathrm{mr}}-a s_h^{\mathrm{mr}}:a,b\in\{0,1\}\}.
\]
Expansion of the differences and Lemma~\ref{lem:vector-basic}(i) give
\[
\|\Cov(\mathbf E_{t,h}^{\epsilon},
       \mathbf E_{s,k}^{\epsilon})\|_{\mathrm F}
\le C\sqrt p\sum_{u\in\mathcal O_{hk}^{\mathrm{mr}}}
\varpi_{\mathrm{dep}}(|s-t+u|).
\]
The shifted covariance has the same bound with a set
$\widetilde{\mathcal O}_{hk}^{\mathrm{mr}}$ containing at most four integer shifts.  Hence Cauchy--Schwarz for the Frobenius inner product yields
\begin{align*}
&\sum_{t,s}\sum_{a,b=1}^p
|\Cov(E_{t,h,a}^{\epsilon},E_{s,k,b}^{\epsilon})
  \Cov(E_{t+h,h,a}^{\epsilon},E_{s+k,k,b}^{\epsilon})|\\
&\quad\le Cp\sum_{t=1}^n\sum_{r\in\mathbb Z}
\left\{\sum_{u\in\mathcal O_{hk}^{\mathrm{mr}}}
\varpi_{\mathrm{dep}}(|r+u|)\right\}
\left\{\sum_{v\in\widetilde{\mathcal O}_{hk}^{\mathrm{mr}}}
\varpi_{\mathrm{dep}}(|r+v|)\right\}\\
&\quad\le Cnp,
\end{align*}
where the last inequality follows from
$\sup_u\sum_r\varpi_{\mathrm{dep}}(|r+u|)<\infty$.
For the second covariance pairing, transpose the second covariance
factor and use
\[
\|\Cov(\mathbf E_{t,h}^{\epsilon},
       \mathbf E_{s+k,k}^{\epsilon})\|_{\mathrm F}
\le C\sqrt p\sum_{u\in\mathcal O_{hk}^{(1),\mathrm{mr}}}
\varpi_{\mathrm{dep}}(|s-t+u|),
\]
\[
\|\Cov(\mathbf E_{t+h,h}^{\epsilon},
       \mathbf E_{s,k}^{\epsilon})\|_{\mathrm F}
\le C\sqrt p\sum_{v\in\mathcal O_{hk}^{(2),\mathrm{mr}}}
\varpi_{\mathrm{dep}}(|s-t+v|),
\]
where both shift sets contain at most four integers.  The same convolution
summation therefore bounds this pairing by $Cnp$.  For the fourth-cumulant
term, Assumption~\ref{ass:C3} with $q=2$ and separate anchoring of its
cumulant blocks gives
\[
\sum_{t,s}\sum_{a,b=1}^p
|\Cum(E_{t+h,h,a}^{\epsilon},E_{t,h,a}^{\epsilon},
E_{s+k,k,b}^{\epsilon},E_{s,k,b}^{\epsilon})|
\le Cnp.
\]
This proves the first assertion.

We next expand
$\Cov(H_{t,s}^{\nu;h,k},H_{t',s'}^{\nu';h',k'})$.
After selecting one of the two primitive residuals from every difference, each summand contains eight primitive residual coordinates and four repeated coordinate pairs.  Denote the two coordinate pairs from the first kernel by
$\mathcal P^{(1)}$ and those from the second kernel by
$\mathcal P^{(2)}$.  The moment--cumulant formula, followed by subtraction of
$\E H_{t,s}^{\nu;h,k}\E H_{t',s'}^{\nu';h',k'}$, leaves only partitions having at least one cumulant block that meets both kernel groups.

For such a partition, form the incidence graph whose four vertices are the repeated coordinate pairs and whose edges join two vertices whenever a cumulant block contains primitive variables from both.  A connected component containing $r\ge2$ pair vertices is bounded, after summing its coordinate indices and relative lags, by $C p$ from Assumption~\ref{ass:C3} with $q=r$.  A component consisting of one pair vertex is the expectation of an inner product whose two primitive times lie in opposite sample halves.  Its absolute coordinate sum is bounded by
\[
Cp\{\varpi_{\mathrm{dep}}(n/3)+n^{-8}\}.
\]
The covariance term produces $\varpi_{\mathrm{dep}}(n/3)$, whereas the factor $(1+\max|r_a|)^8$ in Assumption~\ref{ass:C3} produces $n^{-8}$ for a higher-order cumulant block spanning the two halves.

Suppose first that no component contains such a long temporal link.  If the incidence graph is connected, at least one summable relative-lag relation links an anchor from $(t,s)$ to an anchor from $(t',s')$.  Of the four anchor indices, at most three remain free.  Thus the sum over all admissible quadruples is bounded by $Cn^3$, while the spatial contractions and any internal second-moment factors are bounded by $Cp^2$.  If the graph has at least two components meeting both kernels, the absence of a long link forces separate left-half and right-half anchor relations.  At most two anchor indices remain free; the temporal sum is bounded by $Cn^2$, and the product of all spatial contractions is bounded by $Cp^3$.  Consequently,
\begin{align*}
&\sum_{(t,s)\in\mathcal I_{hk}}
\sum_{(t',s')\in\mathcal I_{h'k'}}
|\Cov(H_{t,s}^{\nu;h,k},H_{t',s'}^{\nu';h',k'})|\\
&\quad\le C\left[n^3p^2+n^2p^3
+n^4p^3\{\varpi_{\mathrm{dep}}(n/3)+n^{-8}\}\right].
\end{align*}
The sets $\mathcal I_{hk}$ satisfy
$|\mathcal I_{hk}|\asymp n^2$ uniformly because $M=o(n)$.  Dividing the last display by
$16|\mathcal I_{hk}||\mathcal I_{h'k'}|$ gives
\begin{align*}
&|\Cov(\widehat{\mathcal T}_{hk}^{\nu,\epsilon},
\widehat{\mathcal T}_{h'k'}^{\nu',\epsilon})|\\
&\quad\le C\left[
\frac{p^2}{n}+\frac{p^3}{n^2}
+p^3\{\varpi_{\mathrm{dep}}(n/3)+n^{-8}\}
\right].
\end{align*}
Since $p=o(n^{3/2})$ and $\varpi_{\mathrm{dep}}(n/3)$ decreases exponentially, the last term is bounded by the first two terms for all sufficiently large $n$.  This proves the second assertion for both orientations.
\end{proof}

For $g_h=\Tr\{\bGam(h)\}$, put
\[
N_h=n-M-2h-1,
\qquad
\widetilde g_h=\frac1{2N_h}\sum_{t=1}^{N_h}
\mathcal P_{t+h,h;t,h}^{(M)}.
\]
The deterministic coefficient of $g_h$ in $\mu_{M,k}$ is
\[
c_{k,h}^{(\mu)}=
\frac{k^2(n-k)^2}{n^3\sqrt p}
\{2-\ind{h=0}\}
\sum_{i=1}^{n-h}a_{i,k}a_{i+h,k}.
\]
Direct summation on the two sides of $k$ gives
\[
|c_{k,h}^{(\mu)}|\le C\frac{v_k}{\sqrt p},
\qquad
\sum_{h=0}^M|c_{k,h}^{(\mu)}|
\le C\frac{Mv_k}{\sqrt p},
\qquad v_k=\frac{k}{n}\left(1-\frac{k}{n}\right).
\]
Indeed,
\[
\left|\sum_{i=1}^{n-h}a_{i,k}a_{i+h,k}\right|
\le C\left(\frac1k+\frac1{n-k}+\frac{h}{k(n-k)}\right),
\]
and multiplication by $k^2(n-k)^2/n^3$ gives the displayed factor $v_k$.  Consequently, for
\[
c_{\gamma,k,h}^{(\mu)}=v_k^{-2\gamma}c_{k,h}^{(\mu)},
\qquad \gamma\in\{0,1/2\},
\]
we have, uniformly over $k\in\mathcal K_\gamma$,
\[
|c_{\gamma,k,h}^{(\mu)}|\le Cp^{-1/2},
\qquad
\sum_{h=0}^M|c_{\gamma,k,h}^{(\mu)}|\le CMp^{-1/2}.
\]

\begin{lemma}[Uniform feasible centering]\label{lem:centering-general}
Under the assumptions of Theorem~\ref{Th2},
\[
\max_{\gamma\in\{0,1/2\}}\max_{k\in\mathcal K_\gamma}
|\widetilde\mu_{\gamma,M,k}-\mu_{\gamma,M,k}|
=O_p(r_{\mu,n}),
\]
where $r_{\mu,n}$ is defined in Theorem~\ref{Th2}.  Equivalently,
\[
\max_{1\le k<n}
\frac{|\widetilde\mu_{M,k}-\mu_{M,k}|}{v_k}
=O_p(r_{\mu,n}).
\]
\end{lemma}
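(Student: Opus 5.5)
The difference $\widetilde\mu_{\gamma,M,k}-\mu_{\gamma,M,k}$ is a linear combination of trace errors:
\[
\widetilde\mu_{\gamma,M,k}-\mu_{\gamma,M,k}=\sum_{h=0}^M c_{\gamma,k,h}^{(\mu)}(\widetilde g_h-g_h).
\]
We therefore reduce the lemma to bounds on $\widetilde g_h-g_h$ that are uniform in $h\le M$. Using $\sum_h|c_{\gamma,k,h}^{(\mu)}|\le CMp^{-1/2}$ uniformly over $k\in\mathcal K_\gamma$, we get
\[
\max_{k\in\mathcal K_\gamma}|\widetilde\mu_{\gamma,M,k}-\mu_{\gamma,M,k}|\le CMp^{-1/2}\max_{0\le h\le M}|\widetilde g_h-g_h|.
\]
The equivalent $v_k$-normalized form follows from the same coefficient bound $|c_{k,h}^{(\mu)}|\le Cv_k/\sqrt p$. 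So the target is
\[
\max_{h\le M}|\widetilde g_h-g_h|=O_p\bigl(\sqrt{p/n}+p\,\eta_M+M\|\bdelta\|_2^2/n+M(\bdelta^\top\bOme\bdelta)^{1/2}/n\bigr),
\]
which after multiplication by $Mp^{-1/2}$ matches $r_{\mu,n}$ term by term.

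Write $\mathbf E_{t,h}^{(M)}=\mathbf E_{t,h}^{\epsilon}+\mathbf m_{t,h}$, where $\mathbf m_{t,h}=\bdelta(\ind{t>\tau}-\ind{t+s_h^{\mathrm{mr}}>\tau})$ is nonzero only for the at most $s_h^{\mathrm{mr}}\le 2M+1$ indices $t$ whose window straddles $\tau$, and equals $-\bdelta$ there. This splits $\widetilde g_h$ into four parts, which we handle in turn.
\begin{enumerate}
\item[(i)] \emph{Bias of the noise part.} The paper's expectation formula gives $\E\widetilde g_h^{\epsilon}-g_h=-\frac12g_{M+1}-\frac12g_{M+2h+1}$. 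Since $|g_r|\le p\,\|\bGam(r)\|_{\mathrm{op}}\le p\,\varpi_{\mathrm{dep}}(r)$ by Lemma~\ref{lem:vector-basic}(i), this bias is at most $Cp\,\eta_M$.
\item[(ii)] \emph{Fluctuation of the noise part.} We have $\widetilde g_h^{\epsilon}-\E\widetilde g_h^{\epsilon}=(2N_h)^{-1}\sum_t Z_{t,h}$. Lemma~\ref{lem:difference-product-general} gives $\Var(\cdot)\le Cnp/N_h^2\le Cp/n$. A union bound over $h\le M$ via Chebyshev costs a factor $\sqrt M$; this still gives the target rate $M/\sqrt n$ after the extra $Mp^{-1/2}$ factor only if we avoid that loss. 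So instead we bound $\E\max_h|\cdot|\le\sum_h(\Var)^{1/2}\le CM\sqrt{p/n}$ and absorb one $M$ into the coefficient sum: weight each $h$ by $|c_{\gamma,k,h}^{(\mu)}|\le Cp^{-1/2}$ (not the summed bound) and get $\E\max_k|\sum_hc_{\gamma,k,h}^{(\mu)}(\widetilde g_h^\epsilon-\E\widetilde g_h^\epsilon)|\le Cp^{-1/2}\sum_h\|\widetilde g_h^\epsilon-\E\widetilde g_h^\epsilon\|_2\le CM/\sqrt n$. This is exactly the first term of $r_{\mu,n}$, and the max over $k$ is free because the randomness enters only through the $M+1$ scalars $\widetilde g_h$.
\item[(iii)] \emph{Pure-signal part.} The term $(2N_h)^{-1}\sum_t\mathbf m_{t+h,h}^\top\mathbf m_{t,h}$ has at most $O(M)$ nonzero summands, each at most $\|\bdelta\|_2^2$. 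It is therefore $O(M\|\bdelta\|_2^2/n)$, giving the third term of $r_{\mu,n}$.
\item[(iv)] \emph{Cross terms.} These are $(2N_h)^{-1}\sum_t\{\mathbf m_{t+h,h}^\top\mathbf E_{t,h}^{\epsilon}+(\mathbf E_{t+h,h}^{\epsilon})^\top\mathbf m_{t,h}\}$, i.e.\ sums of $O(M)$ linear projections $\bdelta^\top\bepsilon_i$ with bounded coefficients. Bounding the variance of $\bdelta^\top\sum_i w_i\bepsilon_i$ by $\|w\|_2^2$ times the spectral bound (as in Lemma~\ref{lem:cusum-cov-general}, or crudely via $\bdelta^\top\bOme\bdelta$ plus summable remainders) gives standard deviation $O(\sqrt M(\bdelta^\top\bOme\bdelta)^{1/2})$ per $h$. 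Summing over $h$ as in (ii) yields at most $M^{3/2}(\bdelta^\top\bOme\bdelta)^{1/2}/(n\sqrt p)$. If needed, we sharpen this to the stated $M^2(\cdot)/(n\sqrt p)$ bound, which is weaker and hence fine.
\end{enumerate}

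Finally, $r_{\mu,n}\to0$. The term $\sqrt p\,M\eta_M$ is super-polynomially small, $M/\sqrt n\to0$, and the $\bdelta$ terms vanish because $\|\bdelta\|_2^2=o(n\sqrt p/M^2)$ and $\bdelta^\top\bOme\bdelta\le C_1\|\bdelta\|_2^2$.

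The main obstacle is step (ii): keeping uniformity over $k$ and over both weightings without losing factors of $n/\lambda_n$ or extra powers of $M$. This works only because the $v_k$ factor sits inside $c_{k,h}^{(\mu)}$ and the stochastic part is finite-dimensional (indexed by $h$ only). I would verify the coefficient bound $|\sum_ia_{i,k}a_{i+h,k}|\le C(1/k+1/(n-k)+h/\{k(n-k)\})$ carefully near the boundary $k\approx\lambda_n$, since that is where uniformity could fail.
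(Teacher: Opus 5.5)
Your proposal is correct and follows essentially the same route as the paper: the bias identity $\E\widetilde g_h-g_h=-\tfrac12 g_{M+1}-\tfrac12 g_{M+2h+1}$, the variance bound $Cp/n$ from Lemma~\ref{lem:difference-product-general}, the uniform coefficient bound $|c_{\gamma,k,h}^{(\mu)}|\le Cp^{-1/2}$, and the $O(M)$-support split of the signal into a deterministic term and mean--noise cross terms all match the paper. The remaining differences are minor. You control the stochastic sum by $\max_{h}|c_{\gamma,k,h}^{(\mu)}|\sum_h\|\cdot\|_2$, whereas the paper uses Cauchy--Schwarz in the form $\|\mathbf c_{\gamma,k}^{(\mu)}\|_2\,\|\mathbf X_g\|_2$; both give $M/\sqrt n$ and avoid the $\sqrt M$ loss your opening reduction to $\max_h|\widetilde g_h-g_h|$ would incur. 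Separately, passing from $\|\bdelta\|_2$ to $(\bdelta^\top\bOme\bdelta)^{1/2}$ in step (iv) uses $\lambda_{\min}(\bOme)\ge C_0$, which you should cite explicitly.
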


\begin{proof}
Under $H_0$, stationarity gives
\begin{align*}
\E\{(\mathbf E_{t+h,h}^{\epsilon})^\top
\mathbf E_{t,h}^{\epsilon}\}
={}&2g_h-g_{M+1}-g_{M+2h+1}.
\end{align*}
The summand does not depend on $t$, and there are exactly $N_h$ available products.  Hence
\[
\E(\widetilde g_h)-g_h
=-\frac12g_{M+1}-\frac12g_{M+2h+1}.
\]
By Lemma~\ref{lem:vector-basic}(i),
\[
|g_r|\le p\|\bGam(r)\|_{\mathrm{op}}
\le p\varpi_{\mathrm{dep}}(r),
\]
so, uniformly for $0\le h\le M$,
\[
|\E(\widetilde g_h)-g_h|
\le Cp\{\varpi_{\mathrm{dep}}(M+1)
        +\varpi_{\mathrm{dep}}(M+2h+1)\}
\le Cp\eta_M.
\]
Since $M=o(n)$, $N_h\ge n/2$ uniformly for $h\le M$ and all sufficiently large $n$.  Lemma~\ref{lem:difference-product-general} therefore yields
\[
\Var(\widetilde g_h)
\le\frac{C}{N_h^2}\sum_{t,s=1}^{N_h}
|\Cov(Z_{t,h},Z_{s,h})|
\le C\frac pn.
\]
Put
\[
\mathbf X_g=(\widetilde g_0-\E\widetilde g_0,\ldots,
\widetilde g_M-\E\widetilde g_M)^\top,
\qquad
\mathbf c_{\gamma,k}^{(\mu)}
=(c_{\gamma,k,0}^{(\mu)},\ldots,c_{\gamma,k,M}^{(\mu)})^\top.
\]
Then
\[
\max_{\gamma,k}\|\mathbf c_{\gamma,k}^{(\mu)}\|_2^2
\le C\frac Mp,
\qquad
\E\|\mathbf X_g\|_2^2
=\sum_{h=0}^M\Var(\widetilde g_h)
\le C\frac{Mp}{n}.
\]
Consequently,
\begin{align*}
&\E\left[
\max_{\gamma\in\{0,1/2\}}\max_{k\in\mathcal K_\gamma}
\left|\sum_{h=0}^Mc_{\gamma,k,h}^{(\mu)}
(\widetilde g_h-\E\widetilde g_h)\right|^2
\right]\\
&\quad\le
\max_{\gamma,k}\|\mathbf c_{\gamma,k}^{(\mu)}\|_2^2
\E\|\mathbf X_g\|_2^2
\le C\frac{M^2}{n}.
\end{align*}
Thus, for every $x>0$,
\[
\Pr\left\{
\max_{\gamma,k}
\left|\sum_{h=0}^Mc_{\gamma,k,h}^{(\mu)}
(\widetilde g_h-\E\widetilde g_h)\right|
>x\frac M{\sqrt n}
\right\}\le Cx^{-2}.
\]
The stochastic contribution is $O_p(Mn^{-1/2})$.  The expectation contribution is bounded by
\[
C\frac{M}{\sqrt p}\,p\eta_M
=C\sqrt p\,M\eta_M.
\]

Under the alternative, write
\[
\mathbf E_{t,h}^{(M)}
=\mathbf E_{t,h}^{\epsilon}+\mathbf d_{t,h},
\qquad
\mathbf d_{t,h}=\bdelta
\{\ind{t>\tau}-\ind{t+M+h+1>\tau}\}.
\]
The support set
\[
\mathcal I_h^{\mathrm{aff}}=
\{t:\mathbf d_{t,h}\ne\mathbf0\text{ or }
       \mathbf d_{t+h,h}\ne\mathbf0\}
\]
satisfies $|\mathcal I_h^{\mathrm{aff}}|\le CM$, and
$\|\mathbf d_{t,h}\|_2\le\|\bdelta\|_2$.  The exact expansion is
\begin{align*}
\widetilde g_h-\widetilde g_h^{\epsilon}
={}&\frac1{2N_h}\sum_{t=1}^{N_h}
(\mathbf E_{t+h,h}^{\epsilon})^\top\mathbf d_{t,h}
+\frac1{2N_h}\sum_{t=1}^{N_h}
\mathbf d_{t+h,h}^\top\mathbf E_{t,h}^{\epsilon}\\
&+\frac1{2N_h}\sum_{t=1}^{N_h}
\mathbf d_{t+h,h}^\top\mathbf d_{t,h}.
\end{align*}
Every summand vanishes outside $\mathcal I_h^{\mathrm{aff}}$, and $N_h\asymp n$, so
\[
\left|\frac1{2N_h}\sum_{t=1}^{N_h}
\mathbf d_{t+h,h}^\top\mathbf d_{t,h}\right|
\le C\frac{M\|\bdelta\|_2^2}{n}.
\]
Each mean--noise sum can be written as
$N_h^{-1}\sum_u w_{u,h}\bdelta^\top\bepsilon_u$, where
$|w_{u,h}|\le C$ and at most $CM$ coefficients are nonzero.  Lemma~\ref{lem:vector-basic}(ii) and
$\|\bdelta\|_2^2\le C_0^{-1}\bdelta^\top\bOme\bdelta$ give
\[
\left\|\widetilde g_h-\widetilde g_h^{\epsilon}
-\E(\widetilde g_h-\widetilde g_h^{\epsilon})\right\|_2
\le C\frac{M^{1/2}(\bdelta^\top\bOme\bdelta)^{1/2}}{n}.
\]
Using $|c_{\gamma,k,h}^{(\mu)}|\le C/\sqrt p$ and summing over $h$ yields
\begin{align*}
&\max_{\gamma,k}
\left|\sum_{h=0}^Mc_{\gamma,k,h}^{(\mu)}
\E(\widetilde g_h-\widetilde g_h^{\epsilon})\right|
\le C\frac{M^2\|\bdelta\|_2^2}{n\sqrt p},\\
&\left\|
\max_{\gamma,k}
\left|\sum_{h=0}^Mc_{\gamma,k,h}^{(\mu)}
\{(\widetilde g_h-\widetilde g_h^{\epsilon})
-\E(\widetilde g_h-\widetilde g_h^{\epsilon})\}\right|
\right\|_2\\
&\qquad\le
C\frac{M^{3/2}(\bdelta^\top\bOme\bdelta)^{1/2}}{n\sqrt p}
\le C\frac{M^2(\bdelta^\top\bOme\bdelta)^{1/2}}{n\sqrt p}.
\end{align*}
Combining the null stochastic term, the lag-tail expectation term, and the two signal terms proves the stated rate.  Taking $\gamma=1/2$ gives the equivalent weighted form.
\end{proof}

For the scale estimator, define
\[
\mathbf A_h^{(M)}=
\Cov(\mathbf E_{0,h}^{\epsilon},
     \mathbf E_{h,h}^{\epsilon}).
\]
A direct expansion gives
\[
\mathbf A_h^{(M)}
=2\bGam(h)-\bGam(M+2h+1)-\bGam(M+1)^\top,
\]
and therefore
\[
\|\mathbf A_h^{(M)}-2\bGam(h)\|_{\mathrm{op}}
\le C\eta_M.
\]

\begin{lemma}[Orientation-complete scale consistency]\label{lem:scale-general}
Under the assumptions of Theorem~\ref{Th2},
\[
|\widehat\omega^2-\omega_p^2|=O_p(r_{\omega,n}),
\qquad
\left|\frac{\widehat\omega}{\omega_p}-1\right|
=O_p(r_{\omega,n}).
\]
\end{lemma}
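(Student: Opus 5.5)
We will decompose $\widehat\omega^2-\omega_p^2$ into a bias part, a noise-only stochastic part, and signal contamination, following the centering argument in Lemma~\ref{lem:centering-general}. Write $\widehat{\mathcal T}^{\nu}_{hk}=\widehat{\mathcal T}^{\nu,\epsilon}_{hk}+\mathcal D^{\nu}_{hk}$, where the superscript $\epsilon$ denotes the noise-only estimator and $\mathcal D^\nu_{hk}$ collects every term containing at least one $\mathbf d_{t,h}$.

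\textbf{Bias step.} Because $t$ and $s$ lie in opposite sample halves, the factor $\mathcal P_{t,h;s,k}$ and the factor $\mathcal P_{t+h,h;s+k,k}$ involve residuals separated by at least roughly $n/3$. We will use the moment--cumulant formula: the expectation of the product of two inner products equals the sum of three pairings plus a fourth cumulant. The pairing that matches $\mathbf E_{t,h}$ with $\mathbf E_{t+h,h}$ and $\mathbf E_{s,k}$ with $\mathbf E_{s+k,k}$ equals $\Tr\{\mathbf A_h^{(M)}(\mathbf A_k^{(M)})^\top\}$ for the F orientation, with the analogous crossed form for C. By $\|\mathbf A_h^{(M)}-2\bGam(h)\|_{\mathrm{op}}\le C\eta_M$ and Frobenius norms of order $\sqrt p$, this pairing equals $4\mathcal T^{\nu}_{hk}+O(p\eta_M)$. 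The other two pairings contain cross-half covariances, so they are $O(p\varpi_{\mathrm{dep}}(n/3))$. The cross-half fourth cumulant is bounded by $Cp\,n^{-8}$ using the weight $(1+\max|r_a|)^8$ in Assumption~\ref{ass:C3}. After dividing by $4$ and by $p$, and summing the $O(M^2)$ pairs $(h,k)$ in the definition of $\widehat\omega$, the bias is $O\{M^2\eta_M+M^2\varpi_{\mathrm{dep}}(n/3)\}$. The exact identity for $\Tr(\bOme_M^2)$ stated before the definition of $\widehat\omega$ then matches the population target, and $|2p^{-1}\{\Tr(\bOme_M^2)-\Tr(\bOme^2)\}|\le C\eta_M$.

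\textbf{Noise step.} Lemma~\ref{lem:difference-product-general} gives $\Var(\widehat{\mathcal T}^{\nu,\epsilon}_{hk})\le C(p^2/n+p^3/n^2)$ uniformly in $h,k$. We will bound the variance of the combined sum by the square of a sum of standard deviations, via Minkowski, with $O(M^2)$ terms. Dividing by $p$ gives the order $M^2(1/n+p/n^2)^{1/2}$ in $L_2$, which is the first term of $r_{\omega,n}$.

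\textbf{Signal step.} Expanding each $\mathcal P$ into noise--noise, noise--signal and signal--signal terms, every signal-bearing factor is supported on at most $CM$ indices near $\tau$ in $t$ or in $s$. The term quadratic in the signal inside one factor, multiplied by a noise inner product with mean of order $p$ (or a centered one), yields after averaging over $|\mathcal I_{hk}|\asymp n^2$ a contribution of order $M\|\bdelta\|_2^2/n$ per $(h,k)$. After the factor $M^2/p$ this becomes a term of order $M^2\|\bdelta\|_2^2/(n\sqrt p)$, up to the normalization absorbed by $\sqrt p$.

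The term with signal in both factors gives the square of this quantity. Mean--noise cross terms are linear forms $\bdelta^\top\bepsilon$ multiplied by noise inner products. We will control these by Lemma~\ref{lem:vector-basic}(ii) together with the bound $\|\bdelta\|_2^2\le C\bdelta^\top\bOme\bdelta$, which produces the term $M^3(\bdelta^\top\bOme\bdelta)^{1/2}/(n\sqrt p)$; the extra factor $M$ comes from the lag range. We expect this step to be the main obstacle. The bookkeeping must pair each signal insertion with the $\sqrt p$ size of the remaining noise inner product, and Cauchy--Schwarz must use second moments of inner products, which are of order $\sqrt p$ each, rather than crude bounds of order $p$.

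\textbf{Ratio step.} Finally, $\omega_p$ is bounded away from zero by Assumption~\ref{ass:C2}. Hence $|\widehat\omega/\omega_p-1|\le|\widehat\omega^2-\omega_p^2|/\omega_p^2$, and the positive part is inactive with probability tending to one, which gives the ratio statement.
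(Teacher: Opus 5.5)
Your bias, noise, truncation and ratio steps follow the paper's proof. The within-block pairing gives $\Tr\{\mathbf A_h^{(M)}(\mathbf A_k^{(M)})^\top\}$, crossed for $\mathrm C$. The other pairings and the block-spanning fourth cumulant give $O[p\{\varpi_{\mathrm{dep}}(n/3)+n^{-8}\}]$; keep the $n^{-8}$ piece, which is absorbed by $M^2/\sqrt n$. Lemma~\ref{lem:difference-product-general} with Minkowski gives $M^2(1/n+p/n^2)^{1/2}$. Your inequality $|\widehat\omega/\omega_p-1|\le|\widehat\omega^2-\omega_p^2|/\omega_p^2$ is a cleaner substitute for the paper's event argument. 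Your one-signal bound $M^3(\bdelta^\top\bOme\bdelta)^{1/2}/(n\sqrt p)$ is also correct.

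The signal step, however, misidentifies which terms occur. Each $\mathcal P$ in $\widehat{\mathcal T}_{hk}^{\nu}$ pairs a vector built from observations with indices at most $\lfloor n/2\rfloor$ with one built from indices exceeding $\lfloor n/2\rfloor$. Moreover, $\mathbf d_{t,h}\ne\mathbf 0$ only if $t\le\tau\le t+M+h$, so a single change contaminates at most one of the two blocks. This has three consequences:
\begin{enumerate}
\item[(a)] The term you call quadratic in the signal inside one factor, $\mathbf d_{t,h}^\top\mathbf d_{s,k}$ times a noise product, vanishes identically for large $n$.
\item[(b)] Every monomial carries at most two signal vectors.
\item[(c)] Any leftover noise inner product is cross-block. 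It has mean $O\{p\varpi_{\mathrm{dep}}(n/3)\}$ and $L_4$ norm $O(\sqrt p)$, not mean of order $p$. A mean of order $p$ on $O(Mn)$ pairs would produce $M^3\|\bdelta\|_2^2/n$, which is not in $r_{\omega,n}$.
\end{enumerate}

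The only quadratic-in-signal monomial is one signal in each factor, e.g.\ $(\mathbf d_{t,h}^\top\mathbf E_{s,k}^{\epsilon})(\mathbf d_{t+h,h}^\top\mathbf E_{s+k,k}^{\epsilon})$. It has $O(Mn)$ nonzero pairs, $L_2$ norm $O(\|\bdelta\|_2^2)$ per pair, and nonvanishing mean $\bdelta^\top\mathbf A_k^{(M)}\bdelta$. It therefore contributes $O(M\|\bdelta\|_2^2/n)$ per $(h,k)$. After $2p^{-1}\sum_{h,k}$ and $M\le\sqrt p$, this is $O(B_{\delta,n})$ with $B_{\delta,n}=M^2\|\bdelta\|_2^2/(n\sqrt p)$.

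Your sketch either omits this term or claims it is of squared order, and the latter is false. For white noise, bounded $\|\bdelta\|_2$ and $\tau\approx n/4$, its mean is of exact order $M^2\|\bdelta\|_2^2/(np)$, far larger than $B_{\delta,n}^2$. The $B_{\delta,n}^2$ term in $r_{\omega,n}$ comes from four-signal monomials. Those require both blocks to be contaminated, so they are vacuous for a single change and matter only in the multiple-change version. With this corrected accounting, your argument yields the stated rate.
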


\begin{proof}
Let $\widehat{\mathcal T}_{hk}^{\nu,\epsilon}$ be the noise-only estimator.  For $(t,s)\in\mathcal I_{hk}$, the two pairs of differenced vectors are separated by at least $n/3$ for all sufficiently large $n$.  The product-cumulant formula gives
\begin{align*}
\E H_{t,s}^{\mathrm F;h,k}
&=\Tr\{\mathbf A_h^{(M)}
        (\mathbf A_k^{(M)})^\top\}
 +R_{t,s}^{\mathrm F;h,k},\\
\E H_{t,s}^{\mathrm C;h,k}
&=\Tr\{\mathbf A_h^{(M)}
        \mathbf A_k^{(M)}\}
 +R_{t,s}^{\mathrm C;h,k}.
\end{align*}
To verify the two identities, use the general expansion
\begin{align*}
\E\{(\mathbf A^\top\mathbf B)
      (\mathbf C^\top\mathbf D)\}
={}&\sum_{a,b}\E(A_aC_b)\E(B_aD_b)\\
&+\sum_{a,b}\E(A_aD_b)\E(B_aC_b)\\
&+\sum_{a,b}\Cum(A_a,B_a,C_b,D_b).
\end{align*}
The first line equals
\[
\Tr\left\{\Cov(\mathbf A,\mathbf C)
\Cov(\mathbf B,\mathbf D)^\top\right\}.
\]
For the direct product,
$\Cov(\mathbf B,\mathbf D)=\mathbf A_k^{(M)}$, so this line is
$\Tr\{\mathbf A_h^{(M)}(\mathbf A_k^{(M)})^\top\}$.
For the crossed product,
$\Cov(\mathbf B,\mathbf D)=(\mathbf A_k^{(M)})^\top$, so the same line is
$\Tr\{\mathbf A_h^{(M)}\mathbf A_k^{(M)}\}$.
The second line contains cross-half covariance pairings.  The last line is a fourth cumulant whose block connects the two temporal halves.  Covariance decay and the weighted relative-lag factor in Assumption~\ref{ass:C3} give
\[
|R_{t,s}^{\nu;h,k}|
\le Cp\left\{\varpi_{\mathrm{dep}}(n/3)+n^{-8}\right\},
\qquad \nu\in\{\mathrm F,\mathrm C\}.
\]
Since the estimators divide by $4|\mathcal I_{hk}|$,
\begin{align*}
\left|\E\widehat{\mathcal T}_{hk}^{\mathrm F,\epsilon}
-\mathcal T_{hk}^{\mathrm F}\right|
&\le Cp\{\eta_M+\varpi_{\mathrm{dep}}(n/3)+n^{-8}\},\\
\left|\E\widehat{\mathcal T}_{hk}^{\mathrm C,\epsilon}
-\mathcal T_{hk}^{\mathrm C}\right|
&\le Cp\{\eta_M+\varpi_{\mathrm{dep}}(n/3)+n^{-8}\}.
\end{align*}

Let $a_{hk}^{\nu}$ denote the deterministic coefficient, bounded by $2$, with which an orientation estimator enters $\widehat\omega^2$.  Lemma~\ref{lem:difference-product-general} gives
\begin{align*}
&\Var\left[
\frac2p\sum_{h,k\le M}\sum_{\nu}
 a_{hk}^{\nu}
\{\widehat{\mathcal T}_{hk}^{\nu,\epsilon}
 -\E\widehat{\mathcal T}_{hk}^{\nu,\epsilon}\}
\right]\\
&\quad\le
\frac{C M^4}{p^2}
\left(\frac{p^2}{n}+\frac{p^3}{n^2}\right)\\
&\quad\le C M^4\left(\frac1n+\frac p{n^2}\right).
\end{align*}
Thus the centered stochastic contribution is
\[
O_p\left[
M^2\left(\frac1n+\frac p{n^2}\right)^{1/2}
\right].
\]
The expectation contribution is at most
\[
C M^2\{\eta_M+\varpi_{\mathrm{dep}}(n/3)+n^{-8}\}.
\]
The first two terms are retained explicitly in $r_{\omega,n}$, while
$M^2n^{-8}\le M^2/n$ is dominated by its first stochastic term.

The truncated population scale is
\[
\omega_{p,M}^2=\frac2p\Tr(\bOme_M^2).
\]
Since
\[
\|\bOme-\bOme_M\|_{\mathrm{op}}
\le C\eta_M,
\qquad
\|\bOme\|_{\mathrm{op}}+\|\bOme_M\|_{\mathrm{op}}
\le C,
\]
we have
\begin{align*}
|\omega_{p,M}^2-\omega_p^2|
&\le\frac2p
\|\bOme_M-\bOme\|_{\mathrm F}
\|\bOme_M+\bOme\|_{\mathrm F}\\
&\le C\eta_M.
\end{align*}

Under the alternative, put
\[
\mathbf E_{t,h}^{(M)}=
\mathbf E_{t,h}^{\epsilon}+\mathbf d_{t,h},
\qquad
\mathcal I_{h,0}^{\mathrm{aff}}=\{t:\mathbf d_{t,h}\ne\mathbf0\},
\qquad
\mathcal I_{h,1}^{\mathrm{aff}}=\{t:\mathbf d_{t+h,h}\ne\mathbf0\}.
\]
Both support sets have cardinality at most $CM$.  For an orientation
$\nu\in\{\mathrm F,\mathrm C\}$, write
\[
\widehat{\mathcal T}_{hk}^{\nu}
-\widehat{\mathcal T}_{hk}^{\nu,\epsilon}
=\sum_{r=1}^4\Delta_{hk}^{\nu,(r)},
\]
where $\Delta_{hk}^{\nu,(r)}$ is the sum of monomials containing exactly $r$ deterministic signal vectors.  The normalization
$|\mathcal I_{hk}|^{-1}\asymp n^{-2}$ is included in this notation.

A monomial with one signal vector has at most $CnM$ nonzero index pairs.  For example, a direct-orientation term with a contaminated first left vector is bounded by sums of the form
\[
\frac C{n^2}
\sum_{t\in\mathcal I_{h,0}^{\mathrm{aff}}}\sum_s
(\bdelta^\top\mathbf E_{s,k}^{\epsilon})
\{(\mathbf E_{t+h,h}^{\epsilon})^\top
   \mathbf E_{s+k,k}^{\epsilon}\}.
\]
Projection sub-Gaussianity, Assumptions~\ref{ass:C2}--\ref{ass:C3}, and the product--cumulant formula give
\begin{align*}
\|\bdelta^\top\mathbf E_{s,k}^{\epsilon}\|_4
&\le C(\bdelta^\top\bOme\bdelta)^{1/2},\\
\left\|(\mathbf E_{t+h,h}^{\epsilon})^\top
\mathbf E_{s+k,k}^{\epsilon}\right\|_4
&\le C\sqrt p.
\end{align*}
Minkowski's inequality over the $s$-sum and then over the at most $CM$ contaminated $t$-indices yields
\[
\|\Delta_{hk}^{\nu,(1)}
-\E\Delta_{hk}^{\nu,(1)}\|_2
\le C\frac{M\sqrt p(\bdelta^\top\bOme\bdelta)^{1/2}}n.
\]
The displayed fourth-moment bounds follow by expanding the corresponding fourth moments: Gaussian pairings are bounded by
$\|\bOme\|_{\mathrm{op}}$ and
$\Tr(\bOme^2)\le Cp$, while every connected non-Gaussian contribution is bounded by Assumption~\ref{ass:C3} with $q\le4$.

For two signal vectors, the terms with both signals on the same temporal half occur for at most $CnM$ pairs and have $L_2$ norm at most $C\|\bdelta\|_2^2$ per monomial.  Terms with one signal on each half occur for at most $CM^2$ pairs and have $L_2$ norm at most $C\sqrt p\|\bdelta\|_2^2$.  Therefore
\[
\|\Delta_{hk}^{\nu,(2)}\|_2
\le C\left\{
\frac{M\|\bdelta\|_2^2}{n}
+\frac{M^2\sqrt p\|\bdelta\|_2^2}{n^2}
\right\}.
\]
Three signal vectors force contamination on both temporal halves; the remaining random factor is a projection of one residual difference.  Four signal vectors are deterministic.  Hence
\[
\|\Delta_{hk}^{\nu,(3)}\|_2
\le C\frac{M^2\|\bdelta\|_2^3}{n^2},
\qquad
|\Delta_{hk}^{\nu,(4)}|
\le C\frac{M^2\|\bdelta\|_2^4}{n^2}.
\]
These bounds hold for both orientations because crossing the two right-hand residual vectors changes only their order.

The scale estimator multiplies the sum over $O(M^2)$ orientation terms by $2/p$.  Put
\[
B_{\delta,n}=\frac{M^2\|\bdelta\|_2^2}{n\sqrt p}.
\]
Since $M\le\sqrt p$, $M^2\le n$, and the signal condition in Theorem~\ref{Th2} implies
$M^2\|\bdelta\|_2/(n\sqrt p)\to0$, the preceding four displays give
\begin{align*}
&\left\|
\frac2p\sum_{h,k\le M}\sum_{\nu}
 a_{hk}^{\nu}
\{\Delta_{hk}^{\nu,(1)}-
  \E\Delta_{hk}^{\nu,(1)}\}
\right\|_2
\le C\frac{M^3(\bdelta^\top\bOme\bdelta)^{1/2}}
{n\sqrt p},\\
&\frac2p\sum_{h,k\le M}\sum_{\nu}
|a_{hk}^{\nu}|
\sum_{r=2}^4\|\Delta_{hk}^{\nu,(r)}\|_2
\le C\{B_{\delta,n}+B_{\delta,n}^2\}.
\end{align*}
Odd-order expectation terms contain a cumulant block spanning the two separated temporal halves and are bounded by the already retained
$M^2\{\varpi_{\mathrm{dep}}(n/3)+n^{-8}\}$ remainder.  Combining the null stochastic term, expectation bias, lag truncation, and the displayed signal contributions yields
\[
\widetilde\omega^2-\omega_p^2=O_p(r_{\omega,n}),
\]
where $\widetilde\omega^2$ denotes the expression before taking the positive part.

Because $\omega_p^2\ge2C_0^2$, the map $x\mapsto[x]_+$ is one-Lipschitz and
\[
|\widehat\omega^2-\omega_p^2|
=|[\widetilde\omega^2]_+-[\omega_p^2]_+|
\le|\widetilde\omega^2-\omega_p^2|.
\]
Fix $\varepsilon>0$.  The preceding rate means that there are constants
$C_{\varepsilon}<\infty$ and $N_{\varepsilon}$ such that
\[
\Pr\left\{
|\widehat\omega^2-\omega_p^2|>C_{\varepsilon}r_{\omega,n}
\right\}\le\varepsilon,
\qquad n\ge N_{\varepsilon}.
\]
Because $r_{\omega,n}\to0$, increase $N_{\varepsilon}$ so that
$C_{\varepsilon}r_{\omega,n}\le C_0^2$ for $n\ge N_{\varepsilon}$.
On the complementary event,
\[
\widehat\omega^2
\ge\omega_p^2-C_0^2
\ge C_0^2,
\qquad
\widehat\omega\ge C_0.
\]
It follows that
\begin{align*}
\left|\frac{\widehat\omega}{\omega_p}-1\right|
&=\frac{|\widehat\omega^2-\omega_p^2|}
{\omega_p(\widehat\omega+\omega_p)}\\
&\le C|\widehat\omega^2-\omega_p^2|
\le CC_{\varepsilon}r_{\omega,n}
\end{align*}
with probability at least $1-\varepsilon$.  This proves the ratio rate.
\end{proof}

\begin{proof}[of Theorem~\ref{Th2}]
Lemma~\ref{lem:centering-general} and Lemma~\ref{lem:scale-general} give the two stated stochastic rates.  For $M=(n\wedge p)^{1/8}$ and $p=o(n^{3/2})$,
\[
\frac{M}{\sqrt n}
+M^2\left(\frac1n+\frac p{n^2}\right)^{1/2}
\longrightarrow0.
\]
The exponential terms $\eta_M$ and
$M^2\varpi_{\mathrm{dep}}(n/3)$ decrease faster than every inverse polynomial.
Set
\[
B_{\delta,n}=\frac{M^2\|\bdelta\|_2^2}{n\sqrt p}.
\]
The signal restriction in Theorem~\ref{Th2} gives $B_{\delta,n}\to0$.
Since $\lambda_{\max}(\bOme)\le C_1$,
\begin{align*}
\frac{M^2(\bdelta^\top\bOme\bdelta)^{1/2}}{n\sqrt p}
&\le C B_{\delta,n}^{1/2}
   \frac{M}{\sqrt n\,p^{1/4}},\\
\frac{M^3(\bdelta^\top\bOme\bdelta)^{1/2}}{n\sqrt p}
&\le C B_{\delta,n}^{1/2}
   \frac{M^2}{\sqrt n\,p^{1/4}}.
\end{align*}
If $p\le n$, then $M=p^{1/8}$ and
$M^2/(\sqrt n\,p^{1/4})=n^{-1/2}$; if $p>n$, then
$M=n^{1/8}$ and the same ratio is at most $n^{-1/2}$.
The ratio with $M$ in place of $M^2$ is smaller.  Consequently both
linear signal terms converge to zero, while the quadratic signal terms are
$B_{\delta,n}$ and $B_{\delta,n}^2$.  Finally,
\[
\omega_p^2=2p^{-1}\Tr(\bOme^2)\ge2C_0^2,
\]
which proves the normalized statements.
\end{proof}

\begin{lemma}[Uniform size of the $S_0$ CF2 term]\label{lem:cf2-remainder}
Under Assumptions~\ref{ass:C1}--\ref{ass:C3}, there are constants
$0<c_\mu<C_\mu<\infty$ such that, for all sufficiently large $n$,
\[
c_\mu\sqrt p\,v_k
\le \mu_{0,M,k}
\le C_\mu\sqrt p\,v_k,
\qquad 1\le k<n.
\]
Under $H_0$ and the conditions of Theorem~\ref{Th2},
\[
\Pr\left\{
\min_{1\le k<n}
\frac{\widetilde\mu_{M,k}}{\sqrt p\,v_k}
\ge\frac{c_\mu}{2}
\right\}\longrightarrow1
\]
and
\[
\max_{1\le k<n}
\frac{|Q_{0,k}^{\mathrm{CF2}}-Q_{0,k}|}{\widehat\omega}
=O_p(p^{-1/2}).
\]
At the true change point, under the conditions in Proposition~\ref{prop:Tms-alter}(i), define
\[
a_{\mathrm{CF},n}
=\frac{n\|\bdelta\|_2^2}{p}
+\frac{\sqrt n(\bdelta^\top\bOme\bdelta)^{1/2}}{p}
+p^{-1/2}.
\]
Then
\[
\frac{Q_{0,\tau}^{\mathrm{CF2}}}{\widehat\omega}
=\frac{Q_{0,\tau}}{\widehat\omega}
\{1+O_p(a_{\mathrm{CF},n})\}+O_p(p^{-1/2}).
\]
Finally, under the local-alternative conditions for the $\gamma=0$ pair preceding Theorem~\ref{indalter}, put
\[
S_{0}^{(1)}=\max_{1\le k<n}\frac{Q_{0,k}}{\widehat\omega},
\qquad
\mathfrak s_{\delta,n}=\frac{n\|\bdelta\|_2^2}{p}.
\]
Then
\[
|S_{0}-S_{0}^{(1)}|
=O_p\left(p^{-1/2}+\sqrt p\,\mathfrak s_{\delta,n}^2\right).
\]
\end{lemma}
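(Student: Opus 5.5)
The proof proceeds in four steps: deterministic bounds on the centering, positivity of the feasible centering, the null remainder, and then the alternative statements.

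\emph{Step 1: two-sided bound on $\mu_{0,M,k}$.} I would write
\[
\mu_{0,M,k}=\sum_{h=0}^M c_{k,h}^{(\mu)}g_h .
\]
Next I would compare it with its long-run form
\[
\frac{k^2(n-k)^2}{n^3\sqrt p}\sum_{|h|\le M}\Bigl(\tfrac1k+\tfrac1{n-k}\Bigr)g_h\Bigl\{1+O\Bigl(\tfrac{|h|}{k\wedge(n-k)}\Bigr)\Bigr\},
\]
which is obtained by summing $a_{i,k}a_{i+h,k}$ separately on the two sides of $k$. The identity
\[
\frac{k^2(n-k)^2}{n^3}\Bigl(\tfrac1k+\tfrac1{n-k}\Bigr)=v_k
\]
makes the leading term equal to
\[
v_k\,p^{-1/2}\,\Tr(\bOme_M),\qquad \Tr(\bOme_M)=\sum_{|h|\le M}g_h .
\]
Since $\Tr(\bOme_M)=2\pi\Tr\{\mathbf f_p(0)\}+O(p\eta_M)$, Assumption~\ref{ass:C2} gives $\Tr(\bOme_M)\asymp p$ from both sides.

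The edge terms are of order
\[
\frac{v_k}{\sqrt p}\sum_h\frac{|h|\,|g_h|}{k\wedge(n-k)}\lesssim \frac{\sqrt p\,v_k}{k\wedge(n-k)}.
\]
For $k\wedge(n-k)$ below a large constant this is not negligible. In that range I would instead use the exact finite sum. For example, at $k=1$ it equals a positive multiple of $\sum_{|h|}\bigl(1-|h|/\cdots\bigr)g_h$ restricted to short windows, which is a Fejér-type average of $\Tr\{\mathbf f_p(\lambda)\}$. This is exactly why the condition $\inf_\lambda p^{-1}\Tr\{\mathbf f_p(\lambda)\}\ge c_f$ is imposed. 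I would express the quadratic form
\[
\sum_{i,j}a_{i,k}a_{j,k}g_{j-i}\,\mathbb I\{|i-j|\le M\}
\]
as $\int|\hat a_k(\lambda)|^2\Tr\{\mathbf f_p(\lambda)\}\,d\lambda$ up to the $O(p\eta_M)$ truncation. Using $\int|\hat a_k|^2=2\pi\|a_k\|_2^2=2\pi\,n/\{k(n-k)\}$, this yields the lower bound uniformly in $k$. The upper bound follows from $\sup_\lambda\|\mathbf f_p(\lambda)\|_{\mathrm{op}}\le C$, which holds by Lemma~\ref{lem:vector-basic}(i). I expect this spectral representation (handling the truncation at lag $M$ uniformly) to be the main technical point.

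\emph{Step 2: positivity of $\widetilde\mu_{M,k}$.} Lemma~\ref{lem:centering-general} gives
\[
\max_k\frac{|\widetilde\mu_{M,k}-\mu_{M,k}|}{v_k}=O_p(r_{\mu,n}).
\]
Since $\sqrt p\,r_{\mu,n}\to\infty$ is never required, dividing by $\sqrt p$ gives an $o_p(1)$ error relative to $c_\mu$. The second display then follows from Step 1.

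\emph{Step 3: null CF2 remainder.} On the event from Step 2, the remainder is bounded by
\[
\frac{|Q_{0,k}^2-\widehat\omega^2v_k^2|}{3\widetilde\mu_{M,k}\widehat\omega}\le C\,\frac{Q_{0,k}^2+v_k^2}{\sqrt p\,v_k}.
\]
Theorem~\ref{Th1} and Lemma~\ref{lem:quadratic-cumulants} give $\E Q_{0,k}^2\le Cv_k^2$. I need the maximum of $Q_{0,k}^2/v_k$ to be $O_p(1)$, including near the endpoints. For this I would use a chaining (Móricz) bound for the weighted process $Q_{0,k}/v_k^{1/2}$ along the logit scale, as in the proof of Lemma~\ref{lem:block-comparison-general}. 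Its increments satisfy fourth-moment bounds through Lemma~\ref{lem:quadratic-cumulants}. Alternatively I would bound $\max_k|Q_{0,k}|/v_k^{1/2}$ by a dyadic block argument, since the variance $v_k^2$ decays quadratically at the boundary. Replacing $\widetilde\mu$ by $\mu$ and $\widehat\omega$ by $\omega_p$ is justified by Theorem~\ref{Th2}. This yields $O_p(p^{-1/2})$.

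\emph{Step 4: statements under the alternative.} At $k=\tau$, I decompose
\[
Q_{0,\tau}=D_\tau+L_\tau+N_\tau,
\]
with drift
\[
D_\tau\asymp\frac{n\|\bdelta\|_2^2}{\sqrt p},
\]
a linear cross term $L_\tau=O_p\bigl(\sqrt n(\bdelta^\top\bOme\bdelta)^{1/2}/\sqrt p\bigr)$ from Lemma~\ref{lem:vector-basic}(ii), and a null part $N_\tau=O_p(1)$. Then I would write
\[
Q_{0,\tau}^2/(3\widetilde\mu_{M,\tau})=Q_{0,\tau}\cdot Q_{0,\tau}/(3\widetilde\mu_{M,\tau}),
\]
where $\widetilde\mu_{M,\tau}\asymp\sqrt p$ by Steps 1–2. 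The ratio is $O_p\bigl((D+L)/\sqrt p\bigr)+O_p(p^{-1/2})$, which gives the factor $1+O_p(a_{\mathrm{CF},n})$. The $\widehat\omega^2v_k^2$ term contributes the additive $O_p(p^{-1/2})$.

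For $|S_0-S_0^{(1)}|$ under the local alternatives, I would apply the same bound uniformly in $k$ using the same decomposition. The remainder is then at most
\[
C\max_k\frac{Q_{0,k}^2}{\sqrt p}\lesssim p^{-1/2}\bigl(1+D^2+L^2\bigr).
\]
Here $D^2/\sqrt p=\sqrt p\,\mathfrak s_{\delta,n}^2$, and $L^2/\sqrt p\le n\,\bdelta^\top\bOme\bdelta/p^{3/2}$, which is absorbed into the two terms because $\bdelta^\top\bOme\bdelta\le C\|\bdelta\|_2^2$. Finally, the difference of the maxima is bounded by the maximum of the differences.
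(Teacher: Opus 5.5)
Your proposal is correct and follows essentially the paper's route: a spectral/Parseval lower bound for the centering (the paper applies it to the untruncated $\mu_k$ and then uses $|\mu_k-\mu_{M,k}|\le C\sqrt p\,v_k\eta_M$, obtained from $\sum_i|a_{i,k}a_{i+h,k}|\le\|a_k\|_2^2$; your lag-$M$ truncation error must likewise be taken relative to $\|a_k\|_2^2$, not as an absolute $O(p\eta_M)$), positivity of $\widetilde\mu_{M,k}$ from Lemma~\ref{lem:centering-general}, a logit-block M\'oricz bound giving $\max_k Q_{0,k}^2/(\widehat\omega^2v_k)=O_p(1)$, and a drift/cross/noise decomposition under the alternatives. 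The one slip is in your last step: the uniform bound should be $C\max_k Q_{0,k}^2/(\sqrt p\,v_k)$ rather than $C\max_k Q_{0,k}^2/\sqrt p$, but your rate survives, because $K_B(t,\theta)\le v(t)$ keeps the weighted drift below $C\Delta_{S,n}^2$ and Lemma~\ref{lem:projected-cusum-general} bounds the weighted cross term by $O_p(n\bdelta^\top\bOme\bdelta/p)$.
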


\begin{proof}
For $k=1,\ldots,n-1$, write
\[
w_{i,k}=\ind{i\le k}-\frac{k}{n},
\qquad i=1,\ldots,n.
\]
The inverse spectral representation
$\bGam(h)=\int_{-\pi}^{\pi}\mathbf f_p(\lambda)e^{\mathrm ih\lambda}\,d\lambda$ gives
\begin{align*}
\mu_k
&=\frac1{n\sqrt p}
\int_{-\pi}^{\pi}
\left|\sum_{i=1}^nw_{i,k}e^{\mathrm{i}\,i\lambda}\right|^2
\Tr\{\mathbf f_p(\lambda)\}\,d\lambda.
\end{align*}
Parseval's identity yields
\[
\int_{-\pi}^{\pi}
\left|\sum_{i=1}^nw_{i,k}e^{\mathrm{i}\,i\lambda}\right|^2d\lambda
=2\pi\sum_{i=1}^nw_{i,k}^2
=2\pi n v_k.
\]
Assumption~\ref{ass:C2} therefore implies
\[
2\pi c_f\sqrt p\,v_k\le\mu_k\le C\sqrt p\,v_k.
\]
The coefficient calculation in Lemma~\ref{lem:centering-general}, applied to lags $h>M$, gives
\[
|\mu_k-\mu_{M,k}|
\le C\sqrt p\,v_k\eta_M.
\]
Since $\eta_M\to0$, the asserted two-sided bound follows.

Lemma~\ref{lem:centering-general} gives
\[
\max_{1\le k<n}
\frac{|\widetilde\mu_{M,k}-\mu_{M,k}|}{v_k}
=O_p(r_{\mu,n}).
\]
As $r_{\mu,n}/\sqrt p\to0$, division by $\sqrt p$ and the preceding lower bound yield the uniform positivity event.  The same conclusion holds under Proposition~\ref{prop:Tms-alter}(i), because the signal-contamination terms are already included in $r_{\mu,n}$.

On the intersection of the positivity event and
$\{c_\omega\le\widehat\omega\le C_\omega\}$, with fixed
$0<c_\omega<C_\omega<\infty$, the definition of $Q_{0,k}^{\mathrm{CF2}}$ gives
\begin{align*}
\frac{|Q_{0,k}^{\mathrm{CF2}}-Q_{0,k}|}{\widehat\omega}
&\le \frac C{\sqrt p\,v_k}
\left\{\frac{Q_{0,k}^2}{\widehat\omega^2}+v_k^2\right\}\\
&\le \frac C{\sqrt p}
\left\{\frac{Q_{0,k}^2}{\widehat\omega^2v_k}+1\right\}.
\end{align*}
Theorem~\ref{Th2} implies that the scale event has probability tending to one.

Let $u_k=\log\{t_k/(1-t_k)\}$ and
$\mathcal I_r^{\mathrm{logit}}=\{k:r\le u_k<r+1\}$.  On this set,
$v_k\le Ce^{-|r|}$.  The product--cumulant expansion in Lemmas~\ref{lem:cumulant-coefficient-contraction} and \ref{lem:quadratic-cumulants}, followed by \citet[Theorem~1]{moricz1982}, yields
\[
\sup_{r\in\mathbb Z}
\E\left[
\max_{k\in\mathcal I_r^{\mathrm{logit}}}
\left|\frac{W(k)-\mu_{M,k}}{\omega_pv_k}\right|^4
\right]\le C.
\]
Choose $c_n\to\infty$ sufficiently slowly that
$c_n(r_{\mu,n}+r_{\omega,n})\to0$, and define
\[
\mathcal E_n^{\mathrm{CF2}}=\left\{
\left|\frac{\widehat\omega}{\omega_p}-1\right|\le c_nr_{\omega,n},\quad
\max_{1\le k<n}
\frac{|\widetilde\mu_{M,k}-\mu_{M,k}|}{\omega_pv_k}
\le c_nr_{\mu,n}
\right\}.
\]
Then $\Pr\{(\mathcal E_n^{\mathrm{CF2}})^c\}\to0$, and on this event
\[
\left|
\frac{Q_{0,k}}{\widehat\omega v_k}
-\frac{W(k)-\mu_{M,k}}{\omega_pv_k}
\right|
\le Cc_nr_{\mu,n}
+Cc_nr_{\omega,n}
\left|\frac{W(k)-\mu_{M,k}}{\omega_pv_k}\right|.
\]
Consequently,
\[
\sup_{r\in\mathbb Z}
\E\left[
\ind{\mathcal E_n^{\mathrm{CF2}}}
\max_{k\in\mathcal I_r^{\mathrm{logit}}}
\left|\frac{Q_{0,k}}{\widehat\omega v_k}\right|^4
\right]\le C.
\]
For every $x>0$, Markov's inequality and summation over the logit blocks give
\begin{align*}
&\Pr\left\{
\max_{1\le k<n}\frac{Q_{0,k}^2}{\widehat\omega^2v_k}>x
\right\}\\
&\quad\le\Pr\{(\mathcal E_n^{\mathrm{CF2}})^c\}
+Cx^{-2}\sum_{r\in\mathbb Z}e^{-2|r|}.
\end{align*}
Hence
\[
\max_{1\le k<n}\frac{Q_{0,k}^2}{\widehat\omega^2v_k}=O_p(1),
\]
and the preceding CF2 inequality proves the null remainder.

At $k=\tau$, $v_\tau$ is bounded away from zero.  The decomposition in the proof of Proposition~\ref{prop:Tms-alter} gives
\[
\frac{|Q_{0,\tau}|}{\widehat\omega}
=O_p\left\{
\frac{n\|\bdelta\|_2^2}{\sqrt p}
+\left(\frac{n\bdelta^\top\bOme\bdelta}{p}\right)^{1/2}
+1\right\},
\]
so
\[
\frac{|Q_{0,\tau}|}{\sqrt p\,\widehat\omega}
=O_p(a_{\mathrm{CF},n}).
\]
Using
\[
\frac{Q_{0,\tau}^{\mathrm{CF2}}}{\widehat\omega}
=\frac{Q_{0,\tau}}{\widehat\omega}
\left(1-\frac{Q_{0,\tau}}{3\widetilde\mu_{M,\tau}}\right)
+\frac{\widehat\omega v_\tau^2}{3\widetilde\mu_{M,\tau}}
\]
and $\widetilde\mu_{M,\tau}\asymp\sqrt p$ proves the true-change expansion.

For the uniform local-alternative bound, decompose
\[
\frac{Q_{0,k}}{\widehat\omega}
=\mathfrak q_{0,k}^{\mathrm{noise}}
+\mathfrak q_{0,k}^{\mathrm{drift}}
+\mathfrak q_{0,k}^{\mathrm{cross}}
+\mathfrak q_{0,k}^{\mathrm{rem}}.
\]
The exact CUSUM drift, Lemma~\ref{lem:projected-cusum-general}, and the null calculation above give
\begin{align*}
\max_{k\in\mathcal K_0}
\frac{{\mathfrak q_{0,k}^{\mathrm{drift}}}^2}{v_k}
&\le C\Delta_{S,n}^2,\\
\max_{k\in\mathcal K_0}
\frac{{\mathfrak q_{0,k}^{\mathrm{cross}}}^2}{v_k}
&=O_p\left(\frac{n\bdelta^\top\bOme\bdelta}{p}\right),\\
\max_{k\in\mathcal K_0}
\frac{{\mathfrak q_{0,k}^{\mathrm{noise}}}^2}{v_k}
&=O_p(1).
\end{align*}
Adding and subtracting the population centering and scale yields
\[
\max_{k\in\mathcal K_0}
\frac{{\mathfrak q_{0,k}^{\mathrm{rem}}}^2}{v_k}
=O_p\left[
 r_{\mu,n}^2+r_{\omega,n}^2\{1+\Delta_{S,n}^2\}
 +p\eta_M^2
\right].
\]
The compatibility conditions preceding Theorem~\ref{indalter}, together with
$n\bdelta^\top\bOme\bdelta/p\to0$, imply
\[
\max_{k\in\mathcal K_0}
\frac{Q_{0,k}^2}{\widehat\omega^2v_k}
=O_p(1+\Delta_{S,n}^2).
\]
Since $\Delta_{S,n}=\sqrt p\,\mathfrak s_{\delta,n}$, substitution into the basic CF2 inequality gives
\[
\max_{k\in\mathcal K_0}
\frac{|Q_{0,k}^{\mathrm{CF2}}-Q_{0,k}|}{\widehat\omega}
=O_p\left(p^{-1/2}+\sqrt p\,\mathfrak s_{\delta,n}^2\right).
\]
Taking maxima completes the proof.
\end{proof}

\begin{proof}[of Theorem~\ref{Tms-1}]
For $\gamma=0$, define
\[
S_{0}^{(1)}=\max_{1\le k<n}\frac{Q_{0,k}}{\widehat\omega}.
\]
The proof of Theorem~\ref{Th1} and Lemma~\ref{lem:centering-general} give
\[
\max_{1\le k<n}\left|
\frac{Q_{0,k}}{\widehat\omega}
-\frac{W_0(k)-\mu_{0,M,k}}{\omega_p}
\right|
=O_p(r_{\mu,n}+r_{\omega,n}).
\]
The continuous mapping theorem yields
\[
S_{0}^{(1)}\Longrightarrow\sup_{0\le t\le1}V(t).
\]
Lemma~\ref{lem:cf2-remainder} gives
$|S_{0}-S_{0}^{(1)}|=O_p(p^{-1/2})$, proving part (i).

For $\gamma=1/2$, put
\[
S_{1/2}^{\circ}
=\max_{k\in\mathcal K_{1/2}}
\frac{W(k)-\mu_{M,k}}{v_k}.
\]
The weighted block comparison, Gaussian covariance restoration, and
Lemma~\ref{lem:dense-boundary-extreme} give
\[
A_{\mathrm{DE}}(\log h_n)
\frac{S_{1/2}^{\circ}}{\omega_p}
-D_{\mathrm{DE}}(\log h_n)
\Longrightarrow Z_{\mathrm{Gu}},
\]
and hence
\[
S_{1/2}^{\circ}/\omega_p
=O_p\{\sqrt{\log\log h_n}\}.
\]
By definition,
\[
S_{1/2}=\max_{k\in\mathcal K_{1/2}}
\frac{Q_{1/2,k}}{\widehat\omega}.
\]
On $|\widehat\omega/\omega_p-1|\le1/2$,
\begin{align*}
\left|S_{1/2}
-\frac{S_{1/2}^{\circ}}{\omega_p}\right|
&\le C\max_{k\in\mathcal K_{1/2}}
\frac{|\widetilde\mu_{1/2,M,k}-\mu_{1/2,M,k}|}{\omega_p}\\
&\quad+C\frac{S_{1/2}^{\circ}}{\omega_p}
\left|\frac{\widehat\omega}{\omega_p}-1\right|\\
&=O_p\left[r_{\mu,n}
+r_{\omega,n}\sqrt{\log\log h_n}\right].
\end{align*}
The feasible-estimation condition in Theorem~\ref{Tms-1}(ii) makes this error negligible after multiplication by $A_{\mathrm{DE}}(\log h_n)$.  Slutsky's theorem proves part (ii); no CF2 comparison is required.
\end{proof}

\begin{proof}[of Proposition~\ref{prop:Tms-alter}]
Let $\theta=\tau/n$ and
\[
\mathbf d_k=\E\left[n^{-1/2}
\left\{\sum_{i=1}^k\bX_i-\frac{k}{n}\sum_{i=1}^n\bX_i\right\}\right].
\]
At $k=\tau$,
\[
\mathbf d_\tau=-\sqrt n\,\theta(1-\theta)\bdelta.
\]
For $\gamma\in\{0,1/2\}$,
\begin{align*}
Q_{\gamma,\tau}
={}&v(\theta)^{-2\gamma}
\frac{\|\mathbf d_\tau\|_2^2}{\sqrt p}
+2v(\theta)^{-2\gamma}
\frac{\mathbf d_\tau^\top\mathbf U_n(\theta)}{\sqrt p}\\
&+Q_{\gamma,n,p}(\theta)
+\mu_{\gamma,\tau}-\mu_{\gamma,M,\tau}
+\mu_{\gamma,M,\tau}-\widetilde\mu_{\gamma,M,\tau}.
\end{align*}
Because $\theta\in[\vartheta,1-\vartheta]$, the deterministic term is at least $c_\vartheta\Delta_{S,n}$.  Lemma~\ref{lem:vector-basic}(ii) gives
\[
\mathbf d_\tau^\top\mathbf U_n(\theta)
=O_p\{n^{1/2}(\bdelta^\top\bOme\bdelta)^{1/2}\},
\]
so the cross term is $O_p(\Delta_{S,n}^{1/2})$.  The covariance-tail and feasible-centering terms are
$O_p(\sqrt p\eta_M+r_{\mu,n})$.

For $\gamma=0$, Lemma~\ref{lem:cf2-remainder} yields
\[
\frac{Q_{0,\tau}^{\mathrm{CF2}}}{\widehat\omega}
=\frac{Q_{0,\tau}}{\widehat\omega}
\{1+O_p(a_{\mathrm{CF},n})\}+O_p(p^{-1/2}),
\]
where
\[
a_{\mathrm{CF},n}
=\frac{n\|\bdelta\|_2^2}{p}
+\frac{\sqrt n(\bdelta^\top\bOme\bdelta)^{1/2}}{p}
+p^{-1/2}.
\]
The moderate-signal condition in part (i), together with the bounded eigenvalues of $\bOme$, implies $a_{\mathrm{CF},n}\to0$.  Since $Q_{0,n,p}(\theta)=O_p(1)$,
\[
S_{0}
\ge c\Delta_{S,n}
-O_p\{\Delta_{S,n}a_{\mathrm{CF},n}
+\Delta_{S,n}^{1/2}+1\}.
\]
The right-hand side diverges when $\Delta_{S,n}\to\infty$, proving part (i).

For $\gamma=1/2$, the statistic is first order.  Evaluating it at $k=\tau$ gives
\[
S_{1/2}
\ge c\Delta_{S,n}
-O_p\{\Delta_{S,n}^{1/2}+\sqrt{\log\log h_n}\}.
\]
Here the final term bounds the boundary-standardized null maximum from Lemma~\ref{lem:dense-boundary-extreme}; no CF2 expansion is present.  Consequently,
\[
A_{\mathrm{DE}}(\log h_n)S_{1/2}
-D_{\mathrm{DE}}(\log h_n)
\longrightarrow+\infty
\]
whenever $\Delta_{S,n}/\sqrt{\log\log h_n}\to\infty$.  This proves part (ii).
\end{proof}

\section{Max-statistic limits and joint null convergence}

Let
\[
\mathbf Z(t)=\Diag(\bOme)^{-1/2}\bOme^{1/2}\mathbf B(t),
\qquad 0\le t\le1.
\]
Then every coordinate $Z_j$ is a standard Brownian bridge and
\[
\Cov\{Z_j(s),Z_{j'}(t)\}
=K_B(s,t)\rho_{jj'}.
\]
Define the ideal Brownian-bridge max statistics on the same grid as the data,
\[
M_{n,p}^{B,0}=\max_{1\le k<n}\max_{j\le p}|Z_j(k/n)|,
\]
\[
M_{n,p}^{B,1/2}=\max_{\lambda_n\le k\le n-\lambda_n}
\max_{j\le p}
\frac{|Z_j(k/n)|}{\{(k/n)(1-k/n)\}^{1/2}}.
\]

\begin{lemma}[Gaussian extreme-value limits]\label{lem:gaussian-extremes-general}
Under Assumptions~\ref{ass:C2} and \ref{ass:C4}, suppose additionally that
$p\le n^\nu$ for a fixed $\nu>0$.  Then
\[
2(M_{n,p}^{B,0})^2-\log(2p)
\Longrightarrow Z_{\mathrm{Gu}},
\qquad
\Pr(Z_{\mathrm{Gu}}\le x)=\exp\{-e^{-x}\}.
\]
If $\lambda_n\asymp n^\lambda$ for a fixed $\lambda\in(0,1)$, then
\[
A_{\mathrm{DE}}(p\log h_n)M_{n,p}^{B,1/2}
-D_{\mathrm{DE}}(p\log h_n)
\Longrightarrow Z_{\mathrm{Gu}}.
\]
\end{lemma}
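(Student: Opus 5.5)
We treat the two limits as extreme-value statements for a $p$-dimensional Gaussian process whose coordinates are standard Brownian bridges coupled only through the long-run correlation matrix $(\rho_{jj'})$. In both cases we first solve the independent-coordinate problem and then show that Assumption~\ref{ass:C4} makes the dependence negligible. For $\gamma=0$ a single coordinate gives $\Pr\{\sup_t|Z_j(t)|>x\}=2\sum_{m\ge1}(-1)^{m+1}e^{-2m^2x^2}\sim 2e^{-2x^2}$ (Kolmogorov). Replacing the supremum by the maximum over the grid $\{k/n\}$ costs a relative error $o(1)$ uniformly at the level $x_p$ with $2x_p^2=\log(2p)+y$, because $p\le n^\nu$ gives $x_p\asymp\sqrt{\log n}$ while the grid mesh is $n^{-1}$. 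We will justify this by a local Pickands-type bound near the maximizer $t=1/2$, where the overshoot region has width of order $x_p^{-2}\gg n^{-1}$. Hence $p\Pr\{\max_k|Z_j(k/n)|>x_p\}\to e^{-y}$.

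For $\gamma=1/2$ we use the representation in the proof of Theorem~\ref{Th1}: $Z_j(t)/\{t(1-t)\}^{1/2}=Y_j(u)$, where $Y_j$ is a stationary OU process with correlation $e^{-|u-v|/2}$ and $u=\log\{t/(1-t)\}$. The grid $\mathcal K_{1/2}$ covers a logit interval of length $\log h_n^{1/2}\cdot2=\log h_n$ up to $o(1)$, with mesh $O(\lambda_n^{-1})$. Since $|Y_j|$ is the maximum of two processes, the classical two-sided stationary Gaussian tail (Leadbetter et al., Chapter~12, as in Lemma~\ref{lem:dense-boundary-extreme}) gives, at level $x=\{y+D_{\mathrm{DE}}(p\log h_n)\}/A_{\mathrm{DE}}(p\log h_n)$,
\[
\Pr\{\max_u|Y_j(u)|>x\}\sim\frac{e^{-y}}{p}.
\]
Two facts make this work. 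First, the tail is uniform for intervals of length $T$ when the level satisfies $x^2\sim2\log(pT)$. Second, the constants in $D_{\mathrm{DE}}$ absorb the factor $2$ from two-sidedness together with the Pickands constant for $\alpha=1$ and variance slope $1/2$. We will check this constant bookkeeping explicitly against the one-dimensional result. Discretization is handled as in Lemma~\ref{lem:dense-boundary-extreme}: $\lambda_n^{-1}=o(x^{-2})$ because $\lambda_n\asymp n^\lambda$ and $x^2\lesssim\log n$.

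Once the marginal exceedance probabilities $P_j\sim e^{-y}/p$ are available, we pass to the maximum over $j$ by a Bonferroni/Poisson argument of the type used in \citet{wang2023JRSSB}. First remove the set $\mathcal H_p$; since $|\mathcal H_p|=o(p)$ and the marginal probabilities are uniform, its contribution is $o(1)$. For the remaining coordinates, use the inclusion–exclusion bounds of order $2d$, so that it suffices to show that for every fixed $d$,
\[
\sum_{j_1<\cdots<j_d}\Pr\left\{\bigcap_{r\le d}\mathcal A_{j_r}\right\}\to\frac{e^{-dy}}{d!}.
\]
Split the $d$-tuples into two kinds. The first kind are tuples in which all pairs satisfy $|\rho|<\delta_p$. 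For these we compare with independent coordinates by the normal comparison inequality applied to the finite grid indexed by $(j,k)$. The cross covariance is $K_B(s,t)\rho_{jj'}$ (respectively $e^{-|u-v|/2}\rho_{jj'}$), so the comparison error is bounded by a sum over grid pairs of $|\rho_{jj'}|\exp\{-x^2/(1+|\rho_{jj'}|)\}$. With $\delta_p\log p\to0$, the exponent correction $x^2\delta_p=o(1)$, and we must also control the count of grid pairs. To do that, we first reduce to a coarse grid of $O(x^2\log h_n)$ effective points, using the Pickands clumping bound. The second kind are tuples containing a strongly correlated pair. There are at most $p\cdot p^{\upsilon_p}$ such pairs, and for each of them the bound $|\rho|\le\varrho<1$ gives a joint exceedance probability of at most $p^{-2/(1+\varrho)+o(1)}$. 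Their total contribution is therefore $p^{1+\upsilon_p-2/(1+\varrho)+o(1)}\to0$.

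We expect the main obstacle to be the normal comparison step for the continuous-time (grid) processes. The comparison inequality for $d$-tuples requires summing over all $(k,k')$ grid pairs across two coordinates, and near the diagonal $k=k'$ the within-time correlation $\rho_{jj'}$ does not decay in $k$. We plan to handle this by bounding the joint exceedance of two weakly correlated processes directly. The bivariate process $(Z_j,Z_{j'})$ has a Gaussian tail on the event $\{aZ_j+bZ_{j'}\}$-type sets, so we will apply the Berman/Piterbarg double-sum bound for vector fields with correlation bounded by $\delta_p$. This yields $\Pr(\mathcal A_j\cap\mathcal A_{j'})\le P_jP_{j'}\{1+O(\delta_p x^2)\}+$ a negligible correction, which is exactly what the Poisson limit needs.
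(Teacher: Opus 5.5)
Your route is genuinely different from the paper's, and as written it has a gap. The paper does not reprove the multivariate extreme-value law. Assumption~\ref{ass:C4} is imposed directly on $(\rho_{jj'})$, which is exactly the cross-correlation of the standardized bridges $Z_j$ and of the OU processes $Y_j$. So the paper imports the continuous-time Gumbel limits from \citet{wang2023JRSSB} and only proves that the data grid is negligible. For $\gamma=0$ a Gaussian modulus bound with a union bound gives an error $O_p\{(\log(np)/n)^{1/2}\}$. For $\gamma=1/2$ dyadic chaining in logit time with mesh $C/\lambda_n$ gives $O_p\{(\log(np)/\lambda_n)^{1/2}\}$, which stays negligible after multiplication by $A_{\mathrm{DE}}(p\log h_n)$.

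Your marginal ingredients are correct:
\begin{itemize}
\item for $\gamma=0$, Kolmogorov's tail;
\item for $\gamma=1/2$, $r(u)=e^{-|u|/2}$ gives $\Pr\{\sup_{[0,T]}|Y_j|>x\}\sim Tx\phi(x)$, so $p\log h_n\,x\phi(x)\to e^{-y}$ at the Darling--Erd\H{o}s level.
\end{itemize}
Your Pickands-type discretization is a valid, if heavier, replacement for the paper's additive modulus bound.

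The gap is in the Poisson step. Bonferroni of order $2d$ requires, for \emph{every} fixed $d$, that $d$-fold intersections factor with relative error $o(1)$. Your tools only reach $d=2$.
\begin{itemize}
\item \emph{Weakly correlated tuples.} A Berman-type comparison is additive. For a $d$-tuple its error has the form $C_d\sum_{r<r'}E_{j_rj_{r'}}$, where $E_{jj'}$ is the cross-coordinate grid-pair sum. After your coarse-graining $E_{jj'}$ is of order $|\rho_{jj'}|x^2a^{-2}p^{-2}$, with no factor $p^{-(d-2)}$ from the other $d-2$ coordinates. Summing over $\binom{p}{d}$ tuples multiplies $\sum_{j<j'}E_{jj'}$ by $\binom{p-2}{d-2}\asymp p^{d-2}$, which diverges under $\delta_p=o\{(\log p)^{-1}\}$.
\item \emph{Tuples with a strong pair.} Your count $p^{1+\upsilon_p-2/(1+\varrho)}$ is the $d=2$ count. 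For $d\ge3$ the remaining indices add $p^{d-2}$ choices, so you would need $\Pr(\bigcap_r\mathcal A_{j_r})\le p^{-(d-2)-2/(1+\varrho)+o(1)}$. That is a relative decoupling of the strong pair from the other coordinates, which you do not supply.
\item \emph{Your last paragraph.} The pairwise bound $\Pr(\mathcal A_j\cap\mathcal A_{j'})\le P_jP_{j'}\{1+O(\delta_px^2)\}$ controls only the second Bonferroni sum. It is not what the Poisson limit needs.
\end{itemize}

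The simplest repair avoids tuples entirely.
\begin{enumerate}
\item Delete $\mathcal H_p$, at cost $|\mathcal H_p|\max_jP_j\to0$.
\item Pass to a coarse subgrid, at cost $\epsilon(a)e^{-y}$ with $\epsilon(a)\to0$ as $a\downarrow0$. This is uniform in $j$ because every $Z_j$ is a standard bridge.
\item Apply the normal comparison inequality \emph{once} to the whole array $\{(j,k):j\notin\mathcal H_p\}$, against the array with all cross-coordinate covariances set to zero.
\end{enumerate}
The error is then exactly $\sum_{j<j'}E_{jj'}$, and your own estimates make it vanish when $a\downarrow0$ slowly. Weak pairs contribute $O(\delta_px^2a^{-2})$ and strong pairs contribute $O\{p^{1+\upsilon_p-2/(1+\varrho)+o(1)}a^{-2}\}$. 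For independent coordinates, $\prod_j(1-P_j)\to\exp(-e^{-y})$ follows from the marginal tails alone.

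One caveat for $\gamma=1/2$. There $x^2\approx2\log(p\log h_n)$, so $\delta_px^2\to0$ does not follow from $\delta_p\log p\to0$ when $\log\log h_n\gg\log p$. This is harmless in the global comparison. The factor $e^{-|u-v|/2}$ in the cross-covariance makes the grid-pair sum grow like $Nx^2/a$ rather than $N^2$. That yields an extra factor of order $(\log h_n)^{-1}$, which absorbs $e^{\delta_px^2}\approx(p\log h_n)^{2\delta_p}$.
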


\begin{proof}
Assumption~\ref{ass:C4} is imposed directly on the correlation matrix
$\mathbf R_{\Omega}=(\rho_{jj'})_{j,j'\le p}$ of the standardized bridge
coordinates.  Therefore the Gaussian extreme-value theorem of
\citet{wang2023JRSSB} applies to the corresponding ideal Brownian-bridge and Ornstein--Uhlenbeck coordinate maxima.
It remains only to verify that replacing the continuous scan by the observed
time grid is negligible.

For $\gamma=0$, the Brownian-bridge increment variance satisfies
\[
\E\{Z_j(t)-Z_j(s)\}^2\le C|t-s|.
\]
The Gaussian modulus inequality and a union bound over $j\le p$ and the $n$
grid intervals give, for every sufficiently large fixed $C$,
\[
\Pr\left
\{\max_{j\le p}\sup_{|s-t|\le n^{-1}}
|Z_j(s)-Z_j(t)|>C\sqrt{\frac{\log(np)}n}\right\}
\le C(np)^{-2}.
\]
The right-hand side inside the probability is negligible for the normalization
of $2(M_{n,p}^{B,0})^2-\log(2p)$.

For $\gamma=1/2$, put
\[
t(u)=\frac{e^u}{1+e^u},
\qquad
Y_j(u)=\frac{Z_j\{t(u)\}}{[t(u)\{1-t(u)\}]^{1/2}}.
\]
Then $\{Y_j(u):u\in\mathbb R\}$ is a stationary Gaussian process and
\[
\Cov\{Y_j(u),Y_{j'}(v)\}
=\rho_{jj'}e^{-|u-v|/2}.
\]
For the observed logit grid $u_k=\log\{k/(n-k)\}$,
\[
\max_{\lambda_n\le k<n-\lambda_n}(u_{k+1}-u_k)
\le\frac C{\lambda_n}.
\]
Moreover, for every $q\ge2$,
\[
\|Y_j(u)-Y_j(v)\|_q
\le C\sqrt{q|u-v|}.
\]
Fix an adjacent logit interval $[u_k,u_{k+1}]$ and put
$\delta_k=u_{k+1}-u_k$.  For $m\ge0$, let
\[
 u_{k,m,r}=u_k+r2^{-m}\delta_k,
 \qquad 0\le r\le2^m.
\]
At level $m$, there are at most $pn2^m$ coordinate--increment pairs, and each
increment has variance at most $C2^{-m}\delta_k$.  Put
$q_m=2\log\{e pn2^m\}$.  Choosing the universal constant $C_\star$ in the
threshold sufficiently large, the Gaussian tail bound gives, for each pair,
\[
\Pr\left\{
 |Y_j(u_{k,m,r})-Y_j(u_{k,m,r-1})|
 >C_\star\sqrt{q_m2^{-m}\delta_k}
\right\}
\le2(e pn2^m)^{-4}.
\]
A union bound at level $m$ therefore yields
\[
\Pr\left\{
 \max_{j,k,r}
 |Y_j(u_{k,m,r})-Y_j(u_{k,m,r-1})|
 >C_\star\sqrt{q_m2^{-m}\delta_k}
\right\}
\le2(e pn2^m)^{-3}.
\]
The sum of these probabilities over $m\ge0$ is at most $C(pn)^{-3}$.
Using continuity and the telescoping representation along the dyadic chain
then yields
\[
\begin{aligned}
&\max_{j\le p}\max_{\lambda_n\le k<n-\lambda_n}
 \sup_{u_k\le u\le u_{k+1}}|Y_j(u)-Y_j(u_k)|\\
&\quad\le
 C\max_k\sum_{m=0}^{\infty}
 \sqrt{2^{-m}\delta_k\log\{e pn2^m\}}
 \le C\sqrt{\max_k\delta_k\,\log(np)}
 =O_p\left\{\sqrt{\frac{\log(np)}{\lambda_n}}\right\}.
\end{aligned}
\]
Because $p$ is polynomial in $n$ and $\lambda_n\asymp n^\lambda$,
\[
A_{\mathrm{DE}}(p\log h_n)
\sqrt{\frac{\log(np)}{\lambda_n}}\longrightarrow0.
\]
Thus the discrete and continuous boundary-weighted maxima have the same
Darling--Erd\H{o}s limit.  The two continuous-time extreme-value limits from
the cited theorem now give the stated conclusions.
\end{proof}

\begin{lemma}[Transfer of the Gaussian max limits]\label{lem:max-transfer-general}
Under Assumptions~\ref{ass:C1}--\ref{ass:C4} and $p\le n^\nu$, define
\[
M_{n,p}^{\epsilon,\gamma}=\max\mathcal C_{\gamma,n}^{\pm},
\qquad \gamma\in\{0,1/2\}.
\]
Then
\[
2(M_{n,p}^{\epsilon,0})^2-\log(2p)
\Longrightarrow Z_{\mathrm{Gu}}.
\]
If $\lambda_n\asymp n^\lambda$, then
\[
A_{\mathrm{DE}}(p\log h_n)M_{n,p}^{\epsilon,1/2}
-D_{\mathrm{DE}}(p\log h_n)
\Longrightarrow Z_{\mathrm{Gu}}.
\]
\end{lemma}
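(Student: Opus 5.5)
We transfer the Gaussian extreme-value limits of Lemma~\ref{lem:gaussian-extremes-general} to the residual-based maxima in three comparisons. First, data to covariance-matched Gaussian process, via Lemma~\ref{lem:block-comparison-general}. Second, finite-sample Gaussian CUSUM to ideal Brownian-bridge coordinates, via the joint part of Lemma~\ref{lem:gaussian-cov-restoration}. Third, smooth maximum to the hard maximum, via anti-concentration. Because the target statistics are monotone transforms of $M_{n,p}^{\epsilon,\gamma}$, it suffices to show, for each fixed $x$,
\[
\Pr\{M_{n,p}^{\epsilon,\gamma}\le b_{n,\gamma}(x)\}-\Pr\{M_{n,p}^{B,\gamma}\le b_{n,\gamma}(x)\}\to0.
\]
Here $b_{n,0}(x)=\{(x+\log(2p))/2\}^{1/2}$ and $b_{n,1/2}(x)=\{x+D_{\mathrm{DE}}(p\log h_n)\}/A_{\mathrm{DE}}(p\log h_n)$. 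Both thresholds are of order $L_n^{1/2}$ with $L_n=\log(np)$. The boundary-weighted extreme scale $A_{\mathrm{DE}}(p\log h_n)$ is also of order $L_n^{1/2}$, since $\log\log h_n\lesssim \log\log n$ and $p$ is polynomial in $n$.

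Fix $\varepsilon_n=L_n^{-c_1}$ for a large fixed $c_1$. Take a $C^3$ function $\psi$ with
\[
\ind{z\le b}\le\psi(z)\le\ind{z\le b+\varepsilon_n},
\qquad
\|\psi^{(r)}\|_\infty\le C\varepsilon_n^{-r}.
\]
Compose $\psi$ with $\operatorname{smax}_{L_n^2}$ applied to the signed coordinates $\mathcal C_{\gamma,n}^{\pm}$, and take $f(y_1,y_2)=\psi(y_2)$ in Lemma~\ref{lem:block-comparison-general}, which ignores the quadratic block. The smoothing bias is at most $L_n^{-2}\log\{2p|\mathcal K_\gamma|\}\lesssim L_n^{-1}$, which is absorbed by widening $\varepsilon_n$. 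The comparison error is $C\varepsilon_n^{-3}r_{n,\gamma}$. Choosing $\kappa_b$ in that lemma so that $L_n^{3c_1}r_{n,\gamma}\to0$ makes this term vanish.

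Next, the joint bound of Lemma~\ref{lem:gaussian-cov-restoration}, again with $f$ depending only on the max coordinate, replaces $(\mathcal C_{\gamma,n}^G)^\pm$ by $(\mathcal C_{\gamma,n}^B)^\pm$. The cost is $\varepsilon_n^{-3}L_n^{c_{\mathrm{br}}}d_{n,\gamma}^{\mathrm{br}}$. This tends to zero because $d^{\mathrm{br}}_{n,0}=n^{-1}$ and $d^{\mathrm{br}}_{n,1/2}=\lambda_n^{-1}\asymp n^{-\lambda}$, both of which beat any power of $L_n$. Running the sandwich in both directions, with $\psi$ shifted to $b-\varepsilon_n$, bounds the difference of distribution functions by the comparison errors plus
\[
\Pr\{|M_{n,p}^{B,\gamma}-b|\le 2\varepsilon_n\}.
\]
This probability is the main obstacle. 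Assumption~\ref{ass:C4} does not give a covariance lower bound adequate for a generic Nazarov bound, although every coordinate has unit variance for $\gamma=1/2$. For $\gamma=0$ the variances $t(1-t)$ degenerate, but near the threshold only coordinates with variance close to $1/4$ matter.

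I would first try the Gaussian anti-concentration inequality \citep[Lemma~A.1]{chernozhukov2017central} for maxima of Gaussians whose variances are bounded below. It gives a bound $C\varepsilon_n\sqrt{\log(np)}\to0$. For $\gamma=1/2$ this applies directly. For $\gamma=0$, first restrict to $k$ with $v_k\ge1/8$. The discarded part exceeds $b_{n,0}(x)$ with vanishing probability, by a union bound using variance at most $1/8$ against the threshold $\sim\sqrt{\log(2p)/2}$. The lemma then applies to the restricted maximum.

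If this fails, there is a cleaner alternative. Lemma~\ref{lem:gaussian-extremes-general} says the normalized Gaussian statistic converges to a continuous Gumbel law. On the normalized scale, $\varepsilon_n$ corresponds to a window of width $O(\varepsilon_nL_n^{1/2})\to0$, because the derivative of the normalizing map is of order $L_n^{1/2}$ in both cases. Weak convergence to a continuous limit then gives, uniformly,
\[
\Pr\{|M^{B,\gamma}_{n,p}-b|\le2\varepsilon_n\}\to0.
\]
This avoids any covariance lower bound, and I would use this as the primary argument.

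Combining the three steps with Lemma~\ref{lem:gaussian-extremes-general} and Slutsky's theorem gives both displayed limits.
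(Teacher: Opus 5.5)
Your primary argument is correct and is essentially the paper's proof. It uses the same block comparison (Lemma~\ref{lem:block-comparison-general}), joint covariance restoration (Lemma~\ref{lem:gaussian-cov-restoration}) and log-sum-exp removal, and it controls the final $\varepsilon$-window through the continuous Gumbel limit of the ideal bridge maximum in Lemma~\ref{lem:gaussian-extremes-general}. The only difference is that you place the smoothed indicator at the unnormalized threshold $b_{n,\gamma}(x)$, which lets you skip the paper's $C^3$ clipping map (needed there because $2x^2-\log(2p)$ has unbounded derivative).

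Your Nazarov fallback is unnecessary, and as written it has a gap at $\gamma=0$. The union bound over all $pn$ discarded grid points gives $ne^{-2x}/(2p)$, which need not vanish when $p=o(n)$. It would need a per-coordinate Borell--TIS bound on $\sup_{t(1-t)\le 1/8}|Z_j(t)|$ instead.
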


\begin{proof}
Put $L_n=\log(np)$, $\beta_n=L_n^2$, and define
\[
M_{n,p}^{G,\gamma}=\max(\mathcal C_{\gamma,n}^{G})^\pm,
\qquad
M_{n,p}^{B,\gamma}=\max(\mathcal C_{\gamma,n}^{B})^\pm.
\]
For $\gamma=0$ and $\gamma=1/2$, respectively, put
\[
\mathsf N_{n,0}^{\mathrm M}(x)=2x^2-\log(2p),
\qquad
\mathsf N_{n,1/2}^{\mathrm M}(x)
=A_{\mathrm{DE}}(p\log h_n)x-D_{\mathrm{DE}}(p\log h_n).
\]
Fix $z\in\mathbb R$.  Let $\psi_{z,\varepsilon}$ be a $C^3$ function satisfying
\[
\ind{x\le z}\le \psi_{z,\varepsilon}(x)
\le\ind{x\le z+\varepsilon},
\qquad
\max_{1\le r\le3}\|\psi_{z,\varepsilon}^{(r)}\|_\infty
\le C\varepsilon^{-r}.
\]
Take $\varepsilon_n=L_n^{-2}$.  Introduce a $C^3$ clipping map
$\mathfrak t_{n,C_{\mathrm{clip}}}$ that equals the identity on
$[-C_{\mathrm{clip}}\sqrt{L_n},C_{\mathrm{clip}}\sqrt{L_n}]$, is bounded by $2C_{\mathrm{clip}}\sqrt{L_n}$, and has first
three derivatives bounded by a universal constant.  The composition
\[
(x,y)\longmapsto
 \psi_{z,\varepsilon_n}\!
 \left[\mathsf N_{n,\gamma}^{\mathrm M}
 \{\mathfrak t_{n,C_{\mathrm{clip}}}(y)\}\right]
\]
ignores its first argument and has $C^3$ norm bounded by
$CL_n^{c_0}$ for a fixed $c_0<\infty$.

Apply Lemma~\ref{lem:block-comparison-general} to this composition.  Since the
large-block exponent $\kappa_b$ is chosen sufficiently large,
\[
L_n^{c_0}r_{n,\gamma}\longrightarrow0.
\]
Consequently,
\begin{align*}
&\left|\E \psi_{z,\varepsilon_n}\left[
\mathsf N_{n,\gamma}^{\mathrm M}\left\{
\mathfrak t_{n,C_{\mathrm{clip}}}\bigl(
\operatorname{smax}_{\beta_n}(\mathcal C_{\gamma,n}^{\pm})
\bigr)\right\}\right]\right.\\
&\quad\left.-\E \psi_{z,\varepsilon_n}\left[
\mathsf N_{n,\gamma}^{\mathrm M}\left\{
\mathfrak t_{n,C_{\mathrm{clip}}}\bigl(
\operatorname{smax}_{\beta_n}((\mathcal C_{\gamma,n}^{G})^\pm)
\bigr)\right\}\right]\right|
\longrightarrow0.
\end{align*}

For the finite Gaussian array, Lemma~\ref{lem:cusum-cov-general} gives
\[
\sup_{j,k}\Var\{C_{0,j}^{G,0}(k)\}\le C,
\qquad
\sup_{j,k\in\mathcal K_{1/2}}
\Var\{C_{1/2,j}^{G,0}(k)\}\le C.
\]
The Gaussian tail inequality and a union bound over at most $2pn$ signed
coordinates imply
\[
\Pr\{M_{n,p}^{G,\gamma}>C_{\mathrm{clip}}\sqrt{L_n}\}
\le2pn\exp(-cC_{\mathrm{clip}}^2L_n).
\]
Choosing $C_{\mathrm{clip}}$ sufficiently large makes this probability converge to zero.
The preceding comparison, with a smooth upper-tail cutoff, gives the same
conclusion for $M_{n,p}^{\epsilon,\gamma}$.

The joint part of Lemma~\ref{lem:gaussian-cov-restoration}, applied with an
outer function that ignores the dense coordinate, yields
\begin{align*}
&\left|\E \psi_{z,\varepsilon_n}\left[
\mathsf N_{n,\gamma}^{\mathrm M}\left\{
\mathfrak t_{n,C_{\mathrm{clip}}}\bigl(
\operatorname{smax}_{\beta_n}((\mathcal C_{\gamma,n}^{G})^\pm)
\bigr)\right\}\right]\right.\\
&\quad\left.-\E \psi_{z,\varepsilon_n}\left[
\mathsf N_{n,\gamma}^{\mathrm M}\left\{
\mathfrak t_{n,C_{\mathrm{clip}}}\bigl(
\operatorname{smax}_{\beta_n}((\mathcal C_{\gamma,n}^{B})^\pm)
\bigr)\right\}\right]\right|\\
&\qquad\le
CL_n^{c_0+c_{\mathrm{br}}}d_{n,\gamma}^{\mathrm{br}}
\longrightarrow0,
\end{align*}
where $d_{n,0}^{\mathrm{br}}=n^{-1}$ and
$d_{n,1/2}^{\mathrm{br}}=\lambda_n^{-1}$.
This interpolation is valid at the boundary because it is applied to smooth
functions of the entire standardized array; it does not invoke a Gaussian
comparison inequality requiring every unweighted endpoint variance to be
bounded away from zero.

For every vector with at most $2pn$ coordinates,
\[
0\le\operatorname{smax}_{\beta_n}(x)-\max x
\le\frac{\log(2pn)}{L_n^2}\le\frac C{L_n}.
\]
On the clipping region, this changes $\mathsf N_{n,0}^{\mathrm M}$ and
$\mathsf N_{n,1/2}^{\mathrm M}$ by at most $C L_n^{-1/2}$.  The clipping probabilities
vanish, so all smooth-max and clipping operations may be removed.  Therefore,
for every continuity point $z$ of the limiting distribution,
\[
\Pr\{\mathsf N_{n,\gamma}^{\mathrm M}(M_{n,p}^{\epsilon,\gamma})\le z\}
-
\Pr\{\mathsf N_{n,\gamma}^{\mathrm M}(M_{n,p}^{B,\gamma})\le z\}
\longrightarrow0.
\]
Lemma~\ref{lem:gaussian-extremes-general} gives the Gumbel limit of the ideal
bridge statistic in both cases, completing the proof.
\end{proof}

\begin{proof}[of Theorem~\ref{null:Max}]
For every $j,k$,
\[
C_{\gamma,j}(k)
=C_{\gamma,j}^{\epsilon,0}(k)\frac{\sigma_j}{\widehat\sigma_j}.
\]
On the event $r_{\sigma,n}\le1/2$,
\[
\max_j\left|\frac{\sigma_j}{\widehat\sigma_j}-1\right|
\le2r_{\sigma,n}.
\]
For the unweighted statistic, Lemma~\ref{lem:max-transfer-general} implies
$M_{n,p}^{\epsilon,0}=O_p\{(\log p)^{1/2}\}$.  Hence
\begin{align*}
|M_{0}-M_{n,p}^{\epsilon,0}|
&\le2r_{\sigma,n}M_{n,p}^{\epsilon,0}\\
&=O_p\{a_{\sigma,n}(\log p)^{1/2}\}.
\end{align*}
The induced error in $2M_{0}^2$ is
$O_p\{a_{\sigma,n}\log p\}$, which converges to zero by part (i)'s assumption.

For the boundary-weighted statistic,
\[
M_{n,p}^{\epsilon,1/2}=O_p\{\log(p\log h_n)^{1/2}\}.
\]
Multiplication by $A_{\mathrm{DE}}(p\log h_n)$ shows that the feasible-normalization error is
\[
O_p\{a_{\sigma,n}\log(p\log h_n)\},
\]
which converges to zero by part (ii)'s assumption.  Slutsky's theorem and Lemma~\ref{lem:max-transfer-general} prove both assertions.
\end{proof}

\begin{lemma}[Matched Gaussian dense--extreme decoupling]\label{lem:gaussian-decoupling-general}
Suppose Assumptions~\ref{ass:C2} and \ref{ass:C4} hold and
$\log n=o(p^{1/4})$.  Define
\[
\mathsf N_{n,0}^{\mathrm S}(x)=x,
\qquad
\mathsf N_{n,1/2}^{\mathrm S}(x)=A_{\mathrm{DE}}(\log h_n)x-D_{\mathrm{DE}}(\log h_n),
\]
and let $\mathsf N_{n,0}^{\mathrm M}$ and $\mathsf N_{n,1/2}^{\mathrm M}$ be the max normalizations in Lemma~\ref{lem:max-transfer-general}.  Then, for each $\gamma\in\{0,1/2\}$ and every $x,y\in\mathbb R$,
\begin{align*}
&\Pr\left\{\mathsf N_{n,\gamma}^{\mathrm S}
\left(\max_{k\in\mathcal K_\gamma}G_{p,\gamma}(k/n)\right)\le x,
\mathsf N_{n,\gamma}^{\mathrm M}(M_{n,p}^{B,\gamma})\le y\right\}\\
&\quad-\Pr\left\{\mathsf N_{n,\gamma}^{\mathrm S}
\left(\max_{k\in\mathcal K_\gamma}G_{p,\gamma}(k/n)\right)\le x\right\}
\Pr\{\mathsf N_{n,\gamma}^{\mathrm M}(M_{n,p}^{B,\gamma})\le y\}
\longrightarrow0.
\end{align*}
The convergence is uniform after adding an arbitrary deterministic drift to the dense path.
\end{lemma}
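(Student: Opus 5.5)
The two arrays are built from the same Gaussian object $\mathbf B$. The dense path depends on $\mathbf B$ only through the quadratic form $\mathbf B^\top\bOme\mathbf B$, which aggregates all $p$ directions. The max array depends on the individual coordinates $\mathbf q_j^\top\bOme^{1/2}\mathbf B$. My plan is to show that their joint law is asymptotically that of independent pieces. I would do this by a leave-out decomposition followed by a smooth-function comparison, as in \citet{wang2023JRSSB}.

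\textbf{Step 1: isolate the coordinates that can drive the maximum.} Write $\mathbf Y(t)=\bOme^{1/2}\mathbf B(t)$, so that $Y_j(t)=\sigma_jZ_j(t)$ and $G_p(t)=\{\mathbf Y(t)^\top\mathbf Y(t)-v(t)\Tr\bOme\}/\{2\Tr(\bOme^2)\}^{1/2}$. Fix a small index set $\mathcal D\subset\{1,\ldots,p\}$. Under Assumption~\ref{ass:C4} and the extreme-value tail of Lemma~\ref{lem:gaussian-extremes-general}, I would take $\mathcal D$ to consist of the coordinates whose scanned maximum exceeds a level slightly below the Gumbel threshold. Because of the tail, $|\mathcal D|=O_p(\log p)$ or at least $o_p(p^{1/2})$.

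The substantive computation is then the following. The contribution of the coordinates in $\mathcal D$ to the normalized quadratic form is at most $\sum_{j\in\mathcal D}Y_j(t)^2/\sqrt p\lesssim |\mathcal D|\log(np)/\sqrt p$ uniformly in $t$, which is $o_p(1)$. For $\gamma=1/2$ this must be multiplied by $A_{\mathrm{DE}}(\log h_n)\asymp(\log\log n)^{1/2}$, and it is still negligible since $\log n=o(p^{1/4})$. The cross terms $Y_{\mathcal D}^\top(\cdot)Y_{\mathcal D^c}$ are handled similarly, using $\|\bOme\|_{\mathrm{op}}\le C_1$.

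\textbf{Step 2: conditional independence after regression.} Using a deterministic version of $\mathcal D$, I would regress $\mathbf Y_{\mathcal D^c}$ on $\mathbf Y_{\mathcal D}$ through the Schur complement. The residual process is then independent of $\mathbf Y_{\mathcal D}$ as a Gaussian process in $t$, because the temporal covariance factorizes as $K_B(s,t)\bOme$. The dense statistic equals its residual version plus an error of order $|\mathcal D|\,\|\bOme\|_{\mathrm{op}}\log(np)/\sqrt p$, which again vanishes. The max statistic, on the high-probability event where it is attained in $\mathcal D$, is a function of $\mathbf Y_{\mathcal D}$ only.

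\textbf{Step 3: remove the random choice of $\mathcal D$ (main obstacle).} Since $\mathcal D$ depends on the data, I would avoid it as in \citet{wang2023JRSSB}. Bound $\Pr(\text{dense}\le x,\ \max>y_n)$ by the Bonferroni/inclusion--exclusion expansion $\sum_{j}\Pr(\text{dense}\le x,\ \sup_k|Z_j|>y_n)$ truncated at a fixed order $d$. In each term, regress out the $d$ fixed coordinates, which yields independence of the dense residual. Then show $\Pr(\text{dense}\le x\mid \sup_k|Z_{j_1}|,\ldots,\sup_k|Z_{j_d}|>y_n)\to\Pr(\text{dense}\le x)$ uniformly over the tuples; this holds because conditioning on $d$ coordinates perturbs $G_{p,\gamma}$ by $O(d\log(np)/\sqrt p)$. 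Summing the terms and letting $d\to\infty$ gives the product limit.

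Three points need care. First, the weak-correlation pairs are controlled using $\mathcal H_p$ and $\delta_p$ from Assumption~\ref{ass:C4}. Second, for $\gamma=1/2$ the dense limit is a Darling--Erd\H{o}s law, so the dense marginal must be continuous in $x$. This follows from Lemma~\ref{lem:dense-boundary-extreme} and, for $\gamma=0$, from the continuity of $F_V$ in Theorem~\ref{Th1}. Third, the deterministic-drift uniformity should be automatic: the drift enters only the dense coordinate, which is the regressed (independent) part, so every bound above is unchanged.
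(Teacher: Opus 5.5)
Your Step 3 is, in substance, the paper's proof. For each fixed tuple $J$ in a Bonferroni (factorial-moment) expansion of the max event, you regress the remaining coordinates on $\mathbf B_{\Omega,J}$. The full-minus-residual dense form is then a low-rank Gaussian quadratic form, small uniformly on the grid (order $\log(np)/\sqrt p$ up to a factor depending on $|J|$), and it stays negligible after multiplication by $A_{\mathrm{DE}}(\log h_n)$. The residual dense path is independent of the coordinate paths in $J$. The Poisson approximation, including tuples containing highly correlated pairs, is taken from the cited result under Assumption~\ref{ass:C4}.

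The differences are cosmetic:
\begin{itemize}
\item The paper bounds the removal error through the leverages $(\bOme^2)_{jj}/\Tr(\bOme^2)\le C/p$ and a Hanson--Wright tail, rather than your cruder $\|\mathbf Y_J\|_2^2/\sqrt p$ bound.
\item It runs the expansion to order $C\log(np)$ with smooth maxima, rather than to a fixed order $d\to\infty$ with indicator sandwiches.
\item Your random-set Steps 1--2 are only heuristic and are superseded by Step 3.
\end{itemize}
For the drift-free statement your argument is sound, provided you make the per-tuple error relative: $\Pr(|R_J|>\epsilon,E_J)=o\{\Pr(E_J)\}$ uniformly in $J$. You get this by conditioning on $\mathbf B_{\Omega,J}$, which determines $E_J$. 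Add an absolute $(np)^{-C_d}$ bound for the event that $\max_k\|\mathbf Y_J(k/n)\|_2^2>C\log(np)$, with the constant chosen according to $d$.

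The one claim that does not go through as written is that uniformity over an arbitrary deterministic drift is ``automatic.'' Write $X_n$ for the normalized dense maximum and $A_{\pm\epsilon}=\{X_n\le x\pm\epsilon\}$. Your sandwich $\Pr(A_{-2\epsilon})\Pr(E_J)-o\le\Pr(A_0\cap E_J)\le\Pr(A_{+2\epsilon})\Pr(E_J)+o$ closes only if $\Pr(x-2\epsilon<X_n\le x+2\epsilon)$ is small uniformly in $n$. You justify this by continuity of $F_V$ and of the Gumbel limit.

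Under an arbitrary drift, $X_n$ need not have a limit at all, and it can concentrate. For $\gamma=0$, a drift that forces the maximizer to $k=1$ makes $X_n$ equal to the drift value plus $G_{p,0}(1/n)$, whose standard deviation is of order $n^{-1}$. That can be much smaller than your uniform removal error. So this step is not ``unchanged.''

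The clean fix is to prove the drifted statement for bounded Lipschitz (or $C^3$) functions $f$ of the dense coordinate. Then $|f(X_n)-f(X_n^{\mathrm{res},J})|\le\|f\|_{\mathrm{Lip}}\max_k|R_J(k)|$, the Bonferroni bookkeeping needs no anti-concentration, and the bound is genuinely uniform in the drift. This is what the paper's smooth-maximum formulation provides, since the drift only shifts the arguments of $\operatorname{smax}_\beta$. It is also exactly the form invoked in the proof of Theorem~\ref{indalter}.
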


\begin{proof}
Put
\[
\ell_{j,p}=\frac{(\bOme^2)_{jj}}{\Tr(\bOme^2)}.
\]
Assumption~\ref{ass:C2} gives
\[
\max_{j\le p}\ell_{j,p}
\le\frac{\|\bOme\|_{\mathrm{op}}^2}{\Tr(\bOme^2)}
\le\frac Cp,
\qquad
\sum_{j=1}^p\ell_{j,p}=1.
\]
For $J\subset\{1,\ldots,p\}$, Gaussian regression of the remaining
coordinates on the coordinates in $J$ decomposes the bridge as
\[
\mathbf B_{\Omega,J^c}(t)
=\mathbf U_J(t)+
\bOme_{J^c,J}\bOme_{J,J}^{-1}\mathbf B_{\Omega,J}(t),
\]
where $\mathbf U_J$ is independent of $\mathbf B_{\Omega,J}$.  After padding
$\mathbf U_J$ with zeros on $J$, the difference between the full dense
quadratic coordinate and its residual version is a centered Gaussian
quadratic form whose normalized coefficient matrix has rank at most $2|J|$.
Its squared Frobenius norm is bounded by
\[
C\sum_{j\in J}\ell_{j,p},
\]
and its operator norm is bounded by
$C\{\max_j\ell_{j,p}\}^{1/2}$.  Lemma~\ref{lem:gaussian-tail-general} therefore
gives, at every grid point and for $x>0$,
\[
\Pr\left\{|R_J(k)|>
C\left[
\left\{x\sum_{j\in J}\ell_{j,p}\right\}^{1/2}
+x\{\max_j\ell_{j,p}\}^{1/2}
\right]\right\}
\le2e^{-x}.
\]
For $|J|\le C\log(np)$, take $x=C\log(np)$ and use a union bound over the time
grid.  Then
\[
\max_k|R_J(k)|
=O_p\left\{\frac{\log(np)}{\sqrt p}\right\}.
\]
For $\gamma=1/2$, write the bridge at $t$ as
$\mathbf B_{\Omega}(t)=\sqrt{v(t)}\,\mathbf Z_t$ with
$\mathbf Z_t\sim N(\mathbf0,\bOme)$.  Multiplication by
$\varphi_{1/2}(t)=v(t)^{-1}$ cancels the factor $v(t)$ in every centered
quadratic coefficient.  Consequently the full-minus-residual weighted
quadratic form still has rank at most $2|J|$, operator norm at most
$C\{\max_j\ell_{j,p}\}^{1/2}$, and squared Frobenius norm at most
$C\sum_{j\in J}\ell_{j,p}$, uniformly on the logit grid.  The preceding tail
bound and union argument therefore give the same
$O_p\{\log(np)/\sqrt p\}$ removal error for $\gamma=1/2$.

We now apply the factorial-moment decoupling argument of
\citet[Proposition~S.1 and Lemma~S.7]{wang2023JRSSB}.  For completeness, its
relevant calculation is recorded here.  Let $E_j(y)$ denote the event that the
$j$th coordinate path exceeds the max-statistic threshold corresponding to
$y$, and put $N_y=\sum_{j=1}^p\ind{E_j(y)}$.  Bonferroni's inequalities express
\[
\E\{f(\mathcal Q_{\mathrm{dense}})\ind{N_y=0}\}
\]
between two alternating sums of
\[
\sum_{1\le j_1<\cdots<j_r\le p}
\E\left[f(\mathcal Q_{\mathrm{dense}})\prod_{a=1}^r\ind{E_{j_a}(y)}\right],
\qquad r\le R_n,
\]
where $R_n=C\log(np)$.  For $J=\{j_1,\ldots,j_r\}$, replace $\mathcal Q_{\mathrm{dense}}$ by the
residual quadratic process obtained above.  That residual process is
independent of the coordinate paths indexed by $J$.  A smooth approximation
to the dense maximum has derivative sums
\[
\sum_r|\partial_r\operatorname{smax}_\beta|=1,
\qquad
\sum_{r,u}|\partial_{ru}\operatorname{smax}_\beta|\le2\beta,
\qquad
\sum_{r,u,v}|\partial_{ruv}\operatorname{smax}_\beta|\le6\beta^2.
\]
The preceding regression bound and these derivative sums show that the total
error contributed by all factorial moments up to $R_n$ is bounded by
\[
C\{\log(np)\}^2
\left(\max_j\ell_{j,p}\right)^{1/2}
\le C\frac{\{\log(np)\}^2}{\sqrt p}.
\]
The high-correlation neighborhoods in Assumption~\ref{ass:C4} make the
Bonferroni remainder and the contribution of tuples containing a highly
correlated pair equal to a deterministic quantity $q_{\gamma,n}^G\to0$; this
is precisely the Poisson-approximation bound verified in the cited result.
For $\gamma=1/2$, both component paths use the same logit interval, so the
one-coordinate excursion calculation is unchanged.  For $\gamma=0$, both
paths use the unweighted Brownian-bridge interval.

Adding a deterministic scalar function to the dense path changes only the
arguments of the smooth maximum; the three derivative sums displayed above
are unchanged.  Hence the bound is uniform over such a drift.  Removing the
smooth approximations by Gaussian anti-concentration gives
\begin{align*}
&\sup_{x,y}\left|
\Pr\{\mathsf N_{n,\gamma}^{\mathrm S}(Q)\le x,
       \mathsf N_{n,\gamma}^{\mathrm M}(M)\le y\}
-
\Pr\{\mathsf N_{n,\gamma}^{\mathrm S}(Q)\le x\}
\Pr\{\mathsf N_{n,\gamma}^{\mathrm M}(M)\le y\}
\right|\\
&\qquad\le
C\frac{\{\log(np)\}^2}{\sqrt p}+q_{\gamma,n}^G.
\end{align*}
The right-hand side converges to zero under the growth condition in the
lemma, which proves the two matched assertions.
\end{proof}

\begin{proof}[of Theorem~\ref{indnull}]
Put $L_n=\log(np)$ and $\beta_n=L_n^2$.  For $\gamma\in\{0,1/2\}$, define the population quantities
\[
X_{n,\gamma}^{\epsilon}
=\max_{k\in\mathcal K_\gamma}
\frac{Q_{\gamma,n,p}(k/n)}{\omega_p},
\qquad
X_{n,\gamma}^{G}
=\max_{k\in\mathcal K_\gamma}
\frac{Q_{\gamma,n,p}^{G}(k/n)}{\omega_p},
\]
\[
X_{n,\gamma}^{B}
=\max_{k\in\mathcal K_\gamma}G_{p,\gamma}(k/n),
\]
and
\[
M_{n,p}^{\epsilon,\gamma}=\max\mathcal C_{\gamma,n}^{\pm},
\quad
M_{n,p}^{G,\gamma}=\max(\mathcal C_{\gamma,n}^{G})^{\pm},
\quad
M_{n,p}^{B,\gamma}=\max(\mathcal C_{\gamma,n}^{B})^{\pm}.
\]
Let $\mathsf N_{n,\gamma}^{\mathrm S}$ be the dense normalization in Lemma~\ref{lem:gaussian-decoupling-general} and $\mathsf N_{n,\gamma}^{\mathrm M}$ the max normalization in Lemma~\ref{lem:max-transfer-general}.

For a bounded Lipschitz function $\psi$ on $\mathbb R^2$, convolution with a compactly supported smooth density gives $\psi_\eta\in C_b^3(\mathbb R^2)$ such that
\[
\|\psi_\eta-\psi\|_\infty\le C\eta,
\qquad
\|\psi_\eta\|_{C^3}\le C\eta^{-3}.
\]
Insert smooth clipping maps at levels $x_0\sqrt{\log\log h_n}$ for the $\gamma=1/2$ dense coordinate and $x_0\sqrt{L_n}$ for the max coordinate.  Theorem~\ref{Th1}, Lemma~\ref{lem:dense-boundary-extreme}, and Gaussian union bounds show that the clipping probabilities tend to zero as $x_0\to\infty$, uniformly in $n$.

Apply Lemma~\ref{lem:block-comparison-general} to the composition of $\psi_\eta$ with the two matched smooth maxima and their normalizations.  The chain rule contributes only a fixed power of $L_n$, so the block exponent can be chosen such that
\[
\left|\E\psi_\eta\{\mathsf N_{n,\gamma}^{\mathrm S}(X_{n,\gamma}^{\epsilon}),
\mathsf N_{n,\gamma}^{\mathrm M}(M_{n,p}^{\epsilon,\gamma})\}
-
\E\psi_\eta\{\mathsf N_{n,\gamma}^{\mathrm S}(X_{n,\gamma}^{G}),
\mathsf N_{n,\gamma}^{\mathrm M}(M_{n,p}^{G,\gamma})\}\right|
\le C\eta^{-3}L_n^Cr_{n,\gamma}+o(1).
\]
Applying the joint covariance-restoration bound in Lemma~\ref{lem:gaussian-cov-restoration} to the same two smooth coordinates gives
\[
\left|\E\psi_\eta\{\mathsf N_{n,\gamma}^{\mathrm S}(X_{n,\gamma}^{G}),
\mathsf N_{n,\gamma}^{\mathrm M}(M_{n,p}^{G,\gamma})\}
-
\E\psi_\eta\{\mathsf N_{n,\gamma}^{\mathrm S}(X_{n,\gamma}^{B}),
\mathsf N_{n,\gamma}^{\mathrm M}(M_{n,p}^{B,\gamma})\}\right|
\le C\eta^{-3}L_n^Cd_{n,\gamma}^{\mathrm{br}}+o(1).
\]
The log-sum-exp error is at most $C/L_n$ before normalization and therefore is negligible for both $\mathsf N_{n,\gamma}^{\mathrm S}$ and $\mathsf N_{n,\gamma}^{\mathrm M}$.  Letting first $n,p\to\infty$ and then $\eta\downarrow0$ proves convergence in bounded-Lipschitz distance from the population non-Gaussian pair to the ideal bridge pair.

Lemma~\ref{lem:gaussian-decoupling-general} factorizes the ideal pair.  For $\gamma=0$, Lemma~\ref{lem:ideal-gaussian-dense} and Lemma~\ref{lem:gaussian-extremes-general} give marginal limits $F_V$ and $F_{\mathrm{Gu}}$.  For $\gamma=1/2$, Lemmas~\ref{lem:dense-boundary-extreme} and \ref{lem:gaussian-extremes-general} give two standard Gumbel marginals.  Hence the population joint limits are the products displayed in Theorem~\ref{indnull}.

It remains to restore feasible centering and scaling.  Lemma~\ref{lem:centering-general} and Theorem~\ref{Th2} give
\[
\max_{k\in\mathcal K_0}\left|
\frac{Q_{0,k}}{\widehat\omega}
-\frac{W_0(k)-\mu_{0,M,k}}{\omega_p}\right|
=O_p(r_{\mu,n}+r_{\omega,n}),
\]
and
\begin{align*}
&A_{\mathrm{DE}}(\log h_n)
\max_{k\in\mathcal K_{1/2}}
\left|
\frac{Q_{1/2,k}}{\widehat\omega}
-\frac{W_{1/2}(k)-\mu_{1/2,M,k}}{\omega_p}
\right|\\
&\qquad=O_p\left[
\sqrt{\log\log h_n}\,r_{\mu,n}
+(\log\log h_n)r_{\omega,n}\right].
\end{align*}
Define
\[
S_{0}^{(1)}=\max_{1\le k<n}\frac{Q_{0,k}}{\widehat\omega}.
\]
Lemma~\ref{lem:cf2-remainder} gives
\[
|S_{0}-S_{0}^{(1)}|=O_p(p^{-1/2}).
\]
For $\gamma=1/2$, the statistic $S_{1/2}$ is already the feasible first-order maximum in the preceding display.  Under $H_0$ and $M=\lceil(n\wedge p)^{1/8}\rceil$, all feasible rates converge to zero.

The max-component errors are
$O_p\{a_{\sigma,n}\log(2p)\}$ and
$O_p\{a_{\sigma,n}\log(p\log h_n)\}$.
All deterministic rates converge to zero, so Slutsky's theorem proves the feasible matched-pair limits.
\end{proof}

\begin{proof}[of Proposition~\ref{prop:cauchy-size}]
Theorem~\ref{indnull} and continuity of the two marginal distribution functions imply, separately for $\gamma=0$ and $\gamma=1/2$,
\[
(p_{S_{\gamma,n,p}},p_{M_{\gamma,n,p}})
\Longrightarrow(U_{1,\gamma},U_{2,\gamma}),
\]
where $U_{1,\gamma}$ and $U_{2,\gamma}$ are independent $U(0,1)$ variables.  Since
\[
\tan\{\pi(1/2-U)\}\sim C(0,1)
\]
and the average of two independent standard Cauchy variables is standard Cauchy,
\[
T_{CC,\gamma}\Longrightarrow C(0,1).
\]
Continuity of $F_C$ then gives
\[
\Pr(p_{CC,\gamma}\le\alpha)\to\alpha,
\qquad \gamma\in\{0,1/2\}.
\]
\end{proof}

\section{Joint factorization under local alternatives}

The local-alternative proof uses two elementary maximal inequalities.  The first controls every deterministic projection of the noise CUSUM path.  The second is a Gaussian quadratic-form inequality used when the signal coordinates are removed from the ideal Gaussian process.

\begin{lemma}[Maximal inequality for a projected CUSUM]\label{lem:projected-cusum-general}
Under Assumptions~\ref{ass:C1}--\ref{ass:C3}, let $\mathbf a\in\mathbb R^p$ be deterministic and define
\[
Y_i(\mathbf a)=\mathbf a^\top\bepsilon_i,
\qquad
S_k(\mathbf a)=\sum_{i=1}^kY_i(\mathbf a),
\qquad
Z_k(\mathbf a)=S_k(\mathbf a)-\frac{k}{n}S_n(\mathbf a).
\]
There is a constant $C$ independent of $n,p$, and $\mathbf a$ such that
\[
\E\max_{1\le k\le n}|S_k(\mathbf a)|^4
\le Cn^2(\mathbf a^\top\bOme\mathbf a)^2,
\]
\[
\E\max_{1\le k<n}|Z_k(\mathbf a)|^4
\le Cn^2(\mathbf a^\top\bOme\mathbf a)^2.
\]
Consequently, for every $x>0$,
\[
\Pr\left\{
\max_{1\le k<n}|Z_k(\mathbf a)|
>x\{n\mathbf a^\top\bOme\mathbf a\}^{1/2}
\right\}
\le Cx^{-4}.
\]
If $\lambda_n\asymp n^\lambda$, then
\[
\max_{k\in\mathcal K_{1/2}}
\frac{|Z_k(\mathbf a)|}{\{n v_k\}^{1/2}}
=O_p\left[\{\mathbf a^\top\bOme\mathbf a\}^{1/2}
\sqrt{\log\log h_n}\right].
\]
\end{lemma}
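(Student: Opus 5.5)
The scalar process $Y_i(\mathbf a)=\mathbf a^\top\bepsilon_i$ carries all the structure we need. By Lemma~\ref{lem:vector-basic}(ii), for every deterministic coefficient sequence $w$ and every $q\ge2$,
\[
\left\|\sum_iw_iY_i(\mathbf a)\right\|_q\le Cq^{c_{\mathrm{dep}}}\|\mathbf a\|_2\|w\|_2 .
\]
Assumption~\ref{ass:C2} gives $\|\mathbf a\|_2^2\le C_0^{-1}\mathbf a^\top\bOme\mathbf a$, so this becomes $\|\sum_iw_iY_i\|_4\le C(\mathbf a^\top\bOme\mathbf a)^{1/2}\|w\|_2$, with a constant free of $n$, $p$ and $\mathbf a$. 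Taking $w$ as the indicator of $(k_1,k_2]$ gives the increment bound
\[
\E|S_{k_2}-S_{k_1}|^4\le C(\mathbf a^\top\bOme\mathbf a)^2(k_2-k_1)^2 .
\]

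The set function $g(k_1,k_2)=C^{1/2}(\mathbf a^\top\bOme\mathbf a)(k_2-k_1)$ is superadditive, and the increment bound reads $\E|S_{k_2}-S_{k_1}|^4\le g(k_1,k_2)^2$ with exponent $2>1$. The maximal inequality of \citet[Theorem~1]{moricz1982}, already used in Lemma~\ref{lem:block-comparison-general}, then gives the first claim, $\E\max_k|S_k|^4\le Cn^2(\mathbf a^\top\bOme\mathbf a)^2$. For the bridge version, note that
\[
\max_k|Z_k|\le\max_k|S_k|+|S_n|\le 2\max_k|S_k| ,
\]
so the second bound follows at once. Markov's inequality applied to the fourth moment gives the tail bound $Cx^{-4}$.

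The boundary-weighted claim is the main step. By time reversal, $S_n-S_k$ is a partial sum of the reversed stationary process, so it suffices to treat $k\le n/2$. Write
\[
Z_k=S_k-\tfrac kn S_n .
\]
For $k\le n/2$ we have $v_k\asymp k/n$. The drift part then satisfies $(k/n)|S_n|/\sqrt{nv_k}\le C|S_n|\,n^{-1/2}$, which is $O_p\{(\mathbf a^\top\bOme\mathbf a)^{1/2}\}$. It remains to control $\max_{\lambda_n\le k\le n/2}|S_k|/\sqrt k$. Split this range into dyadic blocks $(2^r,2^{r+1}]$; only $R_n\asymp\log(n/\lambda_n)\asymp\log n$ blocks are needed, and $\log R_n\asymp\log\log h_n$ because $\lambda_n\asymp n^\lambda$.

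A fourth-moment union bound over these blocks would only give $R_n^{1/4}$, which is polynomial in $\log n$ and too crude. Instead use high moments with $q=q_n\asymp\log R_n$. Apply the Móricz inequality at order $q$: its increment exponent is $q/2>1$ and it holds with a constant $C^q q^{qc_{\mathrm{dep}}}$, obtained from Lemma~\ref{lem:vector-basic}(ii). This gives, on each block,
\[
\left\|\max_{k\le 2^{r+1}}|S_k|\right\|_q\le C q^{c_{\mathrm{dep}}}2^{r/2}(\mathbf a^\top\bOme\mathbf a)^{1/2} .
\]
Then use $\|\max_{r\le R_n}X_r\|_q\le R_n^{1/q}\max_r\|X_r\|_q$. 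With $q\asymp\log R_n$ this yields a maximum of order $(\log\log h_n)^{c_{\mathrm{dep}}}(\mathbf a^\top\bOme\mathbf a)^{1/2}$.

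This yields $(\log\log h_n)^{1/2}$ only when $c_{\mathrm{dep}}=1/2$, which holds under (MD) but not in general. To recover the sharp factor $\sqrt{\log\log h_n}$ in general, I would first try to show that each block maximum has a sub-Gaussian tail at the relevant level $x\asymp\sqrt{\log R_n}$. One route is a Bernstein-type inequality for the block sums, available from \citet{merlevede2011bernstein} under (AM) and from martingale approximation under (PD). The heavier part of the tail only affects levels $x\gg\log R_n$, so the union bound over $R_n$ blocks at the level $\sqrt{\log R_n}$ would go through.

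If that refinement fails, a fallback is to compare with the Gaussian: the Gaussian analogue has exactly $O_p(\sqrt{\log\log h_n})$ by the law of the iterated logarithm and the OU representation used in Lemma~\ref{lem:gaussian-extremes-general}, and the one-dimensional ($p=1$) version of the block Gaussian comparison in Lemma~\ref{lem:block-comparison-general} would transfer this. I expect this sharp-constant step to be the main obstacle.
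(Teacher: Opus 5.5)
Your argument for the first three claims is exactly the paper's: Lemma~\ref{lem:vector-basic}(ii) with $q=4$ together with $\lambda_{\min}(\bOme)\ge C_0$ for the interval sums, \citet[Theorem~1]{moricz1982} with exponent two, $\max_k|Z_k|\le2\max_k|S_k|$, and Markov's inequality.

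For the boundary-weighted claim, the route you actually carry out is sound but too weak. That route uses dyadic blocks, M\'oricz at order $q\asymp\log R_n$, and $R_n^{1/q}=O(1)$, and as you note it gives only $(\log\log h_n)^{c_{\mathrm{dep}}}$. The exponent supplied by Lemma~\ref{lem:vector-basic}(ii) is $\max(c_{\mathrm{AM}},1)$. Outside branch (MD) this bound is therefore at least $\log\log h_n$, so that route alone does not prove the lemma.

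The proof closes only through your fallback, and that is what the paper does. It applies the scalar version of Lemma~\ref{lem:block-comparison-general} to $Z_k(\mathbf a)/\{nv_k\,\mathbf a^\top\bOme\mathbf a\}^{1/2}$. It checks that the covariance of the Gaussian analogue is within $C\lambda_n^{-1}$ of $e^{-|u_k-u_l|/2}$, where $u_k=\log\{k/(n-k)\}$. It then cuts the logit range into $O(\log h_n)$ unit blocks and bounds each block maximum by an entropy integral \citep[Theorem~2.1.1]{adler2007random} and Borell's inequality. A union bound follows, and the result is transferred back at cost $r_{n,1/2}\to0$.

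One link in your fallback needs repair. The law of the iterated logarithm and the Ornstein--Uhlenbeck representation of Lemma~\ref{lem:gaussian-extremes-general} describe the ideal Brownian bridge. The comparison, however, delivers the covariance-matched finite-sample Gaussian process, whose covariance is only approximately of Ornstein--Uhlenbeck form.

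Your own machinery fixes this more cheaply than the paper's entropy--Borell step. The Gaussian analogue is the bridge of partial sums of a stationary Gaussian sequence with autocovariances $\mathbf a^\top\bGam(h)\mathbf a$. Its $L_q$ norms grow like $\sqrt q$, so your time reversal, drift bound, and dyadic M\'oricz argument at $q\asymp\log R_n$ give $O_p\{(\mathbf a^\top\bOme\mathbf a)^{1/2}\sqrt{\log\log h_n}\}$ for it directly. The transfer is then a smoothed indicator at level $C\sqrt{\log\log h_n}$ composed with $\operatorname{smax}$, as in Lemma~\ref{lem:cusum-concentration-general}.

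Your Bernstein route would be a genuinely different, comparison-free proof. Under (AM) it is available: the maximal inequality of \citet{merlevede2011bernstein} is sub-Gaussian at all levels up to a power of the block length. Every block here has length at least $\lambda_n\asymp n^{\lambda}$, and that (not $x\gg\log R_n$) is why the heavy-tail term is harmless. Under (PD), unconditional sub-Gaussianity of the projection martingale differences yields only linear growth in $q$ through Burkholder's inequality. You would also need concentration of the conditional variances, so that branch is not automatic.
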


\begin{proof}
Let $I=\{u,\ldots,v\}$ and $m=v-u+1$.  Lemma~\ref{lem:vector-basic}(ii), applied with $q=4$, $w_i=\ind{i\in I}$, and $\mathbf u=\mathbf a/\|\mathbf a\|_2$, gives
\begin{align*}
\left\|\sum_{i=u}^v\mathbf a^\top\bepsilon_i\right\|_4
&\le C4^{c_{\mathrm{dep}}}m^{1/2}\|\mathbf a\|_2\\
&\le C m^{1/2}\|\mathbf a\|_2.
\end{align*}
Assumption~\ref{ass:C2} implies
\[
\|\mathbf a\|_2^2
\le C_0^{-1}\mathbf a^\top\bOme\mathbf a.
\]
Hence every interval sum satisfies
\[
\E\left|\sum_{i=u}^vY_i(\mathbf a)\right|^4
\le Cm^2(\mathbf a^\top\bOme\mathbf a)^2.
\]
The maximal moment inequality of \citet[Theorem~1]{moricz1982}, with moment order four and interval exponent two, therefore yields
\[
\E\max_{1\le k\le n}|S_k(\mathbf a)|^4
\le Cn^2(\mathbf a^\top\bOme\mathbf a)^2.
\]
Moreover,
\[
\max_{1\le k<n}|Z_k(\mathbf a)|
\le \max_{1\le k\le n}|S_k(\mathbf a)|+|S_n(\mathbf a)|
\le2\max_{1\le k\le n}|S_k(\mathbf a)|.
\]
The second fourth-moment bound follows.  Markov's inequality gives the stated probability bound.

For the boundary-standardized maximum, apply the scalar version of the block comparison in Lemma~\ref{lem:block-comparison-general} to the signed array
\[
\left\{\frac{Z_k(\mathbf a)}{\{n v_k\}^{1/2}(\mathbf a^\top\bOme\mathbf a)^{1/2}}:\ k\in\mathcal K_{1/2}\right\}.
\]
Let $G_{n,\mathbf a}(k)$ denote its covariance-matched Gaussian analogue and put
\[
u_k=\log\{k/(n-k)\}.
\]
The CUSUM covariance calculation and the absolute summability in Assumption~\ref{ass:C2} give, uniformly over $k,l\in\mathcal K_{1/2}$,
\[
\left|\Cov\{G_{n,\mathbf a}(k),G_{n,\mathbf a}(l)\}
-\exp\{-|u_k-u_l|/2\}\right|\le C\lambda_n^{-1}.
\]
Partition the logit interval $[-\log h_n/2,\log h_n/2]$ into unit subintervals.  The number of such subintervals is at most $C\log h_n$.  Within one subinterval, the canonical Gaussian metric satisfies
\[
\E\{G_{n,\mathbf a}(k)-G_{n,\mathbf a}(l)\}^2
\le C\{|u_k-u_l|+\lambda_n^{-1}\}.
\]
The entropy integral in \citet[Theorem~2.1.1]{adler2007random} is therefore bounded uniformly over the unit subintervals; the contribution below the scale $\lambda_n^{-1/2}$ is at most $C\lambda_n^{-1/2}\sqrt{\log n}=o(1)$.  Borell's inequality then gives constants $C,c>0$ such that, for every unit subinterval and every $x>0$,
\[
\Pr\left\{\max_{u_k\text{ in the subinterval}}|G_{n,\mathbf a}(k)|>C+x\right\}
\le2e^{-cx^2}.
\]
A union bound over at most $C\log h_n$ subintervals implies
\[
\max_{k\in\mathcal K_{1/2}}|G_{n,\mathbf a}(k)|
=O_p\{\sqrt{\log\log h_n}\}.
\]
The scalar comparison error is bounded by $r_{n,1/2}$ from Lemma~\ref{lem:block-comparison-general}, and $r_{n,1/2}\to0$.  Transferring the Gaussian bound to the original array proves the final assertion.
\end{proof}

\begin{lemma}[Gaussian quadratic and bilinear tails]\label{lem:gaussian-tail-general}
Let $\mathbf g\sim N_d(\mathbf0,\mathbf I_d)$ and let $\mathbf A$ be symmetric.  Then, for every $x>0$,
\[
\Pr\{|\mathbf g^\top\mathbf A\mathbf g-\Tr(\mathbf A)|>x\}
\le2\exp\left[-c\min\left\{
\frac{x^2}{\|\mathbf A\|_{\mathrm F}^2},
\frac{x}{\|\mathbf A\|_{\mathrm{op}}}
\right\}\right].
\]
If $\mathbf g$ and $\mathbf h$ are independent standard Gaussian vectors, the same bound, after changing $c$, holds for $|\mathbf g^\top\mathbf A\mathbf h|$.
\end{lemma}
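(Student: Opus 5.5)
The plan is to reduce both assertions to the diagonal case by orthogonal invariance, and then apply the Hanson--Wright inequality in its Gaussian form. Since $\mathbf A$ is symmetric, write $\mathbf A=\mathbf P\Diag(\kappa_1,\ldots,\kappa_d)\mathbf P^\top$ with $\mathbf P$ orthogonal. Because $\mathbf P^\top\mathbf g\overset d=\mathbf g$, we have
\[
\mathbf g^\top\mathbf A\mathbf g-\Tr(\mathbf A)\overset d=\sum_{r=1}^d\kappa_r(z_r^2-1),
\]
where the $z_r$ are independent standard normal variables. Moreover, $\sum_r\kappa_r^2=\|\mathbf A\|_{\mathrm F}^2$ and $\max_r|\kappa_r|=\|\mathbf A\|_{\mathrm{op}}$. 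The first claim is therefore a Bernstein-type tail bound for a weighted sum of independent centered $\chi^2_1$ variables.

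For the quadratic form, I will use the exact log-moment generating function
\[
\log\E e^{\theta\kappa(z^2-1)}=-\tfrac12\log(1-2\theta\kappa)-\theta\kappa,
\]
valid for $|\theta\kappa|<1/2$. When $|\theta\kappa|\le1/4$ this is at most $2\theta^2\kappa^2$, using $-\tfrac12\log(1-u)-u/2\le u^2/2$ for $|u|\le1/2$. Summing over $r$ gives $\log\E e^{\theta Y}\le2\theta^2\|\mathbf A\|_{\mathrm F}^2$ for $|\theta|\le(4\|\mathbf A\|_{\mathrm{op}})^{-1}$. Chernoff's bound, optimized at
\[
\theta=\min\{x/(4\|\mathbf A\|_{\mathrm F}^2),\,1/(4\|\mathbf A\|_{\mathrm{op}})\},
\]
yields $\exp[-c\min\{x^2/\|\mathbf A\|_{\mathrm F}^2,x/\|\mathbf A\|_{\mathrm{op}}\}]$. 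The same argument applied to $-Y$ handles the lower tail, and the two tails give the factor $2$.

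For the bilinear form, I will use polarization. Stack $\mathbf w=(\mathbf g^\top,\mathbf h^\top)^\top\sim N_{2d}(\mathbf0,\mathbf I_{2d})$ and set
\[
\widetilde{\mathbf A}=\frac12\begin{pmatrix}\mathbf0&\mathbf A\\ \mathbf A&\mathbf0\end{pmatrix}.
\]
Then $\mathbf g^\top\mathbf A\mathbf h=\mathbf w^\top\widetilde{\mathbf A}\mathbf w$. Also $\Tr(\widetilde{\mathbf A})=0$, $\|\widetilde{\mathbf A}\|_{\mathrm F}^2=\|\mathbf A\|_{\mathrm F}^2/2$, and $\|\widetilde{\mathbf A}\|_{\mathrm{op}}=\|\mathbf A\|_{\mathrm{op}}/2$, because the eigenvalues of the block matrix are $\pm\kappa_r/2$. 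Applying the first part to $\widetilde{\mathbf A}$ gives the bound with a changed constant. This is the same block device used in the increment bound of Lemma~\ref{lem:gaussian-cov-restoration}. Symmetry of $\mathbf A$ is not needed for this step: any $\mathbf A$ can be symmetrized inside the block construction.

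The only step needing care is the elementary inequality for the log-moment generating function on $|u|\le1/2$. It follows from the power series $-\tfrac12\log(1-u)-u/2=\sum_{m\ge2}u^m/(2m)\le u^2/(4(1-|u|))\le u^2/2$. Everything else is routine Chernoff optimization, so I do not expect a genuine obstacle.
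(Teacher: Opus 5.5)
Your proposal is correct and follows essentially the same route as the paper: diagonalize $\mathbf A$, bound the exact chi-square log-moment generating function by a multiple of $\theta^2\|\mathbf A\|_{\mathrm F}^2$ on $|\theta|\le(4\|\mathbf A\|_{\mathrm{op}})^{-1}$, apply Chernoff to both tails, and reduce the bilinear form to a $2d$-dimensional quadratic form via the block matrix $\widetilde{\mathbf A}$. Your power-series inequality and explicit choice of $\theta$ only make the paper's unspecified constant concrete (they give $c=1/8$ in the quadratic case).
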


\begin{proof}
Let $\lambda_1,\ldots,\lambda_d$ be the eigenvalues of $\mathbf A$.  For
$|t|\le(4\|\mathbf A\|_{\mathrm{op}})^{-1}$,
\begin{align*}
\log\E\exp\{t(\mathbf g^\top\mathbf A\mathbf g-\Tr\mathbf A)\}
&=\sum_{r=1}^d\left[-\frac12\log(1-2t\lambda_r)-t\lambda_r\right]\\
&\le Ct^2\sum_{r=1}^d\lambda_r^2
=Ct^2\|\mathbf A\|_{\mathrm F}^2.
\end{align*}
Chernoff's inequality, first with
$t=x/(2C\|\mathbf A\|_{\mathrm F}^2)$ when this value is admissible and otherwise with
$t=(4\|\mathbf A\|_{\mathrm{op}})^{-1}$, gives the upper tail.  Replacing $\mathbf A$ by $-\mathbf A$ gives the lower tail.

For the bilinear form, put
\[
\mathbf z=(\mathbf g^\top,\mathbf h^\top)^\top,
\qquad
\widetilde{\mathbf A}
=\frac12\begin{pmatrix}\mathbf0&\mathbf A\\
\mathbf A^\top&\mathbf0\end{pmatrix}.
\]
Then $\mathbf g^\top\mathbf A\mathbf h=\mathbf z^\top\widetilde{\mathbf A}\mathbf z$,
$\Tr(\widetilde{\mathbf A})=0$,
$\|\widetilde{\mathbf A}\|_{\mathrm F}^2=\|\mathbf A\|_{\mathrm F}^2/2$, and
$\|\widetilde{\mathbf A}\|_{\mathrm{op}}=\|\mathbf A\|_{\mathrm{op}}/2$.  The first assertion completes the proof.
\end{proof}

\begin{proof}[of Theorem~\ref{indalter}]
Let $a=|\mathcal A|$ and define
\[
b_{\delta,n}=\left(\frac{n\bdelta^\top\bOme\bdelta}{p}\right)^{1/2},
\qquad
b_{\mathcal A,n}=\left\{\frac{(a+1)(\log n)^2}{p}\right\}^{1/2},
\qquad
b_{\chi,n}=\{\chi_{\mathcal A,p}\log^2(np)\}^{1/2}.
\]
The assumptions imply
\[
b_{\delta,n}+b_{\mathcal A,n}+b_{\chi,n}\longrightarrow0,
\qquad
b_{\delta,n}\sqrt{\log\log h_n}\longrightarrow0
\quad\text{for }\gamma=1/2.
\]

For $1\le k<n$, put
\[
\mathbf U_k=\sum_{i=1}^n c_{i,k}^{\mathrm C}\bX_i,
\qquad
\mathbf m_k=\E\mathbf U_k,
\qquad
\mathbf Z_k=\mathbf U_k-\mathbf m_k.
\]
A direct summation gives
\[
\mathbf m_k=c_k\bdelta,
\qquad
c_k=
\begin{cases}
-k(n-\tau)/n,&k\le\tau,\\
-\tau(n-k)/n,&k>\tau.
\end{cases}
\]
Consequently, $|c_k|\le n$ and
\begin{align*}
\frac{W_\gamma(k)-\mu_{\gamma,M,k}}{\omega_p}
={}&\varphi_\gamma(k/n)\frac{\|\mathbf Z_k\|_2^2/(n\sqrt p)-\mu_{M,k}}{\omega_p}
+\mathfrak d_{\gamma,n}^{\mathrm{det}}(k)+R_{\delta,\gamma,n}(k),\\
\mathfrak d_{\gamma,n}^{\mathrm{det}}(k)
={}&\varphi_\gamma(k/n)\frac{\|\mathbf m_k\|_2^2}{n\sqrt p\,\omega_p},\\
R_{\delta,\gamma,n}(k)
={}&\varphi_\gamma(k/n)\frac{2\mathbf m_k^\top\mathbf Z_k}{n\sqrt p\,\omega_p}.
\end{align*}
Lemma~\ref{lem:projected-cusum-general}, applied with $\mathbf a=\bdelta$, gives
\[
\max_{k\in\mathcal K_0}|R_{\delta,0,n}(k)|=O_p(b_{\delta,n}),
\]
and
\[
\max_{k\in\mathcal K_{1/2}}|R_{\delta,1/2,n}(k)|
=O_p\{b_{\delta,n}\sqrt{\log\log h_n}\}.
\]
The local-alternative conditions make the relevant remainder converge to zero.  Thus each matched dense statistic is its noise-only quadratic process plus the deterministic drift $\mathfrak d_{\gamma,n}^{\mathrm{det}}(k)$ and the displayed remainder.

The corresponding linear term in the shifted ideal Gaussian quadratic process is
\[
R_{\delta,\gamma,n}^{G}(k)
=\varphi_\gamma(k/n)\frac{2\mathbf h_k^\top\mathbf B_{\Omega}(k/n)}
{\sqrt p\,\omega_p},
\qquad
\mathbf h_k=\frac{\mathbf m_k}{\sqrt n},
\]
where $\mathbf B_{\Omega}(t)=\bOme^{1/2}\mathbf B(t)$.  Since
$\mathbf m_k=c_k\bdelta$ and $|c_k|\le n$, the Brownian-bridge maximal-moment bound gives
\[
\max_{k\in\mathcal K_0}|R_{\delta,0,n}^{G}(k)|=O_p(b_{\delta,n}).
\]
For $\gamma=1/2$, the logit representation gives
\[
\max_{k\in\mathcal K_{1/2}}|R_{\delta,1/2,n}^{G}(k)|
=O_p\{b_{\delta,n}\sqrt{\log\log h_n}\}.
\]
Indeed, after division by $v_k$, the scalar projection of the bridge is a stationary Ornstein--Uhlenbeck process multiplied by a deterministic coefficient bounded by
$C\{n\bdelta^\top\bOme\bdelta/p\}^{1/2}$.  Therefore both the original and ideal shifted quadratic paths may be replaced by the noise-only quadratic path plus $\mathfrak d_{\gamma,n}^{\mathrm{det}}(k)$ at the matched displayed cost.

We next establish factorization for the ideal Gaussian bridge.  Its covariance
is
\[
\Cov\{\mathbf B_{\Omega}(s),\mathbf B_{\Omega}(t)\}
=K_B(s,t)\bOme.
\]
Partition its coordinates into $\mathcal A$ and $\mathcal A^c$.  Gaussian
regression gives, simultaneously for all $t\in[0,1]$,
\[
\mathbf B_{\Omega,\mathcal A^c}(t)
=\mathbf U(t)+\mathbf V(t),
\qquad
\mathbf V(t)=\bOme_{\mathcal A^c,\mathcal A}
\bOme_{\mathcal A,\mathcal A}^{-1}
\mathbf B_{\Omega,\mathcal A}(t),
\]
where $\{\mathbf U(t)\}$ and
$\{\mathbf B_{\Omega,\mathcal A}(t)\}$ are independent and
\[
\Cov\{\mathbf U(s),\mathbf U(t)\}
=K_B(s,t)\bOme_U.
\]
The Schur complement $\bOme_U$ satisfies
\[
C_0\le\lambda_{\min}(\bOme_U)
\le\lambda_{\max}(\bOme_U)\le C_1.
\]
Indeed, $\bOme_U^{-1}$ is the $\mathcal A^c\times\mathcal A^c$ principal
block of $\bOme^{-1}$, whose eigenvalues lie in
$[C_1^{-1},C_0^{-1}]$ by interlacing.

Embed $\mathbf U(t)$ in $\mathbb R^p$ by padding zeros on $\mathcal A$ and
define
\[
G_{U,p}(t)=
\frac{\|\mathbf U(t)\|_2^2-t(1-t)\Tr(\bOme_U)}
{\{2\Tr(\bOme^2)\}^{1/2}},
\qquad
G_{U,p,\gamma}(t)=\varphi_\gamma(t)G_{U,p}(t).
\]
At a fixed $t$, write
$\mathbf B_{\Omega}(t)=\sqrt{v(t)}\,\mathbf Z_t$ and
$\mathbf U(t)=\sqrt{v(t)}\,\mathbf Z_{U,t}$.  After multiplication by
$\varphi_\gamma(t)$, the centered full-minus-residual quadratic form has
coefficient matrix multiplied by $v(t)^{1-2\gamma}$.  This factor is bounded
by one for $\gamma=0$ and equals one for $\gamma=1/2$.  Hence, uniformly over
$t=k/n$ with $k\in\mathcal K_\gamma$, the coefficient matrix has rank at most
$2a$, operator norm at most $C/\sqrt p$, and squared Frobenius norm at most
$Ca/p$.  Lemma~\ref{lem:gaussian-tail-general} therefore gives, for $x>0$,
\[
\Pr\left\{
|G_{p,\gamma}(k/n)-G_{U,p,\gamma}(k/n)|
>C\left(\sqrt{\frac{ax}{p}}+\frac{x}{\sqrt p}\right)
\right\}
\le2e^{-x}.
\]
Taking $x=(A+2)\log n$ and summing over $k\in\mathcal K_\gamma$ gives
\[
\max_{k\in\mathcal K_\gamma}
|G_{p,\gamma}(k/n)-G_{U,p,\gamma}(k/n)|
=O_p(b_{\mathcal A,n}),
\qquad \gamma\in\{0,1/2\}.
\]
The rank bound also gives
\[
\left|\frac{\Tr(\bOme^2)}p
-\frac{\Tr(\bOme_U^2)}{p-a}\right|
\le C\left(\frac ap+\frac a{p-a}\right).
\]
Hence Lemma~\ref{lem:gaussian-decoupling-general}, applied to the matched
weighted residual path with its natural $\bOme_U$ normalization and then
rescaled to the full normalization, changes the dense coordinate by at most
$Cb_{\mathcal A,n}$.

For $j\in\mathcal A^c$, let $C_{\gamma,j}^{V,G}(k)$ be the standardized bridge
CUSUM generated by $\mathbf V$.  Its variance satisfies
\[
\sup_{k\in\mathcal K_\gamma}
\Var\{C_{\gamma,j}^{V,G}(k)\}
\le C\frac{\bOme_{j,\mathcal A}
\bOme_{\mathcal A,\mathcal A}^{-1}
\bOme_{\mathcal A,j}}{\Omega_{jj}}
\le C\chi_{\mathcal A,p}.
\]
A Gaussian union bound gives, for every fixed $A>0$,
\[
\Pr\left\{
R_{\gamma,n}>C_A\sqrt{\chi_{\mathcal A,p}\log(np)}
\right\}
\le C_A(np)^{-A},
\]
where
\[
R_{\gamma,n}
=\max_{j\in\mathcal A^c,\,k\in\mathcal K_\gamma}
|C_{\gamma,j}^{V,G}(k)|.
\]
Furthermore,
\[
0\le1-\frac{(\bOme_U)_{jj}}{\Omega_{jj}}
\le\chi_{\mathcal A,p},
\qquad j\in\mathcal A^c,
\]
so replacing full-coordinate standardization by residual standardization changes
either normalized residual maximum by
\[
O_p\{\chi_{\mathcal A,p}\log(np)\}=O_p(b_{\chi,n}).
\]

Let $Y_{U,n,\gamma}^G$ be the Gumbel-normalized residual maximum and let
$Y_{\mathcal A,n,\gamma}^G$ be the normalized maximum over the signal
coordinates, including their deterministic shifts.  The residual process is
independent of the signal-coordinate process.  By
Lemma~\ref{lem:gaussian-decoupling-general}, applied to $\bOme_U$ with the
matched weighted residual path, deterministic drift
$\mathfrak d_{\gamma,n}^{\mathrm{det}}(k)$, and the assumed version of
Assumption~\ref{ass:C4}, there is $d_{G,n}\to0$ such that, for bounded
Lipschitz $f,g$ with norms at most one,
\[
\left|
\E\{f(X_{U,n,\gamma}^G)g(Y_{U,n,\gamma}^G)\}
-\E f(X_{U,n,\gamma}^G)\E g(Y_{U,n,\gamma}^G)
\right|
\le d_{G,n},
\]
where
\[
X_{U,n,\gamma}^G
=\max_{k\in\mathcal K_\gamma}
\{G_{U,p,\gamma}(k/n)+\mathfrak d_{\gamma,n}^{\mathrm{det}}(k)\}.
\]
For fixed $z$, the map $y\mapsto g\{\max(y,z)\}$ has the same bounded-Lipschitz
norm as $g$.  Conditioning on $Y_{\mathcal A,n,\gamma}^G$ gives
\begin{align*}
&\left|
\E\left[f(X_{U,n,\gamma}^G)
 g\{\max(Y_{U,n,\gamma}^G,Y_{\mathcal A,n,\gamma}^G)\}\right]
-\E f(X_{U,n,\gamma}^G)
 \E g\{\max(Y_{U,n,\gamma}^G,Y_{\mathcal A,n,\gamma}^G)\}
\right|\\
&\qquad\le d_{G,n}.
\end{align*}
The regression remainder, Gaussian anti-concentration for the residual maximum,
and the preceding normalization bounds show that replacing the maximum in this
display by the full ideal Gaussian maximum creates error
\[
C\{b_{\mathcal A,n}+b_{\chi,n}\}+q_{\gamma,n}^G,
\]
where $q_{\gamma,n}^G\to0$ is the ideal Gaussian extreme approximation error.
Together with the dense rank-removal and linear-shift bounds, the shifted ideal
Gaussian pair has factorization error at most
\[
d_{G,n}+C\{b_{\delta,n}+b_{\mathcal A,n}+b_{\chi,n}\}
+q_{\gamma,n}^G.
\]

It remains to transfer this factorization to the original non-Gaussian process.
Assumption~\ref{ass:C2} gives
\[
\max_{k<n}\frac{\|\mathbf h_k\|_2}{\sqrt p}
\le C\left(\frac{n\bdelta^\top\bOme\bdelta}{p}\right)^{1/2}
=Cb_{\delta,n}.
\]
For $k\in\mathcal K_\gamma$ and $j\le p$, define the
population-standardized observed CUSUM and its deterministic shift by
\[
C_{\gamma,j}^{X,0}(k)=
\left\{\frac{k}{n}\left(1-\frac{k}{n}\right)\right\}^{-\gamma}
\frac{\sum_{i=1}^nc_{i,k}^{\mathrm C}X_{ij}}
{\sqrt n\,\sigma_j},
\]
\[
d_{\gamma,j}(k)=
\left\{\frac{k}{n}\left(1-\frac{k}{n}\right)\right\}^{-\gamma}
\frac{m_{k,j}}{\sqrt n\,\sigma_j}.
\]
Then
\[
C_{\gamma,j}^{X,0}(k)
=C_{\gamma,j}^{\epsilon,0}(k)+d_{\gamma,j}(k),
\]
and
\[
M_{\gamma,n,p}^{X,0}
=\max_{k\in\mathcal K_\gamma}\max_{j\le p}
|C_{\gamma,j}^{X,0}(k)|.
\]
Writing $t=k/n$ and $\theta=\tau/n$, direct summation gives
\[
|c_k|=nK_B(t,\theta).
\]
For $\gamma=0$, $K_B(t,\theta)\le1/4$.  For $\gamma=1/2$, the covariance
Cauchy--Schwarz inequality for a Brownian bridge gives
\[
K_B(t,\theta)
\le\{t(1-t)\theta(1-\theta)\}^{1/2}.
\]
Since $\inf_j\sigma_j\ge C_0^{1/2}$,
\[
\max_{j,k}|d_{\gamma,j}(k)|
\le C\sqrt n\|\bdelta\|_\infty
\le C\{\log(np)\}^{b_\delta}.
\]

Define the population shifted dense paths
\[
\mathcal X_{n,\gamma}^{\epsilon}
=\max_{k\in\mathcal K_\gamma}\varphi_\gamma(k/n)
\frac{\|\mathbf U_n(k/n)+\mathbf h_k\|_2^2
      -\E\|\mathbf U_n(k/n)\|_2^2}
{\sqrt p\,\omega_p},
\]
\[
\mathcal X_{n,\gamma}^{G}
=\max_{k\in\mathcal K_\gamma}\varphi_\gamma(k/n)
\frac{\|\mathbf U_n^G(k/n)+\mathbf h_k\|_2^2
      -\E\|\mathbf U_n^G(k/n)\|_2^2}
{\sqrt p\,\omega_p},
\]
\[
\mathcal X_{n,\gamma}^{B}
=\max_{k\in\mathcal K_\gamma}\varphi_\gamma(k/n)
\frac{\|\mathbf B_{\Omega}(k/n)+\mathbf h_k\|_2^2
      -\E\|\mathbf B_{\Omega}(k/n)\|_2^2}
{\sqrt p\,\omega_p}.
\]
Let $\mathcal C_{\gamma,n}^{X,\pm}$ contain both signs of
$C_{\gamma,j}^{\epsilon,0}(k)+d_{\gamma,j}(k)$, and define
$(\mathcal C_{\gamma,n}^{X,G})^\pm$ and
$(\mathcal C_{\gamma,n}^{X,B})^\pm$ by replacing the noise coordinate by its
finite Gaussian and ideal-bridge counterpart.  Put
\[
\mathsf N_{n,0}^{\mathrm M}(x)=2x^2-\log(2p),
\qquad
\mathsf N_{n,1/2}^{\mathrm M}(x)
=A_{\mathrm{DE}}(p\log h_n)x-D_{\mathrm{DE}}(p\log h_n),
\]
and define
\[
\mathcal Y_{n,\gamma}^{\epsilon}
=\mathsf N_{n,\gamma}^{\mathrm M}
   \{\max\mathcal C_{\gamma,n}^{X,\pm}\},
\]
\[
\mathcal Y_{n,\gamma}^{G}
=\mathsf N_{n,\gamma}^{\mathrm M}
   \{\max(\mathcal C_{\gamma,n}^{X,G})^\pm\},
\qquad
\mathcal Y_{n,\gamma}^{B}
=\mathsf N_{n,\gamma}^{\mathrm M}
   \{\max(\mathcal C_{\gamma,n}^{X,B})^\pm\}.
\]

We now give the shifted comparison quantitatively.  Mollify $f$ and $g$ with
bandwidth $\varepsilon_n=L_n^{-2}$, obtaining $f_n,g_n\in C_b^3(\mathbb R)$
with
\[
\|f_n-f\|_\infty+\|g_n-g\|_\infty\le C\varepsilon_n,
\qquad
\|f_n\|_{C^3}+\|g_n\|_{C^3}\le C\varepsilon_n^{-3}.
\]
Use $\operatorname{smax}_{L_n^2}$ for both paths.  For the max-coordinate argument,
insert a $C^3$ clipping map that is the identity on
\[
|x|\le x_0\left[
\{\log(np)\}^{b_\delta}+\{\log(np)\}^{1/2}
\right]
\]
and is bounded by twice the right-hand side.  Lemma~\ref{lem:cusum-concentration-general}
and the deterministic shift bound give
\[
\Pr(\text{the clipping map is active})\le q_{\mathrm{cut},n},
\]
where
\[
q_{\mathrm{cut},n}
=2(np)^{1-c_1x_0^2}
+C\max_{\gamma\in\{0,1/2\}}r_{n,\gamma}
\longrightarrow0
\]
when $x_0$ is sufficiently large.  The chain rule bounds the $C^3$ norm of the
resulting bivariate outer function by
\[
C\varepsilon_n^{-6}L_n^{c_*}
\]
for a finite constant $c_*$.  The shifted part of
Lemma~\ref{lem:block-comparison-general} therefore gives
\begin{align*}
&\left|
\E\{f_n\{\mathsf N_{n,\gamma}^{\mathrm S}(\mathcal X_{n,\gamma}^{\epsilon})\}
      g_n(\mathcal Y_{n,\gamma}^{\epsilon})\}
-
\E\{f_n\{\mathsf N_{n,\gamma}^{\mathrm S}(\mathcal X_{n,\gamma}^{G})\}
      g_n(\mathcal Y_{n,\gamma}^{G})\}
\right|\\
&\quad\le
C\varepsilon_n^{-6}L_n^{c_*}r_{n,\gamma}
(1+b_{\delta,n})^3
+Cq_{\mathrm{cut},n}+CL_n^{-1/2}.
\end{align*}
Here $CL_n^{-1/2}$ contains the smooth-max error after the Gumbel or
Darling--Erd\H{o}s normalization; the dense smooth-max error is only $C/L_n$.

For the Gaussian-to-bridge comparison, use
\[
\left|\frac{\operatorname{smax}_{L_n^2}(z)}{\omega_p}
-\operatorname{smax}_{L_n^2}(z/\omega_p)\right|
\le\frac C{L_n}
\]
and apply the shifted joint part of
Lemma~\ref{lem:gaussian-cov-restoration} to the clipped, mollified outer
function.  Its $C^3$ norm is bounded by
$C\varepsilon_n^{-6}L_n^{c_*}$, so
\begin{align*}
&\left|
\E\{f_n\{\mathsf N_{n,\gamma}^{\mathrm S}(\mathcal X_{n,\gamma}^{G})\}
      g_n(\mathcal Y_{n,\gamma}^{G})\}
-
\E\{f_n\{\mathsf N_{n,\gamma}^{\mathrm S}(\mathcal X_{n,\gamma}^{B})\}
      g_n(\mathcal Y_{n,\gamma}^{B})\}
\right|\\
&\quad\le
C\varepsilon_n^{-6}L_n^{c_*+c_{\mathrm{br}}}
 d_{n,\gamma}^{\mathrm{br}}(1+b_{\delta,n})^3
+CL_n^{-1/2}.
\end{align*}
Since $\varepsilon_n^{-6}=L_n^{12}$, apply the strengthened rate statement
in Lemma~\ref{lem:block-comparison-general} with
$c_{\mathrm{out}}=12+c_*+1$.  This makes the first right-hand side converge
to zero, including the factor $(1+b_{\delta,n})^3$.  Moreover,
$d_{n,0}^{\mathrm{br}}=n^{-1}$ and
$d_{n,1/2}^{\mathrm{br}}=\lambda_n^{-1}$; because $\lambda_n$ is a positive
power of $n$, both dominate the fixed logarithmic factor
$L_n^{12+c_*+c_{\mathrm{br}}}$, while $b_{\delta,n}\to0$.  Hence the second
right-hand side also converges to zero.  Restoring $f,g$ from $f_n,g_n$
costs at most $C\varepsilon_n$.  Hence
\[
\left|
\E\{f\{\mathsf N_{n,\gamma}^{\mathrm S}(\mathcal X_{n,\gamma}^{\epsilon})\}
      g(\mathcal Y_{n,\gamma}^{\epsilon})\}
-
\E\{f\{\mathsf N_{n,\gamma}^{\mathrm S}(\mathcal X_{n,\gamma}^{B})\}
      g(\mathcal Y_{n,\gamma}^{B})\}
\right|\longrightarrow0.
\]
Combining this comparison with the ideal Gaussian factorization error proves
population-level factorization for the original process.

Finally, define the feasible first-order dense maxima
\[
S_{\gamma,n,p}^{(1)}
=\max_{k\in\mathcal K_\gamma}
\frac{Q_{\gamma,k}}{\widehat\omega}.
\]
Theorem~\ref{Th2} and the decomposition above give
\[
\left|
S_{0}^{(1)}
-\max_{k\in\mathcal K_0}
\frac{W_0(k)-\mu_{0,M,k}}{\omega_p}
\right|
=O_p\{r_{\mu,n}+r_{\omega,n}(1+\Delta_{S,n})\},
\]
and
\begin{align*}
&A_{\mathrm{DE}}(\log h_n)
\left|
S_{1/2}^{(1)}
-\max_{k\in\mathcal K_{1/2}}
 \frac{W_{1/2}(k)-\mu_{1/2,M,k}}{\omega_p}
\right|\\
&\quad=O_p\left[
\sqrt{\log\log h_n}\,r_{\mu,n}
+\{\log\log h_n+\sqrt{\log\log h_n}\,\Delta_{S,n}\}r_{\omega,n}
\right].
\end{align*}
Put $\mathfrak s_{\delta,n}=n\|\bdelta\|_2^2/p$.  Lemma~\ref{lem:cf2-remainder} gives
\[
|S_{0}-S_{0}^{(1)}|
=O_p\left(p^{-1/2}+\sqrt p\,\mathfrak s_{\delta,n}^2\right).
\]
For $\gamma=1/2$, $S_{1/2}=S_{1/2}^{(1)}$ exactly.  The pair-specific local-alternative conditions in the main paper therefore control the CF2 remainder only for $\gamma=0$.

For the max components, Lemma~\ref{lem:cusum-concentration-general} gives
\[
\max_{j,k}|C_{\gamma,j}^{X,0}(k)|
=O_p\left[
\{\log(np)\}^{b_\delta}+\{\log(np)\}^{1/2}
\right].
\]
On the event $r_{\sigma,n}\le1/2$,
\[
|M_{0}-M_{0}^{X,0}|
\le2r_{\sigma,n}M_{0}^{X,0},
\]
with the analogous inequality for $M_{1/2}$.  Consequently,
\begin{align*}
|2M_{0}^2-2(M_{0}^{X,0})^2|
&=O_p\left[a_{\sigma,n}
\{L_n^{2b_\delta}+L_n\}\right],\\
A_{\mathrm{DE}}(p\log h_n)
|M_{1/2}-M_{1/2}^{X,0}|
&=O_p\left[a_{\sigma,n}
\{L_n^{2b_\delta}+L_n\}\right].
\end{align*}
The compatibility conditions in the theorem make every feasible remainder
converge to zero.  Combining ideal Gaussian factorization, shifted comparison,
and feasible replacement proves the assertion for both $\gamma=0$ and
$\gamma=1/2$.
\end{proof}

\begin{proof}[of Proposition~\ref{prop:cauchy-consistency}]
Write
\[
T_{\mathrm C,n}=\frac12\cot(\pi p_{1,n})+\frac12\cot(\pi p_{2,n}).
\]
As $u\downarrow0$,
\[
\cot(\pi u)=\frac1{\pi u}+O(u),
\]
and, as $u\uparrow1$,
\[
\cot(\pi u)=-\frac1{\pi(1-u)}+O(1-u).
\]
If $p_{1,n}\to0$ in probability, then
\[
\frac12\cot(\pi p_{1,n})\longrightarrow+\infty
\quad\text{in probability}.
\]
The condition $(1-p_{2,n})^{-1}=O_p(1)$ implies that the negative part of
$\frac12\cot(\pi p_{2,n})$ is $O_p(1)$.  Hence $T_{\mathrm C,n}\to+\infty$ in probability and
\[
1-F_C(T_{\mathrm C,n})\longrightarrow0
\quad\text{in probability}.
\]
The argument is unchanged after interchanging the two component $p$-values.
\end{proof}

\section{Change-point location estimation}

\begin{lemma}[Uniform coordinatewise CUSUM concentration]\label{lem:cusum-concentration-general}
Suppose Assumptions~\ref{ass:C1}--\ref{ass:C3} hold and
$p\le n^\nu$ for a fixed $\nu>0$.  Let
\[
Z_{\gamma,j}(k)=
\left\{\frac{k}{n}\left(1-\frac{k}{n}\right)\right\}^{-\gamma}
\frac{\sum_{i=1}^n c_{i,k}^{\mathrm C}\epsilon_{ij}}
{\sqrt n\,\sigma_j},
\qquad \sigma_j^2=\Omega_{jj}.
\]
Use the candidate sets $\mathcal K_0$ and $\mathcal K_{1/2}$ defined in the main article.  For $\gamma=1/2$, suppose
$\lambda_n\asymp n^\lambda$ for some $\lambda\in(0,1)$.  Put
\[
\mathcal Z_{\gamma,n}^{\max}=
\max_{j\le p}\max_{k\in\mathcal K_\gamma}|Z_{\gamma,j}(k)|,
\qquad L_n=\log(np).
\]
There are constants $c_1,C>0$ such that, for every fixed $x\ge4$ and all sufficiently large $n$,
\[
\Pr\{\mathcal Z_{\gamma,n}^{\max}>x\sqrt{L_n}\}
\le
2pn\exp(-c_1x^2L_n)+Cr_{n,\gamma}
\le
2(np)^{1-c_1x^2}+Cr_{n,\gamma},
\]
where $r_{n,\gamma}$ is the deterministic comparison rate in
Lemma~\ref{lem:block-comparison-general}.  In particular,
\[
\mathcal Z_{\gamma,n}^{\max}=O_p(\sqrt{L_n}).
\]
If $\widehat Z_{\gamma,j}(k)$ uses $\widehat\sigma_j$ in place of
$\sigma_j$ and
$\widehat{\mathcal Z}_{\gamma,n}^{\max}=\max_{j,k}|\widehat Z_{\gamma,j}(k)|$, then
\[
\Pr\{\widehat{\mathcal Z}_{\gamma,n}^{\max}>2x\sqrt{L_n}\}
\le
2(np)^{1-c_1x^2}+Cr_{n,\gamma}
+\Pr(r_{\sigma,n}>1/2).
\]
\end{lemma}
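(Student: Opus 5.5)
The plan is to transfer a Gaussian union bound to the non-Gaussian array through the smooth comparison in Lemma~\ref{lem:block-comparison-general}, rather than to bound the non-Gaussian maximum directly. The bound has three parts.

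\emph{Gaussian bound.} First I treat the covariance-matched Gaussian array $C_{\gamma,j}^{G,0}(k)$. By Lemma~\ref{lem:cusum-cov-general}, every coordinate has variance at most $C$, uniformly over $j\le p$ and $k\in\mathcal K_\gamma$. For $\gamma=1/2$ this uses $v_k\ge c\lambda_n/n$ together with the $O(n^{-1})$ covariance error, which is negligible after division by $v_k$; the same estimate appears in Lemma~\ref{lem:gaussian-cov-restoration}. The Gaussian tail bound and a union bound over the at most $2pn$ signed coordinates then give
\[
\Pr\Bigl\{\max_{j,k}|C_{\gamma,j}^{G,0}(k)|>x\sqrt{L_n}\Bigr\}\le 2pn\exp(-c_1x^2L_n)=2(np)^{1-c_1x^2}.
\]

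\emph{Transfer to the original array.} Next I pass to the noise array $Z_{\gamma,j}(k)=C_{\gamma,j}^{\epsilon,0}(k)$. I fix a $C^3$ function $\psi$ with
\[
\ind{y>x\sqrt{L_n}}\le\psi(y)\le\ind{y>(x-1)\sqrt{L_n}},
\]
whose derivatives are bounded by powers of $L_n^{-1/2}$. I apply it to $\operatorname{smax}_{L_n^2}(\mathcal C_{\gamma,n}^{\pm})$, whose distance from the maximum is at most $C/L_n$. The first assertion of Lemma~\ref{lem:block-comparison-general}, used with an outer function that ignores the dense coordinate, bounds the change of expectation by $C r_{n,\gamma}$. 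Since $\|\psi\|_{C^3}$ is at most polynomial in $L_n$, the strengthened rate statement in that lemma absorbs this factor. Applying the Gaussian bound at level $(x-1)\sqrt{L_n}$ with a slightly smaller constant $c_1$ gives the first display. Because $x\ge4$ is fixed, the shift $x\mapsto x-1$ is absorbed by adjusting $c_1$. Choosing $x$ large then gives $\mathcal Z_{\gamma,n}^{\max}=O_p(\sqrt{L_n})$.

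\emph{Feasible standardization.} For $\widehat Z_{\gamma,j}(k)$ I use
\[
\widehat Z_{\gamma,j}(k)=Z_{\gamma,j}(k)\,\frac{\sigma_j}{\widehat\sigma_j}.
\]
On the event $\{r_{\sigma,n}\le 1/2\}$ I bound $\max_j\sigma_j/\widehat\sigma_j\le 2$, which reduces the feasible claim to the population bound plus $\Pr(r_{\sigma,n}>1/2)$. Here $r_{\sigma,n}$ is taken as $\max_j|\widehat\sigma_j/\sigma_j-1|$, as in the proof of Theorem~\ref{null:Max}, so the factor bound follows directly from this definition.

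\emph{Main obstacle.} The delicate step is the transfer. It needs an outer function whose $C^3$ norm grows only polynomially in $L_n$, so that the comparison error stays $O(r_{n,\gamma})$ rather than being inflated. It also needs to confirm that, for $\gamma=1/2$, the trimming $\lambda_n\asymp n^\lambda$ makes $r_{n,1/2}\to0$, which Lemma~\ref{lem:block-comparison-general} already provides. The clipping is harmless because the Gaussian comparison array already satisfies the union bound.
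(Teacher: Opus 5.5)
Your proposal is correct and follows essentially the same route as the paper: smooth the maximum with $\operatorname{smax}_{L_n^2}$, compose it with a smooth upper-tail cutoff, and transfer to the covariance-matched Gaussian array via Lemma~\ref{lem:block-comparison-general}. The paper then also applies a Gaussian union bound using the uniform variance bounds ($\tfrac14+C/n$ for $\gamma=0$, $1+C/\lambda_n$ for $\gamma=1/2$) and handles $\widehat\sigma_j$ by the factor-two bound on $\{r_{\sigma,n}\le1/2\}$. The only difference is cosmetic: the paper's cutoff has unit transition width while yours has width $\sqrt{L_n}$. In both cases the cutoff's $C^3$ norm is bounded by a constant, so the comparison error is literally $Cr_{n,\gamma}$, and you do not need the strengthened rate statement you invoke.
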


\begin{proof}
Let $\mathcal C_{\gamma,n}^{\pm}$ be the signed coordinate array in
Lemma~\ref{lem:block-comparison-general}, so that
\[
\mathcal Z_{\gamma,n}^{\max}=\max\mathcal C_{\gamma,n}^{\pm}.
\]
The number of entries in this array is at most $2pn$.  With
$\beta=L_n^2$, the log-sum-exp inequalities give
\[
\mathcal Z_{\gamma,n}^{\max}
\le \operatorname{smax}_\beta(\mathcal C_{\gamma,n}^{\pm})
\le \mathcal Z_{\gamma,n}^{\max}+\frac{\log(2pn)}{L_n^2}
\le \mathcal Z_{\gamma,n}^{\max}+\frac C{L_n}.
\]
Choose a nondecreasing function $g\in C^3(\mathbb R)$ satisfying
\[
0\le g\le1,
\qquad
g(y)=0\ \text{for }y\le0,
\qquad
g(y)=1\ \text{for }y\ge1,
\qquad
\max_{1\le r\le3}\|g^{(r)}\|_\infty\le C.
\]
For $u=x\sqrt{L_n}$, define $g_u(y)=g(y-u+1)$.  Then
\begin{align*}
\Pr(\mathcal Z_{\gamma,n}^{\max}>u)
&\le
\E g_u\{\operatorname{smax}_\beta(\mathcal C_{\gamma,n}^{\pm})\}\\
&\le
\E g_u\{\operatorname{smax}_\beta((\mathcal C_{\gamma,n}^{G})^{\pm})\}
+Cr_{n,\gamma},
\end{align*}
where Lemma~\ref{lem:block-comparison-general} is applied with an outer
function that does not depend on its first argument.  If the Gaussian
expectation in the last line is nonzero, then
\[
\max(\mathcal C_{\gamma,n}^{G})^{\pm}
>u-1-\frac C{L_n}.
\]
For $\gamma=0$, Lemma~\ref{lem:cusum-cov-general} gives, uniformly in
$j$ and $k$,
\[
\Var\{C_{0,j}^{G}(k)\}
\le \frac14+\frac Cn.
\]
For $\gamma=1/2$, since
$k(n-k)/n^2\ge c\lambda_n/n$ on $\mathcal K_{1/2}$,
\[
\Var\{C_{1/2,j}^{G}(k)\}
\le1+\frac C{\lambda_n}.
\]
Thus all Gaussian coordinates have variance bounded by a universal
constant.  The Gaussian tail inequality and the union bound yield
\begin{align*}
&\Pr\left\{
\max(\mathcal C_{\gamma,n}^{G})^{\pm}
>u-1-C/L_n
\right\}\\
&\quad\le
2pn\exp\left[-c\{u-1-C/L_n\}^2\right]\\
&\quad\le2pn\exp(-c_1x^2L_n),
\end{align*}
for $x\ge4$ and sufficiently large $n$.  Combining the preceding
displays proves the population-standardized bound.

On the event $\{r_{\sigma,n}\le1/2\}$,
\[
\left|\frac{\sigma_j}{\widehat\sigma_j}\right|\le2,
\qquad
\widehat Z_{\gamma,j}(k)
=\frac{\sigma_j}{\widehat\sigma_j}Z_{\gamma,j}(k),
\]
and therefore $\widehat{\mathcal Z}_{\gamma,n}^{\max}\le2\mathcal Z_{\gamma,n}^{\max}$.  Splitting
according to this event proves the feasible bound.
\end{proof}

\begin{proof}[of Theorem~\ref{consistency}(i)]
Write
\[
\mathbf U_k=\sum_{i=1}^k\bX_i-\frac{k}{n}\sum_{i=1}^n\bX_i
=\mathbf m_k+\mathbf Z_k,
\]
where
\[
\mathbf m_k=
-k(n-\tau)n^{-1}\bdelta\ind{k\le\tau}
-\tau(n-k)n^{-1}\bdelta\ind{k>\tau}.
\]
Let
\[
\mathfrak g_{S,0}(k)=\frac{\|\mathbf m_k\|_2^2}{n\sqrt p},
\qquad
W_0(k)-\widetilde\mu_{0,M,k}=\mathfrak g_{S,0}(k)+\mathfrak r_{S,0}(k).
\]
For $k\le\tau$ and $k>\tau$, respectively,
\[
\mathfrak g_{S,0}(\tau)-\mathfrak g_{S,0}(k)
=\frac{(n-\tau)^2(\tau-k)(\tau+k)}{n^3\sqrt p}\|\bdelta\|_2^2,
\]
\[
\mathfrak g_{S,0}(\tau)-\mathfrak g_{S,0}(k)
=\frac{\tau^2(k-\tau)(2n-k-\tau)}{n^3\sqrt p}\|\bdelta\|_2^2.
\]
Since $\tau/n\in[\vartheta,1-\vartheta]$,
\[
\mathfrak g_{S,0}(\tau)-\mathfrak g_{S,0}(k)
\ge c_\vartheta\frac{|k-\tau|\|\bdelta\|_2^2}{\sqrt p}.
\]
Theorem~\ref{Th1}, Lemma~\ref{lem:projected-cusum-general}, the truncation bound, and Lemma~\ref{lem:centering-general} give
\[
\max_{k<n}|\mathfrak r_{S,0}(k)|
=O_p\left[1+r_{\mu,n}+\sqrt p\eta_M
+\left\{\frac{n\|\bdelta\|_2^2}{p}\right\}^{1/2}\right].
\]
Because $\widehat\tau_{S,0}$ maximizes $\mathfrak g_{S,0}+\mathfrak r_{S,0}$,
\[
\mathfrak g_{S,0}(\tau)-\mathfrak g_{S,0}(\widehat\tau_{S,0})
\le2\max_k|\mathfrak r_{S,0}(k)|.
\]
Division by the deterministic slope yields
\[
\frac{|\widehat\tau_{S,0}-\tau|}{n}
=O_p\left\{\frac{\sqrt p}{n\|\bdelta\|_2^2}
+\frac1{\sqrt{n\|\bdelta\|_2^2}}\right\}.
\]
\end{proof}

\begin{proof}[of Theorem~\ref{consistency}(ii)]
Put $v_k=(k/n)(1-k/n)$ and
\[
\mathfrak g_{S,1/2}(k)=v_k^{-1}\frac{\|\mathbf m_k\|_2^2}{n\sqrt p}.
\]
Writing $t=k/n$ and $\theta=\tau/n$, direct substitution gives
\[
\mathfrak g_{S,1/2}(k)=\frac{n\|\bdelta\|_2^2}{\sqrt p}
\begin{cases}
\dfrac{t(1-\theta)^2}{1-t},&t\le\theta,\\[2mm]
\dfrac{\theta^2(1-t)}t,&t>\theta.
\end{cases}
\]
Consequently,
\[
\mathfrak g_{S,1/2}(\tau)-\mathfrak g_{S,1/2}(k)
=\frac{\|\bdelta\|_2^2}{\sqrt p}
\begin{cases}
\dfrac{(1-\theta)(\tau-k)}{1-k/n},&k\le\tau,\\[2mm]
\dfrac{\theta(k-\tau)}{k/n},&k>\tau,
\end{cases}
\]
and therefore
\[
\mathfrak g_{S,1/2}(\tau)-\mathfrak g_{S,1/2}(k)
\ge c_\vartheta\frac{|k-\tau|\|\bdelta\|_2^2}{\sqrt p}.
\]
Let
\[
W_{1/2}(k)-\widetilde\mu_{1/2,M,k}
=\mathfrak g_{S,1/2}(k)+\mathfrak r_{S,1/2}(k).
\]
Apply the weighted block comparison and Gaussian covariance restoration to the signed quadratic array, and then use the Gaussian maximal bound from the proof of Lemma~\ref{lem:dense-boundary-extreme}.  This gives the two-sided noise envelope
\[
\max_{k\in\mathcal K_{1/2}}
\left|v_k^{-1}
\frac{\|\mathbf Z_k\|_2^2-\E\|\mathbf Z_k\|_2^2}{n\sqrt p}
\right|
=O_p\{\sqrt{\log\log h_n}\}.
\]
The weighted assertion in Lemma~\ref{lem:projected-cusum-general} gives
\[
\max_{k\in\mathcal K_{1/2}}
\left|v_k^{-1}\frac{2\mathbf m_k^\top\mathbf Z_k}{n\sqrt p}\right|
=O_p\left[
\sqrt{\log\log h_n}
\left\{\frac{n\|\bdelta\|_2^2}{p}\right\}^{1/2}
\right].
\]
The weighted truncation and feasible-centering bounds are
$O(\sqrt p\eta_M)$ and $O_p(r_{\mu,n})$.  Hence
\[
\max_{k\in\mathcal K_{1/2}}|\mathfrak r_{S,1/2}(k)|
=O_p\left[\sqrt{\log\log h_n}
\left\{1+\left(\frac{n\|\bdelta\|_2^2}{p}\right)^{1/2}\right\}
+r_{\mu,n}+\sqrt p\eta_M\right].
\]
The argmax inequality and the deterministic gap now give
\[
\frac{|\widehat\tau_{S,1/2}-\tau|}{n}
=O_p\left[\sqrt{\log\log h_n}\left\{
\frac{\sqrt p}{n\|\bdelta\|_2^2}
+\frac1{\sqrt{n\|\bdelta\|_2^2}}\right\}\right].
\]
\end{proof}

\begin{proof}[of Theorem~\ref{consistency}(iii)--(iv)]
For $\gamma\in\{0,1/2\}$, use the population long-run standard deviations and write
\[
C_{\gamma,j}^{X,0}(k)=Z_{\gamma,j}(k)+d_{\gamma,j}(k),
\]
where
\[
d_{\gamma,j}(k)=-\Sgn(\delta_j)\sqrt n\,
\frac{|\delta_j|}{\sigma_j}\mathfrak h_{\gamma,\theta}^{\mathrm{sig}}(k/n),
\]
\[
\mathfrak h_{\gamma,\theta}^{\mathrm{sig}}(t)=\{t(1-t)\}^{-\gamma}
\{t(1-\theta)\ind{t\le\theta}+\theta(1-t)\ind{t>\theta}\}.
\]
Direct calculation gives
\[
{\mathfrak h_{0,\theta}^{\mathrm{sig}}(\theta)}^2-{\mathfrak h_{0,\theta}^{\mathrm{sig}}(t)}^2
=\begin{cases}
(1-\theta)^2(\theta-t)(\theta+t),&t\le\theta,\\
\theta^2(t-\theta)(2-\theta-t),&t>\theta,
\end{cases}
\]
and
\[
{\mathfrak h_{1/2,\theta}^{\mathrm{sig}}(\theta)}^2-{\mathfrak h_{1/2,\theta}^{\mathrm{sig}}(t)}^2
=\begin{cases}
(1-\theta)(\theta-t)/(1-t),&t\le\theta,\\
\theta(t-\theta)/t,&t>\theta.
\end{cases}
\]
Thus, uniformly over $k\in\mathcal K_\gamma$,
\[
d_{\gamma,j}^2(\tau)-d_{\gamma,j}^2(k)
\ge c_\vartheta|k-\tau|\frac{\delta_j^2}{\sigma_j^2}.
\]
Lemma~\ref{lem:cusum-concentration-general} gives
\[
\mathcal Z_{\gamma,n}^{\max}:=\max_{j,k}|Z_{\gamma,j}(k)|
=O_p\{\sqrt{\log(np)}\}.
\]
Let
\[
\mathfrak a_j=\frac{|\delta_j|}{\sigma_j},
\qquad
\mathfrak a_{\max}=\max_{j\le p}\mathfrak a_j,
\qquad
j_\star\in\argmax_{j\le p}\mathfrak a_j.
\]
Since $\theta\in[\vartheta,1-\vartheta]$,
\[
\mathfrak h_{\gamma,\theta}^{\mathrm{sig}}(\theta)\ge c_\vartheta,
\qquad \gamma\in\{0,1/2\}.
\]
Consider the event
\[
\mathcal E_{n,\gamma}^{\mathrm{loc,M}}=\left\{
\mathcal Z_{\gamma,n}^{\max}\le C_0\sqrt{\log(np)},
\quad r_{\sigma,n}\le\frac18,
\quad \sqrt n\,\mathfrak a_{\max}\mathfrak h_{\gamma,\theta}^{\mathrm{sig}}(\theta)
\ge8\mathcal Z_{\gamma,n}^{\max}
\right\}.
\]
Lemma~\ref{lem:cusum-concentration-general}, the consistency of the long-run
variance estimators, and
$\sqrt n\,\mathfrak a_{\max}/\sqrt{\log(np)}\to\infty$ imply
$\Pr(\mathcal E_{n,\gamma}^{\mathrm{loc,M}})\to1$.  On $\mathcal E_{n,\gamma}^{\mathrm{loc,M}}$, the definition of $r_{\sigma,n}$ gives
\[
\frac89\le\frac{\sigma_j}{\widehat\sigma_j}\le\frac87,
\qquad j\le p.
\]
At $(j_\star,\tau)$, the feasible squared CUSUM is at least
\[
\left(\frac89\right)^2
\left\{\sqrt n\,\mathfrak a_{\max}\mathfrak h_{\gamma,\theta}^{\mathrm{sig}}(\theta)
-\mathcal Z_{\gamma,n}^{\max}\right\}^2.
\]
If a coordinate $j$ satisfies $\mathfrak a_j<\mathfrak a_{\max}/3$, then
$\mathfrak h_{\gamma,\theta}^{\mathrm{sig}}(t)\le \mathfrak h_{\gamma,\theta}^{\mathrm{sig}}(\theta)$ for every admissible $t$,
and its feasible squared CUSUM is at most
\[
\left(\frac87\right)^2
\left\{\frac13\sqrt n\,\mathfrak a_{\max}\mathfrak h_{\gamma,\theta}^{\mathrm{sig}}(\theta)
+\mathcal Z_{\gamma,n}^{\max}\right\}^2.
\]
The signal-to-noise inequality in $\mathcal E_{n,\gamma}^{\mathrm{loc,M}}$ makes the first display
strictly larger than the second.  Hence the coordinate $\widehat j$ selected
by the global feasible maximum satisfies
\[
\mathfrak a_{\widehat j}\ge \mathfrak a_{\max}/3
\]
on $\mathcal E_{n,\gamma}^{\mathrm{loc,M}}$.

For a fixed coordinate, the factor $\sigma_j/\widehat\sigma_j$ does not depend
on $k$.  Therefore the selected time $\widehat k$ maximizes
$|Z_{\gamma,\widehat j}(k)+d_{\gamma,\widehat j}(k)|^2$ over
$k\in\mathcal K_\gamma$.  Comparison with $k=\tau$ gives
\begin{align*}
d_{\gamma,\widehat j}^2(\tau)
-d_{\gamma,\widehat j}^2(\widehat k)
&\le
2\{|d_{\gamma,\widehat j}(\tau)|
  +|d_{\gamma,\widehat j}(\widehat k)|\}\mathcal Z_{\gamma,n}^{\max}
+2(\mathcal Z_{\gamma,n}^{\max})^2\\
&\le C\sqrt n\,\mathfrak a_{\widehat j}\mathcal Z_{\gamma,n}^{\max}
+2(\mathcal Z_{\gamma,n}^{\max})^2.
\end{align*}
Together with the deterministic gap bound, this yields
\[
c|\widehat k-\tau|\mathfrak a_{\widehat j}^2
\le C\sqrt n\,\mathfrak a_{\widehat j}\mathcal Z_{\gamma,n}^{\max}
+2(\mathcal Z_{\gamma,n}^{\max})^2.
\]
Since $\mathfrak a_{\widehat j}\ge \mathfrak a_{\max}/3$ on an event whose probability tends to
one,
\[
\frac{|\widehat k-\tau|}{n}
=O_p\left\{
\frac{\sqrt{\log(np)}}{\sqrt n\,\mathfrak a_{\max}}
+\frac{\log(np)}{n \mathfrak a_{\max}^2}\right\}.
\]
Assumption~\ref{ass:C2} gives
$\mathfrak a_{\max}\asymp\|\bdelta\|_\infty$, proving the stated rate.  For
$\gamma=1/2$, all maximizations are restricted to $\mathcal K_{1/2}$ and the
trimming assumption places $\tau$ in that set.
\end{proof}

\begin{proof}[of Corollary~\ref{cor:adaptive-location}]
Fix $\gamma\in\{0,1/2\}$.  Suppose first that only one matched component estimator, denoted by $\widehat\tau_{\mathrm{sel},\gamma}$, is known to be consistent.  Let $p_{\mathrm{sel},n}$ and $p_{\mathrm{other},n}$ be the corresponding component $p$-values.  For every $\eta>0$, the lower-tail condition in the corollary permits a fixed $\varepsilon>0$ such that
\[
\limsup_{n\to\infty}\Pr(p_{\mathrm{other},n}\le\varepsilon)<\eta.
\]
Since $p_{\mathrm{sel},n}\to0$ in probability,
\[
\limsup_{n\to\infty}
\Pr(p_{\mathrm{sel},n}\ge p_{\mathrm{other},n})
\le
\limsup_{n\to\infty}
\left\{
\Pr(p_{\mathrm{sel},n}>\varepsilon)
+
\Pr(p_{\mathrm{other},n}\le\varepsilon)
\right\}
\le\eta.
\]
Letting $\eta\downarrow0$ shows that the consistent component is selected with probability tending to one.  Hence, for every $a>0$,
\begin{align*}
\Pr\left\{\frac{|\widehat\tau_\gamma-\tau|}{n}>a\right\}
&\le
\Pr\left\{\frac{|\widehat\tau_{\mathrm{sel},\gamma}-\tau|}{n}>a\right\}
+
\Pr(\widehat\tau_\gamma\ne\widehat\tau_{\mathrm{sel},\gamma})\longrightarrow0.
\end{align*}

If both component estimators are consistent, then for every $a>0$,
\[
\Pr\left\{\frac{|\widehat\tau_\gamma-\tau|}{n}>a\right\}
\le
\Pr\left\{\frac{|\widehat\tau_{S,\gamma}-\tau|}{n}>a\right\}
+
\Pr\left\{\frac{|\widehat\tau_{M,\gamma}-\tau|}{n}>a\right\}
\longrightarrow0.
\]
\end{proof}

\section{Wild binary segmentation for multiple changes}

This section proves the multiple-change results in Section~\ref{sec:wbs}.  All probability bounds are derived from Assumptions~\ref{ass:C1}--\ref{ass:C3} and model~\eqref{eq:wbs-multiple-model}.  The number $K_{\mathrm{cp}}$ is fixed.  Throughout this section, $\lambda_m^{\mathrm W}=\lceil m^{\lambda_{\mathrm W}}\rceil$ is the WBS-specific trimming sequence, $\mathfrak c\in\{\mathrm S,\mathrm M\}$ denotes a component label, and $q\ge2$ is reserved for a moment order.

For an interval $I=(s,e]$, write $m_I=e-s$ and use the convention that a change belongs to $I$ only when $s<\tau_j<e$.  Then
\[
\mathbf U_I(b)=\mathbf m_I(b)+\mathbf Z_I(b),
\]
where
\[
\mathbf Z_I(b)
=\sum_{i=s+1}^{b}\bepsilon_i
-\frac{b-s}{e-s}\sum_{i=s+1}^{e}\bepsilon_i,
\]
\[
\mathbf m_I(b)
=\sum_{i=s+1}^{b}\E(\bX_i)
-\frac{b-s}{e-s}\sum_{i=s+1}^{e}\E(\bX_i).
\]
The deterministic dense and coordinatewise profiles are
\begin{align*}
G_{S,\gamma,I}(b)
&=v_{I,b}^{-2\gamma}
  \frac{\|\mathbf m_I(b)\|_2^2}{m_I\sqrt p},\\
G_{M,\gamma,I}(b)
&=\max_{1\le j\le p}
  v_{I,b}^{-\gamma}
  \frac{|m_{I,j}(b)|}{\sqrt{m_I}\,\sigma_j},\\
\widehat G_{M,\gamma,I}(b)
&=\max_{1\le j\le p}
  v_{I,b}^{-\gamma}
  \frac{|m_{I,j}(b)|}{\sqrt{m_I}\,\widehat\sigma_j}.
\end{align*}
The feasible deterministic profile $\widehat G_{M,\gamma,I}$ is used only on the event in Lemma~\ref{lem:wbs-multiple-nuisance}.  On that event all $\widehat\sigma_j$ are positive for sufficiently large $n$ and
\[
\sup_{I,b,\gamma}
\left|
\frac{\widehat G_{M,\gamma,I}(b)}{G_{M,\gamma,I}(b)}-1
\right|=O_p(a_{\sigma,n})
\]
whenever the denominator is nonzero.

\subsection{Feasible nuisance quantities under several changes}

\begin{lemma}[Multiple-change contamination of the nuisance estimators]
\label{lem:wbs-multiple-nuisance}
Under Assumptions~\ref{ass:C1}--\ref{ass:C3},
\ref{ass:WBS-geometry}, and~\ref{ass:WBS-tuning},
\[
\max_{\gamma\in\{0,1/2\}}
\max_{\substack{I=(s,e]:\,m_I\ge\ell_{\mathrm W,n}\\
                  b\in\mathcal K_\gamma(I)}}
|\widetilde\mu_{\gamma,M,I}(b)-\mu_{\gamma,M,I}(b)|
=O_p(r_{\mu,n}^{\mathrm{mc}}),
\]
\[
|\widehat\omega^2-\omega_p^2|=O_p(r_{\omega,n}^{\mathrm{mc}}),
\qquad
\max_{j\le p}\left|\frac{\widehat\sigma_j}{\sigma_j}-1\right|
=O_p(a_{\sigma,n}).
\]
\end{lemma}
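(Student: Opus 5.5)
The plan is to rerun the single-change arguments of Lemmas~\ref{lem:centering-general} and~\ref{lem:scale-general} with the one signal vector replaced by the $K_{\mathrm{cp}}$ jumps. The key observation is that $\widetilde g_h$ and $\widehat{\mathcal T}^{\nu}_{hk}$ are computed once from the full sample. Hence their noise parts are unchanged, and only the signal contamination has to be recounted.

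\textbf{Centering.} Write $\mathbf E_{t,h}^{(M)}=\mathbf E_{t,h}^{\epsilon}+\mathbf d_{t,h}$ with
\[
\mathbf d_{t,h}=\sum_{j=1}^{K_{\mathrm{cp}}}\bdelta_j\{\ind{t>\tau_j}-\ind{t+M+h+1>\tau_j}\}.
\]
By Assumption~\ref{ass:WBS-geometry}, $M=o(\Delta_n)$, so a window of length $\le 2M+1$ straddles at most one change. The affected set $\mathcal I_h^{\mathrm{aff}}$ is therefore a disjoint union of $K_{\mathrm{cp}}$ pieces of size $\le CM$, and on the $j$th piece $\mathbf d_{t,h}$ equals $\pm\bdelta_j$ or $\mathbf 0$. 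The noise-only bias and variance bounds for $\widetilde g_h$ carry over verbatim. The deterministic term is bounded by $CM\mathfrak J_{2,n}/n$, and the mean--noise term by $CM^{1/2}\mathfrak J_{\Omega,n}/n$ in $L_2$, via Lemma~\ref{lem:vector-basic}(ii) applied piece by piece.

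Next we need local coefficient bounds. The local coefficients $c^{(\mu)}_{\gamma,k,h}(I)$ obtained from $a_{i,b}^I$ satisfy the same bounds $|c|\le Cp^{-1/2}$ and $\sum_h|c|\le CMp^{-1/2}$, uniformly over all intervals $I$ and all $b\in\mathcal K_\gamma(I)$. This is the direct summation preceding Lemma~\ref{lem:centering-general}, now with $m_I$ in place of $n$; the factor $v_{I,b}$ cancels against $v_{I,b}^{-2\gamma}$ exactly as in the full-sample case. Because the coefficients are deterministic and the random vector $(\widetilde g_h-\E\widetilde g_h)_{h\le M}$ is common to all intervals, the Cauchy--Schwarz step
\[
\max_{I,b}\Bigl|\textstyle\sum_h c_h(\widetilde g_h-\E\widetilde g_h)\Bigr|\le \max_{I,b}\|\mathbf c\|_2\,\|\mathbf X_g\|_2
\]
delivers the uniform maximum at no extra cost. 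This gives the rate $r^{\mathrm{mc}}_{\mu,n}$.

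\textbf{Scale.} This step is the main obstacle. I would follow Lemma~\ref{lem:scale-general} and expand $\widehat{\mathcal T}^\nu_{hk}-\widehat{\mathcal T}^{\nu,\epsilon}_{hk}$ by the number of signal vectors. The difference from the single-change case is that contaminated $t$-indices and contaminated $s$-indices may now come from different changes, lying in the two halves of $\mathcal I_{hk}$. The counting argument is unchanged: contaminated index sets have size $CK_{\mathrm{cp}}M$. Each $\|\bdelta\|_2^2$ or $\bdelta^\top\bOme\bdelta$ factor is replaced by a sum over pairs $(j,j')$. By Cauchy--Schwarz, $\sum_{j,j'}\|\bdelta_j\|_2\|\bdelta_{j'}\|_2\le K_{\mathrm{cp}}\mathfrak J_{2,n}$, and similarly for $\mathfrak J_{\Omega,n}$. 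Since $K_{\mathrm{cp}}$ is fixed, this yields the same four displayed bounds with $\mathfrak J$ in place of the single-change quantities. The cross-half cumulant remainders are unaffected, because they involve only noise. Combining with the null stochastic term, the expectation bias and the truncation bound $|\omega^2_{p,M}-\omega_p^2|\le C\eta_M$ gives $r^{\mathrm{mc}}_{\omega,n}$. The positive part is handled as before.

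\textbf{Long-run variances.} For $\widehat\sigma_j$, the third-order difference $D^{(3)}_{i,j}$ is mean-free except at indices $i$ with some $\tau_\ell\in(i-3s_j^{(D)},i]$. There are at most $CK_{\mathrm{cp}}\ell_{\max}$ such $i$, each with $|D|\le C\max_\ell\|\bdelta_\ell\|_\infty\le C$ by Assumption~\ref{ass:WBS-geometry}. Summing the autocovariances with kernel weights over $|r|<\ell_j$, the deterministic contamination is $O(\ell_{\max}^2/n)$. The cross term is a mean--noise sum of size $O_p(\ell_{\max}\{\log(np)\}^{c}/\sqrt n)$, uniformly in $j$ by a union bound with moments of order $\log(np)$ from Lemma~\ref{lem:vector-basic}(ii). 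Both terms are already included in $a_{\sigma,n}$. Adding the null-case rate for the componentwise estimator then gives $\max_j|\widehat\sigma_j/\sigma_j-1|=O_p(a_{\sigma,n})$, using $\sigma_j^2\ge C_0$ to pass from $\widehat\sigma_j^2$ to the ratio. Finally, $a_{\sigma,n}\to0$ makes the fallback $\widehat\sigma_j=1$ asymptotically inactive.
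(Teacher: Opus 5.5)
Your proposal is correct and takes essentially the same approach as the paper. The shared ingredients are the signal/noise split of the differenced vectors with disjoint affected sets (from $M=o(\Delta_n)$), uniform local centering coefficients paired with the common full-sample noise vector, the expansion of the scale estimator by the number of signal vectors (with $CK_{\mathrm{cp}}M$ contaminated indices), and the stencil-contamination bound for $\widehat\sigma_j^2$. One small imprecision: the three-signal scale term yields $(\sum_j\|\bdelta_j\|_2)^3\lesssim\mathfrak J_{2,n}^{3/2}$ rather than $\mathfrak J_{2,n}$, and this is reduced to $\mathfrak J_{2,n}$ using the bounded jump norms in Assumption~\ref{ass:WBS-geometry}, exactly as the paper does.
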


\begin{proof}
For $0\le h\le M$, write
\[
\mathbf E_{t,h}^{(M)}
=\mathbf d_{t,h}^{\mu}+\mathbf E_{t,h}^{(M),\epsilon},
\qquad
\mathbf d_{t,h}^{\mu}=\E(\bX_t)-\E(\bX_{t+M+h+1}).
\]
The contribution of the $j$th change is nonzero only if
\[
t\le\tau_j<t+M+h+1.
\]
Assumption~\ref{ass:WBS-geometry} gives
\[
\frac{M}{\Delta_n}
\le \frac{(n\wedge p)^{1/8}+1}{c_\Delta n}
\le Cn^{-7/8}\longrightarrow0.
\]
Hence the affected index sets of distinct changes are disjoint for all
sufficiently large $n$.  Therefore
\begin{align}
\#\{t:\mathbf d_{t,h}^{\mu}\ne\mathbf0\}
&\le C K_{\mathrm{cp}} M,\notag\\
\sum_t\|\mathbf d_{t,h}^{\mu}\|_2^2
&\le C M\mathfrak J_{2,n},\label{eq:wbs-mean-support-final}\\
\sum_t\mathbf d_{t,h}^{\mu\top}\bOme\mathbf d_{t,h}^{\mu}
&\le C M\mathfrak J_{\Omega,n}^2.\notag
\end{align}

Let $\widetilde g_h$ be the edge-corrected estimator defined in the main paper and let $\widetilde g_h^{\epsilon}$ be the same estimator computed from the residual sequence.  Expanding the two differenced inner products gives
\[
\widetilde g_h-\widetilde g_h^{\epsilon}
=A_h^{\mu\mu}+A_h^{\mu\epsilon}+A_h^{\epsilon\mu}.
\]
The deterministic term satisfies
\[
|A_h^{\mu\mu}|
\le\frac Cn\sum_t\|\mathbf d_{t,h}^{\mu}\|_2^2
\le C\frac{M\mathfrak J_{2,n}}n.
\]
For a mean--noise term, centering of $\bepsilon_i$, Lemma~\ref{lem:vector-basic}(ii), and
\eqref{eq:wbs-mean-support-final} yield
\begin{align*}
\E A_h^{\mu\epsilon}&=0,\\
\|A_h^{\mu\epsilon}\|_2
&\le\frac Cn
\left\{\sum_t\mathbf d_{t,h}^{\mu\top}\bOme
                    \mathbf d_{t,h}^{\mu}\right\}^{1/2}\\
&\le C\frac{\sqrt M\,\mathfrak J_{\Omega,n}}n.
\end{align*}
Interchanging the two inner-product factors gives
\[
\E A_h^{\epsilon\mu}=0,
\qquad
\|A_h^{\epsilon\mu}\|_2
\le C\frac{\sqrt M\,\mathfrak J_{\Omega,n}}n.
\]
The maximum over the $M+1$ lags is controlled without treating the lagwise terms as independent:
\begin{align*}
\E\left[
\max_{0\le h\le M}
\{|A_h^{\mu\epsilon}|+|A_h^{\epsilon\mu}|\}^2
\right]
&\le 2\sum_{h=0}^{M}
\E\{|A_h^{\mu\epsilon}|+|A_h^{\epsilon\mu}|\}^2\\
&\le C\frac{M^2\mathfrak J_{\Omega,n}^2}{n^2}.
\end{align*}
Consequently,
\[
\max_{0\le h\le M}
|\widetilde g_h-\widetilde g_h^{\epsilon}|
=O_p\left\{
\frac{M\mathfrak J_{2,n}}n+
\frac{M\mathfrak J_{\Omega,n}}n\right\}.
\]

The coefficient of $\widetilde g_h$ in the local centering is
\[
c_{\gamma,I,b,h}
=v_{I,b}^{-2\gamma}
\frac{m_{I,-}(b)^2m_{I,+}(b)^2}{m_I^3\sqrt p}
\sum_{i=s+1}^{e-h}
\{2-\ind{h=0}\}a_{i,b}^{I}a_{i+h,b}^{I}.
\]
Products lying entirely to the left of $b$, entirely to the right of $b$, and crossing $b$ give, respectively,
\[
\sum_{i=s+1}^{b-h}\frac1{m_{I,-}(b)^2},
\qquad
\sum_{i=b+1}^{e-h}\frac1{m_{I,+}(b)^2},
\qquad
-\sum_{i=(b-h+1)\vee(s+1)}^{b}\frac1{m_{I,-}(b)m_{I,+}(b)}.
\]
Consequently,
\[
\left|\sum_{i=s+1}^{e-h}a_{i,b}^{I}a_{i+h,b}^{I}\right|
\le C\left(\frac1{m_{I,-}(b)}+\frac1{m_{I,+}(b)}
+\frac h{m_{I,-}(b)m_{I,+}(b)}\right).
\]
If $h\ge m_I$, the sum defining $c_{\gamma,I,b,h}$ is empty.  If $h<m_I$, the preceding display and
$h/m_I\le1$ give, for $\gamma=0$,
\[
|c_{0,I,b,h}|
\le\frac C{\sqrt p}
\left\{v_{I,b}+\frac h{m_I}v_{I,b}\right\}
\le\frac C{\sqrt p}.
\]
For $\gamma=1/2$, using
$v_{I,b}^{-1}=m_I^2/(m_{I,-}(b)m_{I,+}(b))$ gives
\[
|c_{1/2,I,b,h}|
\le\frac C{\sqrt p}
\left(1+\frac h{m_I}\right)
\le\frac C{\sqrt p}.
\]
Hence
\[
\max_{\gamma,I,b,h}|c_{\gamma,I,b,h}|\le\frac C{\sqrt p},
\qquad
\max_{\gamma,I,b}\sum_{h=0}^{M}|c_{\gamma,I,b,h}|
\le\frac{CM}{\sqrt p}.
\]
Combining the noise-only centering rate in Lemma~\ref{lem:centering-general} with the preceding contamination bound gives
\begin{align*}
&\max_{\gamma,I,b}
|\widetilde\mu_{\gamma,M,I}(b)-\mu_{\gamma,M,I}(b)|\\
&\quad=O_p\left\{
\frac{M}{\sqrt n}+\sqrt p M\eta_M
+\frac{M^2\mathfrak J_{2,n}}{n\sqrt p}
+\frac{M^2\mathfrak J_{\Omega,n}}{n\sqrt p}
\right\}\\
&\quad=O_p(r_{\mu,n}^{\mathrm{mc}}).
\end{align*}

For the orientation-complete scale estimator, write each quartic kernel as a sum of monomials according to the number of deterministic difference vectors.  The all-noise monomial is controlled by Lemma~\ref{lem:scale-general}.  For fixed $h,k$ and one specified position of the deterministic vector, at most $C K_{\mathrm{cp}}M$ choices of its time index are contaminated and at most $Cn$ choices remain for the other temporal half.  The fourth-moment bounds
\[
\|\bdelta_j^\top\mathbf E_{s,k}^{\epsilon}\|_4
\le C(\bdelta_j^\top\bOme\bdelta_j)^{1/2},
\qquad
\|\mathbf E_{t,h}^{\epsilon\top}
       \mathbf E_{s,k}^{\epsilon}\|_4\le C\sqrt p
\]
and Minkowski's inequality therefore give
\begin{align*}
\|\Delta_{hk}^{\nu,(1)}\|_2
&\le \frac{CM\sqrt p}{n}
\sum_{j=1}^{K_{\mathrm{cp}}}(\bdelta_j^\top\bOme\bdelta_j)^{1/2}\\
&\le \frac{CM\sqrt{K_{\mathrm{cp}}p}\,\mathfrak J_{\Omega,n}}{n}
\le \frac{CM\sqrt p\,\mathfrak J_{\Omega,n}}{n},
\qquad \nu\in\{\mathrm F,\mathrm C\},
\end{align*}
where the fixed $K_{\mathrm{cp}}$ is absorbed into $C$.  Summation over the $O(M^2)$ lag pairs and multiplication by $2/p$ yield
\[
\left\|\frac2p\sum_{h,k\le M}
\sum_{\nu\in\{\mathrm F,\mathrm C\}}
\Delta_{hk}^{\nu,(1)}\right\|_2
\le C\frac{M^3\mathfrak J_{\Omega,n}}{n\sqrt p}.
\]

For two deterministic vectors on the same temporal half, there are at most $C K_{\mathrm{cp}}Mn$ admissible index pairs for fixed $h,k$, and the normalized $L_2$ contribution is at most $CM\mathfrak J_{2,n}/n$.  Thus
\[
\frac2p\sum_{h,k\le M}\sum_{\nu}
\|\Delta_{hk,\mathrm{same}}^{\nu,(2)}\|_2
\le C\frac{M^3\mathfrak J_{2,n}}{np}
\le C\frac{M^2\mathfrak J_{2,n}}{n\sqrt p},
\]
where $M\le\sqrt p$ is used.  If the deterministic vectors occur on different temporal halves, at most $C K_{\mathrm{cp}}^2M^2$ index pairs are nonzero.  The remaining noise factors have $L_4$ norm at most $C\sqrt p$, so
\[
\frac2p\sum_{h,k\le M}\sum_{\nu}
\|\Delta_{hk,\mathrm{cross}}^{\nu,(2)}\|_2
\le C\frac{M^4\mathfrak J_{2,n}}{n^2\sqrt p}
\le C\frac{M^2\mathfrak J_{2,n}}{n\sqrt p},
\]
using $M^2\le n$.  Three deterministic vectors force both temporal halves to be contaminated, and the fourth factor is a residual projection.  For fixed $h,k$, the support count is $O(K_{\mathrm{cp}}^2M^2)$ and, because the jump norms are uniformly bounded,
\[
\|\Delta_{hk}^{\nu,(3)}\|_2
\le C\frac{M^2\mathfrak J_{2,n}}{n^2}.
\]
Consequently,
\[
\frac2p\sum_{h,k\le M}\sum_{\nu}
\|\Delta_{hk}^{\nu,(3)}\|_2
\le C\frac{M^4\mathfrak J_{2,n}}{n^2p}
\le C\frac{M^2\mathfrak J_{2,n}}{n\sqrt p},
\]
where $M^2\le n\sqrt p$.  Four deterministic vectors are nonrandom and give
\[
\frac2p\sum_{h,k\le M}\sum_{\nu}
|\Delta_{hk}^{\nu,(4)}|
\le C\left(\frac{M^2\mathfrak J_{2,n}}{n\sqrt p}\right)^2.
\]
These inequalities hold for both orientations because the crossed orientation changes only the order of the two right-hand residual vectors.  Adding the noise-only stochastic error, its expectation bias, and the lag-truncation error yields
\[
|\widehat\omega^2-\omega_p^2|=O_p(r_{\omega,n}^{\mathrm{mc}}).
\]

For coordinate $j$, write
\[
D_{i,j}^{(3)}=D_{i,j}^{(3),\epsilon}+d_{i,j}^{(3),\mu},
\]
and let $\widehat\sigma_{j,\epsilon}^2$ denote the same lag-window estimator as $\widehat\sigma_j^2$ with every $D_{i,j}^{(3)}$ replaced by $D_{i,j}^{(3),\epsilon}$.
Each jump can affect at most $3s_j^{(D)}=6\ell_j$ consecutive stencils; hence
\[
\#\{i:d_{i,j}^{(3),\mu}\ne0\}\le C K_{\mathrm{cp}}\ell_j,
\qquad
\max_i|d_{i,j}^{(3),\mu}|\le C\max_{r\le K_{\mathrm{cp}}}|\delta_{r,j}|\le C.
\]
For a fixed $|r|<\ell_j$, expansion of the lag product gives a deterministic term and two mean--noise terms.  Uniformly in $j$ and $v$,
\begin{align*}
\left|\frac1n\sum_i
 d_{i,j}^{(3),\mu}d_{i-|v|,j}^{(3),\mu}\right|
&\le C\frac{K_{\mathrm{cp}}\ell_j}{n},\\
\left\|\frac1n\sum_i
 d_{i,j}^{(3),\mu}D_{i-|v|,j}^{(3),\epsilon}\right\|_q
&\le Cq^{c_{\mathrm{dep}}}
       \frac{\sqrt{K_{\mathrm{cp}}\ell_j}}{n},\\
\left\|\frac1n\sum_i
 D_{i,j}^{(3),\epsilon}d_{i-|v|,j}^{(3),\mu}\right\|_q
&\le Cq^{c_{\mathrm{dep}}}
       \frac{\sqrt{K_{\mathrm{cp}}\ell_j}}{n}.
\end{align*}
The second and third inequalities follow from Lemma~\ref{lem:vector-basic}(ii) applied to the deterministic coefficient vectors supported on at most $C K_{\mathrm{cp}}\ell_j$ indices.  Summing over $|r|<\ell_j$ and taking
$q=C_A\log(np)$ gives, simultaneously over $j\le p$,
\begin{align*}
\left|\widehat\sigma_j^2-
\widehat\sigma_{j,\epsilon}^2\right|
&\le C\frac{K_{\mathrm{cp}}\ell_j^2}{n}
+O_p\left\{
\frac{\ell_j^{3/2}\{\log(np)\}^{c_{\mathrm{dep}}}}{n}
\right\}\\
&\le C\frac{K_{\mathrm{cp}}\ell_j^2}{n}
+O_p\left\{
\frac{\ell_j\{\log(np)\}^{c_{\mathrm{LRV}}}}{\sqrt n}
\right\},
\end{align*}
where $\ell_j=o(\sqrt n)$ and $c_{\mathrm{LRV}}$ may be enlarged.  Combining this display with Lemma~\ref{lem:lrv-general}, using fixed $K_{\mathrm{cp}}$ and the lower bound on $\sigma_j^2$, gives
\[
\max_{j\le p}\left|\frac{\widehat\sigma_j}{\sigma_j}-1\right|
=O_p(a_{\sigma,n}).
\]

It remains to verify that the multiple-change centering and scale rates are
consequences of the primitive assumptions.  Since $K_{\mathrm{cp}}$ is fixed,
$\max_j\|\bdelta_j\|_2\le C_\delta$, and
$\lambda_{\max}(\bOme)\le C_1$,
\[
\mathfrak J_{2,n}=O(1),
\qquad
\mathfrak J_{\Omega,n}^2
\le C_1\mathfrak J_{2,n}=O(1).
\]
Moreover, $M\le n^{1/8}+1$ and the dependence-tail terms decrease faster than
every inverse polynomial.  Therefore
\[
\frac{M}{\sqrt n}
+\frac{M^2\mathfrak J_{2,n}}{n\sqrt p}
+\frac{M^2\mathfrak J_{\Omega,n}}{n\sqrt p}
\longrightarrow0,
\qquad
\sqrt p\,M\eta_M\longrightarrow0.
\]
For the stochastic scale term, if $p\le n$, then
\[
M^2\left(\frac1n+\frac p{n^2}\right)^{1/2}
\le Cn^{-1/4}.
\]
If $p>n$, then
\[
M^2\left(\frac1n+\frac p{n^2}\right)^{1/2}
\le Cn^{-1/4}+\frac{n^{1/4}\sqrt p}{n}
=o(1)
\]
by $p=o(n^{3/2})$.  The remaining signal terms satisfy
\[
\frac{M^3\mathfrak J_{\Omega,n}}{n\sqrt p}
\le Cn^{-5/8},
\qquad
\left(\frac{M^2\mathfrak J_{2,n}}{n\sqrt p}\right)^2
\le Cn^{-3/2},
\]
where the displayed powers are conservative and use $p\ge1$.  Hence
\[
r_{\mu,n}^{\mathrm{mc}}+r_{\omega,n}^{\mathrm{mc}}
\longrightarrow0.
\]
Thus neither this convergence nor $M=o(\Delta_n)$ is an additional
assumption.
\end{proof}

\subsection{Random-interval isolation and uniform local noise bounds}

For $1\le j\le K_{\mathrm{cp}}$, define
\[
\mathcal L_{j,n}^{\mathrm W}
=\left\{\tau_j-\left\lfloor\frac{\Delta_n}{4}\right\rfloor,
\ldots,
\tau_j-\left\lceil\frac{\Delta_n}{8}\right\rceil\right\},
\]
\[
\mathcal R_{j,n}^{\mathrm W}
=\left\{\tau_j+\left\lceil\frac{\Delta_n}{8}\right\rceil,
\ldots,
\tau_j+\left\lfloor\frac{\Delta_n}{4}\right\rfloor\right\}.
\]
Let $\mathcal C_n^{\mathrm W}$ be the event that every change has at least one sampled interval with its endpoints in
$\mathcal L_{j,n}^{\mathrm W}$ and $\mathcal R_{j,n}^{\mathrm W}$.

\begin{lemma}[Random-interval coverage]\label{lem:wbs-coverage-main}
Under Assumptions~\ref{ass:WBS-geometry}--\ref{ass:WBS-tuning}, with $c_{\mathrm W}=1/100$ and all sufficiently large $n$,
\[
\Pr\{(\mathcal C_n^{\mathrm W})^c\}
\le K_{\mathrm{cp}}\{1-c_{\mathrm W}(\Delta_n/n)^2\}^{N_{\mathrm W,n}}.
\]
On $\mathcal C_n^{\mathrm W}$, each $\tau_j$ has an interval
$I_{j,n}^{\mathrm{iso}}=(s_{j,n}^{\mathrm{iso}},e_{j,n}^{\mathrm{iso}}]$
that contains no other change and satisfies
\[
\frac{\Delta_n}{8}-1
\le \tau_j-s_{j,n}^{\mathrm{iso}}
\le\frac{\Delta_n}{4}+1,
\]
\[
\frac{\Delta_n}{8}-1
\le e_{j,n}^{\mathrm{iso}}-\tau_j
\le\frac{\Delta_n}{4}+1.
\]
In particular,
\[
\frac{\Delta_n}{4}-2
\le |I_{j,n}^{\mathrm{iso}}|
\le\frac{\Delta_n}{2}+2.
\]
\end{lemma}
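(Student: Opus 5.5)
The plan is a direct counting argument for a single draw, followed by independence across draws and a union bound over the $K_{\mathrm{cp}}$ changes; the isolation and length statements are then deterministic consequences of the spacing.

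\emph{Single-draw probability.} Fix $j$. The draws $(S_r^{\mathrm W},E_r^{\mathrm W})$ are i.i.d.\ uniform on $\mathbb I_n^{\mathrm W}$ and independent of the data, and $|\mathbb I_n^{\mathrm W}|\le n(n+1)/2\le n^2$ for $n\ge1$. It therefore suffices to count the pairs $(s,e)$ with $s\in\mathcal L_{j,n}^{\mathrm W}$ and $e\in\mathcal R_{j,n}^{\mathrm W}$, and to check that every such pair belongs to $\mathbb I_n^{\mathrm W}$.
\begin{itemize}
\item Each of $\mathcal L_{j,n}^{\mathrm W}$ and $\mathcal R_{j,n}^{\mathrm W}$ has $\lfloor\Delta_n/4\rfloor-\lceil\Delta_n/8\rceil+1\ge\Delta_n/8-1$ elements. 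Because $\Delta_n\ge c_\Delta n\to\infty$, this is at least $\Delta_n/9$ for large $n$.
\item Any such pair has $e-s\ge\Delta_n/4-2$. Since $\ell_{\mathrm W,n}=o(\Delta_n)$, this exceeds $\ell_{\mathrm W,n}$ eventually.
\item Since $\tau_j-\tau_{j-1}\ge\Delta_n$ and $\tau_0=0$, we get $s\ge\tau_{j-1}+3\Delta_n/4>0$. Symmetrically, $e<n$.
\end{itemize}
Hence one draw hits the target set with probability at least $(\Delta_n/9)^2/n^2\ge c_{\mathrm W}(\Delta_n/n)^2$, because $1/81>1/100$. By independence, the probability that no draw hits it is at most $\{1-c_{\mathrm W}(\Delta_n/n)^2\}^{N_{\mathrm W,n}}$. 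A union bound over $j$ gives the displayed bound.

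\emph{Isolation and lengths on $\mathcal C_n^{\mathrm W}$.} Take the covering interval for $\tau_j$.
\begin{itemize}
\item The endpoint bounds $\Delta_n/8-1\le\tau_j-s\le\Delta_n/4+1$ and the analogous ones for $e-\tau_j$ come straight from the definitions of the two sets, with the $\pm1$ absorbing the floor and ceiling.
\item The interval contains $\tau_j$ in its interior.
\item The nearest other changes lie at distance at least $\Delta_n$, while both endpoints are within $\Delta_n/4+1<\Delta_n$ of $\tau_j$. So no other change satisfies $s<\tau_i<e$.
\item Adding the two one-sided bounds gives the length statement.
\end{itemize}

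The only step requiring care is the membership check in $\mathbb I_n^{\mathrm W}$: the minimum-length and $0\le s<e\le n$ constraints must hold for all sufficiently large $n$. This is where $\ell_{\mathrm W,n}=o(\Delta_n)$ and $\Delta_n\to\infty$ are used.
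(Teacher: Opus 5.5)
Your proof is correct and follows essentially the same route as the paper. You count the pairs in $\mathcal L_{j,n}^{\mathrm W}\times\mathcal R_{j,n}^{\mathrm W}$ (each factor of size at least $\Delta_n/9$) against the total number of admissible pairs, then use independence of the draws and a union bound over $j$, and read off isolation and the length bounds from the endpoint ranges. Two minor differences from the paper: you check explicitly that $0\le s<e\le n$ via $\tau_0=0$, $\tau_{K_{\mathrm{cp}}+1}=n$ and the spacing, which the paper leaves implicit; and you use the cruder denominator $n^2$ in place of $n(n-1)/2$, which still suffices because $1/81>1/100$.
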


\begin{proof}
The number of admissible endpoint pairs is no larger than
\[
\sum_{s=0}^{n-2}(n-s-1)=\frac{n(n-1)}2.
\]
For all sufficiently large $n$,
\[
|\mathcal L_{j,n}^{\mathrm W}|
\ge\frac{\Delta_n}{9},
\qquad
|\mathcal R_{j,n}^{\mathrm W}|
\ge\frac{\Delta_n}{9}.
\]
Because $\ell_{\mathrm W,n}=o(\Delta_n)$, every pair formed from these endpoint sets is admissible eventually.  Hence one draw isolates the $j$th change with probability at least
\[
\frac{(\Delta_n/9)^2}{n(n-1)/2}
\ge \frac1{100}(\Delta_n/n)^2
=c_{\mathrm W}(\Delta_n/n)^2.
\]
Independence of the draws gives the failure probability for one change.  A union bound over $j=1,\ldots,K_{\mathrm{cp}}$ proves the displayed inequality.  The endpoint construction gives the stated margins.  Since both endpoints lie strictly inside the two segments adjacent to $\tau_j$, no other change lies in the interval.
\end{proof}

Define
\[
Q_{\gamma,I}(b)
=v_{I,b}^{-2\gamma}
\frac{\|\mathbf Z_I(b)\|_2^2-
      \E\|\mathbf Z_I(b)\|_2^2}{m_I\sqrt p},
\qquad
\mathcal V_n=\operatorname{span}(\bdelta_1,\ldots,\bdelta_{K_{\mathrm{cp}}}).
\]

\begin{lemma}[Uniform stochastic bounds on sampled intervals]
\label{lem:wbs-uniform-noise}
Under Assumptions~\ref{ass:C1}--\ref{ass:C3} and
\ref{ass:WBS-geometry}--\ref{ass:WBS-tuning}, there is an event
$\mathcal E_{n,\mathrm{noise}}^{\mathrm W}$ such that
\begin{align}
\max_{r\le N_{\mathrm W,n}}
\max_{b\in\mathcal K_0(I_r^{\mathrm W})}
|Q_{0,I_r^{\mathrm W}}(b)|
&\le Ca_{\mathrm{quad},0,n}^{\mathrm W},\label{eq:wbs-dense-noise0-final}\\
\max_{r\le N_{\mathrm W,n}}
\max_{b\in\mathcal K_{1/2}(I_r^{\mathrm W})}
|Q_{1/2,I_r^{\mathrm W}}(b)|
&\le Ca_{\mathrm{quad},1/2,n}^{\mathrm W},\notag\\
\max_{\gamma\in\{0,1/2\}}
\max_{r,b}
\sup_{\substack{\mathbf u\in\mathcal V_n\\\|\mathbf u\|_2=1}}
\frac{v_{I_r^{\mathrm W},b}^{-\gamma}
|\mathbf u^\top\mathbf Z_{I_r^{\mathrm W}}(b)|}
{\sqrt{m_{I_r^{\mathrm W}}}}
&\le Ca_{\mathrm{lin},n}^{\mathrm W},\label{eq:wbs-projection-noise-final}\\
\max_{\gamma\in\{0,1/2\}}
\max_{r,b,j}
\frac{v_{I_r^{\mathrm W},b}^{-\gamma}
|Z_{I_r^{\mathrm W},j}(b)|}
{\sqrt{m_{I_r^{\mathrm W}}}\sigma_j}
&\le Ca_{\mathrm{lin},n}^{\mathrm W}.\label{eq:wbs-coordinate-noise-final}
\end{align}
The maxima over $r,b$ are restricted to nonempty candidate sets, and
\[
\Pr\{(\mathcal E_{n,\mathrm{noise}}^{\mathrm W})^c\}
\le\frac C{\log(2n)}+\frac C{nN_{\mathrm W,n}}
+\frac C{npN_{\mathrm W,n}}+\varepsilon_{\mathrm{feas},n},
\]
where $\varepsilon_{\mathrm{feas},n}\to0$ is the failure probability in Lemma~\ref{lem:wbs-multiple-nuisance}.
\end{lemma}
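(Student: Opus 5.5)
Condition on the random-interval draw, which is independent of the data, and take a union bound over the $N_{\mathrm W,n}$ sampled intervals. Treat each interval $I=(s,e]$ as a fixed deterministic interval. The four displays then become maximal inequalities for stationary CUSUM processes indexed by the local candidate set $\mathcal K_\gamma(I)$. These processes have exactly the structure handled in the full-sample lemmas, with $n$ replaced by $m_I$ and the time origin shifted; stationarity makes the shift irrelevant. The event $\mathcal E_{n,\mathrm{noise}}^{\mathrm W}$ is the intersection of the four bounds and the feasibility event of Lemma~\ref{lem:wbs-multiple-nuisance}. This explains the term $\varepsilon_{\mathrm{feas},n}$ in the failure probability.

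For the quadratic bounds we use fourth-moment maximal inequalities. On a fixed interval, Lemma~\ref{lem:quadratic-cumulants} (with $n\to m_I$) gives $\E|Q_{0,I}(b_2)-Q_{0,I}(b_1)|^4\le C\Delta^2$, where $\Delta=(b_2-b_1+1)/m_I$. M\'oricz's inequality \citep[Theorem~1]{moricz1982} then yields $\E\max_b|Q_{0,I}(b)|^4\le C$. Markov's inequality together with a union bound over $r\le N_{\mathrm W,n}$ gives
\[
\Pr\Big\{\max_{r,b}|Q_{0,I_r^{\mathrm W}}(b)|>x\Big\}\le CN_{\mathrm W,n}x^{-4}.
\]
Taking $x=C\{N_{\mathrm W,n}\log(2n)\}^{1/4}$ gives failure probability $C/\log(2n)$ and the rate $a^{\mathrm W}_{\mathrm{quad},0,n}$.

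For $\gamma=1/2$, partition $\mathcal K_{1/2}(I)$ into unit logit blocks, as in the proof of Lemma~\ref{lem:cf2-remainder}. On each block, $\E\max|Q_{1/2,I}|^4\le C$. The number of blocks is at most $C\log m_I\le C\log(2n)$. Summing over blocks and intervals bounds the failure probability by $CN_{\mathrm W,n}\log(2n)x^{-4}$, and the choice $x=CN_{\mathrm W,n}^{1/4}\{\log(2n)\}^{1/2}$ makes it $C/\log(2n)$. This gives $a^{\mathrm W}_{\mathrm{quad},1/2,n}$.

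The linear bounds follow from sub-Gaussian moment control rather than maximal inequalities. Each coordinate CUSUM $v^{-\gamma}_{I,b}Z_{I,j}(b)/(\sqrt{m_I}\sigma_j)$ is a weighted sum with $\|w\|_2\le C$, uniformly in $(I,b,\gamma)$, since the weights are normalized. Lemma~\ref{lem:vector-basic}(ii) then gives $L_q$ norm at most $Cq^{c_{\mathrm{dep}}}$, using $\sigma_j\ge C_0^{1/2}$. Take $q=2\mathfrak L_{\mathrm W,n}$ and apply Markov's inequality. A union bound over the at most $2npN_{\mathrm W,n}$ triples $(r,b,j)$ and both values of $\gamma$ gives probability $C/(npN_{\mathrm W,n})$ of exceeding $Ca^{\mathrm W}_{\mathrm{lin},n}$.

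For the projection bound we need a supremum over the unit sphere of $\mathcal V_n$, which has fixed dimension at most $K_{\mathrm{cp}}$. Fix an orthonormal basis $\mathbf e_1,\dots,\mathbf e_{K'}$ of $\mathcal V_n$ and note $\sup_{\mathbf u}|\mathbf u^\top\mathbf z|\le\sqrt{K'}\max_i|\mathbf e_i^\top\mathbf z|$. The same moment argument applies over $nN_{\mathrm W,n}K'$ indices and yields $C/(nN_{\mathrm W,n})$. Summing the four failure probabilities with $\varepsilon_{\mathrm{feas},n}$ proves the bound on $\Pr\{(\mathcal E^{\mathrm W}_{n,\mathrm{noise}})^c\}$.

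The main obstacle is the uniform fourth-moment maximal bound for the $\gamma=1/2$ quadratic process on short intervals near the trimming boundary. There the weight $v_{I,b}^{-1}$ is as large as $m_I/\lambda^{\mathrm W}_{m_I}$. We must check that the increment bound in the logit metric, obtained from Lemmas~\ref{lem:cumulant-coefficient-contraction}--\ref{lem:quadratic-cumulants} via the product-cumulant formula, holds with constants uniform in $m_I\ge\ell_{\mathrm W,n}$. In particular, the $O(m_I^{-1})$ non-Gaussian correction must stay bounded after weighting. This is where the argument will rest on the separately anchored cumulant envelopes of Assumption~\ref{ass:C3}.
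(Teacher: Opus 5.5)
Your proposal is correct and follows essentially the same route as the paper. The paper also conditions on the independent interval draw, proves fourth-moment M\'oricz bounds (natural order for $\gamma=0$, unit logit blocks for $\gamma=1/2$) with Markov and a union bound over the $N_{\mathrm W,n}$ intervals, and controls the linear terms by the $L_q$ bounds of Lemma~\ref{lem:vector-basic}(ii) with $q\asymp\mathfrak L_{\mathrm W,n}$; the boundary-weighting concern you flag is resolved exactly as you anticipate, because the $\ell_2$-norm contraction in Lemma~\ref{lem:cumulant-coefficient-contraction} gives $\E|Q(w)|^4\le C$ for every unit-norm coefficient vector. Your orthonormal basis (factor $\sqrt{K'}$) in place of the paper's $1/2$-net (factor $2$) is an equivalent device, since $\dim\mathcal V_n\le K_{\mathrm{cp}}$ is fixed.
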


\begin{proof}
For a deterministic interval $I=(s,e]$ of length $m$ and split $b$, define
\[
w_{I,b,i}^{(\gamma)}
=v_{I,b}^{-\gamma}m^{-1/2}
\left\{\ind(s<i\le b)-\frac{b-s}{m}\ind(s<i\le e)\right\}.
\]
Direct summation gives
\[
\sum_i\{w_{I,b,i}^{(\gamma)}\}^2
=v_{I,b}^{1-2\gamma}\le1.
\]
For $\gamma=1/2$,
\[
\max_i|w_{I,b,i}^{(1/2)}|
\le C(\lambda_m^{\mathrm W})^{-1/2}.
\]

The product--cumulant expansion in Lemmas~\ref{lem:cumulant-coefficient-contraction} and
\ref{lem:quadratic-cumulants} implies, uniformly over coefficient vectors with squared norm at most one,
\[
\E|Q(w)|^4\le C,
\qquad
Q(w)=\frac{\|\sum_iw_i\bepsilon_i\|_2^2-
\E\|\sum_iw_i\bepsilon_i\|_2^2}{\sqrt p}.
\]
For $b_1<b_2$ and $\gamma=0$, direct evaluation of the two CUSUM coefficient vectors gives
\[
\|w_{I,b_2}^{(0)}-w_{I,b_1}^{(0)}\|_2^2
\le C\frac{|b_2-b_1|+1}{m}.
\]
Applying Lemmas~\ref{lem:cumulant-coefficient-contraction} and
\ref{lem:quadratic-cumulants} to the coefficient difference therefore gives
\[
\E|Q_{0,I}(b_2)-Q_{0,I}(b_1)|^4
\le C\left\{\frac{|b_2-b_1|+1}{m}\right\}^2.
\]
The maximal-moment inequality of \citet{moricz1982} yields
\[
\E\left[\max_{b\in\mathcal K_0(I)}|Q_{0,I}(b)|^4\right]\le C.
\]

For $\gamma=1/2$, put
\[
u_{I,b}=\log\{t_{I,b}/(1-t_{I,b})\}.
\]
The trimmed range of $u_{I,b}$ has length at most $C\log(2m)$.  Partition it into at most $C\log(2m)$ intervals of length one.  For two grid points in the same unit interval, differentiation of the logistic map gives
\[
\|w_{I,b_2}^{(1/2)}-w_{I,b_1}^{(1/2)}\|_2^2
\le C\{|u_{I,b_2}-u_{I,b_1}|+m^{-1}\}.
\]
Hence
\[
\E|Q_{1/2,I}(b_2)-Q_{1/2,I}(b_1)|^4
\le C\{|u_{I,b_2}-u_{I,b_1}|+m^{-1}\}^2.
\]
Applying the maximal-moment inequality of \citet{moricz1982} on each unit block gives
\[
\E\left[
\max_{b:\,u_{I,b}\in[r,r+1]}|Q_{1/2,I}(b)|^4
\right]\le C.
\]
Since the fourth power of a maximum is bounded by the sum of the blockwise fourth powers,
\[
\E\left[
\max_{b\in\mathcal K_{1/2}(I)}|Q_{1/2,I}(b)|^4
\right]
\le C\log(2m).
\]
The fourth-order non-Gaussian correction is bounded by $C/m$ and is absorbed because $m\ge\ell_{\mathrm W,n}\to\infty$.

Markov's inequality and a union bound over the sampled intervals now give
\begin{align*}
\Pr\left\{
\max_{r,b}|Q_{0,I_r^{\mathrm W}}(b)|>Ca_{\mathrm{quad},0,n}^{\mathrm W}
\right\}
&\le\frac{CN_{\mathrm W,n}}{\{a_{\mathrm{quad},0,n}^{\mathrm W}\}^4}
\le\frac C{\log(2n)},\\
\Pr\left\{
\max_{r,b}|Q_{1/2,I_r^{\mathrm W}}(b)|>Ca_{\mathrm{quad},1/2,n}^{\mathrm W}
\right\}
&\le\frac{CN_{\mathrm W,n}\log(2n)}
{\{a_{\mathrm{quad},1/2,n}^{\mathrm W}\}^4}
\le\frac C{\log(2n)}.
\end{align*}

The dimension of $\mathcal V_n$ is at most $K_{\mathrm{cp}}$.  Let $\mathcal N_n$ be a $1/2$-net of its unit sphere.  Then $|\mathcal N_n|\le5^K_{\mathrm{cp}}$ and
\[
\sup_{\mathbf u\in\mathcal V_n,\|\mathbf u\|_2=1}
|\mathbf u^\top\mathbf z|
\le2\max_{\mathbf u\in\mathcal N_n}|\mathbf u^\top\mathbf z|.
\]
For one net point and one local CUSUM, Lemma~\ref{lem:vector-basic}(ii) gives
\[
\left\|
\frac{v_{I,b}^{-\gamma}\mathbf u^\top\mathbf Z_I(b)}{\sqrt m}
\right\|_q
\le Cq^{c_{\mathrm{dep}}}.
\]
Take
\[
q=2\log\{2\cdot5^K_{\mathrm{cp}} nN_{\mathrm W,n}\}.
\]
The inequality $\Pr(|Y|>e\|Y\|_q)\le e^{-q}$ and a union bound over the net, intervals, and split points give \eqref{eq:wbs-projection-noise-final} with failure probability at most $C/(nN_{\mathrm W,n})$.

For a coordinate direction, there are at most $2npN_{\mathrm W,n}$ signed contrasts.  Taking
$q=2\log(2npN_{\mathrm W,n})$ gives \eqref{eq:wbs-coordinate-noise-final} with failure probability at most $C/(npN_{\mathrm W,n})$.  On the feasible event,
\[
\max_j\left|\frac{\sigma_j}{\widehat\sigma_j}-1\right|
\le Ca_{\sigma,n}.
\]
Since $a_{\sigma,n}a_{\mathrm{lin},n}^{\mathrm W}\to0$, replacing $\sigma_j$ by $\widehat\sigma_j$ does not change the order.  Intersecting the events proves the lemma.
\end{proof}

\subsection{One-change geometry on a selected random interval}

\begin{lemma}[One-change signal, gap, and localizer bounds]
\label{lem:wbs-one-change-interval}
Let $I=(s,e]$ contain exactly one change $\tau_j$, and put
\[
a=\tau_j-s,
\qquad
c=e-\tau_j,
\qquad
m=a+c.
\]
Let $b_{\gamma,I}^0$ be the closest point to $\tau_j$ in $\mathcal K_\gamma(I)$, with ties resolved to the left.  Then
\[
|b_{0,I}^0-\tau_j|=0,
\qquad
|b_{1/2,I}^0-\tau_j|\le\lambda_m^{\mathrm W}.
\]
The point $b_{\gamma,I}^0$ maximizes $G_{S,\gamma,I}$ and $G_{M,\gamma,I}$ over $\mathcal K_\gamma(I)$.  On the feasible-variance event, it also maximizes $\widehat G_{M,\gamma,I}$.  For a universal $c_0>0$,
\begin{equation}\label{eq:wbs-relative-gap-final}
G_{S,\gamma,I}(b_{\gamma,I}^0)-G_{S,\gamma,I}(b)
\ge c_0\frac{|b-b_{\gamma,I}^0|}{m}
G_{S,\gamma,I}(b_{\gamma,I}^0),
\end{equation}
and the same inequality holds with $G_{S,\gamma,I}$ replaced by either $G_{M,\gamma,I}$ or $\widehat G_{M,\gamma,I}$.

On $\mathcal E_{n,\mathrm{noise}}^{\mathrm W}$ and the nuisance-estimation event, if
$\mathfrak R_{S,\gamma}^{\mathrm W}(I)>1$, then
\[
|\widetilde b_{S,\gamma}(I)-\tau_j|
\le\ind{\gamma=1/2}\lambda_m^{\mathrm W}
+C m\left\{
\frac{a_{S,\gamma,n}^{\mathrm W}}{\Lambda_{S,\gamma,n}^{\mathrm W}}
+\frac{a_{\mathrm{lin},n}^{\mathrm W}}
{p^{1/4}\{\Lambda_{S,\gamma,n}^{\mathrm W}\}^{1/2}}
\right\}.
\]
If $\mathfrak R_{M,\gamma}^{\mathrm W}(I)>1$, then
\[
|\widetilde b_{M,\gamma}(I)-\tau_j|
\le\ind{\gamma=1/2}\lambda_m^{\mathrm W}
+C m\frac{a_{\mathrm{lin},n}^{\mathrm W}}{\Lambda_{M,\gamma,n}^{\mathrm W}}.
\]

If $I=I_{j,n}^{\mathrm{iso}}$ is a coverage interval, then
\[
G_{S,\gamma,I}(\tau_j)
\ge c\frac{\Delta_n\|\bdelta_j\|_2^2}{\sqrt p},
\qquad
G_{M,\gamma,I}(\tau_j)
\ge c\sqrt{\Delta_n}\|\bdelta_j\|_\infty.
\]
On the feasible-variance event, the second lower bound also holds with $G_{M,\gamma,I}$ replaced by $\widehat G_{M,\gamma,I}$ after changing $c$.
\end{lemma}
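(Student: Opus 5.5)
The plan is to work entirely from the exact one-change CUSUM formulas on $I$. With a single change $\tau_j$ in $I=(s,e]$, $\mathbf m_I(b)=-\bdelta_j\,\mathfrak k_I(b)$, where
\[
\mathfrak k_I(b)=\frac{(b-s)c}{m}\ind{b\le\tau_j}+\frac{a(e-b)}{m}\ind{b>\tau_j}.
\]
Every profile is then a positive multiple of a scalar function of $t=(b-s)/m$. Writing $\theta=a/m$, the profiles are
\[
G_{S,\gamma,I}(b)=\frac{m\|\bdelta_j\|_2^2}{\sqrt p}\,\{\mathfrak h^{\mathrm{sig}}_{\gamma,\theta}(t)\}^2,
\qquad
G_{M,\gamma,I}(b)=\sqrt m\max_{l}\frac{|\delta_{j,l}|}{\sigma_l}\,\mathfrak h^{\mathrm{sig}}_{\gamma,\theta}(t),
\]
with $\mathfrak h^{\mathrm{sig}}$ the function from the proof of Theorem~\ref{consistency}(iii)--(iv). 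The same holds for $\widehat G_{M,\gamma,I}$ with $\widehat\sigma_l$ in place of $\sigma_l$, since the $l$-maximum factors out of the time dependence. Hence all three profiles share the same maximizer.

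For $\gamma=0$, $\mathfrak h_{0,\theta}^{\mathrm{sig}}$ is piecewise linear and peaks at $\theta$, and $\tau_j\in\mathcal K_0(I)$, so $b^0_{0,I}=\tau_j$. For $\gamma=1/2$, the squared function $t(1-\theta)^2/(1-t)$ increases for $t\le\theta$ and $\theta^2(1-t)/t$ decreases for $t>\theta$. The maximizer over the trimmed set is therefore the nearest admissible point, within $\lambda^{\mathrm W}_m$ of $\tau_j$.

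The relative gap \eqref{eq:wbs-relative-gap-final} follows from the explicit differences of $\{\mathfrak h^{\mathrm{sig}}\}^2$ displayed in Theorem~\ref{consistency}, now without assuming $\theta$ is bounded away from $0$ and $1$. For example, with $\gamma=0$ and $t\le\theta$,
\[
\frac{(1-\theta)^2(\theta-t)(\theta+t)}{\theta^2(1-\theta)^2}\ge\frac{\theta-t}{\theta}\ge|t-\theta|.
\]
The other branch and $\gamma=1/2$ are analogous; for $\gamma=1/2$ one compares with $b^0_{1/2,I}$ and uses monotonicity on each side. For $G_M$ the gap for $\mathfrak h$ itself follows from $x-y=(x^2-y^2)/(x+y)$.

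For the localizer bounds, decompose the feasible score as
\[
W_{\gamma,I}(b)-\widetilde\mu_{\gamma,M,I}(b)=G_{S,\gamma,I}(b)+2v^{-2\gamma}_{I,b}\frac{\mathbf m_I^\top\mathbf Z_I}{m\sqrt p}+Q_{\gamma,I}(b)+\{\mu-\widetilde\mu\}.
\]
On the events of Lemmas~\ref{lem:wbs-uniform-noise} and~\ref{lem:wbs-multiple-nuisance}, the noise and centering terms are bounded by $Ca^{\mathrm W}_{S,\gamma,n}$. The cross term is bounded by $C(G_S/\sqrt p)^{1/2}a^{\mathrm W}_{\mathrm{lin},n}$, using the projection bound \eqref{eq:wbs-projection-noise-final} with $\mathbf u=\bdelta_j/\|\bdelta_j\|_2\in\mathcal V_n$ and $v^{-2\gamma}\|\mathbf m_I\|_2\lesssim v^{-\gamma}\sqrt{m\sqrt p\,G_S}$.

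Significance, $\mathfrak R_S>1$ together with $\widehat\omega_{\mathrm W}\asymp1$, forces $\max_b G_S\ge c\Lambda_S-Ca_S-C(G_S/\sqrt p)^{1/2}a_{\mathrm{lin}}$. Since $a_S=o(\Lambda_S)$, this gives $G^\star:=G_S(b^0)\gtrsim\Lambda_S$. The argmax inequality combined with the relative gap then gives
\[
c_0\frac{|\widetilde b-b^0|}{m}G^\star\le 2Ca_S+2C(G^\star)^{1/2}a_{\mathrm{lin}}/p^{1/4}.
\]
Dividing by $G^\star\gtrsim\Lambda_S$ yields exactly the stated rate, and the triangle inequality with $|b^0-\tau_j|$ adds the $\lambda^{\mathrm W}_m$ term.

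The max localizer is handled the same way. On the feasible event, the feasible CUSUM equals $\widehat G_M$ plus coordinate noise bounded by $Ca_{\mathrm{lin}}$ (the coordinate bound \eqref{eq:wbs-coordinate-noise-final}). For $\widetilde b$, the reverse triangle inequality gives $\widehat G_M(b^0)-\widehat G_M(\widetilde b)\le 2Ca_{\mathrm{lin}}$, and significance gives $\widehat G_M(b^0)\gtrsim\Lambda_M$.

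For a coverage interval, Lemma~\ref{lem:wbs-coverage-main} gives $\theta\in[1/3,2/3]$ up to $o(1)$, and $m\asymp\Delta_n$. Then $\mathfrak h^{\mathrm{sig}}(\theta)\ge c$, which yields both lower bounds directly, using $\sigma_l\le C$ from Assumption~\ref{ass:C2}. The feasible version follows from $\widehat\sigma_l/\sigma_l\to1$ uniformly.

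The main obstacle is the uniform gap for $\gamma=1/2$ near the interval endpoints when $\theta$ is extreme. The comparison point $b^0$ is then not $\tau_j$, and the relative factor must be taken with respect to $m$ without constants depending on $\theta$. I would verify this through the explicit ratios $(\tau-k)/(1-k/n)$ and $(k-\tau)/(k/n)$, both of which are at least $|k-\tau|$ in the interval's normalized units.
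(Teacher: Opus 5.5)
Your proposal is correct and follows the paper's proof essentially step for step. The shared steps are: exact one-change CUSUM profiles and their monotonicity to locate $b_{\gamma,I}^0$; explicit relative-gap computations, using $x-y=(x^2-y^2)/(x+y)$ for the coordinatewise profile; the noise/centering/cross-term decomposition, where the cross term is absorbed because $a_{S,\gamma,n}^{\mathrm W}$ already contains $\{a_{\mathrm{lin},n}^{\mathrm W}\}^2/\sqrt p$; significance to get $G_{S,\gamma,I}(b_{\gamma,I}^0)\gtrsim\Lambda_{S,\gamma,n}^{\mathrm W}$; and the argmax-plus-gap division. Writing everything through $\mathfrak h_{\gamma,\theta}^{\mathrm{sig}}$ only repackages the paper's $x,y$ formulas.

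The one step to tighten is the case $b_{1/2,I}^0\ne\tau_j$. There the ratios you cite measure the gap relative to $\tau_j$, not relative to $b_{1/2,I}^0$. Compute instead on the single monotone branch containing $\mathcal K_{1/2}(I)$. For example, on the right branch with $t_0=(b_{1/2,I}^0-s)/m\le t$, one gets $[\{\mathfrak h_{1/2,\theta}^{\mathrm{sig}}(t_0)\}^2-\{\mathfrak h_{1/2,\theta}^{\mathrm{sig}}(t)\}^2]/\{\mathfrak h_{1/2,\theta}^{\mathrm{sig}}(t_0)\}^2=(t-t_0)/\{t(1-t_0)\}\ge t-t_0$. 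This is exactly the paper's computation and gives a constant independent of $\theta$.
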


\begin{proof}
The deterministic CUSUM equals
\[
\mathbf m_I(b)=-\bdelta_j
\begin{cases}
(b-s)c/m,&b\le\tau_j,\\
a(e-b)/m,&b>\tau_j.
\end{cases}
\]
For $b\le\tau_j$, put $x=b-s$.  Since
$v_{I,b}=x(m-x)/m^2$,
\begin{align*}
G_{S,0,I}(b)
&=\frac{x^2c^2}{m^3\sqrt p}\|\bdelta_j\|_2^2,\\
G_{S,1/2,I}(b)
&=\frac{xc^2}{m(m-x)\sqrt p}\|\bdelta_j\|_2^2.
\end{align*}
For $b>\tau_j$, put $y=e-b$.  Then
\begin{align*}
G_{S,0,I}(b)
&=\frac{a^2y^2}{m^3\sqrt p}\|\bdelta_j\|_2^2,\\
G_{S,1/2,I}(b)
&=\frac{a^2y}{m(m-y)\sqrt p}\|\bdelta_j\|_2^2.
\end{align*}
Let
\[
a_\star=\max_{1\le l\le p}\frac{|\delta_{j,l}|}{\sigma_l},
\qquad
\widehat a_\star=\max_{1\le l\le p}\frac{|\delta_{j,l}|}{\widehat\sigma_l}.
\]
On the feasible-variance event,
\[
(1-Ca_{\sigma,n})a_\star
\le \widehat a_\star
\le (1+Ca_{\sigma,n})a_\star,
\]
for all sufficiently large $n$.  For $b\le\tau_j$, the coordinatewise profiles are
\begin{align*}
G_{M,0,I}(b)
&=\frac{xc}{m^{3/2}}a_\star,\\
G_{M,1/2,I}(b)
&=\frac{c\sqrt{x}}{\sqrt m\sqrt{m-x}}a_\star,
\end{align*}
and for $b>\tau_j$ they are
\begin{align*}
G_{M,0,I}(b)
&=\frac{ay}{m^{3/2}}a_\star,\\
G_{M,1/2,I}(b)
&=\frac{a\sqrt{y}}{\sqrt m\sqrt{m-y}}a_\star.
\end{align*}
The formulas for $\widehat G_{M,\gamma,I}$ are identical with $a_\star$ replaced by $\widehat a_\star$.  The scalar time factors in these displays are increasing on the left of $\tau_j$ and decreasing on the right.  Therefore $G_{S,\gamma,I}$, $G_{M,\gamma,I}$, and, on the feasible event, $\widehat G_{M,\gamma,I}$ are maximized at $\tau_j$ when it is admissible, and otherwise at the closest admissible point.

Suppose first that $b_{\gamma,I}^0=\tau_j$ and $b\le\tau_j$.  With $h=\tau_j-b$,
\begin{align*}
G_{S,0,I}(\tau_j)-G_{S,0,I}(b)
&=\frac{c^2h(2a-h)}{m^3\sqrt p}\|\bdelta_j\|_2^2\\
&\ge\frac hmG_{S,0,I}(\tau_j),\\
G_{S,1/2,I}(\tau_j)-G_{S,1/2,I}(b)
&=\frac{ch}{c+h}\frac{\|\bdelta_j\|_2^2}{\sqrt p},\\
\frac{G_{S,1/2,I}(\tau_j)-G_{S,1/2,I}(b)}
{G_{S,1/2,I}(\tau_j)}
&=\frac{hm}{a(c+h)}
\ge\frac hm.
\end{align*}
If $b\ge\tau_j$ and $h=b-\tau_j$, direct substitution gives
\begin{align*}
G_{S,0,I}(\tau_j)-G_{S,0,I}(b)
&=\frac{a^2h(2c-h)}{m^3\sqrt p}\|\bdelta_j\|_2^2
\ge\frac hmG_{S,0,I}(\tau_j),\\
G_{S,1/2,I}(\tau_j)-G_{S,1/2,I}(b)
&=\frac{ah}{a+h}\frac{\|\bdelta_j\|_2^2}{\sqrt p},\\
\frac{G_{S,1/2,I}(\tau_j)-G_{S,1/2,I}(b)}
{G_{S,1/2,I}(\tau_j)}
&=\frac{hm}{c(a+h)}\ge\frac hm.
\end{align*}
If $b_{1/2,I}^0\ne\tau_j$, all admissible points lie on the same monotone branch as $b_{1/2,I}^0$.  On the left branch, write $x_0=b_{1/2,I}^0-s$, $x=b-s$, and $h=x_0-x$.  The exact formula above gives
\begin{align*}
&G_{S,1/2,I}(b_{1/2,I}^0)-G_{S,1/2,I}(b)\\
&\quad=\frac{c^2h}{(m-x_0)(m-x)}
\frac{\|\bdelta_j\|_2^2}{\sqrt p},\\
&\frac{G_{S,1/2,I}(b_{1/2,I}^0)-G_{S,1/2,I}(b)}
{G_{S,1/2,I}(b_{1/2,I}^0)}
=\frac{hm}{x_0(m-x)}\ge\frac hm.
\end{align*}
On the right branch, put $y_0=e-b_{1/2,I}^0$, $y=e-b$, and $h=y_0-y$.  Then
\begin{align*}
&G_{S,1/2,I}(b_{1/2,I}^0)-G_{S,1/2,I}(b)\\
&\quad=\frac{a^2h}{(m-y_0)(m-y)}
\frac{\|\bdelta_j\|_2^2}{\sqrt p},\\
&\frac{G_{S,1/2,I}(b_{1/2,I}^0)-G_{S,1/2,I}(b)}
{G_{S,1/2,I}(b_{1/2,I}^0)}
=\frac{hm}{y_0(m-y)}\ge\frac hm.
\end{align*}
For the coordinatewise drift, choose $l_\star$ attaining the maximum at $b_{\gamma,I}^0$.  The squared coordinatewise time profiles in the preceding displays are proportional to the corresponding dense profiles.  Hence
\[
d_{\gamma,l_\star}^2(b_{\gamma,I}^0)
-d_{\gamma,l_\star}^2(b)
\ge c_0\frac{|b-b_{\gamma,I}^0|}{m}
 d_{\gamma,l_\star}^2(b_{\gamma,I}^0).
\]
Since $d_{\gamma,l_\star}(b)\le d_{\gamma,l_\star}(b_{\gamma,I}^0)$,
\begin{align*}
d_{\gamma,l_\star}(b_{\gamma,I}^0)
-d_{\gamma,l_\star}(b)
&=\frac{d_{\gamma,l_\star}^2(b_{\gamma,I}^0)
-d_{\gamma,l_\star}^2(b)}
{d_{\gamma,l_\star}(b_{\gamma,I}^0)+d_{\gamma,l_\star}(b)}\\
&\ge\frac{c_0}{2}\frac{|b-b_{\gamma,I}^0|}{m}
 d_{\gamma,l_\star}(b_{\gamma,I}^0).
\end{align*}
Taking the coordinatewise maximum proves the coordinatewise version of \eqref{eq:wbs-relative-gap-final}.  The calculation is unchanged for $\widehat G_{M,\gamma,I}$ because it only replaces $a_\star$ by the positive factor $\widehat a_\star$.

For the dense statistic,
\begin{align*}
&W_{\gamma,I}(b)-\widetilde\mu_{\gamma,M,I}(b)
-G_{S,\gamma,I}(b)\\
&\quad=Q_{\gamma,I}(b)
+\frac{2v_{I,b}^{-2\gamma}
\mathbf m_I(b)^\top\mathbf Z_I(b)}{m\sqrt p}
+\mu_{\gamma,M,I}(b)-\widetilde\mu_{\gamma,M,I}(b)
+R_{\gamma,M,I}^{\mathrm{tail}}(b).
\end{align*}
The covariance-tail term is bounded by $C\sqrt p\eta_M$ and is included in $r_{\mu,n}^{\mathrm{mc}}$.  Because $\mathbf m_I(b)$ is a scalar multiple of $\bdelta_j$, \eqref{eq:wbs-projection-noise-final} gives
\[
\left|
\frac{2v_{I,b}^{-2\gamma}
\mathbf m_I(b)^\top\mathbf Z_I(b)}{m\sqrt p}
\right|
\le C\frac{a_{\mathrm{lin},n}^{\mathrm W}}{p^{1/4}}
G_{S,\gamma,I}(b)^{1/2}.
\]
Hence, uniformly in $b$,
\begin{equation}\label{eq:wbs-one-change-dense-error}
\left|W_{\gamma,I}(b)-\widetilde\mu_{\gamma,M,I}(b)
-G_{S,\gamma,I}(b)\right|
\le a_{\gamma,n}^{\mathrm{err,W}}+b_n^{\mathrm{cross,W}}G_{S,\gamma,I}(b)^{1/2},
\end{equation}
where
\[
a_{\gamma,n}^{\mathrm{err,W}}=C\{a_{\mathrm{quad},\gamma,n}^{\mathrm W}+r_{\mu,n}^{\mathrm{mc}}\},
\qquad
b_n^{\mathrm{cross,W}}=Ca_{\mathrm{lin},n}^{\mathrm W}p^{-1/4}.
\]
Young's inequality gives
\[
b_n^{\mathrm{cross,W}}G^{1/2}\le\frac14G+C\{b_n^{\mathrm{cross,W}}\}^2,
\qquad
a_{\gamma,n}^{\mathrm{err,W}}+C\{b_n^{\mathrm{cross,W}}\}^2\le Ca_{S,\gamma,n}^{\mathrm W}.
\]
Since $\mathfrak R_{S,\gamma}^{\mathrm W}(I)>1$, $a_{S,\gamma,n}^{\mathrm W}=o(\Lambda_{S,\gamma,n}^{\mathrm W})$, and $\widehat\omega_{\mathrm W}$ is bounded away from zero and infinity, \eqref{eq:wbs-one-change-dense-error} implies
\[
G_{S,\gamma,I}(b_{\gamma,I}^0)
\ge c\Lambda_{S,\gamma,n}^{\mathrm W}.
\]
Let $\widetilde b=\widetilde b_{S,\gamma}(I)$.  The maximizing property and \eqref{eq:wbs-one-change-dense-error} give
\begin{align*}
&G_{S,\gamma,I}(b_{\gamma,I}^0)
-G_{S,\gamma,I}(\widetilde b)\\
&\quad\le 2a_{\gamma,n}^{\mathrm{err,W}}
+b_n^{\mathrm{cross,W}}\{G_{S,\gamma,I}(b_{\gamma,I}^0)^{1/2}
      +G_{S,\gamma,I}(\widetilde b)^{1/2}\}\\
&\quad\le C\left\{a_{S,\gamma,n}^{\mathrm W}
+a_{\mathrm{lin},n}^{\mathrm W}p^{-1/4}
G_{S,\gamma,I}(b_{\gamma,I}^0)^{1/2}\right\}.
\end{align*}
Dividing by \eqref{eq:wbs-relative-gap-final} yields
\[
|\widetilde b-b_{\gamma,I}^0|
\le Cm\left\{
\frac{a_{S,\gamma,n}^{\mathrm W}}{\Lambda_{S,\gamma,n}^{\mathrm W}}
+\frac{a_{\mathrm{lin},n}^{\mathrm W}}
{p^{1/4}\{\Lambda_{S,\gamma,n}^{\mathrm W}\}^{1/2}}
\right\}.
\]
Adding $|b_{\gamma,I}^0-\tau_j|$ proves the dense localizer bound.

For the coordinatewise statistic, use the feasible signal profile and write
\[
\widehat d_{\gamma,l}^{I}(b)
=v_{I,b}^{-\gamma}
\frac{m_{I,l}(b)}{\sqrt{m_I}\,\widehat\sigma_l},
\qquad
\widehat Z_{\gamma,l}^{I}(b)
=v_{I,b}^{-\gamma}
\frac{Z_{I,l}(b)}{\sqrt{m_I}\,\widehat\sigma_l}.
\]
Equation~\eqref{eq:wbs-coordinate-noise-final} and the feasible-variance event give
\[
\left|\mathcal M_\gamma(I)-
\widehat G_{M,\gamma,I}(b_{\gamma,I}^0)\right|
\le Ca_{\mathrm{lin},n}^{\mathrm W}.
\]
If $\mathfrak R_{M,\gamma}^{\mathrm W}(I)>1$ and $a_{\mathrm{lin},n}^{\mathrm W}=o(\Lambda_{M,\gamma,n}^{\mathrm W})$, then
\[
\widehat G_{M,\gamma,I}(b_{\gamma,I}^0)
\ge c\Lambda_{M,\gamma,n}^{\mathrm W}.
\]
For $\widetilde b=\widetilde b_{M,\gamma}(I)$,
\[
\widehat G_{M,\gamma,I}(b_{\gamma,I}^0)-\widehat G_{M,\gamma,I}(\widetilde b)
\le Ca_{\mathrm{lin},n}^{\mathrm W}.
\]
The feasible coordinatewise version of \eqref{eq:wbs-relative-gap-final} gives
\[
|\widetilde b-b_{\gamma,I}^0|
\le Cm\frac{a_{\mathrm{lin},n}^{\mathrm W}}{\Lambda_{M,\gamma,n}^{\mathrm W}},
\]
which proves the coordinatewise localizer bound.

On a coverage interval, $a,c\in[\Delta_n/8-1,\Delta_n/4+1]$ and $m\asymp\Delta_n$.  Substitution into the exact profile formulas gives
\[
G_{S,\gamma,I}(\tau_j)
\ge c\frac{\Delta_n\|\bdelta_j\|_2^2}{\sqrt p}.
\]
Likewise,
\[
G_{M,0,I}(\tau_j)
=\frac{ac}{m^{3/2}}
\max_l\frac{|\delta_{j,l}|}{\sigma_l},
\qquad
G_{M,1/2,I}(\tau_j)
=\left(\frac{ac}{m}\right)^{1/2}
\max_l\frac{|\delta_{j,l}|}{\sigma_l}.
\]
Assumption~\ref{ass:C2} and $a,c,m\asymp\Delta_n$ give the final lower bound.  The comparison $\widehat a_\star\asymp a_\star$ proves the same lower bound for $\widehat G_{M,\gamma,I}$ on the feasible event.
\end{proof}

\subsection{Exact recursive recovery}

\begin{lemma}[Shortest significant intervals contain one change]
\label{lem:wbs-shortest-one-change}
On $\mathcal C_n^{\mathrm W}\cap\mathcal E_{n,\mathrm{noise}}^{\mathrm W}$ and the nuisance-estimation event, suppose the threshold and signal conditions of Theorem~\ref{thm:wbs-dense}, Theorem~\ref{thm:wbs-max}, or Corollary~\ref{cor:wbs-adaptive} hold.  Consider a recursive call $(s,e]$ satisfying the invariant that every undetected change in $(s,e]$ has at least one coverage interval fully contained in $(s,e]$.  If the call contains an undetected change, the corresponding significant set is nonempty, and its shortest significant interval has length at most $\Delta_n/2+2$ and contains exactly one true change.  If the call contains no change, its significant set is empty.
\end{lemma}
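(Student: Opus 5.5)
The proof has three parts: the call contains no change, the call contains an undetected change, and the shortest significant interval is then shown to contain exactly one change. In every case I compare the observed statistic with its deterministic profile on $\mathcal E_{n,\mathrm{noise}}^{\mathrm W}$, and I take the lower bounds on the coverage-interval profiles from Lemma~\ref{lem:wbs-one-change-interval}.

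\emph{No change in $(s,e]$.} Every sampled $I\subset(s,e]$ has $\mathbf m_I\equiv\mathbf0$. Hence $\mathcal S_\gamma(I)$ is at most $\max_b|Q_{\gamma,I}(b)|$ plus the centering error plus the tail term. By Lemma~\ref{lem:wbs-uniform-noise} and Lemma~\ref{lem:wbs-multiple-nuisance} this is $O(a_{S,\gamma,n}^{\mathrm W})=o(\Lambda_{S,\gamma,n}^{\mathrm W})$. Likewise, $\mathcal M_\gamma(I)\le Ca_{\mathrm{lin},n}^{\mathrm W}=o(\Lambda_{M,\gamma,n}^{\mathrm W})$ by \eqref{eq:wbs-coordinate-noise-final} on the feasible event. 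Since $\widehat\omega_{\mathrm W}$ is bounded below, both threshold ratios are $<1$, so the significant set is empty.

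\emph{An undetected change $\tau_j$.} By the invariant, $I_{j,n}^{\mathrm{iso}}\subset(s,e]$. By Lemma~\ref{lem:wbs-one-change-interval}, $G_{S,\gamma,I}(\tau_j)\gtrsim\Delta_n\|\bdelta_j\|_2^2/\sqrt p$ and $\widehat G_{M,\gamma,I}(\tau_j)\gtrsim\sqrt{\Delta_n}\|\bdelta_j\|_\infty$. For $\gamma=1/2$, $\tau_j$ lies in $\mathcal K_{1/2}(I)$ because the trim $\lambda^{\mathrm W}_m=o(\Delta_n)$.
\begin{itemize}
\item Dense component: \eqref{eq:wbs-one-change-dense-error} gives $\mathcal S_\gamma(I)\ge\frac34G_{S,\gamma,I}(\tau_j)-Ca_{S,\gamma,n}^{\mathrm W}$, which exceeds $\widehat\omega_{\mathrm W}\Lambda^{\mathrm W}_{S,\gamma,n}$ under $\Lambda_{S,\gamma,n}^{\mathrm W}=o(\Delta_n\kappa_{2,n}^2/\sqrt p)$.
\item Coordinatewise component: $\mathcal M_\gamma(I)\ge\widehat G_{M,\gamma,I}(\tau_j)-Ca_{\mathrm{lin},n}^{\mathrm W}$, which exceeds $\Lambda_{M,\gamma,n}^{\mathrm W}$.
\item Adaptive case: the component whose divergence condition holds for this $j$ does the same.
\end{itemize}
So the significant set is nonempty and $m_\star\le|I_{j,n}^{\mathrm{iso}}|\le\Delta_n/2+2$.

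\emph{The shortest significant interval $I^\star$.} Its length satisfies $|I^\star|\le\Delta_n/2+2<\Delta_n$ for large $n$, so it cannot contain two changes strictly inside, since consecutive changes are at least $\Delta_n$ apart. It remains to rule out zero changes. An interval with no interior change, including one whose only change sits at an endpoint, has zero mean CUSUM. By the no-change argument, which applied uniformly to every sampled interval, such an interval is insignificant. Hence $I^\star$ contains exactly one true change.

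\emph{Main obstacle.} The delicate point is uniformity. The insignificance bound must hold simultaneously over all sampled intervals and all recursion calls, not just for the current $(s,e]$. This is exactly why Lemma~\ref{lem:wbs-uniform-noise} is stated as a maximum over $r\le N_{\mathrm W,n}$, and why the centering error is controlled uniformly over intervals in Lemma~\ref{lem:wbs-multiple-nuisance}. I would check that the $\gamma=1/2$ weighting near short intervals (lengths down to $\ell_{\mathrm W,n}$) is covered by the $a_{\mathrm{quad},1/2,n}^{\mathrm W}$ bound. Intervals with empty candidate sets are discarded by definition, so they cause no trouble.
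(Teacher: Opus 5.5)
Your proposal is correct and follows the paper's proof essentially step by step. Nonemptiness comes from the coverage interval guaranteed by the invariant, together with the profile lower bounds of Lemma~\ref{lem:wbs-one-change-interval} and \eqref{eq:wbs-one-change-dense-error}. The length bound $\Delta_n/2+2<\Delta_n$ then rules out two changes, and the uniform noise bounds of Lemmas~\ref{lem:wbs-uniform-noise}--\ref{lem:wbs-multiple-nuisance} rule out zero-signal (including endpoint-change) intervals and empty the significant set of a no-change segment. Your only additions are small checks the paper leaves implicit: $\tau_j\in\mathcal K_{1/2}(I)$ on the coverage interval, and the endpoint-change convention.
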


\begin{proof}
Consider a recursive segment containing an undetected $\tau_j$ and choose a coverage interval $I_{j,n}^{\mathrm{iso}}$ that is contained in the segment by the stated invariant.  Lemma~\ref{lem:wbs-one-change-interval} gives
\[
G_{S,\gamma,I_{j,n}^{\mathrm{iso}}}(\tau_j)
\ge c\frac{\Delta_n\|\bdelta_j\|_2^2}{\sqrt p},
\qquad
\widehat G_{M,\gamma,I_{j,n}^{\mathrm{iso}}}(\tau_j)
\ge c\sqrt{\Delta_n}\|\bdelta_j\|_\infty.
\]
Under the dense signal condition, \eqref{eq:wbs-one-change-dense-error} and
$a_{S,\gamma,n}^{\mathrm W}=o(\Lambda_{S,\gamma,n}^{\mathrm W})$ imply
\[
\mathfrak R_{S,\gamma}^{\mathrm W}(I_{j,n}^{\mathrm{iso}})>1
\]
for all sufficiently large $n$.  Under the coordinatewise signal condition,
$a_{\mathrm{lin},n}^{\mathrm W}=o(\Lambda_{M,\gamma,n}^{\mathrm W})$ implies
\[
\mathfrak R_{M,\gamma}^{\mathrm W}(I_{j,n}^{\mathrm{iso}})>1.
\]
Thus, in each theorem, the relevant significant set is nonempty.  By Lemma~\ref{lem:wbs-coverage-main}, the selected shortest significant interval has length at most
$\Delta_n/2+2<\Delta_n$ eventually.

An interval of length strictly smaller than $\Delta_n$ cannot contain two changes.  If it contained no change, then $\mathbf m_I(b)=0$ for all $b$.  Equations~\eqref{eq:wbs-dense-noise0-final}--\eqref{eq:wbs-coordinate-noise-final} and Lemma~\ref{lem:wbs-multiple-nuisance} would give
\[
\mathcal S_\gamma(I)\le Ca_{S,\gamma,n}^{\mathrm W}
<\widehat\omega_{\mathrm W}\Lambda_{S,\gamma,n}^{\mathrm W},
\]
\[
\mathcal M_\gamma(I)\le Ca_{\mathrm{lin},n}^{\mathrm W}
<\Lambda_{M,\gamma,n}^{\mathrm W},
\]
contradicting significance.  Hence the selected interval contains exactly one change.  The same two inequalities show that a no-change recursive segment has an empty significant set.
\end{proof}

\begin{proof}[of Theorem~\ref{thm:wbs-dense}]
Let $\mathcal E_{n,\mathrm{all}}^{\mathrm W}$ be the intersection of the coverage event, the event in
Lemma~\ref{lem:wbs-uniform-noise}, and the nuisance-estimation event.  Then
\[
\Pr\{(\mathcal E_{n,\mathrm{all}}^{\mathrm W})^c\}
\le K_{\mathrm{cp}}\{1-c_{\mathrm W}(\Delta_n/n)^2\}^{N_{\mathrm W,n}}
+\frac C{\log(2n)}+\frac C{nN_{\mathrm W,n}}
+\frac C{npN_{\mathrm W,n}}+\varepsilon_{\mathrm{feas},n}
\longrightarrow0.
\]

We prove exact recovery by induction over the successful recursive calls.  The induction invariant is that every undetected change in the current segment has a coverage interval fully contained in that segment.  The invariant holds at the initial call $(0,n]$.  Suppose it holds at the current call.  Lemma~\ref{lem:wbs-shortest-one-change} shows that the next selected interval contains exactly one undetected change, say $\tau_j$.  Lemma~\ref{lem:wbs-one-change-interval} and its length bound give
\[
|\widetilde\tau_{S,\gamma}-\tau_j|
\le Cr_{S,\gamma,n}^{\mathrm{pre,W}}
=o(d_{\mathrm W,n}).
\]
Therefore $\tau_j$ belongs to
\[
(\widetilde\tau_{S,\gamma}-d_{\mathrm W,n},
 \widetilde\tau_{S,\gamma}+d_{\mathrm W,n}]
\]
and to neither child.  Since $d_{\mathrm W,n}=o(\Delta_n)$, no other change lies in this removed band.

Let $\tau_k>\tau_j$ be an undetected change.  The left endpoint of its coverage interval satisfies
\[
s_{k,n}^{\mathrm{iso}}
\ge\tau_k-\frac{\Delta_n}{4}-1.
\]
Hence
\begin{align*}
s_{k,n}^{\mathrm{iso}}
-(\widetilde\tau_{S,\gamma}+d_{\mathrm W,n})
&\ge(\tau_k-\tau_j)-\frac{\Delta_n}{4}-1
-|\widetilde\tau_{S,\gamma}-\tau_j|-d_{\mathrm W,n}\\
&\ge\frac{3\Delta_n}{4}-1-o(d_{\mathrm W,n})-d_{\mathrm W,n}>0
\end{align*}
for all sufficiently large $n$.  Thus the whole coverage interval remains in the right child.  If $\tau_k<\tau_j$, the right endpoint of its coverage interval satisfies
\[
e_{k,n}^{\mathrm{iso}}
\le\tau_k+\frac{\Delta_n}{4}+1.
\]
Therefore
\begin{align*}
(\widetilde\tau_{S,\gamma}-d_{\mathrm W,n})
-e_{k,n}^{\mathrm{iso}}
&\ge(\tau_j-\tau_k)-\frac{\Delta_n}{4}-1
-|\widetilde\tau_{S,\gamma}-\tau_j|-d_{\mathrm W,n}\\
&\ge\frac{3\Delta_n}{4}-1-o(d_{\mathrm W,n})-d_{\mathrm W,n}>0.
\end{align*}
Thus every coverage interval for a change to the left remains in the left child, and the induction hypothesis is preserved.

Lemma~\ref{lem:wbs-shortest-one-change} also shows that the recursion cannot stop while an undetected change remains and that it stops on every no-change segment.  Exactly one change is removed at each successful call.  Since $K_{\mathrm{cp}}$ is fixed,
\[
\widehat K_{S,\gamma}=K_{\mathrm{cp}},
\qquad
\max_j|\widetilde\tau_{S,\gamma,j}-\tau_j|
\le Cr_{S,\gamma,n}^{\mathrm{pre,W}}
\]
on $\mathcal E_{n,\mathrm{all}}^{\mathrm W}$.  Moreover,
\[
\frac{r_{S,\gamma,n}^{\mathrm{pre,W}}}{\Delta_n}\longrightarrow0,
\]
because $r_{S,\gamma,n}^{\mathrm{pre,W}}=o(d_{\mathrm W,n})$ and
$d_{\mathrm W,n}=o(\Delta_n)$.  Hence, for $1\le j<K_{\mathrm{cp}}$,
\begin{align*}
\widetilde\tau_{S,\gamma,j+1}-\widetilde\tau_{S,\gamma,j}
&\ge(\tau_{j+1}-\tau_j)-2Cr_{S,\gamma,n}^{\mathrm{pre,W}}\\
&\ge\Delta_n-2Cr_{S,\gamma,n}^{\mathrm{pre,W}}>0
\end{align*}
for all sufficiently large $n$.  Thus the ordered WBS estimates are matched to the ordered true changes.  Since
\[
r_{S,\gamma,n}^{\mathrm{pre,W}}=o(d_{\mathrm W,n})=o(\Delta_n),
\]
the displayed preliminary rate also gives
\[
\max_{1\le j\le K_{\mathrm{cp}}}
\frac{|\widetilde\tau_{S,\gamma,j}-\tau_j|}{\Delta_n}
\xrightarrow{p}0.
\]
\end{proof}

\begin{proof}[of Theorem~\ref{thm:wbs-max}]
Use the event $\mathcal E_{n,\mathrm{all}}^{\mathrm W}$ from the preceding proof.  Lemma~\ref{lem:wbs-shortest-one-change} shows that every selected significant interval contains exactly one undetected change.  Lemma~\ref{lem:wbs-one-change-interval} gives
\[
|\widetilde\tau_{M,\gamma}-\tau_j|
\le Cr_{M,\gamma,n}^{\mathrm{pre,W}}
=o(d_{\mathrm W,n}).
\]
The guard therefore removes this change and no other.  The two endpoint inequalities in the dense proof depend only on the distance between the selected split and the corresponding change, so they also show that every coverage interval for an undetected change remains in its child.  The recursion continues until all $K_{\mathrm{cp}}$ changes have been removed and then stops.  Thus
\[
\widehat K_{M,\gamma}=K_{\mathrm{cp}},
\qquad
\max_j|\widetilde\tau_{M,\gamma,j}-\tau_j|
\le Cr_{M,\gamma,n}^{\mathrm{pre,W}}
\]
on $\mathcal E_{n,\mathrm{all}}^{\mathrm W}$.  Since
$r_{M,\gamma,n}^{\mathrm{pre,W}}=o(d_{\mathrm W,n})=o(\Delta_n)$,
\[
\widetilde\tau_{M,\gamma,j+1}-\widetilde\tau_{M,\gamma,j}
\ge\Delta_n-2Cr_{M,\gamma,n}^{\mathrm{pre,W}}>0
\]
eventually.  The ordered WBS estimates are therefore matched to the true order.  Since
\[
r_{M,\gamma,n}^{\mathrm{pre,W}}=o(d_{\mathrm W,n})=o(\Delta_n),
\]
we obtain
\[
\max_{1\le j\le K_{\mathrm{cp}}}
\frac{|\widetilde\tau_{M,\gamma,j}-\tau_j|}{\Delta_n}
\xrightarrow{p}0.
\]
\end{proof}

\begin{proof}[of Corollary~\ref{cor:wbs-adaptive}]
For every undetected change, at least one component makes its coverage interval significant.  Lemma~\ref{lem:wbs-shortest-one-change} therefore applies to the adaptive significant set.  If the selected pair is dense, Lemma~\ref{lem:wbs-one-change-interval} gives
\[
|\widetilde\tau_\gamma-\tau_j|
\le Cr_{S,\gamma,n}^{\mathrm{pre,W}}
\le Cr_{A,\gamma,n}^{\mathrm{pre,W}}.
\]
If it is coordinatewise,
\[
|\widetilde\tau_\gamma-\tau_j|
\le Cr_{M,\gamma,n}^{\mathrm{pre,W}}
\le Cr_{A,\gamma,n}^{\mathrm{pre,W}}.
\]
The guard and coverage induction from the dense theorem now gives, on $\mathcal E_{n,\mathrm{all}}^{\mathrm W}$,
\[
\widehat K_\gamma=K_{\mathrm{cp}},
\qquad
\max_j|\widetilde\tau_{\gamma,j}-\tau_j|
\le Cr_{A,\gamma,n}^{\mathrm{pre,W}}.
\]
Because $r_{A,\gamma,n}^{\mathrm{pre,W}}=o(d_{\mathrm W,n})=o(\Delta_n)$,
\[
\widetilde\tau_{\gamma,j+1}-\widetilde\tau_{\gamma,j}
\ge\Delta_n-2Cr_{A,\gamma,n}^{\mathrm{pre,W}}>0
\]
eventually.  Thus the WBS estimates are in the correct order.  Because
$r_{A,\gamma,n}^{\mathrm{pre,W}}=o(d_{\mathrm W,n})=o(\Delta_n)$,
\[
\max_{1\le j\le K_{\mathrm{cp}}}
\frac{|\widetilde\tau_{\gamma,j}-\tau_j|}{\Delta_n}
\xrightarrow{p}0.
\]
\end{proof}

\section{Uniform componentwise long-run variance estimation}

For coordinate $j$, write
\[
c_{\epsilon,j}(r)=\Cov(\epsilon_{0j},\epsilon_{rj})=\Gamma(r)_{jj},
\qquad
\sigma_j^2=\sum_{r\in\mathbb Z}c_{\epsilon,j}(r)=\Omega_{jj}.
\]
Let $\mathbf d^{(3)}=(d_0^{(3)},d_1^{(3)},d_2^{(3)},d_3^{(3)})$ satisfy
\[
\sum_{a=0}^3d_a^{(3)}=0,
\qquad
\sum_{a=0}^3\{d_a^{(3)}\}^2=1,
\]
and put $s_j^{(D)}=2\ell_j$.  Define
\[
D_{i,j}^{(3)}=\sum_{a=0}^3d_a^{(3)}X_{i-as_j^{(D)},j},
\qquad i=3s_j^{(D)}+1,\ldots,n,
\]
\[
\widehat c_{r,j}^{D}
=\frac1n\sum_{i=3s_j^{(D)}+|r|+1}^n
D_{i,j}^{(3)}D_{i-|r|,j}^{(3)},
\qquad |r|<\ell_j,
\]
and
\[
\widehat\sigma_j^2
=\sum_{|r|<\ell_j}K_{\mathrm{LRV}}(r/\ell_j)
\widehat c_{r,j}^{D}.
\]

\begin{lemma}[Uniform difference-based long-run variance rate]\label{lem:lrv-general}
Suppose Assumptions~\ref{ass:C1}--\ref{ass:C3} hold, $p\le n^\nu$ for a fixed $\nu>0$, and the kernel conditions stated before Theorem~\ref{null:Max} hold.  Assume
$\ell_{\min}\to\infty$ and $\ell_{\max}=o(\sqrt n)$.  There is a finite constant $c_{\mathrm{LRV}}$ such that, with
\[
a_{\sigma,n}=
\frac{\ell_{\max}\{\log(np)\}^{c_{\mathrm{LRV}}}}{\sqrt n}
+\ell_{\min}^{-\widetilde q}
+\eta_{\ell_{\min}}
+\frac{\ell_{\max}^2}{n},
\]
for every fixed $A>0$ there is $C_A<\infty$ satisfying
\[
\Pr\left\{
\max_{j\le p}|\widehat\sigma_j^2-\sigma_j^2|
>C_Aa_{\sigma,n}
\right\}
\le C_A(np)^{-A}
\]
for all sufficiently large $n$.  The result holds under $H_0$ and under a single mean change with $\|\bdelta\|_\infty=O(1)$.  If $a_{\sigma,n}\to0$, then
\[
r_{\sigma,n}
=\max_{j\le p}\left|\frac{\widehat\sigma_j}{\sigma_j}-1\right|
=O_p(a_{\sigma,n}).
\]
If $\ell_j\asymp n^{1/(1+2\widetilde q)}$ uniformly in $j$ and
$\widetilde q>1/2$, then, for every fixed
\[
0<c<c_\sigma^\star
=\frac{2\widetilde q-1}{2(1+2\widetilde q)},
\]
\[
a_{\sigma,n}=O(n^{-c}),
\qquad
r_{\sigma,n}=O_p(n^{-c}).
\]
\end{lemma}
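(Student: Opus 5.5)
Fix a coordinate $j$ and split $\widehat\sigma_j^2-\sigma_j^2$ into a deterministic bias, a centered stochastic part, and a mean-change contamination term; then take a union bound over $j\le p$ using moment bounds of order $q\asymp\log(np)$.

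\emph{Bias.} Write $D_{i,j}^{(3),\epsilon}=\sum_a d_a^{(3)}\epsilon_{i-as_j^{(D)},j}$. By stationarity,
\[
\E\{D_{i,j}^{(3),\epsilon}D_{i-r,j}^{(3),\epsilon}\}
=\sum_{a,b}d_a^{(3)}d_b^{(3)}c_{\epsilon,j}\{r+(b-a)s_j^{(D)}\}.
\]
The unit-norm identity turns the diagonal terms $a=b$ into $c_{\epsilon,j}(r)$. The off-diagonal lags satisfy $|r+(b-a)s_j^{(D)}|\ge 2\ell_j-|r|>\ell_j$, so they sum to $O(\eta_{\ell_{\min}})$ uniformly in $j$ by Lemma~\ref{lem:vector-basic}(i). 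The normalization $1/n$ instead of the number of available products, $n-6\ell_j-|r|$, gives an $O(\ell_j/n)$ relative error per lag, hence $O(\ell_{\max}^2/n)$ after summing over $|r|<\ell_j$. Finally,
\[
\sum_{|r|<\ell_j}K_{\mathrm{LRV}}(r/\ell_j)c_{\epsilon,j}(r)-\sigma_j^2
=\sum_{|r|<\ell_j}\{K_{\mathrm{LRV}}(r/\ell_j)-1\}c_{\epsilon,j}(r)-\sum_{|r|\ge\ell_j}c_{\epsilon,j}(r).
\]
The kernel expansion bounds the first term by $C\ell_j^{-\widetilde q}\sum_r|r|^{\widetilde q}|c_{\epsilon,j}(r)|$, which is finite because of the polynomial weights in $\eta$; the tail is at most $\eta_{\ell_{\min}}$.

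\emph{Stochastic part.} For each $r$, $\widehat c_{r,j}^{D}-\E\widehat c_{r,j}^{D}$ is a centered quadratic form $n^{-1}\sum_i(Y_iY_{i-r}-\E Y_iY_{i-r})$ in the scalar process $Y_i=D_{i,j}^{(3),\epsilon}$. This process inherits projection sub-Gaussianity and the dependence branch, with a dependence range inflated by at most $6\ell_j$. I will bound its $L_q$ norm by $Cq^{c}\sqrt{\ell_j/n}$ or $Cq^c/\sqrt n$. Under (PD) I would use the martingale projection and Burkholder argument of Lemma~\ref{lem:vector-basic}(ii) applied to products, with physical dependence measures of $Y_iY_{i-r}$ controlled by Hölder. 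Under (AM) I would use the Merlevède Bernstein inequality for the products, which are sub-exponential and mixing with the same rate shifted by $6\ell_j$. (MD) is immediate.

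Summing over $|r|<\ell_j$ with the bounded kernel gives $\|\widehat\sigma_j^2-\E\widehat\sigma_j^2\|_q\le Cq^{c}\ell_j/\sqrt n$. Taking $q=(A+2)\log(np)$, Markov's inequality and a union over $j$ give the $(np)^{-A}$ tail. This fixes $c_{\mathrm{LRV}}$ and yields the term $\ell_{\max}\{\log(np)\}^{c_{\mathrm{LRV}}}/\sqrt n$.

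\emph{Single change.} Repeat the contamination calculation already done in Lemma~\ref{lem:wbs-multiple-nuisance} with $K_{\mathrm{cp}}=1$:
\[
D_{i,j}^{(3)}=D_{i,j}^{(3),\epsilon}+d_{i,j}^{(3),\mu}.
\]
The zero-sum identity makes $d_{i,j}^{(3),\mu}$ vanish except on at most $6\ell_j$ indices, with $|d_{i,j}^{(3),\mu}|\le C\|\bdelta\|_\infty$. The deterministic term contributes $O(\ell_j^2/n)$. The mean–noise terms contribute $O_q(q^{c}\ell_j^{3/2}/n)$, which is absorbed into the first term of $a_{\sigma,n}$ because $\ell_j=o(\sqrt n)$.

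\emph{Ratio and rates.} Since $\sigma_j^2\ge C_0$, on the event $\{\max_j|\widehat\sigma_j^2-\sigma_j^2|\le C_Aa_{\sigma,n}\}$ with $a_{\sigma,n}\to0$ every $\widehat\sigma_j^2$ is positive, so the fallback is inactive, and $|\widehat\sigma_j/\sigma_j-1|\le|\widehat\sigma_j^2-\sigma_j^2|/\sigma_j^2$. For $\ell\asymp n^{1/(1+2\widetilde q)}$ and $p$ polynomial in $n$, the terms are:
\begin{itemize}
\item $\ell/\sqrt n=n^{-c_\sigma^\star}$, with logarithmic factors absorbed by any $c<c_\sigma^\star$;
\item $\ell^{-\widetilde q}=n^{-\widetilde q/(1+2\widetilde q)}$, which is smaller;
\item $\eta_\ell$, which is superpolynomially small;
\item $\ell^2/n=(\ell/\sqrt n)^2$, which is smaller.
\end{itemize}

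\emph{Main obstacle.} The hardest step is the uniform $L_q$ bound for the centered lag products under (AM), with explicit polynomial dependence on $q$ and uniformity in $j$ and $r$. If the Bernstein route is awkward, I would first approximate the mixing process by a blocking argument: large blocks of length $\asymp\log^{1/\varkappa}(np)$ separated by small blocks. I would then apply Davydov coupling, as in Lemma~\ref{lem:block-comparison-general}, and handle the independent blocks by Rosenthal's inequality.
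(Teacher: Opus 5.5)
Your decomposition into bias, centered stochastic part, single-change contamination, and then ratio and bandwidth rates is the same as the paper's. The bias, contamination, ratio, and rate steps are fine. There is, however, a genuine gap in the stochastic step, exactly where you hedge between two bounds.

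\textbf{Which per-lag bound is needed.} The stated $a_{\sigma,n}$ requires
\[
\|\widehat c_{r,j}^{D}-\E\widehat c_{r,j}^{D}\|_q\le Cq^{c}/\sqrt n,
\]
with $C$ independent of $\ell_j$ and $r$. The alternative $Cq^c\sqrt{\ell_j/n}$ gives $\ell_j^{3/2}\{\log(np)\}^c/\sqrt n$ after summing over $|r|<\ell_j$. That is not $a_{\sigma,n}$. For $\ell_j\asymp n^{1/(1+2\widetilde q)}$ with $1/2<\widetilde q\le 1$ it does not even tend to zero, since $\ell_j^{3/2}/\sqrt n=n^{(1-\widetilde q)/(1+2\widetilde q)}$.

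\textbf{Why your tools give only the weaker bound under (AM) and (MD).} The product $Z_i=D^{(3),\epsilon}_{i,j}D^{(3),\epsilon}_{i-|r|,j}$ involves $\epsilon_{\cdot j}$ at the eight times $i-as_j^{(D)}$ and $i-|r|-as_j^{(D)}$. Its own mixing coefficient, or dependence range, is therefore controlled only beyond lag of order $7\ell_j$.
\begin{itemize}
\item For (MD), the residue-class argument of Lemma~\ref{lem:vector-basic}(ii) with classes modulo a multiple of $\ell_j$ gives, by Minkowski, $Cq\sqrt{n\ell_j}$. So (MD) is not ``immediate'' at the required rate.
\item For (AM), a Bernstein or Rosenthal inequality for mixing sequences has constants depending on the mixing parameters. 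With a rate that is only ``shifted by $6\ell_j$'', those constants are not uniform in $\ell_j$. Any bound that sees $Z$ only through its marginal tails and its own mixing coefficients must allow the $\sqrt{n\ell_j}$ scaling, because an $\ell_j$-dependent moving average has the same inputs and a partial-sum variance of order $n\ell_j$.
\item Your fallback blocking fails for the same reason. With blocks and gaps of length $\log^{1/\varkappa}(np)$, block sums of $Z_i$ still share primitive variables through shifts by multiples of $2\ell_j$, so the gaps do not decouple them.
\end{itemize}

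\textbf{The missing idea.} $Z_i$ and $Z_{i'}$ interact only when $i-i'$ lies near one of finitely many offsets built from $\{as_j^{(D)},\,|r|+as_j^{(D)}\}$, and each such interaction is summable. Each index therefore has $O(1)$ effective partners rather than $O(\ell_j)$.

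The paper captures this with cumulants:
\begin{itemize}
\item The Leonov--Shiryaev formula expands the joint cumulant of $q$ centered products over indecomposable partitions of primitive variables.
\item Each primitive cumulant is bounded by a spanning-tree sum of $\varpi_{\mathrm{dep}}$ factors. This comes from Davydov splitting for (AM), exact vanishing for (MD), and projection chains for (PD).
\item Summing out tree leaves one at a time, regardless of the offsets, gives
\[
\sup_{i_1}\sum_{i_2,\ldots,i_q}\bigl|\Cum(\mathfrak z_{i_1,r,j}^{(D)},\ldots,\mathfrak z_{i_q,r,j}^{(D)})\bigr|\le q!C^qq^{c_3q},
\]
uniformly in $\ell_j$ and $r$.
\item The moment--cumulant identity then yields $\|\sum_i\mathfrak z_{i,r,j}^{(D)}\|_q\le Cq^{c_{\mathrm{LRV}}}\sqrt n$.
\end{itemize}

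Under (PD) your own Burkholder--projection route does work. It depends on $\sum_k\theta_{k,q}$ for $Z$, not on the dependence range. The physical dependence measure of $Z$ has eight summable bumps at lags $as_j^{(D)}$ and $|r|+as_j^{(D)}$, which total $O(q)$ whatever $\ell_j$ is. It is (AM) and (MD) that need the cumulant argument or another offset-aware one.
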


\begin{proof}
We first establish a cumulant bound for the centered difference products.  Under $H_0$, put
\[
D_{i,j}^{(3),\epsilon}
=\sum_{a=0}^3d_a^{(3)}\epsilon_{i-as_j^{(D)},j}
\]
and
\[
\mathfrak z_{i,r,j}^{(D)}
=D_{i,j}^{(3),\epsilon}D_{i-|r|,j}^{(3),\epsilon}
-\E\{D_{i,j}^{(3),\epsilon}D_{i-|r|,j}^{(3),\epsilon}\}.
\]
Projection sub-Gaussianity gives, uniformly in $i,j$ and $q\ge2$,
\[
\|D_{i,j}^{(3),\epsilon}\|_{2q}\le Cq^{1/2},
\qquad
\|\mathfrak z_{i,r,j}^{(D)}\|_q\le Cq.
\]

For $q\ge2$, the Leonov--Shiryaev product-cumulant formula
\citep{leonov1959} expresses
$\Cum(\mathfrak z_{i_1,r,j}^{(D)},\ldots,\mathfrak z_{i_q,r,j}^{(D)})$ as a sum over partitions of the $2q$ difference factors that are indecomposable relative to the $q$ products.  Expanding each difference factor produces at most $8^q$ primitive variables of the form
\[
\epsilon_{i_u+s,j},
\qquad
s\in\mathcal O_{r,j}^{(D)}
=\{-as_j^{(D)},-|r|-as_j^{(D)}:0\le a\le3\}.
\]
For branch (AM), repeated largest-gap splitting with Davydov's inequality \citep{davydov1968} gives the tree bound
\[
|\Cum(\epsilon_{t_1j},\ldots,\epsilon_{t_mj})|
\le m!C^m m^{c_1m}
\sum_{T\in\mathfrak G_m^{\mathrm{tree}}}
\prod_{(u,v)\in T}
\varpi_{\mathrm{dep}}(|t_u-t_v|),
\]
where $\mathfrak G_m^{\mathrm{tree}}$ is the collection of spanning trees on
$\{1,\ldots,m\}$.  This is the cumulant form of the geometric-mixing estimate in \citet[Theorem~4.17]{saulis1991}.  Under branch (MD), any cumulant whose index set can be divided into two groups at temporal distance greater than $m_0$ is zero; the display follows after retaining only trees whose edges join observations at distance at most $m_0$.

For branch (PD), write $\mathcal F_t=\sigma(\boldsymbol\xi_t,\boldsymbol\xi_{t-1},\ldots)$ and $P_tY=\E(Y\mid\mathcal F_t)-\E(Y\mid\mathcal F_{t-1})$.  The coupling definition and Jensen's inequality give
\[
\|P_{t-r}\epsilon_{t,j}\|_q
\le \|\epsilon_{t,j}-\epsilon_{t,j}^{\{r\}}\|_q
\le Cq^{1/2}\varpi_{\mathrm{dep}}(r),
\qquad r\ge0.
\]
Expanding every primitive variable as
$\epsilon_{t,j}=\sum_{r\ge0}P_{t-r}\epsilon_{t,j}$ and ordering the resulting projection indices, a nonzero joint cumulant must contain a chain of shared innovations connecting all $m$ variables.  Successive conditioning along that chain and H\"older's inequality produce one factor $\varpi_{\mathrm{dep}}(|t_u-t_v|)$ for every edge of a spanning tree.  Summing over the possible projection orderings gives the same displayed tree bound.  This projection-and-coupling construction is the one used for physical-dependence Gaussian approximation in \citet[Section~2.1.3 and the proof of Theorem~3]{chang2024bernoulli}; see also \citet{wu2005nonlinear,wu2016performance}.  Thus, in all three branches, the constants depend only on the constants in Assumption~\ref{ass:C1}.

For every integer shift $v$,
\[
\sum_{u\in\mathbb Z}\varpi_{\mathrm{dep}}(|u-v|)
=\sum_{u\in\mathbb Z}\varpi_{\mathrm{dep}}(|u|)
\le C.
\]
Fix $i_1$ and a tree connecting the product indices
$1,\ldots,q$.  Successively summing a leaf index $i_u$ removes one tree edge and contributes at most the preceding constant, independently of the shifts in $\mathcal O_{r,j}^{(D)}$.  The number of primitive expansions, indecomposable partitions, and trees is bounded by
$q!C^qq^{c_2q}$.  Therefore there is $c_3<\infty$ such that
\[
\sup_{i_1}\sum_{i_2,\ldots,i_q\in\mathbb Z}
|\Cum(\mathfrak z_{i_1,r,j}^{(D)},\ldots,\mathfrak z_{i_q,r,j}^{(D)})|
\le q!C^qq^{c_3q},
\qquad q\ge2,
\]
uniformly in $j$, $r$, and $s_j^{(D)}$.

Let
\[
\mathfrak S_{r,j}^{(D)}=
\sum_{i=3s_j^{(D)}+|r|+1}^n\mathfrak z_{i,r,j}^{(D)}.
\]
Summing the preceding bound over $i_1$ gives
\[
|\Cum_q(\mathfrak S_{r,j}^{(D)})|
\le nq!C^qq^{c_3q}.
\]
For an even integer $q$, the moment--cumulant identity gives
\[
\E\{\mathfrak S_{r,j}^{(D)}\}^q
=\sum_{\substack{\pi\in\Pi_q\\|B|\ge2\ \forall B\in\pi}}
\prod_{B\in\pi}\Cum_{|B|}(\mathfrak S_{r,j}^{(D)}).
\]
Every contributing partition contains at most $q/2$ blocks.  Since
$|\Pi_q|\le q^q$, there is a finite $c_{\mathrm{LRV}}$ such that
\[
\|\mathfrak S_{r,j}^{(D)}\|_q
\le Cq^{c_{\mathrm{LRV}}}\sqrt n,
\qquad q\ge2.
\]
Monotonicity of $L_q$ norms extends the bound from even integers to all real $q\ge2$.  Hence
\[
\left\|
\widehat c_{r,j}^{D}-\E\widehat c_{r,j}^{D}
\right\|_q
\le\frac{Cq^{c_{\mathrm{LRV}}}}{\sqrt n}.
\]

Fix $A>0$ and take
\[
q_A=2\lceil(A+3)\log(np)\rceil.
\]
Markov's inequality yields
\[
\Pr\left\{
|\widehat c_{r,j}^{D}-\E\widehat c_{r,j}^{D}|
>eCq_A^{c_{\mathrm{LRV}}}/\sqrt n
\right\}
\le e^{-q_A}.
\]
There are at most $2p\ell_{\max}\le2pn$ retained pairs $(j,r)$.  Therefore
\begin{align*}
&\Pr\left\{
\max_{j\le p}\max_{|r|<\ell_j}
|\widehat c_{r,j}^{D}-\E\widehat c_{r,j}^{D}|
>C_A\frac{\{\log(np)\}^{c_{\mathrm{LRV}}}}{\sqrt n}
\right\}\\
&\qquad\le2pn e^{-q_A}
\le2(np)^{-A}.
\end{align*}
Since $K_{\mathrm{LRV}}$ is bounded and supported on $[-1,1]$, the last display implies
\[
\Pr\left\{
\max_{j\le p}|\widehat\sigma_j^2-\E\widehat\sigma_j^2|
>C_A\frac{\ell_{\max}\{\log(np)\}^{c_{\mathrm{LRV}}}}{\sqrt n}
\right\}
\le2(np)^{-A}.
\]

We next calculate the deterministic bias.  Under $H_0$, stationarity gives
\begin{align*}
\E\{D_{i,j}^{(3),\epsilon}D_{i-|r|,j}^{(3),\epsilon}\}
&=\sum_{a,b=0}^3d_a^{(3)}d_b^{(3)}
c_{\epsilon,j}\{r+(b-a)s_j^{(D)}\}.
\end{align*}
The use of $|r|$ does not change the right-hand side because the scalar autocovariance $c_{\epsilon,j}$ is even.  The finite range of the empirical sum omits at most
$C\ell_j$ endpoint terms.  Therefore, uniformly for $|r|<\ell_j$,
\begin{align*}
\E\widehat c_{r,j}^{D}
={}&c_{\epsilon,j}(r)
+\sum_{a\ne b}d_a^{(3)}d_b^{(3)}
c_{\epsilon,j}\{r+(b-a)s_j^{(D)}\}
+O(\ell_j/n),
\end{align*}
where $\sum_a\{d_a^{(3)}\}^2=1$ was used.  If $a\ne b$, then
\[
|r+(b-a)s_j^{(D)}|
\ge s_j^{(D)}-|r|>\ell_j.
\]
Consequently, after summing over $|r|<\ell_j$, all off-diagonal difference terms are bounded by
$C\eta_{\ell_j}$, and the endpoint contribution is bounded by
$C\ell_j^2/n$.

Extend $K_{\mathrm{LRV}}$ by zero outside $[-1,1]$.  Its expansion at the origin gives, for a fixed $c_0\in(0,1)$,
\[
|K_{\mathrm{LRV}}(x)-1|
\le C|x|^{\widetilde q},
\qquad |x|\le c_0.
\]
The covariance decay in Lemma~\ref{lem:vector-basic}(i) yields
\begin{align*}
&\left|
\sum_{r\in\mathbb Z}
\{K_{\mathrm{LRV}}(r/\ell_j)-1\}c_{\epsilon,j}(r)
\right|\\
&\quad\le
C\ell_j^{-\widetilde q}
\sum_{|r|\le c_0\ell_j}|r|^{\widetilde q}|c_{\epsilon,j}(r)|
+C\sum_{|r|>c_0\ell_j}|c_{\epsilon,j}(r)|\\
&\quad\le C\ell_j^{-\widetilde q}+C\eta_{\ell_j}.
\end{align*}
Combining the preceding displays gives
\[
\max_{j\le p}|\E\widehat\sigma_j^2-\sigma_j^2|
\le C\left\{
\ell_{\min}^{-\widetilde q}
+\eta_{\ell_{\min}}
+\frac{\ell_{\max}^2}{n}
\right\}
\]
under $H_0$.

Under a single mean change, write
\[
D_{i,j}^{(3)}=D_{i,j}^{(3),\epsilon}+m_{i,j}^{(3)},
\qquad
m_{i,j}^{(3)}=\E D_{i,j}^{(3)}.
\]
Because $\sum_a d_a^{(3)}=0$, the mean $m_{i,j}^{(3)}$ is zero unless the four-point stencil crosses $\tau$.  Thus
\[
|\{i:m_{i,j}^{(3)}\ne0\}|\le C\ell_j,
\qquad
|m_{i,j}^{(3)}|\le C|\delta_j|.
\]
For every retained lag $r$,
\[
\frac1n\sum_i|m_{i,j}^{(3)}m_{i-|r|,j}^{(3)}|
\le C\frac{\ell_j\delta_j^2}{n}.
\]
Summing over $|r|<\ell_j$ adds at most
$C\ell_j^2\|\bdelta\|_\infty^2/n$, which is bounded by
$C\ell_j^2/n$ under the stated signal condition.  The two centered mean--noise sums have at most $C\ell_j$ nonzero deterministic coefficients.  Lemma~\ref{lem:vector-basic}(ii) bounds each such $L_q$ norm, after
division by $n$, by
\[
Cq^{c_{\mathrm{LRV}}}\frac{\sqrt{\ell_j}}n.
\]
There are at most $2\ell_j-1$ retained lag values, so Minkowski's inequality
gives
\[
\left\|
\sum_{|r|<\ell_j}
\{\text{centered mean--noise contribution at lag }r\}
\right\|_q
\le
Cq^{c_{\mathrm{LRV}}}\frac{\ell_j^{3/2}}n
\le
Cq^{c_{\mathrm{LRV}}}\frac{\ell_j}{\sqrt n}.
\]
This is no larger than the stochastic term already used under $H_0$.
Hence the preceding stochastic probability bound remains valid under the
single change.

The stochastic and deterministic bounds prove the asserted probability inequality.  Assumption~\ref{ass:C2} gives
\[
\inf_{j\le p}\sigma_j^2
=\inf_{j\le p}\Omega_{jj}\ge C_0.
\]
On the event
$\max_j|\widehat\sigma_j^2-\sigma_j^2|\le C_Aa_{\sigma,n}$, with
$C_Aa_{\sigma,n}\le C_0/2$, all estimated variances are positive and
\begin{align*}
\left|\frac{\widehat\sigma_j}{\sigma_j}-1\right|
&=\frac{|\widehat\sigma_j^2-\sigma_j^2|}
{\sigma_j(\widehat\sigma_j+\sigma_j)}\\
&\le C_0^{-1}|\widehat\sigma_j^2-\sigma_j^2|.
\end{align*}
This proves $r_{\sigma,n}=O_p(a_{\sigma,n})$.

Finally, if $\ell_j\asymp n^{1/(1+2\widetilde q)}$, then
\[
\frac{\ell_{\max}\{\log(np)\}^{c_{\mathrm{LRV}}}}{\sqrt n}
=O\left[n^{-c_\sigma^\star}\{\log n\}^{c_{\mathrm{LRV}}}\right],
\]
\[
\ell_{\min}^{-\widetilde q}
=O\{n^{-\widetilde q/(1+2\widetilde q)}\},
\qquad
\frac{\ell_{\max}^2}{n}=O(n^{-2c_\sigma^\star}).
\]
The term $\eta_{\ell_{\min}}$ decreases faster than every negative power of $n$.  Since
\[
\frac{\widetilde q}{1+2\widetilde q}>c_\sigma^\star,
\]
every fixed $c<c_\sigma^\star$ satisfies
$a_{\sigma,n}=O(n^{-c})$.
\end{proof}

\subsection{Finite-sample long-run variance and max-tail calibration used in computation}

The formal max-statistic theory uses the baseline third-difference estimator
proved above, the explicit rate $r_{\sigma,n}=O_p(a_{\sigma,n})$, and the
Gumbel $p$-values in Theorem~\ref{null:Max}.  The numerical implementation
additionally applies the RCP preprocessing of \citet{chan2022AoS} inside each
componentwise long-run variance calculation and uses the analytic finite-LRV
tails displayed in the main article.  These operations do not alter either
CUSUM numerator.  They are computational calibration choices and are not used
in the proof of Theorem~\ref{null:Max} or the matched independence theorems.
The code forms autocovariances from the available third-difference series of
length $n-6\ell_j$.  Relative to the original-sample denominator in the formal
estimator, this changes a coordinatewise term by the multiplicative factor
\[
\frac{n}{n-6\ell_j}=1+O\left(\frac{\ell_j}{n}\right).
\]
Consequently, uniformly in $j$,
\[
O\left(\frac{\ell_j}{n}\right)
=O\left(\frac{\ell_{\max}^2}{n}\right),
\]
where the last bound uses $\ell_{\min}\to\infty$.  This implementation
choice is therefore already covered by the deterministic term in
$a_{\sigma,n}$.

The exact coefficient vector in the formal estimator satisfies
$\sum_{a=0}^3d_a^{(3)}=0$ and
$\sum_{a=0}^3\{d_a^{(3)}\}^2=1$.  The displayed implementation vector
$(0.1942,0.2809,0.3832,-0.8582)$ is its rounded numerical representation.
The identities $\sum_r d_r^{(3)}=0$ and
$\sum_r\{d_r^{(3)}\}^2=1$ are imposed on the exact coefficient vector in
the proof; the printed four-decimal values are used only to describe the
finite-precision implementation;
the rounding is part of the computational routine rather than an additional
theoretical assumption.

For completeness, the fixed-argument Gamma calculation underlying the
analytic finite-LRV tails is deterministic.  If
$R\sim\operatorname{Gamma}(\nu/2,\nu/2)$ in shape--rate notation, then
$\E R=1$, $\Var(R)=2/\nu$, and, for every fixed $x\ge0$,
\[
-\frac\nu2\log\left(1+\frac{4x^2}{\nu}\right)
=-2x^2+O(\nu^{-1}).
\]
Also $R\to1$ in probability.  Since $0\le e^{-b\sqrt R}\le1$, bounded
convergence gives, for every fixed $b\ge0$,
\[
\E\{e^{-b\sqrt R}\}\longrightarrow e^{-b}.
\]
Consequently the analytic formulas recover the first-order coordinate tails
at fixed arguments as $\nu\to\infty$.  No uniform-in-threshold expansion is
used in the formal theory: Theorem~\ref{null:Max} instead relies on the
primitive ratio-consistency rate proved above.  The Gamma formulas are retained
only to document the finite-sample implementation.

\clearpage
\small
\bibliographystyle{chicago}
\bibliography{ref}

\end{document}